\g@addto@macro{\UrlBreaks}{\UrlOrds}
\title{Multiparty Session Programming with\texorpdfstring{\\}{ }Global Protocol Combinators}
\titlerunning{Multiparty Session Programming with Global Protocol Combinators}
\author{Keigo Imai\footnote{Corresponding author}}
       {Gifu University, Japan}
       {keigoi@gifu-u.ac.jp}
       {https://orcid.org/0000-0003-1602-8473}
       {}%
\author{Rumyana Neykova}
       {Brunel University London, UK}
       {Rumyana.Neykova@brunel.ac.uk}
       {https://orcid.org/0000-0002-3925-8557}
       {}%
\author{Nobuko Yoshida}
       {Imperial College London, UK}
       {n.yoshida@imperial.ac.uk}
       {https://orcid.org/0000-0002-2755-7728}
       {}%
\author{Shoji Yuen}
       {Nagoya University, Japan}
       {yuen@i.nagoya-u.ac.jp}
       {https://orcid.org/0000-0003-2642-0647}
       {}%
\authorrunning{K. Imai, R. Neykova, N. Yoshida and S. Yuen}
\keywords{Multiparty Session Types, Communication Protocol, Concurrent and Distributed Programming, OCaml}
\newcommand{\cfig}[1]{Fig.~\ref{#1}}
\newcommand{\grmeq}{{::=}}
\newcommand{\grmor}{{\;\text{\Large$\mid$}\;}}
\newcommand\TableStrut{\rule{0pt}{1.25\normalbaselineskip}\rule[-1\normalbaselineskip]{0pt}{0pt}}
\newcommand\TableStrutDouble{\rule{0pt}{1.75\normalbaselineskip}\rule[-1.50\normalbaselineskip]{0pt}{0pt}}
\newcommand\TableStrutTriple{\rule{0pt}{2.25\normalbaselineskip}\rule[-2.00\normalbaselineskip]{0pt}{0pt}}
\newcommand{\OCaml}{OCaml\xspace}
\newcommand{\ourlibrary}{\texttt{ocaml-mpst}\xspace}
\newcommand{\Sec}{\textsection}
\newcommand{\channelvector}{channel vector\xspace}
\newcommand{\Rptp}[1]{\ensuremath{{\color{roleColor}\roleFmt{#1}}}}
\newcommand{\labAuth}{{\labFmt{auth}}}
\newcommand{\labOk}{{\labFmt{ok}}}
\newcommand{\labCancel}{{\labFmt{cancel}}}
\newcommand{\OCAMLMPST}{\mustL}
\newcommand{\myparagraph}[1]{\smallskip\noindent\textit{\bf #1\ }}
\newcommand*{\thmstart}{\setlength{\abovedisplayskip}{2pt}\setlength{\belowdisplayskip}{2pt}\rm}
\newtheorem{mycounter}{Dummy}[section]{\bfseries}{\itshape}
\newtheorem{THM}[mycounter]{Theorem}{\bfseries}{\itshape}
\newtheorem{DEF}[mycounter]{Definition}{\bfseries}{\itshape}
{\bfseries}{\itshape}
\newtheorem{COL}[mycounter]{Corollary}{\bfseries}{\itshape}
\newtheorem{LEM}[mycounter]{Lemma}{\bfseries}{\itshape}
\newtheorem{PROP}[mycounter]{Proposition}{\bfseries}{\itshape}
{\bfseries}{\itshape}
\newtheorem{EX}[mycounter]{Example}{\bfseries}{\itshape}
\Crefname{THM}{Theorem}{Theorems}
\Crefname{DEF}{Definition}{Definitions}
\Crefname{CONV}{Convention}{Conventions}
\Crefname{COL}{Corollary}{Corollaries}
\Crefname{LEM}{Lemma}{Lemmas}
\Crefname{PROP}{Proposition}{Propositions}
\Crefname{REM}{Remark}{Remarks}
\Crefname{EX}{Example}{Examples}
\newcommand{\CASE}{{\bfseries Case}\xspace}
\newcommand{\todo}[1]{[{\color{red}{\textbf{Todo:} #1}}]}
\newcommand{\precameraready}[1]{{#1}}
\definecolor{modify}{rgb}{0.0, 0.0, 0.0}%
\definecolor{modified}{rgb}{0.0, 0.0, 0.0}%
\newsavebox{\@brx}
\newcommand{\llangle}[1][]{\savebox{\@brx}{\(\m@th{#1\langle}\)}%
  \mathopen{\copy\@brx\kern-0.5\wd\@brx\usebox{\@brx}}}
\newcommand{\rrangle}[1][]{\savebox{\@brx}{\(\m@th{#1\rangle}\)}%
  \mathclose{\copy\@brx\kern-0.5\wd\@brx\usebox{\@brx}}}
\newcommand{\elip}{\ifmmode\mathinner{\ldotp\kern-0.2em\ldotp\kern-0.2em\ldotp}\else.\kern-0.13em.\kern-0.13em.\fi}
\newcommand{\elipc}{\ifmmode\mathinner{\cdotp\kern-0.2em\cdotp\kern-0.2em\cdotp}\else.\kern-0.13em.\kern-0.13em.\fi}
\newcommand{\eqdeff}{\ensuremath{\stackrel{\mathrm{def}}{=}}}
\newcommand{\roleC}{{\color{roleColor}\roleFmt{c}}}
\newcommand{\roleA}{{\color{roleColor}\roleFmt{a}}}
\newcommand{\roleS}{{\color{roleColor}\roleFmt{s}}}
\newcommand{\labLoop}{\labFmt{loop}}
\newcommand{\labStop}{\labFmt{stop}}
\newcommand{\dlabel}[1]{\textcolor{lipicsGray}{\sffamily\bfseries\mathversion{bold}#1}}
\newcommand{\Rone}{\dlabel{R1}\xspace}
\newcommand{\Rtwo}{\dlabel{R2}\xspace}
\newcommand{\Rthree}{\dlabel{R3}\xspace}
\newcommand{\Mone}{\dlabel{M1}\xspace}
\newcommand{\Mtwo}{\dlabel{M2}\xspace}
\newcommand{\PSet}{\!\mathscr{P}\!}
\newcommand{\ASigma}{ \mathbb{A}}
\newcommand{\ifempty}[3]{%
  \ifthenelse{\isempty{#1}}{#2}{#3}%
}%
\newcommand{\dom}[1]{{\color{black}\operatorname{dom}\!\left({#1}\right)}}%
\newcommand{\fn}[1]{\operatorname{fn}\!\left({#1}\right)}%
\newcommand{\fv}[1]{\operatorname{fv}\!\left({#1}\right)}%
\newcommand{\notImplies}{\mathrel{{\kern 0.5em}{\not{\kern -0.5em}\implies}}}%
\newcommand{\notImpliedBy}{\mathrel{{\kern 1em}{\not{\kern -1em}\impliedby}}}%
\newcommand{\coloncolonequals}{\Coloneqq}%
\newcommand{\bnfsep}{\mathbin{\;\big|\;}}%
\def\ie{i.e.\@\xspace}%
\definecolor{ruleColor}{rgb}{0.1, 0.3, 0.1}%
\newcommand{\inferrule}[1]{{\color{ruleColor}\text{\scriptsize [#1]}}}%
\newcommand{\inference}[3][]{\infer[\ifempty{#1}{}{\inferrule{#1}}]{#3}{#2}}%
\newcommand{\cinference}[3][]{\infer=[\ifempty{#1}{}{\inferrule{#1}}]{#3}{#2}}%
\newcommand{\setenum}[1]{\mathord{{\color{black}\left\{#1\right\}}}}%
\definecolor{groundColor}{rgb}{0.38, 0.25, 0.32}%
\newcommand{\bind}[2]{\nicefrac{#2}{#1}}
\newcommand{\substenum}[1]{\mathord{{\color{black}\left\{{#1}\right\}}}}
\newcommand{\subst}[2]{\substenum{\bind{#1}{#2}}}
\newcommand{\mapsubst}[2]{{#1}{\mapsto}{#2}}
\definecolor{roleColor}{rgb}{0.5, 0.0, 0.0}%
\newcommand{\roleCol}[1]{{\color{roleColor}#1}}%
\newcommand{\roleUnivSet}{\roleCol{\mathfrak{R}}}%
\newcommand{\roleSet}{\roleCol{\mathbb{R}}}%
\newcommand{\roleFmt}[1]{\boldsymbol{\roleCol{\mathtt{#1}}}}%
\newcommand{\roleP}[1][]{%
  \ifempty{#1}{{\color{roleColor}\roleFmt{p}}}{{\color{roleColor}\roleFmt{p}_{{#1}}}}%
}%
\newcommand{\roleQ}[1][]{%
  \ifempty{#1}{{\color{roleColor}\roleFmt{q}}}{{\color{roleColor}\roleFmt{q}_{#1}}}%
}%
\newcommand{\roleR}[1][]{%
  \ifempty{#1}{{\color{roleColor}\roleFmt{r}}}{{\color{roleColor}\roleFmt{r}_{\!#1}}}%
}%
\newcommand{\labFmt}[2][]{\color{roleColor}{\ifempty{#1}{\mathtt{#2}}{\mathtt{#2}\textsubscript{#1}}}}%
\definecolor{mpColor}{rgb}{0, 0, 0}%
\newcommand{\mpFmt}[1]{{\color{mpColor}#1}}%
\newcommand{\mpLab}[1][]{%
  \mpFmt{\ifempty{#1}{\labFmt{m}}{{\labFmt{m}}_{\mathnormal #1}}}%
}%
\newcommand{\mpLabi}[1][]{%
  \mpFmt{\ifempty{#1}{\labFmt{m}'}{\labFmt{m}'_{\mathnormal #1}}}%
}%
\newcommand{\mpLabFmt}[1]{\mpFmt{\labFmt{#1}}}%
\newcommand{\mpChanRole}[2]{\mpFmt{{#1}[{#2}]}}%
\newcommand{\mpNil}{\mpFmt{\mathbf{0}}}%
\newcommand{\mpSeq}{\mathbin{\mpFmt{\!.\!}}}%
\newcommand{\mpChoice}[3]{%
  \mpFmt{%
    \mpLabFmt{#1}\ifempty{#2}{}{({#2})}\ifempty{#3}{}{\vphantom{x}\mpSeq {#3}}%
  }%
}%
\newcommand{\mpChoiceNoBind}[3]{%
  \mpFmt{%
    \mpLabFmt{#1}\ifempty{#2}{}{\langle{#2}\rangle}\ifempty{#3}{}{\vphantom{x}\mpSeq {#3}}%
  }%
}%
\newcommand{\mpBranch}[6]{%
  \mpFmt{%
    {#1}[\roleFmt{#2}] \mathbin{\!\ifempty{\sum}{\&}{\sum_{#3}}\!}%
    \ifempty{#3}{%
      \mpChoice{#4}{#5}{#6}%
    }{%
      \mpChoice{#4}{#5}{#6}%
    }%
  }%
}%
\newcommand{\mpSel}[5]{%
  \mpFmt{%
    {#1}[\roleFmt{#2}] \mathbin{\!\oplus\!}%
    \mpChoiceNoBind{#3}{#4}{#5}%
  }%
}%
\newcommand{\mpPar}{\mathbin{\mpFmt{\mid}}}%
\newcommand{\mpBigPar}[2]{\mathbin{\mpFmt{\big|_{#1}}{#2}}}%
\newcommand{\mpRes}[2]{\mpFmt{\left(\mathbf{\nu}{#1}\right){#2}}}%
\newcommand{\mpJustDef}[3]{%
  \mpFmt{{#1}(#2) = {#3}}%
}%
\newcommand{\mpDef}[4]{%
  \mpFmt{\mathbf{def}\;\mpJustDef{#1}{#2}{#3}\;\mathbf{in}\;{#4}}%
}%
\newcommand{\mpDefAbbrev}[2]{%
  \mpFmt{\mathbf{def}\;{#1}\;\mathbf{in}\;{#2}}%
}%
\newcommand{\mpCall}[2]{\mpFmt{{#1}\!\left\langle{#2}\right\rangle}}%
\newcommand{\mpCtx}[1][]{\mpFmt{\ifempty{#1}{\mathbb{C}}{\mathbb{C}_{#1}}}}%
\newcommand{\mpCtxi}[1][]{\mpFmt{\ifempty{#1}{\mathbb{C}'}{\mathbb{C}'_{#1}}}}%
\newcommand{\mpCtxHole}{[\,]}%
\newcommand{\mpCtxApp}[2]{{#1}\!\left[{#2}\right]}%
\newcommand{\mpC}[1][]{\mpFmt{\ifempty{#1}{c}{c_{#1}}}}%
\newcommand{\mpD}[1][]{\mpFmt{\ifempty{#1}{d}{d_{#1}}}}%
\newcommand{\mpS}[1][]{\mpFmt{\ifempty{#1}{s}{s_{#1}}}}%
\newcommand{\mpSi}[1][]{\mpFmt{\ifempty{#1}{s'}{s'_{#1}}}}%
\newcommand{\mpX}[1][]{\mpFmt{\ifempty{#1}{X}{X_{#1}}}}%
\newcommand{\mpY}[1][]{\mpFmt{\ifempty{#1}{Y}{Y_{#1}}}}%
\newcommand{\mpP}[1][]{\mpFmt{\ifempty{#1}{P}{P_{#1}}}}%
\newcommand{\mpPi}[1][]{\mpFmt{\ifempty{#1}{P'}{P'_{#1}}}}%
\newcommand{\mpPii}[1][]{\mpFmt{\ifempty{#1}{P''}{P''_{#1}}}}%
\newcommand{\mpQ}[1][]{\mpFmt{\ifempty{#1}{Q}{Q_{#1}}}}%
\newcommand{\mpQi}[1][]{\mpFmt{\ifempty{#1}{Q'}{Q'_{#1}}}}%
\newcommand{\mpDefD}[1][]{\mpFmt{\ifempty{#1}{D}{D_{#1}}}}%
\newcommand{\mpMove}{\to}%
\newcommand{\mpMoveStar}{\mathrel{\mpMove{}^{\!\!\!*}}}%
\newcommand{\mpNotMoveP}[1]{\mpFmt{#1}\!\not{\mpMove}}%
\definecolor{gtColor}{rgb}{0.43, 0.21, 0.1}%
\newcommand{\gtFmt}[1]{{\color{gtColor}#1}}%
\newcommand{\gtMsgFmt}[1]{\gtFmt{\labFmt{#1}}}%
\newcommand{\gtLab}[1][]{%
  \ifempty{#1}{\gtMsgFmt{m}}{{\color{gtColor}\gtMsgFmt{m}_{#1}}}%
}%
\newcommand{\gtG}[1][]{\gtFmt{\ifempty{#1}{G}{G_{#1}}}}%
\newcommand{\gtSeq}{\mathbin{\gtFmt{.}}}%
\newcommand{\gtCommRaw}[3]{%
  \gtFmt{%
    {#1} {\to} {#2}{:}%
    \left\{%
      {#3}%
    \right\}%
  }%
}%
\newcommand{\gtComm}[6]{%
  \gtFmt{%
    \gtCommRaw{#1}{#2}{%
      \gtCommChoice{#4}{#5}{#6}%
    }_{#3}%
  }%
}%
\newcommand{\gtCommChoice}[3]{%
  \gtFmt{%
    \gtMsgFmt{#1}\ifempty{#2}{}{({#2})}%
    \ifempty{#3}{}{\vphantom{x}{\gtSeq}{#3}}%
  }%
}%
\newcommand{\gtEnd}{\gtFmt{\mathbf{end}}}%
\newcommand{\gtRec}[2]{\gtFmt{\mu{#1}.{#2}}}%
\newcommand{\gtRecVarBase}{\gtFmt{\mathbf{t}}}%
\newcommand{\gtRecVar}[1][]{\gtFmt{\ifempty{#1}{\gtRecVarBase}{\gtRecVarBase_{#1}}}}%
\newcommand{\gtRoles}[1]{{\color{black} \operatorname{roles}(\gtFmt{#1})}}%
\newcommand{\gtProj}[2]{%
  {\color{stColor}\gtFmt{#1} {\upharpoonright} #2}%
}%
\definecolor{stColor}{rgb}{0.1, 0.4, 0.1}%
\newcommand{\stFmt}[1]{{\color{stColor}#1}}%
\newcommand{\stChoice}[2]{\stFmt{#1}\ifempty{#2}{}{\stFmt{({#2})}}}%
\newcommand{\stSeq}{\mathbin{\!\stFmt{.}\!}}%
\newcommand{\stIntC}{\mathbin{\stFmt{\oplus}}}%
\newcommand{\stIntSum}[3]{\roleFmt{#1}\stFmt{\oplus_{#2}{#3}}}%
\newcommand{\stExtC}{\mathbin{\stFmt{\&}}}%
\newcommand{\stExtSum}[3]{\roleFmt{#1}\stFmt{\&_{#2}{#3}}}%
\newcommand{\stRec}[2]{\stFmt{\mu{#1}.{#2}}}%
\newcommand{\stEnd}{\stFmt{\mathbf{end}}}%
\newcommand{\stLab}[1][]{\mpFmt{\ifempty{#1}{\labFmt{m}}{\labFmt{m}_{#1}}}}%
\newcommand{\stS}[1][]{\stFmt{\ifempty{#1}{S}{S_{#1}}}}%
\newcommand{\stSi}[1][]{\stFmt{\ifempty{#1}{S'}{S'_{#1}}}}%
\newcommand{\stT}[1][]{\stFmt{\ifempty{#1}{T}{T_{#1}}}}%
\newcommand{\stTi}[1][]{\stFmt{\ifempty{#1}{T'}{T'_{#1}}}}%
\newcommand{\stRecVarBase}{\stFmt{\mathbf{t}}}%
\newcommand{\stRecVar}[1][]{\stFmt{\ifempty{#1}{\stRecVarBase}{\stRecVarBase_{#1}}}}%
\newcommand{\stRecVari}[1][]{\stFmt{\ifempty{#1}{\stRecVar'}{\stRecVar'_{#1}}}}%
\newcommand{\stMerge}[2]{\stFmt{\bigsqcap_{#1}{#2}}}%
\newcommand{\stBinMerge}{\mathbin{\stFmt{\sqcap}}}%
\newcommand{\stSub}{\mathrel{\stFmt{\leqslant}}}%
\newcommand{\stNotSub}{\mathrel{\stFmt{\not\leqslant}}}%
\definecolor{ptColor}{rgb}{0.20, 0.29, 0.09}%
\newcommand{\stEnv}[1][]{\stFmt{\ifempty{#1}{\Gamma}{\Gamma_{#1}}}}%
\newcommand{\stEnvi}[1][]{\stFmt{\ifempty{#1}{\Gamma'}{\Gamma'_{#1}}}}%
\newcommand{\stEnvEmpty}{\stFmt{\emptyset}}%
\newcommand{\stEnvMap}[2]{\stFmt{\mpFmt{#1}\mathbin{\!:\!}{#2}}}%
\newcommand{\stEnvComp}{\mathpunct{\stFmt{,}}}%
\newcommand{\stEnvApp}[2]{\stFmt{#1\!\left(\mpFmt{#2}\right)}}%
\newcommand{\stEnvMove}{\mathrel{\stFmt{\to}}}%
\newcommand{\stEnvMoveP}[1]{{#1}\!\!\stEnvMove}%
\newcommand{\stEnvEndPred}{\operatorname{end}}%
\newcommand{\stEnvEndP}[1]{\stEnvEndPred(\stFmt{#1})}%
\newcommand{\stEnvDFPred}{\operatorname{df}}%
\newcommand{\stEnvDFP}[1]{\stEnvDFPred(\stFmt{#1})}%
\newcommand{\stEnvLivePlusPred}{\operatorname{live^{+}}}%
\newcommand{\stEnvLivePlusP}[1]{\stEnvLivePlusPred(\stFmt{#1})}%
\newcommand{\stEnvSafePred}{\operatorname{safe}}%
\newcommand{\stEnvSafeP}[1]{\stEnvSafePred(\stFmt{#1})}%
\newcommand{\stEnvEntails}[3]{%
  \stFmt{#1} \vdash \stFmt{\mpFmt{#2} \mathbin{\!:\!} {#3}}%
}%
\newcommand{\mpEnv}[1][]{\stFmt{\ifempty{#1}{\Theta}{\Theta_{#1}}}}%
\newcommand{\mpEnvEmpty}{\stFmt{\emptyset}}%
\newcommand{\mpEnvMap}[2]{\stFmt{\mpFmt{#1}{:}\stFmt{#2}}}%
\newcommand{\mpEnvComp}{\mathpunct{\stFmt{,}}}%
\newcommand{\mpEnvApp}[2]{\stFmt{#1}\!\left(\mpFmt{#2}\right)}%
\newcommand{\mpEnvEntails}[3]{%
  \stFmt{#1} \vdash \stFmt{\mpFmt{#2} \mathbin{\!:\!} \stFmt{#3}}%
}%
\newcommand{\stJudge}[3]{%
  \stFmt{\ifempty{#1}{#2}{{#1} \cdot {#2}}%
  \mathrel{\mpFmt{\vdash}} \mpFmt{#3}}%
}%
\newcommand{\stEnvQMoveModQ}[1]{\mathrel{\stFmt{\rightarrow}_{{#1}}}}%
\newcommand{\stEnvQNotMoveModQP}[2]{%
  {#2}/{\kern -0.5em}{\stEnvQMoveModQ{#1}}}%
\newcommand{\predP}[1][]{\ifempty{#1}{\varphi}{\varphi_{#1}}}%
\newcommand{\predPApp}[2][]{\ifempty{#1}{\predP}{\predP[{#1}]}\!\left({#2}\right)}%
\newcommand{\muCol}[1]{{\color{red}#1}}%
\newcommand{\muWordEmpty}[1][]{\muCol{\epsilon}}%
\newcommand{\iruleMPRedComm}{R-Comm}%
\newcommand{\iruleMPRedCall}{R-Def}%
\newcommand{\iruleMPRedCtx}{R-Ctx}%
\newcommand{\iruleSubEnd}{Sub-$\stEnd$}%
\newcommand{\iruleSubBranch}{Sub-$\stExtC$}%
\newcommand{\iruleSubSel}{Sub-$\stIntC$}%
\newcommand{\iruleSubRecL}{Sub-$\stFmt{\mu}$L}%
\newcommand{\iruleSubRecR}{Sub-$\stFmt{\mu}$R}%
\newcommand{\iruleMPEnd}{T-$\stEnvEndPred$}%
\newcommand{\iruleMPX}{T-$\mpFmt{X}$}%
\newcommand{\iruleMPSub}{T-Sub}%
\newcommand{\iruleMPNil}{T-$\mpNil$}%
\newcommand{\iruleMPDef}{T-$\mpFmt{\mathbf{def}}$}%
\newcommand{\iruleMPCall}{T-$\mpX$}%
\newcommand{\iruleMPPar}{T-$\mpPar$}%
\newcommand{\iruleMPSafeRes}{T-$\mpFmt{\mathbf{\nu}}$}%
\newcommand{\iruleMPBranch}{T-$\mpFmt{\&}$}%
\newcommand{\iruleMPSel}{T-$\mpFmt{\oplus}$}%
\newcommand{\iruleMPInit}{T-Init}%
\newcommand{\iruleSafeComm}{S-${\stIntC}{\stExtC}$}%
\newcommand{\iruleSafeRec}{S-$\stFmt{\mu}$}%
\newcommand{\iruleSafeMove}{S-$\stEnvMove$}%
\newcommand{\SP}{\mkern2.5mu}
\newcommand\HOLE{{\oeFmt{[\ ]}}}
\newcommand\UNDERSCORE{\mbox{\underline{\hspace{0.7em}}}}
\newcommand\UNL{\mbox{\underline{\hspace{0.7em}}}}
\definecolor{oeColor}{rgb}{0, 0, 0}%
\definecolor{otColor}{rgb}{0, 0, 0.9}%
\newcommand{\oeFmt}[1]{{\color{oeColor}#1}}%
\newcommand{\otFmt}[1]{{\color{otColor}#1}}%
\newcommand{\ocLab}[1][]{%
  \oeFmt{\ifempty{#1}{\color{roleColor}\mathtt{l}}{{\color{roleColor}\mathtt{l}}_{\mathnormal #1}}}%
}%
\newcommand{\oLabFmt}[1]{\oeFmt{#1}}%
\newcommand{\ocUnfold}[1]{%
  {\oeFmt{\operatorname{unfold}^{\ast}\!\left({#1}\right)}}}%
\newcommand{\ocUnfoldOne}[1]{%
  {\oeFmt{\operatorname{unfold}\!\left({#1}\right)}}}%
\newcommand{\ocUnfoldN}[2]{%
  {\oeFmt{\operatorname{unfold}^{\color{black}{#1}}\!\left({#2}\right)}}}%
\newcommand{\OC}{\oeFmt{\mathbb{E}}}
\newcommand{\ocUnit}{\oeFmt{\texttt{()}}}%
\newcommand{\ocUnitPayload}{\oeFmt{\texttt{()}}}%
\newcommand{\ocIntChoice}[3]{{#1}{=}{\ocPair{#2}{#3}}}%
\newcommand{\ocIntChoiceSmall}[3]{{#1}{=}{\ocPairSmall{#2}{#3}}}%
\newcommand{\ocExtChoice}[3]{\ocRecvWrapElem{#1}{#2}{#3}}
\newcommand{\ocExtChoiceSmall}[3]{\ocRecvWrapElemSmall{#1}{#2}{#3}}
\newcommand{\ocIntSum}[3]{\ocRecord{#1}{\left\langle{#3}\right\rangle_{#2}}{}}%
\newcommand{\ocIntSumSmall}[3]{\oeFmt{\langle{#1}{=}\langle{#3}\rangle_{#2}\rangle}}%
\newcommand{\ocExtSum}[3]{\ocRecord{#1}{\left[#3\right]_{#2}}{}}%
\newcommand{\ocExtSumSmall}[3]{\oeFmt{\langle{#1}{=}{{[{#3}]}}_{#2}\rangle}}%
\newcommand{\oeNil}{\oeFmt{\bullet}}%
\newcommand\ocRec[2]{\oeFmt{\mu{#1}{\mathbin{.}}{#2}}}
\newcommand\ocPair[2]{\oeFmt{\left({#1}{,}{#2}\right)}}
\newcommand\ocPairSmall[2]{\oeFmt{({#1}{,}{#2})}}
\newcommand\ocTuple[1]{\oeFmt{\left({#1}\right)}}
\newcommand\ocVariant[2]{\oeFmt{\left[{#1}{=}{#2}\right]}}
\newcommand\ocVariantSmall[2]{\oeFmt{[{#1}{=}{#2}]}}
\newcommand\ocVariantPairPat[3]{\oeFmt{{#1}{({#2}{,}{#3})}}}
\newcommand\ocTupOpen{(}
\newcommand\ocTupClose{)}
\newcommand\ocRecordOpen{\oeFmt{\langle}}
\newcommand\ocRecordClose{\oeFmt{\rangle}}
\newcommand\ocRecordEntry[2]{\oeFmt{{#1}{=}{#2}}}
\newcommand\ocRecordSep{\oeFmt{,}}
\newcommand\ocRecord[3]{\oeFmt{\left\langle{#1}{=}{#2}\right\rangle_{#3}}}
\newcommand\ocRecordSmall[3]{\oeFmt{\langle{#1}{=}{#2}\rangle_{#3}}}
\newcommand\ocWrapperElem[2]{\oeFmt{{#1}{{\mathtt{@}}}{#2}}}
\newcommand\ocWrapper[3]{\oeFmt{\left[\ocWrapperElem{#1}{#2}\right]_{#3}}}
\newcommand{\ocWrapperApp}[2]{{#1}[{#2}]}
\newcommand\ocRecvWrapElem[3]{{#1}{=}{\ocPair{\underline{#2}}{#3}}}
\newcommand\ocRecvWrapElemSmall[3]{{#1}{=}{\ocPairSmall{\underline{#2}}{#3}}}
\newcommand\ocRecvWrapSmall[4]{\ocVariantSmall{#1}{\ocPairSmall{\underline{#2}}{#3}}_{#4}}
\newcommand{\otFix}[2]{\otFmt{{\mathrm{tfix}}{\left({#1},{#2}\right)}}}
\newcommand{\ocFix}[2]{{{\mathrm{fix}}{\left({#1},{#2}\right)}}}
\newcommand{\ocFixSmall}[2]{{{\mathrm{fix}}{({#1},{#2})}}}
\newcommand{\ocMerge}[2]{\oeFmt{\mathlarger{\bigsqcup}_{#1}{#2}}}%
\newcommand{\ocBinMerge}[1][]{
  \ifempty{#1}{
    \mathbin{\oeFmt{\sqcup}}
   }{
    \mathbin{\oeFmt{\sqcup}}_{\otFmt{#1}}
   }
}%
\newcommand{\oeChoice}[4]{%
  \oeFmt{%
    \oLabFmt{#1}({#2},{#3})\triangleright{#4}%
  }%
}%
\newcommand{\oeMatch}[6]{
  \oeFmt{%
  \mathbf{match}\SP#1\SP\mathbf{with}\SP\{\oeChoice{#2}{#3}{#4}{#5}\}_{#6}
  }%
}%
\newcommand{\oeMatchRaw}[2]{
  \oeFmt{%
  \mathbf{match}\SP#1\SP\mathbf{with}\SP\{#2\}
  }%
}%
\newcommand{\oeRecvKwd}{\mathbf{recv}}
\newcommand{\oeWithKwd}{\mathbf{with}}
\newcommand{\oeBranchMult}[7]{
  \oeFmt{%
  \oeMatchKwd\SP\oeRecvKwd\SP#1\#\roleFmt{#2}\SP\SP\oeWithKwd\left\{\oeChoice{#3}{#4}{#5}{#6}\right\}_{#7}
  }%
}%
\newcommand{\oeBranchRaw}[3]{
  \oeFmt{%
  \oeMatchKwd\SP\oeRecvKwd\SP#1\#\roleFmt{#2}\SP\SP\oeWithKwd\left\{#3\right\}
  }%
}%
\newcommand{\oeBranchSingle}[4]{
  \oeFmt{%
    \mathbf{let}\SP{#1}\SP{=}\SP\textbf{recv}\SP\SP#2\#\roleFmt{#3}\SP\SP\mathbf{in}\SP{#4}
  }%
}%
\newcommand{\oeRecv}[3]{
  \oeFmt{%
    \mathbf{let}\SP{#1}\SP{=}\SP\textbf{recv}\SP\SP#2\SP\SP\mathbf{in}\SP{#3}
  }%
}%
\newcommand{\oeRecvBare}[1]{
  \oeFmt{%
    \textbf{recv}\SP\SP#1
  }%
}%
\newcommand{\oeSel}[6]{
  \oeFmt{%
    \mathbf{let}\SP{#1}\SP{=}\SP\textbf{send}\SP\SP#2\#\roleFmt{#3}\#{#4}\SP\SP{#5}\SP\mathbf{in}\SP{#6}
  }%
}%
\newcommand{\oeSend}[4]{
  \oeFmt{%
    \mathbf{let}\SP{#1}\SP{=}\SP\textbf{send}\SP\SP#2\SP\SP{#3}\SP\mathbf{in}\SP{#4}
  }%
}%
\newcommand{\oeSendBare}[2]{
  \oeFmt{%
    \textbf{send}\SP\SP#1\SP\SP{#2}
  }%
}%
\newcommand{\oeInit}[3]{
  \oeFmt{%
    \mathbf{let}\SP{#1}\SP{=}\SP{#2}\SP\mathbf{in}\SP{#3}%
  }%
}%
\newcommand{\oeLetKwd}{\oeFmt{\mathbf{let}}}
\newcommand{\oeLetrecKwd}{\oeFmt{\mathbf{letrec}}}
\newcommand{\oeMatchKwd}{\oeFmt{\mathbf{match}}}
\newcommand{\oeInKwd}{\oeFmt{\mathbf{in}}}
\newcommand{\oePar}{\mathbin{\oeFmt{\mid}}}%
\newcommand{\oeRes}[2]{\oeFmt{(\mathbf{\nu}{#1}){#2}}}%
\newcommand{\oeJustDef}[3]{%
  \oeFmt{{#1}({#2}) = {#3}}%
}%
\newcommand{\oeLetrecAbbrev}[2]{%
  \oeFmt{\mathbf{letrec}\;{#1}\;\mathbf{in}\;{#2}}%
}%
\newcommand{\oeLetrec}[4]{%
  \oeFmt{\mathbf{letrec}\;\oeJustDef{#1}{#2}{#3}\;\mathbf{in}\;{#4}}%
}%
\newcommand{\oeCall}[2]{\oeFmt{{#1}\!\left\langle{#2}\right\rangle}}%
\newcommand{\oeCallDec}[2]{\oeFmt{{#1}\!\left({#2}\right)}}%
\newcommand{\ocChi}[1][]{\color{black}{\ifempty{#1}{\chi}{\chi_{#1}}}}%
\newcommand{\ocEmpty}{\color{black}{\emptyset}}
\newcommand{\ocL}[1][]{\oeFmt{\ifempty{#1}{l}{l_{#1}}}}%
\newcommand{\ocC}[1][]{\oeFmt{\ifempty{#1}{c}{c_{#1}}}}%
\newcommand{\ocV}[1][]{\oeFmt{\ifempty{#1}{v}{v_{#1}}}}%
\newcommand{\ocCi}[1][]{\oeFmt{\ifempty{#1}{c'}{c'_{#1}}}}%
\newcommand{\ocD}[1][]{\oeFmt{\ifempty{#1}{d}{d_{#1}}}}%
\newcommand{\ocX}[1][]{\oeFmt{\ifempty{#1}{x}{x_{#1}}}}%
\newcommand{\ocXi}[1][]{\oeFmt{\ifempty{#1}{x'}{x'_{#1}}}}%
\newcommand{\ocY}[1][]{\oeFmt{\ifempty{#1}{y}{y_{#1}}}}%
\newcommand{\ocZ}[1][]{\oeFmt{\ifempty{#1}{z}{z_{#1}}}}%
\newcommand{\ocH}[1][]{\oeFmt{\ifempty{#1}{h}{h_{#1}}}}%
\newcommand{\ocHi}[1][]{\oeFmt{\ifempty{#1}{h'}{h'_{#1}}}}%
\newcommand{\ocS}[1][]{\oeFmt{\ifempty{#1}{s}{s_{#1}}}}%
\newcommand{\ocSi}[1][]{\oeFmt{\ifempty{#1}{s'}{s'_{#1}}}}%
\newcommand{\oeRecVarBase}{\oeFmt{X}}%
\newcommand{\oeRecVar}[1][]{\oeFmt{\ifempty{#1}{\oeRecVarBase}{\oeRecVarBase_{#1}}}}%
\newcommand{\oeRecVari}[1][]{\oeFmt{\ifempty{#1}{\oeRecVar'}{\oeRecVar'_{#1}}}}%
\newcommand{\oeE}[1][]{\oeFmt{\ifempty{#1}{e}{e_{#1}}}}%
\newcommand{\oeEi}[1][]{\oeFmt{\ifempty{#1}{e'}{e'_{#1}}}}%
\newcommand{\oeEii}[1][]{\oeFmt{\ifempty{#1}{e''}{e''_{#1}}}}%
\newcommand{\oeD}[1][]{\oeFmt{\ifempty{#1}{D}{D_{#1}}}}%
\newcommand{\oeDi}[1][]{\oeFmt{\ifempty{#1}{D'}{D'_{#1}}}}%
\newcommand{\otUnit}{\otFmt{\bullet}}%
\newcommand{\otPair}[2]{\otFmt{{#1}{\times}{#2}}}%
\newcommand{\otTimes}{\mathbin{\otFmt{\scriptstyle\times}}}
\newcommand{\otVariant}[3]{\otFmt{\left[{#1} \UNDERSCORE {#2}\right]_{#3}}}
\newcommand{\otVariantMany}[1]{\otFmt{\left[#1\right]}}
\newcommand{\otVariantElem}[2]{\otFmt{{#1} \UNDERSCORE {#2}}}
\newcommand{\otRecord}[3]{\otFmt{\left\langle{#1}\,{:}\,{#2}\right\rangle_{#3}}}
\newcommand{\otChanTag}{\otFmt{\sharp}}%
\newcommand{\otChan}[1]{\otChanTag{#1}}%
\newcommand{\otInpTag}{\otFmt{?}}%
\newcommand{\otInp}[1]{\otInpTag{#1}}
\newcommand{\otOutTag}{\otFmt{!}}%
\newcommand{\otOut}[1]{\otOutTag{#1}}
\newcommand{\otIntChoice}[3]{{#1}{:}\otFmt{{!{#2}}{\times}{#3}}}%
\newcommand{\otExtChoice}[3]{{#1}{\UNDERSCORE}\otFmt{{#2}{\times}{#3}}}%
\newcommand{\otIntSumBig}[3]{\otFmt{\Bigl\langle{#1}{:}\bigl\langle{#3}\bigr\rangle_{#2}\Bigr\rangle}}%
\newcommand{\otIntSum}[3]{\otFmt{\Bigl\langle{#1}{:}\langle{#3}\rangle_{#2}\Bigr\rangle}}%
\newcommand{\otIntSumSmall}[3]{\otFmt{\langle{#1}{:}\langle{#3}\rangle_{#2}\rangle}}%
\newcommand{\otExtSumBig}[3]{\otFmt{\left\langle{#1}{:}{{{?}\left[{#3}\right]}}_{#2}\right\rangle}}%
\newcommand{\otExtSum}[3]{\otFmt{\Bigl\langle{#1}{:}{{{?}[{#3}]}}_{#2}\Bigr\rangle}}%
\newcommand{\otExtSumSmall}[3]{\otFmt{\langle{#1}{:}{{{?}[{#3}]}}_{#2}\rangle}}%
\newcommand{\otVar}[2]{\otFmt{[{#1}]}_{#2}}%
\newcommand{\otRec}[2]{\otFmt{\mu{#1}{.}{#2}}}%
\newcommand{\otT}[1][]{\otFmt{\ifempty{#1}{T}{T_{#1}}}}%
\newcommand{\otTi}[1][]{\otFmt{\ifempty{#1}{T'}{T'_{#1}}}}%
\newcommand{\otTii}[1][]{\otFmt{\ifempty{#1}{T''}{T''_{#1}}}}%
\newcommand{\otTiii}[1][]{\otFmt{\ifempty{#1}{T'''}{T'''_{#1}}}}%
\newcommand{\otO}{\otFmt{\bullet}}%
\newcommand{\otS}[1][]{\otFmt{\ifempty{#1}{S}{S_{#1}}}}%
\newcommand{\otSi}[1][]{\otFmt{\ifempty{#1}{S'}{S'_{#1}}}}%
\newcommand{\otH}[1][]{\otFmt{\ifempty{#1}{H}{H_{#1}}}}%
\newcommand{\otRecVarBase}{\otFmt{\mathbf{t}}}%
\newcommand{\otRecVar}[1][]{\otFmt{\ifempty{#1}{\otRecVarBase}{\otRecVarBase_{#1}}}}%
\newcommand{\otRecVari}[1][]{\otFmt{\ifempty{#1}{\otRecVar'}{\otRecVar'_{#1}}}}%
\newcommand{\otRolesSet}[1]{\operatorname{roles}(\otFmt{#1})}%
\newcommand{\otSub}{\mathrel{\otFmt{\leqslant}}}%
\newcommand{\otEnv}[1][]{\otFmt{\ifempty{#1}{\Gamma}{\Gamma_{#1}}}}%
\newcommand{\otEnvi}[1][]{\otFmt{\ifempty{#1}{\Gamma'}{\Gamma'_{#1}}}}%
\newcommand{\otEnvii}[1][]{\otFmt{\ifempty{#1}{\Gamma''}{\Gamma''_{#1}}}}%
\newcommand{\otEnvEmpty}{\otFmt{\emptyset}}%
\newcommand{\otEnvMap}[2]{{#1}\mathbin{\!:\!}\otFmt{#2}}%
\newcommand{\otEnvComp}{\mathpunct{\otFmt{,}}}%
\newcommand{\ogtEnvEntails}[3]{%
  \otFmt{#1} \vdash_\roleSet {#2} \mathbin{:} \otFmt{#3}%
}%
\newcommand{\ogtEnvEntailsEx}[4]{%
  \otFmt{#2} \vdash_{#1} {#3} \mathbin{:} \otFmt{#4}%
}%
\newcommand{\otEnvEntails}[3]{%
  \otFmt{#1} \vdash {#2} \mathbin{:} \otFmt{#3}%
}%
\newcommand{\oeEnv}[1][]{\otFmt{\ifempty{#1}{\Theta}{\Theta_{#1}}}}%
\newcommand{\oeEnvi}[1][]{\otFmt{\ifempty{#1}{\Theta'}{\Theta'_{#1}}}}%
\newcommand{\oeEnvEmpty}{\otFmt{\emptyset}}%
\newcommand{\oeEnvMap}[2]{\otFmt{\otFmt{#1}{:}\otFmt{#2}}}%
\newcommand{\oeEnvComp}{\mathpunct{\otFmt{,}}}%
\newcommand{\iruleORedComm}{{\sc Ored-Comm}}%
\newcommand{\iruleORedInit}{{\sc Ored-Init}}%
\newcommand{\iruleORedMatch}{{\sc Ored-Match}}%
\newcommand{\iruleORedRec}{{\sc Ored-Rec}}%
\newcommand{\iruleORedCtx}{{\sc Ored-Ctx}}%
\newcommand{\iruleORedCong}{{\sc Ored-$\equiv$}}%
\newcommand{\otJudge}[3]{%
  \otFmt{\ifempty{#1}{#2}{{#1} \cdot {#2}}%
  \mathrel{\otFmt{\vdash}} \otFmt{#3}}%
}%
\newcommand{\otEnvQMoveModQ}[1]{\mathrel{\otFmt{\rightarrow}_{{#1}}}}%
\newcommand{\otEnvQNotMoveModQP}[2]{%
  {#2}/{\kern -0.5em}{\otEnvQMoveModQ{#1}}}%
\newcommand{\otWrapper}[2]{{#1}[{#2}]}%
\newcommand{\iruleOTCName}{{\sc Otc}-$\ocS$}%
\newcommand{\iruleOTCVar}{{\sc Otc}-$\ocX$}%
\newcommand{\iruleOTCUnit}{{\sc Otc}-$\ocUnit$}%
\newcommand{\iruleOTCSub}{{\sc Otc-Sub}}%
\newcommand{\iruleOTCTuple}{{\sc Otc-Tup}}%
\newcommand{\iruleOTCVariant}{{\sc Otc-Variant}}%
\newcommand{\iruleOTCRecord}{{\sc Otc-Record}}%
\newcommand{\iruleOTCWrapper}{{\sc Otc-Wrapper}}%
\newcommand{\iruleOTCWrapInp}{{\sc Otc-WrapInp}}%
\newcommand{\iruleOTCRec}{{\sc Otc}-$\oeFmt{\mu}$}%
\newcommand{\iruleOTGComm}{{\sc Otg-Comm}}
\newcommand{\iruleOTGChoice}{{\sc Otg-Choice}}
\newcommand{\iruleOTGRec}{{\sc Otg}-$\ogtRecKwd$}
\newcommand{\iruleOTGRecVar}{{\sc Otg}-$\ogtRecVar$}
\newcommand{\iruleOTGEnd}{{\sc Otg}-$\ogtEnd$}
\newcommand{\iruleOTGSub}{{\sc Otg-Sub}}
\newcommand{\iruleOTX}{{\sc Ot}-$\oeRecVar$}%
\newcommand{\iruleOTNil}{{\sc Ot}-$\oeNil$}%
\newcommand{\iruleOTLetrec}{{\sc Ot}-$\oeFmt{\mathbf{letrec}}$}%
\newcommand{\iruleOTCall}{{\sc Ot}-$\oeRecVar$}%
\newcommand{\iruleOTPar}{{\sc Ot}-$\oePar$}%
\newcommand{\iruleOTRes}{{\sc Ot}-$\oeFmt{\mathbf{\nu}}$}%
\newcommand{\iruleOTMatch}{{\sc Ot}-$\oeFmt{\mathbf{match}}$}%
\newcommand{\iruleOTRecv}{{\sc Ot}-$\oeFmt{\mathtt{recv}}$}%
\newcommand{\iruleOTSel}{{\sc Ot}-$\oeFmt{\oplus}$}%
\newcommand{\iruleOTInit}{{\sc Ot}-Init}%
\newcommand{\iruleOSubRcdDepth}{{\sc Osub-RcdDepth}}
\newcommand{\iruleOSubVar}{{\sc Osub-Var}}
\newcommand{\iruleOSubTuple}{{\sc Osub-Tup}}
\newcommand{\iruleOSubOut}{{\sc Osub-Out}}
\newcommand{\iruleOSubInp}{{\sc Osub-Inp}}
\newcommand{\iruleOSubRecL}{{\sc Osub-$\otFmt{\mu}$L}}
\newcommand{\iruleOSubRecR}{{\sc Osub-$\otFmt{\mu}$R}}
\newcommand{\iruleOSubInpCh}{{\sc Osub-InpCh}}
\newcommand{\iruleOSubOutCh}{{\sc Osub-OutCh}}
\newcommand{\iruleOSubUnit}{{\sc Osub-$\otUnit$}}
\definecolor{gtColor}{rgb}{0.43, 0.21, 0.1}%
\newcommand{\gocaml}[1][]{\ogtFmt{\ifempty{#1}{\boldsymbol{\mathtt{g}}}{\boldsymbol{\mathtt{g}}_{#1}}}}%
\newcommand{\ogtFmt}[1]{{\color{gtColor}#1}}%
\newcommand{\ogtG}[1][]{\ogtFmt{\ifempty{#1}{\boldsymbol{\mathtt{g}}}{\boldsymbol{\mathtt{g}}_{#1}}}}%
\newcommand{\ogtTo}{\rightarrow}
\newcommand{\ogtChoiceKwd}{\ogtFmt{\texttt{choice}}}
\newcommand{\ogtChoice}[3]{%
  \ogtFmt{%
    \ogtChoiceKwd\,{#1}\left\{{#2}\right\}_{#3}%
  }}
\newcommand{\ogtComm}[6]{%
\ifempty{#3}{
  \ogtFmt{%
    \texttt{(}{#1}\,\ogtTo\,{#2}\texttt{)}\,\ifempty{#5}{#4}{{#4}{:}{#5}}\ {#6}
  }
}
{
  \ogtFmt{%
    \ogtChoiceKwd\,{#1}\left\{
    \texttt{(}{#1}\,\ogtTo\,{#2}\texttt{)}\,\ifempty{#5}{#4}{{#4}{:}{#5}}\ {#6}\right\}_{#3}%
  }%
}}%
\newcommand{\ogtEnd}{\ogtFmt{\texttt{finish}}}%
\newcommand{\ogtCommSingle}[5]{%
  \ogtFmt{%
    \texttt{(}{#1}\,\ogtTo\,{#2}\texttt{)}\,\ifempty{#4}{#3}{{#3}{:}{#4}}\ {#5}
  }
}
\newcommand{\ogtCommBin}[8]{%
  \ogtFmt{%
    \ogtChoiceKwd\,{#1}\left\{\ogtCommSingle{#1}{#2}{#3}{#4}{#5}{\texttt{,}\,} \ogtCommSingle{#1}{#2}{#6}{#7}{#8}\right\}
  }
}
\newcommand{\ogtChoiceRaw}[2]{%
  \ogtFmt{%
    \ogtChoiceKwd\,{#1}\,\left\{{#2}\right\}
  }
}
\newcommand{\ogtRecKwd}{\ogtFmt{\texttt{fix}}}
\newcommand{\ogtRec}[2]{\ogtFmt{\ogtRecKwd\,{#1}\,\texttt{-{}>}\,{#2}}}%
\newcommand{\ogtRecVarBase}{\gtFmt{x}}%
\newcommand{\ogtRecVar}[1][]{\gtFmt{\ifempty{#1}{\ogtRecVarBase}{\ogtRecVarBase_{#1}}}}%
\newcommand{\ogtRecVari}[1][]{\gtFmt{\ifempty{#1}{\ogtRecVarBase'}{\ogtRecVarBase'_{#1}}}}%
\newcommand\GCCV[1]{\mpFmt{\llbracket}{#1}\mpFmt{\rrbracket}^{\mpS}_{\roleSet}}
\newcommand\GCCVR[2]{\mpFmt{\llbracket}{#1}\mpFmt{\rrbracket}_{\mpFmt{#2}}}
\newcommand{\NTH}[2]{#1{\mpFmt{({#2})}}}
\newcommand{\BASIC}[2]{\mathrm{Basic}({#1},{#2})}
\newcommand{\ffv}[1]{\operatorname{ffv}\!\left({#1}\right)}%
\newcommand{\dfv}[1]{\operatorname{dfv}\!\left({#1}\right)}%
\newcommand{\gAuth}{\ogtG[{\mathrm{Auth}}]}
\newcommand{\eAuth}{\oeE[{\mathrm{Auth}}]}
\newcommand{\blueI}{{\color{blue}{i}}}
\newcommand{\blueIi}{{\color{blue}{i'}}}
\newcommand{\blueJi}{{\color{blue}{j'}}}
\newcommand{\must}{\ourlibrary}
\newcommand{\mustL}{{\sf{MiO}}\xspace}
\newcommand{\globalcombinators}{global combinators}
\definecolor{dkblue}{rgb}{0,0.1,0.5}
\definecolor{dkgreen}{rgb}{0,0.4,0.1}
\definecolor{dkred}{rgb}{0.4,0,0}
\definecolor{linkColor}{rgb}{0,0,0.5}
\definecolor{pblue}{rgb}{0.13,0.13,1}
\definecolor{purple}{rgb}{0.5,0,0.5}
\definecolor{beige}{rgb}{0,0.5,0.5}
\definecolor{pred}{rgb}{0.9,0,0}
\definecolor{WhiteSmoke}{rgb}{0.96, 0.96, 0.96}
\definecolor{mygray}{gray}{0.6}
\newcommand{\CODESIZETINY}{\footnotesize}
\newcommand{\CODESIZE}{\small}  %
\newcommand{\CODESIZENORMAL}{\small}  %
\newcommand{\LSTCODESIZE}{\fontsize{8pt}{9.5pt}}  %
\newcommand{\CODESTYLE}{\ttfamily}
\newcommand{\CODE}[1]{{\CODESIZE\CODESTYLE #1}}
\lstdefinelanguage{SOCaml}{
    morekeywords= {val, let, new, lazy, match, with, rec, open, module, namespace, type, of, member, and, for, while, true, false, in, do, fun, return, yield, try, mutable, if, then, else, cloud, async, static, use, abstract, interface, inherit, finally, Thread, begin, end},
	otherkeywords={ let!, return!, do!, yield!, use!, var, from, select, where, order, by, match, goto},   sensitive=false,
	morekeywords = [3]{choice_at, finish, fix, out, inp, dlin, channel},
    keywordstyle = [3]\color{purple},
    morekeywords = [4]{c, a, s},
    morecomment=[s][\color{dkgreen}]{{(*}{*)}},
    moredelim=[is][\color{gtColor}]{--!}{!--},
    moredelim=[is][\color{otColor}]{--?}{?--},
    moredelim=**[is][\btHL]{£}{£},
    moredelim=**[is][\bfseries\color{dkred}]{&}{&},
    moredelim=**[is][\btHL]{~}{~},
    moredelim=**[is][\berrHL]{€}{€},
    morestring=[b]",
    stringstyle=\color{dkred}
}
\CODESTYLE\color{dkblue},
\lstdefinestyle{BASE}{
  mathescape=false,
  keywordstyle=\CODESIZENORMAL\CODESTYLE\color{dkblue}, %
  alsoletter={},
  basicstyle={},
  emph={},
  emphstyle={},
  breaklines=false,
  literate={},
}
\lstdefinestyle{SCRIBBASE}%
{%
	style=BASE,
	language=Java,
	morekeywords=[1]{
		protocol, role, choice, at, or, from, to, rec, par, and, interrupt, by, finish, continue, global, local, self, interruptible, with, as, catches, type, sig, aux, explicit, connect, wrap, disconnect, accept, wrapClient, wrapServer},
  moredelim=**[is][{\color{red}\lstset{style=ERROR}}]{@!}{@},
  moredelim=**[is][\only<1>{\color{red}\lstset{style=ERROR}}]{@!1}{@},
  moredelim=**[is][\only<2>{\color{red}\lstset{style=ERROR}}]{@!2}{@},
  moredelim=**[is][\only<3>{\color{red}\lstset{style=ERROR}}]{@!3}{@},
  moredelim=**[is][\only<4>{\color{red}\lstset{style=ERROR}}]{@!4}{@},
  moredelim=**[is][\only<5>{\color{red}\lstset{style=ERROR}}]{@!5}{@},
  moredelim=**[is][{\color{dkgreen}\lstset{style=ASSIST}}]{@?}{@},
  moredelim=**[is][\only<1>{\color{dkgreen}\lstset{style=ASSIST}}]{@?1}{@},
  moredelim=**[is][\only<2>{\color{dkgreen}\lstset{style=ASSIST}}]{@?2}{@},
  moredelim=**[is][\only<3>{\color{dkgreen}\lstset{style=ASSIST}}]{@?3}{@},
  moredelim=**[is][\only<4>{\color{dkgreen}\lstset{style=ASSIST}}]{@?4}{@},
  moredelim=**[is][\only<5>{\color{dkgreen}\lstset{style=ASSIST}}]{@?5}{@},
  moredelim=**[is][\only<-2>{\color{red}\lstset{style=ERROR}}]{@!-2}{@},
  moredelim=**[is][\only<2->{\color{red}\lstset{style=ERROR}}]{@!2-}{@}
}
\lstdefinestyle{SCRIBBG}%
{%
	style=SCRIBBASE,
	basicstyle=\CODESTYLE\CODESIZE\color{black!40},
	keywordstyle=[1]{\color{dkblue!40}},
	keywordstyle=[2]{\color{dkgreen!40}},
  commentstyle=\itshape\color{purple!40},
	identifierstyle=\color{black!40},
	stringstyle=\color{teal!40},
 	emphstyle=\color{dkblue!40}
}
\lstdefinestyle{LSTSCRIBBG}%
{%
	style=SCRIBBG,
	basicstyle=\LSTCODESIZE\CODESTYLE\color{black!40},
  moredelim=**[is][\color{black}\lstset{style=SCRIB}]{@-}{@},
  moredelim=**[is][\only<1>{\color{black}\lstset{style=LSTSCRIB}}]{@1}{@},
  moredelim=**[is][\only<2>{\color{black}\lstset{style=LSTSCRIB}}]{@2}{@},
  moredelim=**[is][\only<3>{\color{black}\lstset{style=LSTSCRIB}}]{@3}{@},
  moredelim=**[is][\only<4>{\color{black}\lstset{style=LSTSCRIB}}]{@4}{@},
  moredelim=**[is][\only<5>{\color{black}\lstset{style=LSTSCRIB}}]{@5}{@},
  moredelim=**[is][\only<6>{\color{black}\lstset{style=LSTSCRIB}}]{@6}{@},
  moredelim=**[is][\only<7>{\color{black}\lstset{style=LSTSCRIB}}]{@7}{@},
  moredelim=**[is][\only<8>{\color{black}\lstset{style=LSTSCRIB}}]{@8}{@},
  moredelim=**[is][\only<9>{\color{black}\lstset{style=LSTSCRIB}}]{@9}{@},
  moredelim=**[is][\only<10>{\color{black}\lstset{style=LSTSCRIB}}]{@10}{@},
  moredelim=**[is][\only<1->{\color{black}\lstset{style=LSTSCRIB}}]{@1-}{@},
  moredelim=**[is][\only<2->{\color{black}\lstset{style=LSTSCRIB}}]{@2-}{@},
  moredelim=**[is][\only<3->{\color{black}\lstset{style=LSTSCRIB}}]{@3-}{@},
  moredelim=**[is][\only<4->{\color{black}\lstset{style=LSTSCRIB}}]{@4-}{@},
  moredelim=**[is][\only<-2>{\color{black}\lstset{style=LSTSCRIB}}]{@-2}{@}
}
\lstdefinestyle{SCRIB}%
{%
	style=SCRIBBASE,
	basicstyle=\CODESTYLE\CODESIZE\color{black},
	keywordstyle=[1]{\color{dkblue}},
	keywordstyle=[2]{\color{dkgreen}},
  commentstyle=\itshape\color{purple},
	identifierstyle=\color{black},
	stringstyle=\color{teal},
 	emphstyle=\color{dkblue}
}
\lstdefinestyle{LSTSCRIB}%
{%
	style=SCRIB,
	basicstyle=\LSTCODESIZE\CODESTYLE\color{black},
  moredelim=**[is][\only<1>{\color{black!40}\lstset{style=LSTSCRIBBG}}]{@1}{@},
  moredelim=**[is][\only<2>{\color{black!40}\lstset{style=LSTSCRIBBG}}]{@2}{@},
  moredelim=**[is][\only<3>{\color{black!40}\lstset{style=LSTSCRIBBG}}]{@3}{@},
  moredelim=**[is][\only<4>{\color{black!40}\lstset{style=LSTSCRIBBG}}]{@4}{@},
  moredelim=**[is][\only<5>{\color{black!40}\lstset{style=LSTSCRIBBG}}]{@5}{@},
  moredelim=**[is][\only<6>{\color{black!40}\lstset{style=LSTSCRIBBG}}]{@6}{@},
  moredelim=**[is][\only<7>{\color{black!40}\lstset{style=LSTSCRIBBG}}]{@7}{@},
  moredelim=**[is][\only<8>{\color{black!40}\lstset{style=LSTSCRIBBG}}]{@8}{@},
  moredelim=**[is][\only<9>{\color{black!40}\lstset{style=LSTSCRIBBG}}]{@9}{@},
  moredelim=**[is][\only<-2>{\color{black!40}\lstset{style=LSTSCRIBBG}}]{@-2}{@},
  moredelim=**[is][\only<1->{\color{black!40}\lstset{style=LSTSCRIBBG}}]{@1-}{@},
  moredelim=**[is][\only<2->{\color{black!40}\lstset{style=LSTSCRIBBG}}]{@2-}{@},
  moredelim=**[is][\only<3->{\color{black!40}\lstset{style=LSTSCRIBBG}}]{@3-}{@},
  moredelim=**[is][\only<4->{\color{black!40}\lstset{style=LSTSCRIBBG}}]{@4-}{@},
  moredelim=**[is][\only<1-2>{\color{black!40}\lstset{style=LSTSCRIBBG}}]{@1-2}{@}
}
\lstdefinestyle{ERROR}%
{%
	style=SCRIBBASE,
	basicstyle=\CODESTYLE\CODESIZE\color{red},
	keywordstyle=[1]{\color{red}},
	keywordstyle=[2]{\color{red}},
	identifierstyle=\color{red}
}
\lstdefinestyle{ASSIST}%
{%
	style=SCRIBBASE,
	basicstyle=\CODESTYLE\CODESIZE\color{green},
	keywordstyle=[1]{\color{dkgreen}},
	keywordstyle=[2]{\color{dkgreen}},
	identifierstyle=\color{dkgreen}
}
\lstdefinestyle{CONST}%
{%
	style=SCRIBAPIBASE,
	basicstyle=\CODESTYLE\CODESIZE\color{dkblue},
	keywordstyle=[1]{\color{dkblue}},
	keywordstyle=[2]{\color{dkblue}},
	identifierstyle=\color{dkblue}
}
\lstdefinestyle{SCRIBAPIBASE}%
{%
	language=Java,
	morekeywords=[1]{},
	morekeywords=[2]{connect, accept, send, receive, enterScope, start, end, invite, async, sync, branch, getOp},
	morekeywords=[3]{val},
	basicstyle=\CODESTYLE\CODESIZE\color{black},
	keywordstyle=[1]{\bfseries\color{purple}},
	keywordstyle=[2]{\color{purple}},
	keywordstyle=[3]{\color{dkblue}},
	keywordstyle=[4]{\bfseries\itshape\color{dkblue}},
  commentstyle=\itshape\color{dkgreen},
  moredelim=**[is][\bfseries\itshape\color{dkblue}]{@=}{@},  %
	moredelim=**[is][\only<1>{\bfseries\itshape\color{dkblue}}]{@=1}{@},
  moredelim=**[is][\only<2>{\bfseries\itshape\color{dkblue}}]{@=2}{@},
  moredelim=**[is][\only<3>{\bfseries\itshape\color{dkblue}}]{@=3}{@},
  moredelim=**[is][\only<4>{\bfseries\itshape\color{dkblue}}]{@=4}{@},
  moredelim=**[is][\only<5>{\bfseries\itshape\color{dkblue}}]{@=5}{@},
	moredelim=**[is][\only<1-2>{\bfseries\itshape\color{dkblue}}]{@=1-2}{@},
	moredelim=**[is][\only<4->{\bfseries\itshape\color{dkblue}}]{@=4-}{@},
  moredelim=**[is][\only<1>{\color{red}\lstset{style=ERROR}}]{@!1}{@},
  moredelim=**[is][\only<2>{\color{red}\lstset{style=ERROR}}]{@!2}{@},
  moredelim=**[is][\only<3>{\color{red}\lstset{style=ERROR}}]{@!3}{@},
  moredelim=**[is][\only<4>{\color{red}\lstset{style=ERROR}}]{@!4}{@},
  moredelim=**[is][\only<5>{\color{red}\lstset{style=ERROR}}]{@!5}{@},
  moredelim=**[is][{\color{dkgreen}\lstset{style=ASSIST}}]{@?}{@},
  moredelim=**[is][\only<1>{\color{dkgreen}\lstset{style=ASSIST}}]{@?1}{@},
  moredelim=**[is][\only<2>{\color{dkgreen}\lstset{style=ASSIST}}]{@?2}{@},
  moredelim=**[is][\only<3>{\color{dkgreen}\lstset{style=ASSIST}}]{@?3}{@},
  moredelim=**[is][\only<4>{\color{dkgreen}\lstset{style=ASSIST}}]{@?4}{@},
  moredelim=**[is][\only<5>{\color{dkgreen}\lstset{style=ASSIST}}]{@?5}{@}
}
\lstdefinestyle{SCRIBAPIBG}%
{%
	style=SCRIBAPIBASE,
	basicstyle=\CODESTYLE\CODESIZE\color{black!40},
  keywordstyle=\color{red!40},
	keywordstyle=[1]{\bfseries\color{purple!40}},
	keywordstyle=[2]{\color{purple!40}},
	keywordstyle=[3]{\color{dkblue!40}},
	keywordstyle=[4]{\bfseries\itshape\color{dkblue!40}},
  commentstyle=\color{dkgreen!40},
	identifierstyle=\color{black!40},
	stringstyle=\color{teal!40},
 	emphstyle=\color{dkblue!40}
}
\lstdefinestyle{LSTSCRIBAPIBG}%
{%
	style=SCRIBAPIBG,
	basicstyle=\LSTCODESIZE\CODESTYLE\color{black!40},
  moredelim=**[is][\color{black}\lstset{style=LSTSCRIBAPI}]{@-}{@},
  moredelim=**[is][\only<1>{\color{black}\lstset{style=LSTSCRIBAPI}}]{@1}{@},
  moredelim=**[is][\only<2>{\color{black}\lstset{style=LSTSCRIBAPI}}]{@2}{@},
  moredelim=**[is][\only<3>{\color{black}\lstset{style=LSTSCRIBAPI}}]{@3}{@},
  moredelim=**[is][\only<4>{\color{black}\lstset{style=LSTSCRIBAPI}}]{@4}{@},
  moredelim=**[is][\only<5>{\color{black}\lstset{style=LSTSCRIBAPI}}]{@5}{@},
  moredelim=**[is][\only<6>{\color{black}\lstset{style=LSTSCRIBAPI}}]{@6}{@},
  moredelim=**[is][\only<7>{\color{black}\lstset{style=LSTSCRIBAPI}}]{@7}{@},
  moredelim=**[is][\only<8>{\color{black}\lstset{style=LSTSCRIBAPI}}]{@8}{@},
  moredelim=**[is][\only<9>{\color{black}\lstset{style=LSTSCRIBAPI}}]{@9}{@},
  moredelim=**[is][\only<1->{\color{black}\lstset{style=LSTSCRIBAPI}}]{@1-}{@},
  moredelim=**[is][\only<2->{\color{black}\lstset{style=LSTSCRIBAPI}}]{@2-}{@},
  moredelim=**[is][\only<3->{\color{black}\lstset{style=LSTSCRIBAPI}}]{@3-}{@},
  moredelim=**[is][\only<4->{\color{black}\lstset{style=LSTSCRIBAPI}}]{@4-}{@},
  moredelim=**[is][\only<1-2>{\color{black}\lstset{style=LSTSCRIBAPI}}]{@1-2}{@},  %
  moredelim=**[is][\only<2-3>{\color{black}\lstset{style=LSTSCRIBAPI}}]{@2-3}{@},
  moredelim=**[is][\only<3-4>{\color{black}\lstset{style=LSTSCRIBAPI}}]{@3-4}{@}
}
\lstdefinestyle{SCRIBAPI}%
{%
	style=SCRIBAPIBASE,
	identifierstyle=\color{black},
	stringstyle=\color{teal},
 	emphstyle=\color{dkblue}
}
\lstdefinestyle{LSTSCRIBAPI}%
{%
	style=SCRIBAPI,
	basicstyle=\LSTCODESIZE\CODESTYLE\color{black},
  moredelim=**[is][\color{black!40}\lstset{style=LSTSCRIBAPIBG}]{@1}{@},
  moredelim=**[is][\only<2>{\color{black!40}\lstset{style=LSTSCRIBAPIBG}}]{@2}{@},
  moredelim=**[is][\only<3>{\color{black!40}\lstset{style=LSTSCRIBAPIBG}}]{@3}{@},
  moredelim=**[is][\only<4>{\color{black!40}\lstset{style=LSTSCRIBAPIBG}}]{@4}{@},
  moredelim=**[is][\only<5>{\color{black!40}\lstset{style=LSTSCRIBAPIBG}}]{@5}{@},
  moredelim=**[is][\only<6>{\color{black!40}\lstset{style=LSTSCRIBAPIBG}}]{@6}{@},
  moredelim=**[is][\only<7>{\color{black!40}\lstset{style=LSTSCRIBAPIBG}}]{@7}{@},
  moredelim=**[is][\only<8>{\color{black!40}\lstset{style=LSTSCRIBAPIBG}}]{@8}{@},
  moredelim=**[is][\only<9>{\color{black!40}\lstset{style=LSTSCRIBAPIBG}}]{@9}{@},
  moredelim=**[is][\only<1->{\color{black!40}\lstset{style=LSTSCRIBAPIBG}}]{@1-}{@},
  moredelim=**[is][{\color{dkgreen}\lstset{style=ASSIST}}]{@?}{@},
}
\lstdefinestyle{PYTHONAPI}%
{
	language=python,
	showstringspaces=false,
	formfeed=\newpage,
	tabsize=2,
	basicstyle=\CODESTYLE\CODESIZE,
	keywordstyle=\CODESTYLE\CODESIZE,
	commentstyle=\color{purple}\CODESTYLE\CODESIZE,
	stringstyle=\color{teal}\CODESTYLE\CODESIZE,
 	emphstyle=\color{dkblue}\bfseries\CODESIZE,
	morekeywords={models, lambda, forms, def, class}
	keywordstyle=\color{dkblue},
	emph={%
		access,and,as,break,class,continue,def,del,elif,else,%
		except,exec,finally,for,from,global,if,import,in,is,%
		lambda,not,or,pass,print,raise,return,try,while,assert,with%
	}
}
\newenvironment{btHighlight}[1][]
{\begingroup\tikzset{bt@Highlight@par/.style={#1}}\begin{lrbox}{\@tempboxa}}
{\end{lrbox}\bt@HL@box[bt@Highlight@par]{\@tempboxa}\endgroup}
\newcommand\btHL[1][]{%
  \begin{btHighlight}[#1]\bgroup\aftergroup\bt@HL@endenv%
}
\def\bt@HL@endenv{%
  \end{btHighlight}%
  \egroup
}
\newcommand{\bt@HL@box}[2][]{%
  \tikz[#1]{%
    \pgfpathrectangle{\pgfpoint{1pt}{0pt}}{\pgfpoint{\wd #2}{\ht #2}}%
    \pgfusepath{use as bounding box}%
    \node[anchor=base west, fill=orange!30,outer sep=0pt,inner xsep=1pt, inner ysep=0pt, rounded corners=3pt, minimum height=\ht\strutbox+1pt,#1]{\raisebox{1pt}{\strut}\strut\usebox{#2}};
  }%
}
\newenvironment{berrHighlight}[1][]
{\begingroup\tikzset{berr@Highlight@par/.style={#1}}\begin{lrbox}{\@tempboxa}}
{\end{lrbox}\berr@HL@box[berr@Highlight@par]{\@tempboxa}\endgroup}
\newcommand\berrHL[1][]{%
  \begin{berrHighlight}[#1]\bgroup\aftergroup\berr@HL@endenv%
}
\def\berr@HL@endenv{%
  \end{berrHighlight}%
  \egroup
}
\newcommand{\berr@HL@box}[2][]{%
  \tikz[#1]{%
    \pgfpathrectangle{\pgfpoint{1pt}{0pt}}{\pgfpoint{\wd #2}{\ht #2}}%
    \pgfusepath{use as bounding box}%
    \node[anchor=base west, draw=red, fill = WhiteSmoke,,outer sep=0pt,inner xsep=1pt, inner ysep=0pt, rounded corners=3pt, minimum height=\ht\strutbox+1pt,#1]{\raisebox{1pt}{\strut}\strut\usebox{#2}};
  }%
}
\lstdefinelanguage{Scribble}{
  keywords={int},
  morecomment=[l]{//},
  morecomment=[s]{/*}{*/},
  morecomment=[s]{@"}{"},
  morestring=[b]",
  tabsize=2
}
\lstdefinestyle{myocaml}{
  style=BASE,
  language=SOCaml,
  tabsize=2,
  numbersep=1mm,
  keywordstyle = \color{dkblue},
  basicstyle=\CODESIZETINY\CODESTYLE,
  morekeywords = [3]{choice_at, finish},
  keywordstyle = [3]\color{purple},
  keywordstyle = [4]\color{dkred},
  commentstyle = \color{dkgreen}\itshape,
  alsoletter = {:,->,(,), _ },
  escapeinside={^}{^},
  mathescape = true,
  literate=
    {-->}{{\color{purple}-{}->}}1
    {:?}{$\color{blue}{:? \quad}$}1
    {<}{<}1
    {\%}{{\color{dkred}\%}}1
     {@@}{{\color{dkblue}@@}}1
    {<@}{{\color{blue}<@} \quad}1
    {@>}{\quad{\color{blue}@}>}1
    {>}{>}1
    {._}{{$\,$}}1
    {,,}{{$\,\elipc\,$}}1
    {**}{{,}}1
    {???}{{\%}}1
}
\lstdefinestyle{tCODEstyle}{
    style=BASE,
    language = SOCaml,
    alsoletter = {_},
    alsoletter = {'},
    basicstyle=\CODESIZE\CODESTYLE,
    emph={string,str,int, real, bool, unit},
    morekeywords = [2]{dynlin, lin, data},
    keywordstyle = [2]\color{gray},
    emphstyle=\color{dkred},
    keywordstyle=\CODESIZE\CODESTYLE\color{dkred},
    mathescape=true,
    breaklines=true,
    literate=
    {-->}{{\color{purple}-{}->}}1
    {:?}{$\color{dkblue}{:? }$}1
    {<}{<}1
     {(}{(}1
    {\%}{{\color{blue}\%}}1
     {@@}{{\color{dkblue}@@}}1
    {<@}{{\color{blue}<@} \quad}1
    {@>}{\quad{\color{blue}@}>}1
    {>}{>}1
    {._}{{$\,$}}1
}
\definecolor{mygray}{RGB}{245,245,245}
\lstdefinestyle{oCODEstyle}{
    style=BASE,
    language = SOCaml,
    alsoletter = {_},
    basicstyle=\CODESIZENORMAL\CODESTYLE,
    emph={string,str,int, real, bool, unit},
    emphstyle=\color{dkred},
    keywordstyle=\CODESIZENORMAL\CODESTYLE\color{dkblue},
    mathescape=true,
    breaklines=true,
    literate=
    {-->}{{\color{purple}-{}->}}1
    {:?}{$\color{dkblue}{:? }$}1
    {<}{<}1
     {(}{(}1
    {\%}{{\color{blue}\%}}1
    {@@}{{\color{dkblue}@@}}1
    {<@}{{\color{blue}<@} \quad}1
    {@>}{\quad{\color{blue}@}>}1
    {>}{>}1
    {._}{{$\,$}}1
    {,,}{{$\,\elipc\,$}}1
    {**}{{,}}1
}
\newcommand{\oCODEEsc}[1]{\lstinline[style=oCODEstyle]!#1!}
\newcommand{\oCODE}{\lstinline[style=oCODEstyle]}
\lstdefinestyle{myocamltight}{
    style=myocaml,
    xleftmargin=2em,
    xrightmargin=2em,
    belowskip=2pt,
    aboveskip=4pt,
    framexleftmargin=0pt,
    framexrightmargin=0pt,
    framextopmargin=0pt,
    framexbottommargin=0pt,
    resetmargins=true,
  }
\CODESTYLE\color{dkblue},
\newcommand{\mrg}{\mathit{mrg}}
\begin{document}

\maketitle
\startcontents[sections]
\begin{abstract}
Multiparty Session Types (MPST) is a typing discipline for
communication protocols. It ensures the absence of communication
errors and deadlocks for well-typed communicating processes. The state-of-the-art
implementations of the MPST theory rely on (1) \emph{runtime
 linearity checks} to ensure correct usage of communication
channels and (2) external domain-specific languages for
specifying and verifying multiparty protocols.

To overcome these limitations, we propose a library for programming with \textit{global combinators} -- a set of functions for writing and verifying multiparty protocols in \OCaml. 
Local behaviours for \textit{all} processes in a protocol are inferred
\textit{at once} from a global combinator.
We formalise global combinators 
and prove a sound realisability of global combinators -- a 
well-typed global combinator derives a set of local
types, by which typed endpoint programs can ensure
type and communication safety. Our approach
enables fully-static verification and implementation 
of the whole protocol, from
the protocol specification to the process implementations, to happen
in the same language. 

We compare our implementation to 
untyped and continuation-passing style implementations, and 
demonstrate its expressiveness by implementing 
a plethora of protocols.   
We show our library can interoperate with existing libraries and services, 
implementing DNS (Domain Name Service) protocol and the OAuth (Open Authentication) protocol.
\end{abstract}

\section{Introduction}
\label{sec:intro}
\textbf{Multiparty Session Types.} %
Multiparty Session Types (MPST) \cite{HYC08,MPST,DBLP:journals/jacm/HondaYC16} is a theoretical framework that stipulates how to write, verify and ensure correct implementations of communication protocols.
The methodology of programming with MPST (depicted in \cfig{fig:overview}(a)) starts from a  communication protocol (a \emph{global type}) which specifies the behaviour of a system of interacting processes. %
The local behaviour (a \emph{local type})
for each endpoint process is then algorithmically {\em projected} from the
protocol. Finally, each endpoint process is implemented in an endpoint host language and type-checked against its respective local type by a session
typing system. The guarantee of session types is that a system of  well-typed endpoint
processes \textit{does not go wrong}, i.e it does not exhibit
communication errors such as reception errors, orphan messages or deadlocks, and satisfies
session fidelity, i.e. the local behaviour of each process follows the global specification.

The theoretical MPST framework ensures desirable safety properties. In practice, 
session types implementations that enforce these properties \textit{statically}, i.e at compile-time,  are limited to binary (two party protocols)
\cite{pucella08session,orchard16effects,lindley16embedding,padovani17context}. 
Extending binary session types
implementations
to multiparty interactions, which support static linearity checks
(i.e., linear usage of channels),
is non-trivial, and poses two implementation challenges. 

\textbf{(C1) How global types can be specified and verified in a general-purpose programming language?} Checking compatibility of two communicating processes relies on \textit{duality}, i.e., when one process performs an action, the other performs a complementary (dual) action. Checking the compatibility of multiple processes is more complicated, and 
relies on the existence of a \textit{well-formed} global protocol and  the syntax-directed procedure of \textit{projection}, 
which derives local types from a global specification. A global protocol is considered \textit{well-formed}, 
if local types can be derived via projection. 
Since global types are far from the types of a “mainstream” programming language, 
state-of-the-art MPST implementations
\cite{HY2016,DBLP:conf/cc/NeykovaHYA18,scalas17linear,CHJNY2019} use
external domain-specific protocol description languages and tools 
(e.g. the Scribble toolchain \cite{scribble})
to specify global types and to implement the verification procedure of projection.  
The usage of external tools for protocol description and verification 
widens the gap between the specification and 
its implementations and makes it more difficult to locate protocol violations
in the program, i.e. the correspondence between an error in the program and
the protocol is less apparent.
\textbf{(C2) How to implement safe multiparty communication over binary channels?} 
The theory of MPST requires processes to communicate over multiparty channels – channels that carry messages between two or more parties; their types stipulate the 
precise sequencing of the communication between multiple processes. Additionally, multiparty channels has to be used linearly, i.e exactly once. 
In practice, however, (1) communication channels are binary, i.e a TCP socket for example connects only two parties, and hence its type can describe interactions between two entities only; (2) most languages do not support typing of linear resources.  Existing MPST implementations 
\cite{HY2016,DBLP:conf/cc/NeykovaHYA18,scalas17linear,CHJNY2019}  apply two workarounds. To preserve the order of interactions  when implementing a multiparty protocol over binary channels, existing works use code generation (e.g. \cite{scribble}) and generate local types (APIs) for several (nominal) programming languages. Note that although the interactions order is preserved, most of these implementations \cite{HY2016,DBLP:conf/cc/NeykovaHYA18,CHJNY2019} still require type-casts on the underlying channels, compromising type safety of the host type system.
To ensure linear usage of multiparty channels, \textit{runtime checks} are inserted to detect if a channel has been used more than once. 
This is because the  type systems of their respective host languages do not provide static linearity checking mechanism.

\begin{figure}[t]
\begin{centering}
\hspace{1.5mm}
\begin{minipage}{0.49\textwidth}
\vspace{7mm}
\includegraphics[scale=0.31]{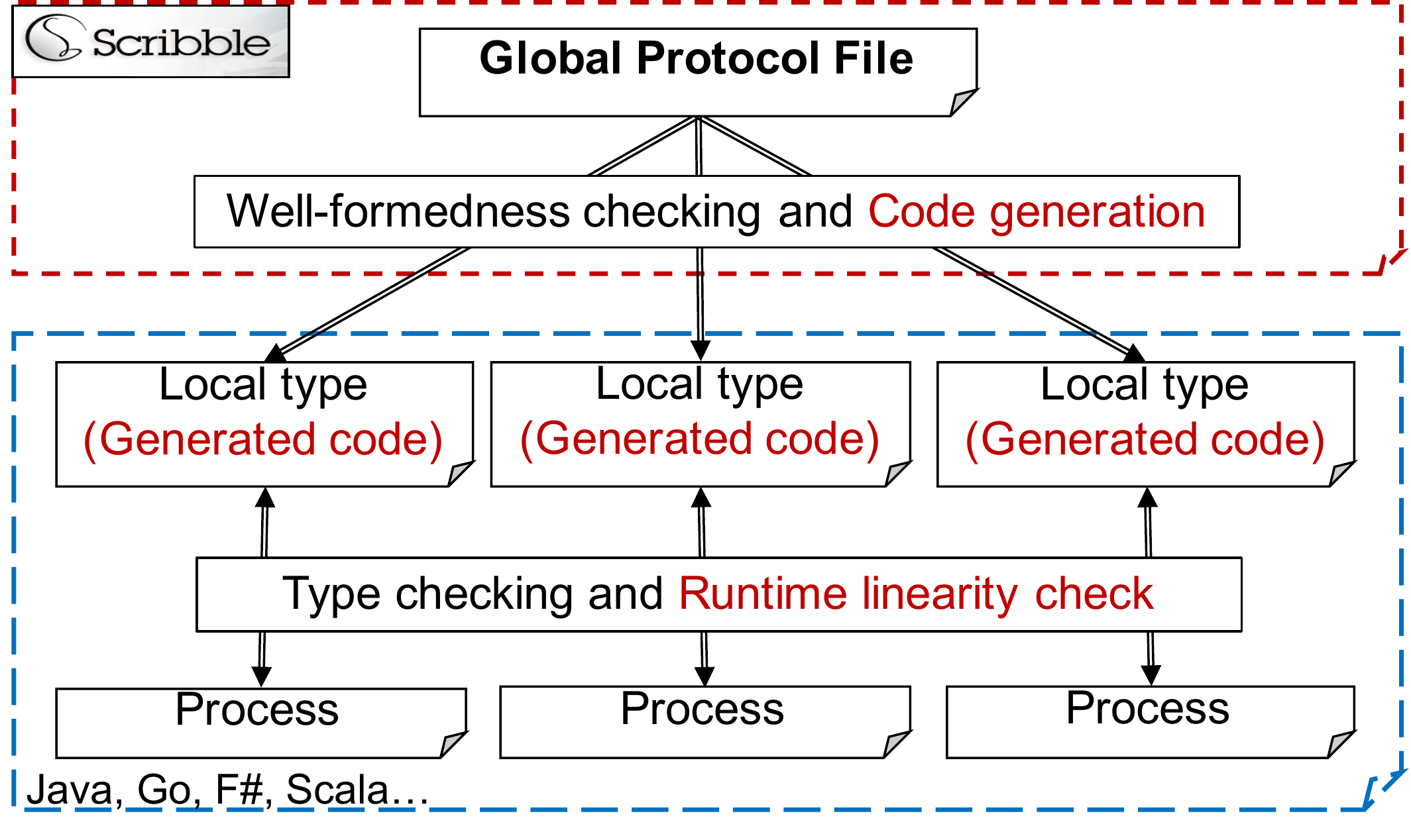}
\end{minipage}
\hspace{1mm}
\begin{minipage}{0.48\textwidth}
\includegraphics[scale=0.31]{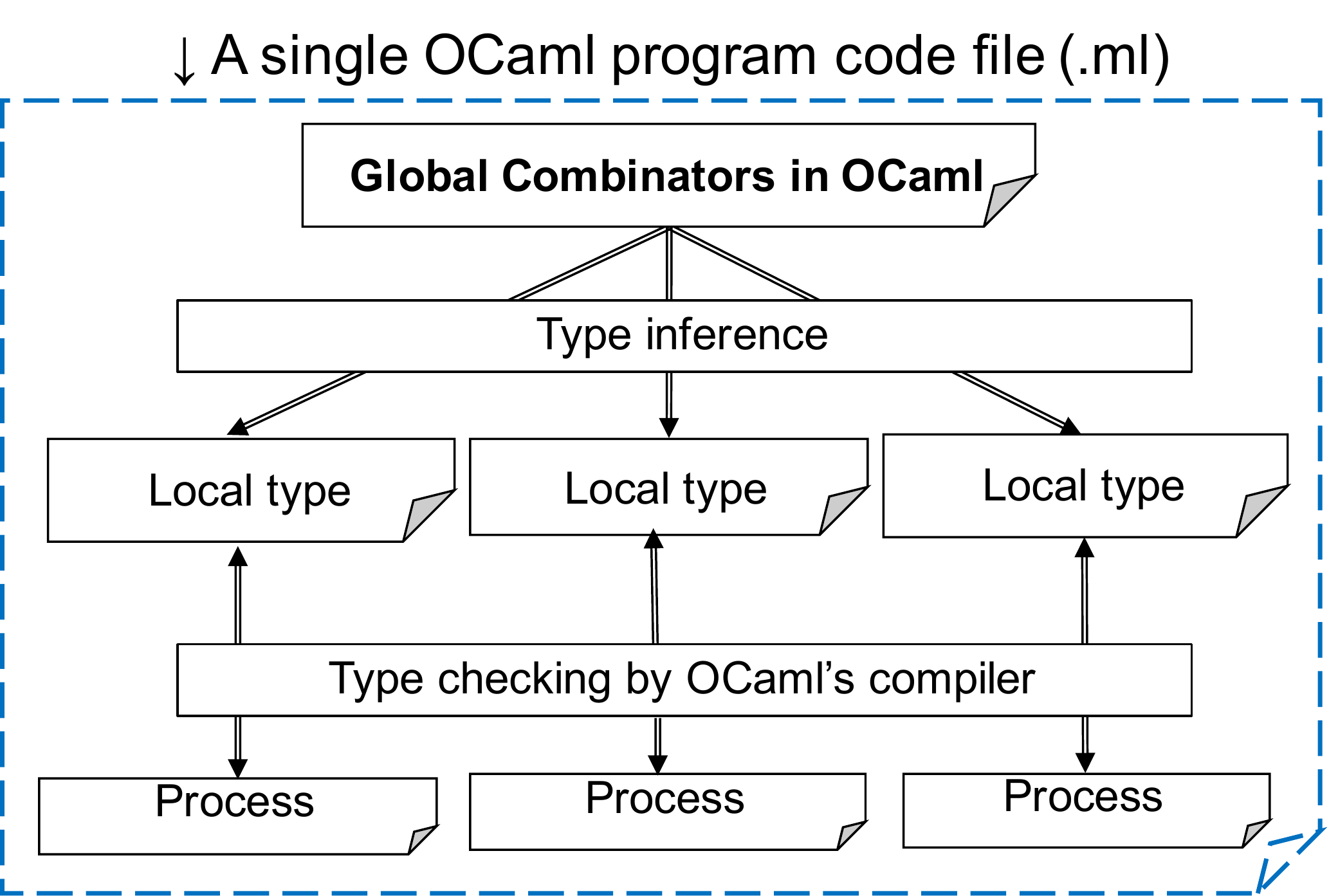}
\end{minipage}
\end{centering}
\caption{(a) State-of-the-art MPST implementations and (b) \must methodology }
\label{fig:overview}
\end{figure}

\textbf{Our approach.} This paper presents a library for programming MPST protocols in \OCaml
that solves the above challenges.
Our library, \must, allows to specify, verify and implement
MPST protocols in a single language, OCaml. %
Specifically, we address {\bf(C1)} by developing
\textit{\globalcombinators}, an embedded DSL (EDSL) for writing global
types in \OCaml. We address {\bf(C2)} by
encoding multiparty channels into \textit{channel
  vectors}  -- a data structure, storing a nested sequence of binary channels. Moreover,  
\must verifies \textit{statically} the linear usage of communication channels,
using OCaml's strong typing system and supports session delegation. 

The key device in our approach is the discovery that in a system with
variant and record types, checking compatibility of local types
coincides with existence of least upper bound
w.r.t.~subtyping relation. This realisation enables a fully
static MPST implementation, i.e.,~static checking not only on local but also on global types in a general purpose language.

Programming with \must  (\cfig{fig:overview}(b))
closely follows the ``top-down'' methodology of MPST, but
differs from the traditional MPST framework in
\cfig{fig:overview}(a).
To use our library, a programmer specifies the global protocol with a set of global combinators.
The \OCaml typechecker verifies  correctness of the global protocol
and infers local types from global combinators. 
A developer implements the endpoint processes using our \must API. Finally, the \OCaml type checker verifies that the API is used according to the inferred type.
The benefits of \must are that it is (1) {\em lightweight} -- it does
not depend on any external code-generation mechanism, verification of
global protocols is reduced to typability of global combinators; (2)
{\em fully-static} -- our embedding integrates with recent techniques for static checking of binary session types and linearly-typed lists \cite{imai18sessionscp,linocaml}, which we adopt to
implement multiparty session channels and session delegation;
(3) {\em usable} -- we can auto-detect and correct protocol violations in
the program, guided by OCaml programming environments like Merlin \cite{DBLP:journals/pacmpl/BourRS18};
(4) {\em extensible} -- while most MPST implementations rely on a
nominal typing, we embed session types  in \OCaml's {\em structural}
types, and preserve session subtyping \cite{DBLP:journals/jlp/GhilezanJPSY19};
and (5) {\em expressive} --  we can
type strictly more processes than \cite{scalas16lightweight} 
(see \Sec~\ref{sec:related}).

\myparagraph{Contributions.} Contributions and the outline of the paper are as follows:
\begin{description}
\item[\Sec~\ref{sec:overview}] gives an overview of programming with \must,  a
  library in \OCaml for specification, verification and
  implementations of communication protocols.

\item[\Sec~\ref{sec:formalism}]
formalises global combinators, presents their typing
system, and proves a \emph{sound realisability of
global combinator}, i.e. 
a set of local types 
inferred from a global combinator can type 
a channel which embeds a set of endpoint behaviours as OCaml data structures. 

\item[\Sec~\ref{sec:implementation}] 
discusses the design and implementation of global combinators. 

\item[\Sec~\ref{sec:linear:channels}] summarises
the \must communication library and explains how we utilise 
advanced features/libraries in OCaml to enable dynamic/static linearity checking on channels.

\item[\Sec~\ref{sec:evaluation}] evaluates \must. 
We compare \must\ with several different implementations and 
demonstrate the expressiveness of  \must by showing implementations 
of MPST examples, as well as a variety of real-world protocols. 
We demonstrate our library can interoperate with existing libraries and services, namely we implement DNS (Domain Name Service) and the OAuth (Open Authentication) protocols on top of existing libraries.
\end{description}
We discuss related work in \Sec~\ref{sec:related} and
conclude with future work in \Sec~\ref{sec:conclusion}.
Full proofs, omitted definitions and examples can be found in
\iftoggle{techreport}{Appendix}{\cite{IRYY2020TechReport}}.
Our implementation, {\bfseries\ourlibrary} is available at {\tt \url{https://github.com/keigoi/ocaml-mpst}}
including benchmark programs and results.

\section{Overview of OCaml Programming with Global Combinators}
\label{sec:overview}
This section gives an overview of multiparty
session programming in \must by examples. It starts from 
declaration of global combinators, followed by endpoint
implementations.  We also demonstrate how 
errors can be reported by an OCaml programming environment like Merlin \cite{DBLP:journals/pacmpl/BourRS18}.
In the end of this section, we show the syntax of global combinators and the constructs 
of \must API in \cfig{fig:synopsis}. 
The detailed explanation of the implementations of the constructs 
is deferred to \S~\ref{sec:implementation}.

\myparagraph{From global combinators to communication programs.} %
We illustrate
\emph{\globalcombinators} starting from a simple authentication protocol
(based on OAuth 2.0 \cite{rfc6749}).
A full version of the protocol is implemented and discussed in
\S~\ref{sec:evaluation}.  %
\cfig{fig:full:impl} shows the complete \OCaml implementation of the protocol,
from the protocol specification (using \globalcombinators)
to the endpoint implementations (using \must API). %

\begin{figure}[t]
{\lstset{numbers=left}
\begin{OCAMLLISTING}
let oAuth = (s --> c) login @@._._(c --> a) pwd @@._._(a --> s) auth @@._._finish (* global protocol*)^\label{line:full:global}^
\end{OCAMLLISTING}}
\hrule
{\lstset{firstnumber=last}
{\lstset{numbers=left,multicols=2}
\begin{OCAMLLISTING}
(* The client process *)
let cliThread () = ^\label{line:cli:start}^
  let ch = get_ch c oAuth in^\label{line:cli:getch}^
  let `login(x, ch) = recv ch#role_S in ^\label{line:cli:recv}^
  let ch = send ch#role_A#pwd "pass" in  ^\label{line:cli:send}^
  close ch ^\label{line:cli:close}^

(* The service process *)
let srvThread () =
  let ch = get_ch s oAuth in
  let ch = send ch#role_C#login "Hi" in
  let `auth(_,ch) = recv ch#role_A in
  close ch

(* The authenticator process *)
let authThread () =
  let ch = get_ch a oAuth in
  let `pwd(code,ch) = recv ch#role_C in
  let ch = send ch#role_S#auth true in
  close ch ^\label{line:full:end}^

(* start all processes *)
let () =
  List.iter Thread.join [^\label{line:full:threadstart}^
    Thread.create cliThread ();
    Thread.create srvThread ();
    Thread.create authThread ()]  ^\label{line:full:threadend}^
\end{OCAMLLISTING}}}
\caption{Global protocol and local implementations for OAuth protocol \protect\footnotemark}
\label{fig:full:impl}
\end{figure}

\footnotetext{We use a simplified syntax that support the in-built communication transport of Ocaml. For the full syntax of the library that is parametric on the transport, see the  \href{https://github.com/keigoi/ocaml-mpst/blob/master/instructions.md\#&note-on-syntax-discrepancies}{repository}.}
%
%
%
%
%
%

%
%

%
%
The protocol consists of three parties, a service
\Rptp{s}, a client \Rptp{c}, and an authenticator \Rptp{a}.
The interactions between the parties (hereafter also called {\em roles}) proceed as follows:
(1) the service \Rptp{s} sends to the client
\Rptp{c} a \oCODE{login} message containing a greeting (of type
\oCODE{string}); (2) the client then continues by sending its
password (\oCODE{pwd}) (of type \oCODE{string}) to the authenticator
\Rptp{a}; and (3) finally the authenticator \Rptp{a} notifies \Rptp{s}, by sending an \oCODE{auth} message (of type \oCODE{bool}), whether the client
access is authorised.

The global protocol \oCODE{oAuth} in Line~\ref{line:full:global} is specified using two global combinators, \oCODE{-->} and
\oCODE{finish}.
The former represents a
point-to-point communication between two roles, while the latter signals the end of a protocol.
The operator \oCODE{@@} is a {\color{modify}right-associative} function application {\color{modify}operator} to eliminate parentheses, i.e., \lstinline!(c._-->._a)._pwd._@@._$\mathit{exp}$! is equivalent to
\lstinline!(c._-->._a)._pwd._($\mathit{exp}$)!, where \oCODE{-->} works as a four-ary function
which takes roles \oCODE{c} and \oCODE{a} and label \oCODE{pwd} and continuation $\mathit{exp}$.
We assume that \oCODE{login},
\oCODE{pwd} and \oCODE{auth} are predefined by the user as {\em
  label objects} with their {\em payload types} of \oCODE{string}, \oCODE{string} and
\oCODE{bool}, respectively\footnote{
  To be precise, the labels are \emph{polymorphic} on their payload types which are instantiated at the point where they are used.
}.
Similarly, \Rptp{s}, \Rptp{c} and \Rptp{a}
are predefined {\em role objects}.
We elaborate on how to define these custom labels and roles in \S~\ref{sec:implementation}.

The execution of the \oCODE{oAuth} expression returns a tuple of three {\em channel vectors} -- one for each role in the global combinator. 
Each element of the tuple can be extracted using an index, encoded in role objects (\oCODE{c}, \oCODE{s}, and \oCODE{a}).
Intuitively, the role object \oCODE{c} stores a functional pointer that points to the first element of the tuple, \oCODE{s} points to the second, and \oCODE{a}  to the third element. 
The types of the extracted channel vectors reflect the local behaviour that each role, specified in the protocol, should implement. 
Channel vectors are objects that hide the \textit{actual bare communication channels} shared between every two communicating processes. 

Lines~\ref{line:cli:start}--\ref{line:full:end}
present the implementations for all three processes specified in the global protocol.
We explain the implementation for the client -- \oCODE{cliThread} (Lines~\ref{line:cli:start}--\ref{line:cli:close}).  
Other processes are similarly implemented. 
Line~\ref{line:cli:getch} extracts the channel vector that encapsulates the behaviour of the \Rptp{c}lient, 
i.e the first element of \oCODE{oAuth}. 
This is done by using the function \lstinline!get_ch! (provided by our library) applied to the role object \oCODE{c} and the expression \oCODE{oAuth}.  %
%
%

%
%
%
%
%
%
%
%
%
%
%

%
%
%
%
%
%
%
%
%
%
%
%
%
%
%
%
%
%
%
%
%
%

%

Our library provides two main communication primitives, namely \oCODE{send} and \oCODE{recv}.
To statically check communication structures using types,
we exploit OCaml's {\em structural} types of objects and polymorphic variants
(rather than their nominal counterparts of records and ordinary variants).
In Line~\ref{line:cli:recv},
  \oCODE{ch#role_S} is an invocation of method \oCODE{role_S} on an object \oCODE{ch}.
  The \oCODE{recv} primitive waits on a {\em bare channel} returned by the method invocation.
The returned value is matched against a variant tag indicating the input label \oCODE{`login} with 
the pair of the payload value \oCODE{x} and a continuation \oCODE{ch} (shadowing the previous usage of \oCODE{ch}).
Then, on Line~\ref{line:cli:send}, two method calls on \oCODE{ch} are performed, e.g \oCODE{ch#role_A#pwd}, which extract a communication channel for sending a password (\oCODE{pwd}) to the
\Rptp{a}uthenticator. This channel is passed 
 to the \oCODE{send} primitive, along with the payload value \oCODE{"pass"}.
Then, \oCODE{let} rebinds the name \oCODE{ch} to the continuation returned by \oCODE{send} and on
Line~\ref{line:cli:close} the channel is closed.
Each operation is guided by the host OCaml type system, via {\em channel vector type}.
For example, the \oCODE{c}lient channel \oCODE{ch} extracted in Line~\ref{line:cli:getch} has a channel vector type (inferred by \OCaml type checker)
\lstinline!<role_S:._[`login._of._string._*._$t$]._inp>!
which denote reception (suffixed by \oCODE{inp}) from server of a \oCODE{login} label, then continuing to $t$, where
$t$ is \lstinline!<role_A:<pwd:(string,close)._out>>!
denoting sending (\oCODE{out}) to authenticator of a \oCODE{pwd} label, followed by closing.
Note that the type \oCODE{<f:._t>} denotes an \OCaml object with a
field \oCODE{f} of type \oCODE{t};  \oCODE{[`m._of._t]} is a (polymorphic)
variant type having a tag \oCODE{m} of type  \oCODE{t}.
Finally, in Lines~\ref{line:full:threadstart}--\ref{line:full:threadend} all processes are started in new threads.

\begin{figure}[t]
\begin{subfigure}[t]{0.49\textwidth}
{\lstset{numbers=left}
\begin{OCAMLLISTING}
let oAuth2 () =^\label{line:ex2globalbegin}^
  (choice_at s (to_s login_cancel)^\label{line:ex2choice}^
    (s, oAuth ()) ^\label{line:auth2:choice1}^
    (s, (s --> c) cancel @@
        (c --> a) quit @@ ^\label{line:auth2:choice2}^
        finish))^\label{line:ex2resultend}^
\end{OCAMLLISTING}}
\caption{Protocol With Branching\label{fig:auth2a}}
\end{subfigure}
\begin{subfigure}[t]{0.49\textwidth}
{\lstset{numbers=left}
\begin{OCAMLLISTING}
let oAuth3 () =^\label{line:oauth3globalbegin}^
  fix (fun repeat ->
  (choice_at s (to_s oauth2_retry)^\label{line:oauth3choice}^
    (s, oAuth2 ()
    (s, (s --> c) retry @@
        repeat))^\label{line:oauth3resultend}^
\end{OCAMLLISTING}}
\caption{Protocol With Branching \& Recursion\label{fig:auth2b}}
\end{subfigure}
\caption{Extended \lstinline!oAuth! protocols}
\label{fig:auth2}
\end{figure}
\myparagraph{On the expressiveness of  well-typed global protocols.}
\cfig{fig:auth2} shows two global protocols that extend \oCODE{oAuth} with new behaviours.
In  \cfig{fig:auth2a}, the global combinator \oCODE{choice_at} specifies a branching
behaviour at role \Rptp{s}.  
In the first case (Line~\ref{line:auth2:choice1}),
the protocol proceeds with 
protocol \oCODE{oAuth}. In the second case (Line~\ref{line:auth2:choice2}) the service sends \oCODE{cancel},
to the client, and the client sends a \oCODE{quit} message to the authenticator.
The deciding role, \Rptp{s},  is explicit in each branch. 
The choice combinator requires a user-defined
\oCODE{(to_s login_cancel)} (Line~\ref{line:ex2choice}) that specifies
concatenation of two objects for sending in branches. Its implementation
is straightforward (see \S~\ref{sec:implementation}). The protocol
\oCODE{oAuth3} in \cfig{fig:auth2b} reuses \oCODE{oAuth2} and
further elaborates its behaviour by offering a retry option. It
demonstrates a recursive specification where the \oCODE{fix} combinator binds the protocol itself to variable \oCODE{repeat}.

The implementation of the corresponding client code for \cfig{fig:auth2a} is shown on 
\cfig{fig:client:oAuth2:dynamic}. The code is similar as before,
but uses a pattern matching against multiple tags
\oCODE{`login} and \oCODE{`cancel} to specify an {\em external choice} on the client, i.e the client can receive messages of different types
and exhibit different behaviour according to received labels. %
The behaviour that a role can send messages of different types, which is often referred to as an {\em internal choice}, is represented as an object with multiple methods.

Our implementation also preserves the subtyping relation in session types \cite{DBLP:journals/jlp/GhilezanJPSY19}, i.e the safe
replacement of a channel of more capabilities in a context where a channel of less
capabilities is expected. Session subtyping is important in practice since it ensures
backward compatibility for protocols:
a new version of a protocol does not break existing implementations.
For example, the client function in
\cfig{fig:client:oAuth2:dynamic} is typable under both protocols
\oCODE{oAuth2} and \oCODE{oAuth3} since the type of the channel stipulating the behaviour for role \oCODE{c} in \oCODE{oAuth2} (receiving either message \oCODE{`login} or \oCODE{`cancel}) is a subtype of the channel for \oCODE{c} in \oCODE{oAuth3} (receiving \oCODE{`login},  \oCODE{`cancel}, or \oCODE{`retry}). 

\myparagraph{Static linearity and session delegation.}
The implementations presented in \cfig{fig:full:impl}, as well as \cfig{fig:client:oAuth2:dynamic}
detect linearity violations at runtime, as common in MPST implementations \cite{HY2016,scalas17linear} in a non-substructural type system.
We overcome this dynamic checking issue by an alternative approach, listed in \cfig{fig:client:oAuth2:static}.
We utilise an extension (\oCODEEsc{let\%lin}) %
for linear types in OCaml \cite{linocaml} that statically
enforces linear usage of resources by combining the usage of
parameterised monads \cite{indexedmonad2,atkey09parameterized,padovani16simple} and lenses \cite{foster07combinators}.
Our library is parameterised on the chosen approach, static or dynamic.
A few changes are made to avoid explicit handling of linear resources:
(1) \oCODE{ch} in \cfig{fig:client:oAuth2:static}
refers to a {\em linear} resource and has to be matched against a {\em linear pattern} prefixed by \oCODE{#}. 
(2) Roles and labels are now specified as a {\em selector} function of the form \oCODE{(fun x->x#role#label)}.

\begin{figure}[t]
\begin{subfigure}[t]{0.40\textwidth}
{\lstset{numbers=left}
\begin{OCAMLLISTING}
match recv ch#role_S with
|`login(pass, ch) ->
  let ch = send ch#role_A#pwd pass
  in close ch
|`cancel(_,ch) ->
  let ch = send ch#role_A#quit ()
  in close ch
\end{OCAMLLISTING}
}
\caption{Dynamic Linearity Checking\label{fig:client:oAuth2:dynamic}}
\end{subfigure}
\begin{subfigure}[t]{0.59\textwidth}
\begin{OCAMLLISTING}
match&%
|`login(pass, &#&ch) ->
  let&%
  in close ch
|`cancel(_, &#&ch) ->
  let&%
  in close ch
\end{OCAMLLISTING}
\caption{Static Linearity Checking\label{fig:client:oAuth2:static}}
\end{subfigure}
\caption{Two Modes on Linearity Checking}
\label{fig:client:oAuth2}
\end{figure}

Our implementation is also the first to support {\em static}
multiparty sessions delegation (the capability to pass a channel to
another endpoint): our encoding yields it for free, via existing
mechanisms for binary delegation (see \S~\ref{sec:implementation}).

\myparagraph{Errors in global protocol and \must endpoint
  programs.}
Our framework ensures that a well-typed \must  program  precisely implements the behaviour of its defined global protocol. %
Hence, if a program does not conform to its protocol, a compilation error is
reported.  \cfig{fig:typeerror} shows the error reported when swapping the order of send and receive actions
  (Lines~\ref{line:cli:send} and \ref{line:cli:recv}) in the client implementation in \cfig{fig:full:impl}.
Similarly,  errors will also be reported if we misspell any of the methods
	\oCODE{pwd}, \oCODE{role_A}, or \oCODE{role_C}. 

\begin{figure}[t]
  \begin{adjustbox}{width=\columnwidth,center}
    \begin{tabular}{ll}
      \toprule
      \multicolumn{2}{l}{\textbf{Global Combinators to Local Types} where $t_i$ is a local type at \lstinline!r$_i$! in $g$ ($1\le i \le n$)}
      \\
      \toprule
      Global Combinator & Synopsis\\
      \toprule
      \lstinline!(r$_i$ -->$\ $r$_j$)$\,$m$\,\,g$!
      & 
          Transmission from \lstinline!r$_i$! to \lstinline!r$_j$! of label \lstinline!m! (with a payload).
      \\
      \hline
      \TableStrut
      \lstinline!choice_at $r_a$ $\mrg$ ($r_a$, $g_1$) ($r_a$, $g_2$)!
      & \begin{minipage}{0.6\textwidth}
          Branch to $g_1$ or $g_2$ guided by $r_a$.
        \end{minipage}
      \\
      \hline
      \lstinline!finish!
      & Finished session.
      \\
      \hline
      \lstinline!fix (fun x ->$\,g$)!
      & 
          Recursion. Free occurrences of $x$ is equivalent to $g$ itself.
      \\
      \bottomrule
      \multicolumn{2}{l}{\textbf{Local Types and Communication Primitives}}
      \\
      \toprule
      Communication Primitive & Synopsis\\
      \toprule
      \lstinline^send s#role_--!r!--#m$_k$ $e$^
      & 
          Send to role \lstinline^--!r!--^
          label \lstinline!m$_k$! with payload \lstinline^$e$^, returning continuation.
      \\
      \hline
      \TableStrut
      \begin{lstlisting}
let `m(x, s) = receive s#role_--!r!--
in $e$
      \end{lstlisting}
      & \begin{minipage}{0.6\textwidth}
          Receive from \lstinline^--!r!--^ label \lstinline!m! with payload \lstinline^x : $v$^ and continue
          to $e$ with endpoint \lstinline!s : $t$!
        \end{minipage}
      \\
      \hdashline[1.5pt/1.5pt]
      \TableStrutDouble
      \begin{lstlisting}
match receive s#role_--!r!-- with
| `m$_1$(x$_1$, s) -> $e_1$ | $\cdots$
| `m$_n$(x$_n$, s) -> $e_n$
      \end{lstlisting}
      & \begin{minipage}{0.6\textwidth}
          Receive from \lstinline^--!r!--^ one of labels \lstinline!$\{$m$_i\}$! ($1 \le i \le n$)
          where payload is $v_i$ and continue with $t_i$ in $e_i$
        \end{minipage}
      \\
      \hline
      \lstinline^close s^
      & Closes a session
      \\
      \bottomrule
    \end{tabular}%
  \end{adjustbox}
  \caption{(a) Global Combinators (top) and (b) Communication APIs of \ourlibrary (bottom)}
  \label{fig:synopsis}
\end{figure}

\begin{figure}[b]
  \begin{adjustbox}{width=0.99\columnwidth,center}
    \includegraphics[scale=0.50]{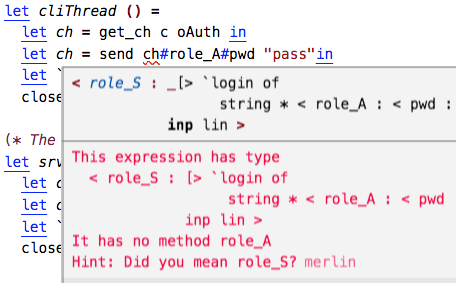}
    \hspace*{2em}
    \includegraphics[scale=0.50]{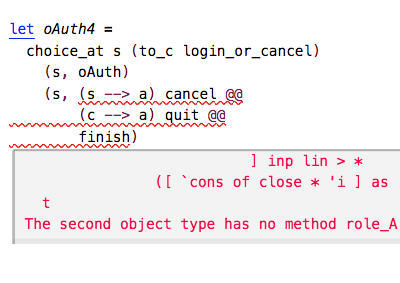}
  \end{adjustbox}
\caption{Type Errors Reported by Visual Studio Code (Powered by Merlin), in (a) Local Type (left) and (b) Global Combinator (right)}
\label{fig:typeerror}
\end{figure}

Similarly, an error is reported if 	the global protocol is {\em not safe} (which corresponds to an ill-formed MPST protocols \cite{ICALP}) since this may lead to {\em unsafe}
implementations. Consider \cfig{fig:typeerror} (b), where we modify \oCODE{oAuth2}
such that \oCODE{s} sends a \oCODE{cancel} message to
\oCODE{a}. This protocol (\oCODE{oAuth4}) exhibits a race condition: even if all parties adhere to the specified behaviour,
\Rptp{c} can send a \oCODE{quit} before \Rptp{s} sends \oCODE{login}, which will lead to a deadlock on \Rptp{s}.
Our definition of global combinators
prevents such ill-formed protocols, and the \OCaml compiler
will report an error.
The actual error message reported in \OCaml
detects the mismatch between \oCODE!a! and \oCODE!c!, 
indicating violation of the {\em active role} property in the MPST literature \cite{ICALP} --
the sender must send to the same role.
\section{Formalisms and Typing for Global Combinators}
\label{sec:formalism}

This section formalises global combinators and their typing system,
along a formal correspondence between a global combinator and 
channel vectors. The aim of this section is to provide 
a guidance towards 
descriptions of the implementations presented in 
\S~\ref{sec:implementation},\ref{sec:linear:channels}. 

We first give the syntax of global combinators and channel vectors 
in  \S~\ref{subsec:calculus-global}. We then propose 
a typing system of global combinators in \S~\ref{sec:typing:global},
illustrating that the rules check their well-formedness. 
We define derivation of channel vectors from global combinators
in \S~\ref{sec:evalglobal}. 
The main theorem (Theorem \ref{col:SubjectReductionForGC}) 
states that a well-typed global combinator always derives 
a channel vector which is typable by a corresponding set of 
local types, i.e.~any well-typed global combinator is soundly realisable 
by a tuple of well-typed channel vectors. 

\subsection{Global Combinators and Channel Vector Types \label{subsec:calculus-global}}
\myparagraph{{\bf Global combinators}} denote
a communication protocol
which describes the whole
conversation scenario of a multiparty session.

\begin{DEF}[Global combinators and channel vector types]\label{def:global}\thmstart
The syntax of {\em global combinators}, written $\gocaml,\gocaml',..$,
are given as:
\[
\gocaml  \ \ \grmeq \ %
\ogtComm{\roleP}{\roleQ}{}{\mpLab}{\otT}{\gocaml}
\grmor \ogtChoice{\roleP}{\gocaml[i]}{i \in I}
\grmor \ogtRec{\ogtRecVar}{\gocaml}
\grmor \ogtRecVar \grmor \ogtEnd
\]
where the syntax of {\em payload types} $\otS, \otT, \ldots$ (also called {\em channel vector types}) is
given below:
\label{def:ocaml-types}
    \begin{displaymath}
        \otT,\otS \quad \grmeq \quad
        \otOut{\otT} \grmor \otInp{\otT} \grmor \otChan{\otT}
        \grmor \otT[1]{\otTimes}{\elipc}{\otTimes}\otT[n] \grmor \otRecord{\ocLab[i]}{\otT[i]}{i\in I} \grmor \otVariant{\ocLab[i]}{\otT[i]}{i \in I}
        \grmor \otRec{\otRecVar}{\otT} \grmor \otRecVar \grmor \otUnit
    \end{displaymath}
\end{DEF}
The formal syntax of global combinators comes from Scribble
\cite{scribble} and corresponds to the standard global types in MPSTs \cite{FeatherweightScribble}.
We assume a set of
participants ($\roleUnivSet =\{\roleP, \roleQ, \roleR, \cdots\}$),
and that of alphabets ($\ASigma = \{\labOk, \labCancel, \cdots\}$). %
{\textbf{\emph{Communication combinator}}
$\ogtComm{\roleP}{\roleQ}{}{\mpLab}{\otT}{\gocaml}$
states that participant $\roleP$ can
send a message of type $\otT$ %
with label $\mpLab$ to participant $\roleQ$
and that the interaction
described in $\gocaml$ follows.
We require $\roleP\neq\roleQ$ to prevent self-sent messages.
We omit the payload type when {\em unit} type $\otUnit$,
and assume $\otT$ is {\em closed}, i.e. it does not contain free recursive
variables.
{\textbf{\emph{Choice combinator}}}
$\ogtChoice{\roleP}{\gocaml[i]}{i \in I}$
is a branching in a protocol where $\roleP$ makes a decision
(i.e. an output) on which branch
the participants will take.
{\textbf{\emph{Recursion}}
$\ogtRec{\ogtRecVar}{\gocaml}$ is for recursive protocols,
assuming that variables
($\ogtRecVar, \ogtRecVari, \dots$) are guarded in the
standard way, i.e. they only occur under the communication combinator.
{\textbf{\emph{Termination}}
$\ogtEnd$ represents session termination.
We write\; $\roleP \in \otRolesSet{\gocaml}$ %
\;(or simply $\roleP \!\in\! \gocaml$) \;iff, for some $\roleQ$, %
either $\gtFmt{\roleP {\ogtTo} \roleQ}$ %
or $\gtFmt{\roleQ {\ogtTo} \roleP}$ %
occurs in $\gocaml$.

\begin{EX}\label{ex:globalcombinatorauth}\thmstart
  The global combinator $\gAuth$ below specifies a variant of
an authentication protocol in \cfig{fig:auth2}
  where $T=\mathtt{string}$ and {$\roleC$}lient sends $\labAuth$ to {$\roleS$}erver, then
  {$\roleS$}erver replies with either $\labOk$ or $\labCancel$.
\[
\begin{array}{l}
\gAuth = 
\ogtComm{\roleC}{\roleS}{}{\labAuth}{\otT}{\bigl(\ogtCommBin{\roleS}{\roleC}{\labOk}{\otT}{\ogtEnd}{\labCancel}{\otT}{\ogtEnd}\bigr)}
\end{array}
\]
\end{EX}

\myparagraph{{\bf Channel vector types}} abstract behaviours of each
participant using standard data structure and channels.
We assume {\em labels} $\ocLab,\ocLab',\ldots$ range over
$\roleUnivSet \cup \ASigma$.
Types $\otOut{\otT}$ and $\otInp{\otT}$ denote
{\textbf{\emph{output}}} and {\textbf{\emph{input channel types}}},
with a value or channel of type 
$\otT$ (note that the syntax includes \emph{session delegation}).
$\otChan{\otT}$ is an {\textbf{\emph{io}}}-type which is a subtype of both input or output types \cite{SangiorgiD:picatomp}. 
$\otT[1]{\otTimes}{\elipc}{\otTimes}\otT[n]$ is an $n$-ary
{\textbf{\emph{tuple type}}.
$\otRecord{\ocLab[i]}{\otT[i]}{i\in I}$ is a {\textbf{\emph{record
      type}}
where each field ${\ocLab[i]}$ has type $\otT[i]$ for $i \in I$.
$\otVariant{\ocLab[i]}{\otT[i]}{i \in I}$
is a {\textbf{\emph{variant type}} {\color{modify}\cite{SangiorgiD:picatomp}} where each ${\ocLab[i]}$
is a possible {\em tag} (or {\em constructor}) of that type and
$\otT[i]$ is the argument type of the tag.
In both record and variant types, we assume the fields and tags are distinct
(i.e. in $\otRecord{\ocLab[i]}{\otT[i]}{i\in I}$ and $\otVariant{\ocLab[i]}{\otT[i]}{i \in I}$,
we assume $\ocLab[i]\neq\ocLab[j]$ for all $i \neq j$).
The symbol $\otUnit$ denotes a unit type.
Type $\otRecVar$ is a variable for recursion.
A \textbf{\emph{recursive type}} takes an equi-recursive viewpoint,
i.e.
$\otRec{\otRecVar}{\otT}$
is viewed as $\otT\subst{\otRecVar}{\otRec{\otRecVar}{\otT}}$.
Recursion variables are guarded and payload types are
closed.

\myparagraph{Channel vectors: Session types as record and variant types.}
The execution model of MPST assumes that processes communicate 
by exchanging messages over input/output (I/O) channels.  
Each channel has the capability to communicate with multiple other processes. 
\emph{A local session type} prescribes the local behaviour for a role in a
global protocol by assigning a type to the communication channel utilised by the role. More precisely, 
a local session type specifies
the exact order and payload types for the communication actions performed on each channel (see \cfig{fig:overview}(a)). 
In practice, processes communicate on a low-level \textit{bi-directional} I/O channels (\textit{bare} channels), which 
are used for synchronisation of two (but \emph{not} multiple) processes. 
Therefore, to implement local session types in practice, a process should utilise
multiple bare channels,  
preserving the order, in which such channels should be used. 
We encode local session types as channel vector types, which 
\textit{wrap} bare channels (represented in our setting by $\otInp{\otT}, \otOut{\otT},\otChan{\otT}$  types) in record and variant types. 
This is illustrated in the following table, 
with the corresponding local session types for reference.\\

\smallskip\noindent\centerline{
  \begin{tabular}{l|l|l|l}
Behaviour
& \ Channel vector type \ & \ Local session type \ \cite{scalas19less}\\
    \hline
Selection (Output choice) 
    & \ $\otIntSumSmall{\roleQ}{i \in I}{\otIntChoice{\mpLab[i]}{\otS[i]}{\otT[i]}}$
    & \ $\stIntSum{\roleQ}{i \in I}{\stChoice{\mpLab[i]}{\stS[i]}\stSeq \stT[i]}$ \\
Branching (Input choice)
    & \ $\otExtSumSmall{\roleQ}{i \in
      I}{\otExtChoice{\mpLab[i]}{\otS[i]}{\otT[i]}}$ \quad
    & \ $\stExtSum{\roleQ}{i \in I}{\stChoice{\mpLab[i]}{\stS[i]}\stSeq \stT[i]}$ \\
Recursion
& \ $\otRec{\otRecVar}{\otT}$,
    $\otRecVar$
& \ $\stRec{\stRecVar}{\stT}$, $\stRecVar$ \ \\
Closing
 & \ $\otUnit$ & \ $\stEnd$
  \end{tabular}}

\smallskip

\noindent
Intuitively, the behaviour of sending a message is represented as a record type, which stores inside its fields a bare output channel and a continuation; the input channel required when receiving a message is stored in a variant type. 
Type
$\otIntSumSmall{\roleQ}{i \in I}{\otIntChoice{\mpLab[i]}{\otS[i]}{\otT[i]}}$
is read as: to send label $\mpLab[i]$ to $\roleQ$,
(1)  the channel vector should be `peeled off' from the nested record by
extracting the field $\roleQ$ then $\mpLab[i]$;
then (2) it returns a pair $\otOut{\otS[i]}{\otTimes}\otT[i]$ of an output
channel and a continuation.
Type $\otExtSumSmall{\roleQ}{i \in I}{\otExtChoice{\mpLab[i]}{\otS[i]}{\otT[i]}}$
says that (1) the process extracts the value  stored in the field
$\roleQ$, then reads on the resulting input channel ($\otInp{}$)
to receive a variant of type
$\otVariant{\mpLab[i]}{\otPair{\otS[i]}{\otT[i]}}{i \in I}$;
then, (2) the tag (constructor) $\mpLab[i]$ of the received variant
indicates the label which $\roleQ$ has sent,
and the former's argument $\otS[i]$ is the payload, and
the latter $\otT[i]$ is the continuation.

The anti-symmetric structures between output types $\otIntSumSmall{\roleQ}{i \in I}{\otIntChoice{\mpLab[i]}{\otS[i]}{\otT[i]}}$
and input types $\otExtSumSmall{\roleQ}{i \in I}{\otExtChoice{\mpLab[i]}{\otS[i]}{\otT[i]}}$
(notice the placements of $\otOut{}$ and $\otInp{}$ symbol in these types)
come from the fact that an output is an {\em internal choice}
where output labels are proactively chosen via projection on a record field,
while an input is an {\em external choice}
where input labels are reactively chosen via
pattern-matching among variant constructors.

\subsection{Typing Global Combinators} \label{sec:typing:global}
A key finding of our work is that compatibility of local types can be checked using 
a type system with record and variant subtyping. 
Before explaining how each combinator ensures compatibility of types, 
we give an intuition of well-formed global protocols following \cite{ICALP}. 

\myparagraph{Well-formedness and choice combinator.} 
A well-formed global protocol ensures that a
protocol can be correctly and \textit{safely} realised by a system of endpoint processes. 
Moreover, a set of processes that follow the prescribed behaviour is \textit{deadlock-free}. 
{Well-formedness imposes several restrictions on the protocol structure, \precameraready{notably on {\em choices}}.
This is necessary because some protocols, such as \oCODE{oAuth4} in \cfig{fig:typeerror}(b) (\S~\ref{sec:overview}),
are unsafe or inconsistent.
More precisely, a protocol is well-formed if local types can be
generated for all of its roles, i.e the {\em endpoint projection}
function
\iftoggle{techreport}{%
  \cite[Def. 3.1]{ICALP}[Def.~\ref{def:projection} in Appendix (\S~\ref{sec:mpst})]
}{%
  \cite[Def. 3.1]{ICALP}\cite{IRYY2020TechReport}%
} is defined for all roles. 
\precameraready{Our encoding allows the well-formedness restrictions to be checked \textit{statically},  by the \OCaml typechecker.
Below, we explain the main syntactic restrictions of endpoint projection, which are imposed on {\em choices} and checked statically:}
\begin{description}
\item[R1] (\textbf{active role}) in each branch of a choice,
  the first interaction
  is from the same sender role ({\em active role}) to the same receiver role ({\em directed output}).
\item[R2] (\textbf{deterministic choice}) output labels from an active role are pairwise distinct (i.e., protocols are deterministic)
\item[R3] (\textbf{mergeable}) the behaviour of a role from all branches should be mergeable, which is ensured by the following restrictions: 
  \begin{description}
    \item[M1] two input choices are merged only if 
      (1) their sender roles are the same ({\em directed input}), and (2) their continuations are recursively mergeable if labels are the same.
   \item[M2]  two output choices can be merged only if they are the same.
  \end{description}
\end{description}
Intuitively, the conditions in \Rthree ensure that a process is able to determine unambiguously
which branch of the choice has been taken by the active role, otherwise 
the process should be \textit{choice-agnostic}, i.e it should preform the same actions in all branches.
Requirement \Rthree is known in the MPST literature as \emph{recursive full merging} \cite{ICALP}.%

\myparagraph{Typing system for global combinators.} Deriving channel vector types from a global combinator
corresponds to the {\em end point projection} in multiparty
session types \cite{DBLP:journals/jacm/HondaYC16}. 
Projection of global protocols relies on the notion of merging (\Rthree).
As a result of the encoding of local types as  channel vectors with record and variants, 
the  \textit{merging} relation coincides with the {\em least upper bound} (join) in the subtyping relation.
This key observation allows us to embed well-formed global protocols
in \OCaml, and check them using the \OCaml type system. 

Next we give the typing system of global combinators, explaining how each of the 
typing rules ensures the verification conditions \Rone-\Rthree. The
typing system uses the following subtyping rules. 

\begin{DEF}\label{def:subtyping}\thmstart
The subtyping relation \framebox{$\otSub$} is {\em co}inductively defined by the following rules.\\
    \scalebox{0.9}{\(
    \begin{array}{c}
    \cinference{%
     \inferrule{\iruleOSubUnit} \quad
    }{
      \otUnit \otSub \otUnit
    }
    \cinference{%
      \inferrule{\iruleOSubOutCh} \ %
    }{
      \otChan{\otT} \otSub \otOut{\otT}
    }
    \cinference{%
      \inferrule{\iruleOSubOut} \ %
      \otS \otSub \otT
    }{
      \otOut{\otT} \otSub \otOut{\otS}
    }
    \cinference{%
      \inferrule{\iruleOSubRcdDepth} \ \ \ %
      \otS[i] \otSub \otT[i]\ \ i \in I
    }{
      \otRecord{\ocLab[i]}{\otS[i]}{i \in I} \otSub \otRecord{\ocLab[i]}{\otT[i]}{i \in I}
    }
    \cinference{%
      \inferrule{\iruleOSubVar} \ \ \ %
      \otS[i] \otSub \otT[i]\ \ i \in I
    }{
      \otVariant{\ocLab[i]}{\otS[i]}{i \in I} \otSub \otVariant{\ocLab[i]}{\otT_i}{i \in I \cup J}
    }\\[2mm]
    \cinference{%
      \inferrule{\iruleOSubInpCh} \ %
    }{
      \otChan{\otT} \otSub \otInp{\otT}
    }
    \cinference{%
      \inferrule{\iruleOSubInp} \ %
      \otS \otSub \otT
    }{
      \otInp{\otS} \otSub \otInp{\otT}
    }
    \cinference{%
      \inferrule{\iruleOSubTuple}
      \ \ \otS[i] \otSub \otT[i] \ \ i \in I
    }{
      \otS[1]{\otTimes}{\elipc}{\otTimes}\otS[n] \otSub \otT[1]{\otTimes}{\elipc}{\otTimes}\otT[n]
    }
    \cinference{%
      \inferrule{\iruleOSubRecL} \ %
      \otS\subst{\otRecVar}{\otRec{\otRecVar}{\otS}} \otSub \otT%
    }{
      \otRec{\otRecVar}{\otS} \otSub \otT
    }
    \cinference{%
      \inferrule{\iruleOSubRecR} \ %
      \otS \otSub \otT\subst{\otRecVar}{\otRec{\otRecVar}{\otT}}%
    }{%
      \otS \otSub \otRec{\otRecVar}{\otT}
    }
    \end{array}
    \)}
\end{DEF}

Among those, the rules \inferrule{\iruleOSubRecL} and \inferrule{\iruleOSubRecR} realise equi-recursive view of types.
The only non-standard rule is \inferrule{\iruleOSubRcdDepth} which does not allow fields to be removed in the super type.
This simulates OCaml's lack of row polymorphism where positive occurrences of objects
are not allowed to drop fields.
Note that the negative occurrences of objects in \OCaml, which we use in process implementations, for example,  
do have row polymorphism, which correspond to standard record subtyping:
\scalebox{0.87}{\(
  \begin{array}{c}
    \cinference{
      \otS[i] \otSub \otT[i]\ \ i \in I
    }{
      \otRecord{\ocLab[i]}{\otS[i]}{i \in I \cup J} \otSub \otRecord{\ocLab[i]}{\otT[i]}{i \in I}
    }
  \end{array}\)}.  We use standard record subtyping, when typing processes.
Since it permits  removal of fields,  it precisely simulates session subtyping on outputs.
Typing rules for processes are left to
\iftoggle{techreport}{Appendix \S~\ref{def:typingexpr}}{\cite{IRYY2020TechReport}}.

The typing rules for global combinators {\color{modify}(\cfig{fig:typingforglobal})} are defined by the typing judgement of the form
$\ogtEnvEntailsEx{\roleSet}{\otEnv}{\ogtG}{\otT}$
where
$\otEnv$ is a type environment for recursion variables (definition follows),
$\roleSet = \roleP[1],\ldots,\roleP[n]$ is the sequence of roles which participate in $\ogtG$,
and $\otT = \otT[1]\otTimes\cdots\otTimes\otT[n]$
is a product of channel vector types
where each $\otT[i]$ indicates a protocol which the role $\roleP[i]$
must obey.
We use the product-based encoding to closely model
our our implementation and to avoid   
fixing the number of roles $n$ of $\ogtEnd$ combinator
by using {\em variable-length tuples}
(see
\iftoggle{techreport}{%
  Appendix \S~\ref{sec:appimpl}%
}{%
  \cite{IRYY2020TechReport}%
}).

\begin{DEF}[Global combinator typing rules]\label{def:typingforglobal}
\label{def:typingcontext}
\thmstart
A {\em typing context} $\otEnv$ is defined by the following grammar:
$\otEnv \grmeq \otEnvEmpty \grmor \otEnv \otEnvComp
\otEnvMap{\ocX}{\otT}$. 
The judgement 
    $\ogtEnvEntailsEx{\roleSet}{\otEnv}{\ogtG}{\otT}$
is defined by the rules in
\cfig{fig:typingforglobal}.
We say $\ogtG$ is {\em typable with} $\roleSet$ if
$\ogtEnvEntailsEx{\roleSet}{\otEnv}{\ogtG}{\otT}$
for some $\otEnv$ and $\otT$.
If $\otEnv$ is empty, we write $\ogtEnvEntails{}{\ogtG}{\otT}$.
\end{DEF}

The rule \inferrule{\iruleOTGComm} states that 
$\roleP[i]$ has 
an output type 
$\otIntSumSmall{\roleP[j]}{}{\otIntChoice{\mpLab}{\otS}{\otT[i]}}$ 
to $\roleP[j]$ with label $\mpLab$, a payload typed by $\otS$ and 
continuation typed by $\otT[i]$;
a dual input type 
$\otExtSumSmall{\roleP[i]}{}{\otExtChoice{\mpLab}{\otS}{\otT[j]}}$
from $\roleP[j]$ and continuation typed by $\otT[j]$; and 
the rest of the roles are unchanged.

Rule \inferrule{\iruleOTGSub} is the key to obtain full merging using the subtyping relation, and along 
with the rule \inferrule{\iruleOTGChoice}, is a key to ensure the protocol is realisable, and free of communication errors.
The rule \inferrule{\iruleOTGChoice} requires (1) role $\roleP[a]$ to have an
output type to the same destination role $\roleQ$, which satisfies \Rone.  
The output labels $\{\mpLab[k]\}_{k \in K_{i}}$ are mutually disjoint at 
each branch $\ogtG[i]$, and are merged into a single record, which ensures that the choice is deterministic (\Rtwo).  
All other types stay the same, up to subtyping. 
Following requirement \Mone of \Rthree, a non-directed external choices are prohibited.
This is ensured by encoding the sender role of an input type as a record field,
As the two different destination role labels would result in two record types with no join, following subtyping rule  \inferrule{\iruleOSubRcdDepth}, 
a non-directed external choices are safely reported as a type error. Non-directed internal choices are similarly prohibited (\Mtwo).
On the other hand, directed external choices are allowed, as stipulated by \Mone, and ensured by the subtyping relation on variant types \inferrule{\iruleOSubVar}. 
For example, the two input types $\otExtSumSmall{\roleQ}{}{\otExtChoice{\mpLab[1]}{\otS[1]}{\otT[1]}}$  and   $\otExtSumSmall{\roleQ}{}{\otExtChoice{\mpLab[2]}{\otS[2]}{\otT[2]}}$ can be {\color{modify}unified as}
$\otExtSumSmall{\roleQ}{i \in {1, 2}}{\otExtChoice{\mpLab[i]}{\otS[i]}{\otT[i]}}$. 

The rest of the rules are standard.
Rule \inferrule{\iruleOTGRec} is
for recursion; it assigns the recursion variable $\ogtRecVar$ a
sequence of distinct fresh type variables in the continuation which is
later looked up by \inferrule{\iruleOTGRecVar}. 
In $\otFix{\otRecVar}{\otT}$, 
we assign a unit type if the role does not
contribute to the recursion (i.e., $\otT=\otRecVari$ for any
$\otRecVari$), or forms a recursive type $\otRec{\otRecVar}{\otT}$
otherwise. 

\begin{figure}[t]
\begin{center}
  \scalebox{0.90}{\(
    \begin{array}{c}
      \inference{%
        \inferrule{\iruleOTGComm}\quad
        \ogtEnvEntails{\otEnv}{\ogtG}{
          \left(
          \otT[1]
            \otTimes\elipc\otTimes
            \otT[i]
            \otTimes\elipc\otTimes
            \otT[j]
            \otTimes\elipc\otTimes
            \otT[n]
          \right)
          }
        \quad
        \roleP[i],\roleP[j] \in \roleSet
      }{%
        \ogtEnvEntails{\otEnv}{
        \ogtComm{\roleP[i]}{\roleP[j]}{}{\mpLab}{\otS}{\ogtG}
        }{
          \left(
          \otT[1]
            \otTimes\elipc\otTimes
            \otIntSumSmall{\roleP[j]}{}{%
              \otIntChoice{\mpLab}{\otS}{\otT[i]}%
            }
            \otTimes\elipc\otTimes
            \otExtSumSmall{\roleP[i]}{}{%
              \otExtChoice{\mpLab}{\otS}{\otT[j]}%
            }
            \otTimes\elipc\otTimes
            \otT[n]
          \right)
        }%
      }\\[3mm]
      \inference{%
        \inferrule{\iruleOTGChoice}\quad
        \makecell[lb]{
        \ogtEnvEntails{\otEnv}{\ogtG[i]}{
          \otT[1]
            {\otTimes}{\elipc}{\otTimes}
            {\otT[{a-1}]}{\otTimes}
            \otIntSumSmall{\roleQ}{k \in K_i}{%
              \otIntChoice{\mpLab[k]}{\otS[k]}{\otTi[k]}%
            }
            {\otTimes}{\otT[{a+1}]}
            {\otTimes}{\elipc}{\otTimes}
            \otT[n]
          }\\
        K_{j} \cap K_{j'} = \emptyset \text{\ for all\ }{j} \neq {j'}
        \quad
        \forall i \in I
        \quad
        \roleP[a] \in \roleSet\\
        }
      }{%
        \ogtEnvEntails{\otEnv}{\ogtChoiceKwd\ \roleP[a]\ \{\ogtG[i]\}_{i \in I}}{
          \left(
          \makecell[l]{
            \otT[1]
            {\otTimes}{\elipc}{\otTimes}
            {\otT[{a-1}]}{\otTimes}
            \otIntSumSmall{\roleQ}{k \in {\bigcup_{i \in I}}{K_i}}{%
              \otIntChoice{\mpLab[k]}{\otS[k]}{\otTi[k]}%
            }
            {\otTimes}{\otT[{a+1}]}
            {\otTimes}{\elipc}{\otTimes}
            \otT[n]
          }\right)
        }%
      }
      \inference{%
        \inferrule{\iruleOTGRecVar} & \qquad
      }{
        \makecell[l]{
        \ogtEnvEntails{
          \otEnv \otEnvComp \otEnvMap{\ogtRecVar}{\otT}
        }{
          \ogtRecVar
        }{
          \otT
        }}
      }
      \\[3mm]
      \inference{%
        \inferrule{\iruleOTGEnd} & \qquad \qquad
      }{
        \ogtEnvEntails{
          \otEnv
        }{
          \ogtEnd
        }{
          \otUnit
          {\otTimes}{\cdots}{\otTimes}
          \otUnit
        }
      }
      \inference{%
        \inferrule{\iruleOTGSub}
        \ %
        \ogtEnvEntails{\otEnv}{\ogtG}{\otS}%
        \ %
        \otS \otSub \otT%
      }{%
        \ogtEnvEntails{\otEnv}{\ogtG}{\otT}%
      }%
      \inference{%
        \inferrule{\iruleOTGRec} &
        \ogtEnvEntails{
          \otEnv
          \otEnvComp
          \otEnvMap{\ogtRecVar}{
           \otRecVar[{\ogtRecVar}{1}]
           {\otTimes}{\cdots}{\otTimes}
           \otRecVar[{\ogtRecVar}{n}]
          }
        }{
          \ogtG
        }{
          \otT[1] {\otTimes}{\cdots}{\otTimes} \otT[n]
        }
      }{
        \ogtEnvEntails{\otEnv}{\ogtRec{\ogtRecVar}{\ogtG}}{
          \otFix{\otRecVar[{\ogtRecVar}{1}]}{\otT[1]}
           {\otTimes}{\cdots}{\otTimes}
          \otFix{\otRecVar[{\ogtRecVar}{n}]}{\otT[n]}}
      }%
    \end{array}
  \)}
\end{center}  
  where
  $\roleSet = \roleP[1], \ldots, \roleP[n]$ and,
  $\otFix{\otRecVar}{\otRecVari}{=}{\otUnit}$ and $\otFix{\otRecVar}{\otT}{=}\otRec{\otRecVar}{\otT}$ otherwise.
  \vspace*{-1mm}
  \caption{The typing rules for global combinators
    \framebox{$\ogtEnvEntails{\otEnv}{\ogtG}{\otT}$}\label{fig:typingforglobal}}
  \vspace*{-1em}
\end{figure}

\begin{EX}[Typing a global combinator]\label{ex:globalcombinatorfirst}\thmstart
  We show that the global combinator
  $\gAuth = \ogtComm{\roleC}{\roleS}{}{\labAuth}{}{\left(\ogtCommBin{\roleS}{\roleC}{\labOk}{}{\ogtEnd}{\labCancel}{}{\ogtEnd}\right)}$ has the following type under $\roleS,\roleC$:\smallskip\\
\centerline{\scalebox{1.0}{$
  \otExtSumSmall{\roleC}{}{
    \otExtChoice{\labAuth}{\otT}{
      \otIntSumSmall{\roleC}{}{\otIntChoice{\labOk}{\otT}{\otUnit},\otIntChoice{\labCancel}{\otT}{\otUnit}}
    }
  }
{\otTimes}
\otIntSumSmall{\roleC}{}{
  \otIntChoice{\labAuth}{\otT}{
    \otExtSumSmall{\roleS}{}{\otExtChoice{\labOk}{\otT}{\otUnit},\otExtChoice{\labCancel}{\otT}{\otUnit}}
  }
}$}}\smallskip\\
First, see that
$\ogtG[1]=\left(\ogtComm{\roleS}{\roleC}{}{\labOk}{}{\ogtEnd}\right)$
  has a typing derivation
  as follows (note that we omit the payload type $\otT$ in global combinators):\smallskip\\
\centerline{\scalebox{1.0}{\(
  \inference{
    \ogtEnvEntailsEx{\roleS,\roleC}{}{\ogtEnd}{\otUnit\otTimes\otUnit}
  }{
    \ogtEnvEntailsEx{\roleS,\roleC}{}{
      \ogtComm{\roleS}{\roleC}{}{\labOk}{}{\ogtEnd}
    }{
    \otIntSumSmall{\roleC}{}{
      \otIntChoice{\labOk}{\otT}{\otUnit}}
    \otTimes
    \otExtSumSmall{\roleS}{}{
      \otExtChoice{\labOk}{\otT}{\otUnit}}
    }
  }
  \)}}\smallskip\\
For $\ogtG[2]=\left(\ogtComm{\roleS}{\roleC}{}{\labCancel}{}{\ogtEnd}\right)$ we have similar derivation.
Then,
type of role $\roleC$ (the second of the tuple) is adjusted
by \inferrule{\iruleOTGSub},
{\footnotesize$\otExtSumSmall{\roleS}{}{
     \otExtChoice{\labOk}{\otT}{\otUnit}}\otSub
  \otExtSumSmall{\roleS}{}{
     \otExtChoice{\labOk}{\otT}{\otUnit},
     \otExtChoice{\labCancel}{\otT}{\otUnit}}$} and
{\footnotesize$\otExtSumSmall{\roleS}{}{
     \otExtChoice{\labCancel}{\otT}{\otUnit}}\otSub
  \otExtSumSmall{\roleS}{}{
     \otExtChoice{\labOk}{\otT}{\otUnit},
     \otExtChoice{\labCancel}{\otT}{\otUnit}}$},
thus we have:\smallskip\\
\centerline{\scalebox{1.0}{\(
\begin{array}{l}
\ogtEnvEntailsEx{\roleS,\roleC}{}{\ogtG[1]}{
  \otIntSumSmall{\roleC}{}{
    \otIntChoice{\labOk}{\otT}{\otUnit}}
  {\otTimes}
  \otExtSumSmall{\roleS}{}{
     \otExtChoice{\labOk}{\otT}{\otUnit},
     \otExtChoice{\labCancel}{\otT}{\otUnit}
}}\\
\ogtEnvEntailsEx{\roleS,\roleC}{}{\ogtG[2]}{
  \otIntSumSmall{\roleC}{}{
    \otIntChoice{\labCancel}{\otT}{\otUnit}}
  {\otTimes}
  \otExtSumSmall{\roleS}{}{
     \otExtChoice{\labOk}{\otT}{\otUnit},
     \otExtChoice{\labCancel}{\otT}{\otUnit}
}}
\end{array}\)}}
Then, by \inferrule{\iruleOTGChoice}, we have the following derivation:\smallskip\\
\centerline{\scalebox{0.85}{\(
\inference{
    \ogtEnvEntailsEx{\roleS,\roleC}{}{\ogtG[1]}{
      \otIntSumSmall{\roleC}{}{
        \otIntChoice{\labOk}{\otT}{\otUnit}}
      \otTimes
      \otExtSumBig{\roleS}{}{
        \makecell[l]{
          \otExtChoice{\labOk}{\otT}{\otUnit},\\
          \otExtChoice{\labCancel}{\otT}{\otUnit}
        }
      }
    }
    \quad
    \ogtEnvEntailsEx{\roleS,\roleC}{}{\ogtG[2]}{
      \otIntSumSmall{\roleC}{}{
        \otIntChoice{\labCancel}{\otT}{\otUnit}}
      \otTimes
      \otExtSumBig{\roleS}{}{
        \makecell[l]{
          \otExtChoice{\labOk}{\otT}{\otUnit},\\
          \otExtChoice{\labCancel}{\otT}{\otUnit}
        }
      }
    }
  }{
    \ogtEnvEntailsEx{\roleS,\roleC}{}{
      \ogtChoiceRaw{\roleS}{\ogtG[1], \ogtG[2]}
    }{
      \makecell[l]{
    \otIntSumSmall{\roleC}{}{
      \otIntChoice{\labOk}{\otT}{\otUnit},
      \otIntChoice{\labCancel}{\otT}{\otUnit}}
    \otTimes %
    \otExtSumSmall{\roleS}{}{
      \otExtChoice{\labOk}{\otT}{\otUnit},
      \otExtChoice{\labCancel}{\otT}{\otUnit}}}
    }
}
\)}}\smallskip\\
Note that, in the above premises, the first element of the tuple specifying the behaviour of choosing role $\roleS$,
namely $\otIntSumSmall{\roleC}{}{\otIntChoice{\labOk}{\otT}{\otUnit}}$
and
$\otIntSumSmall{\roleC}{}{\otIntChoice{\labCancel}{\otT}{\otUnit}}$,
are disjointly combined into
$\otIntSumSmall{\roleC}{}{\otIntChoice{\labOk}{\otT}{\otUnit},\otIntChoice{\labCancel}{\otT}{\otUnit}}$
in the conclusion.
Then, by applying \inferrule{\iruleOTGComm} again, we get the type for $\gAuth$ presented above.
\end{EX}

\subsection{Evaluating Global Combinators to Channel Vectors}
\label{sec:evalglobal}

Channel vectors are data structures which are created 
from a global combinator at initialisation, and used for sending/receiving values from/to
participants. Channel vectors implement multiparty
communications as nested binary io-typed channels.

\begin{DEF}[Channel vectors]\thmstart \emph{Channel vectors} ($\ocC,\ocC',...$) and
\emph{wrappers} ($\ocH,\ocH',...$)
are defined as:\smallskip\\
\centerline{$\begin{array}{c}%
\begin{array}{rlllll}
  \ocC,\ocCi \grmeq & \ocV,\elip \grmor \ocS, \ocSi, \elip
     \grmor \ocTuple{\ocC[1],\elip,\ocC[n]} \grmor
     \ocVariant{\ocLab}{\ocC}
\grmor
     \ocRecord{\ocLab[i]}{\ocC[i]}{i \in I} \grmor \ocRec{\ocX}{\ocC}
     \grmor \ocWrapper{\ocS[i]}{\ocH[i]}{i \in I}\\[1mm]
  \ocH,\ocHi \grmeq & \HOLE \grmor \ocVariant{\ocLab}{\ocH}
              \grmor \ocTuple{\ocC[1],\elip,
                \ocH[k],
                \elip,\ocC[n]} \grmor
      \ocRecordOpen
        \ocRecordEntry{\ocLab[1]}{\ocC[1]},\elip
        , \ocRecordEntry{\ocLab[k]}{\ocH},\elip
        , \ocRecordEntry{\ocLab[n]}{\ocC[n]}\ocRecordClose
\quad \ocLab \grmeq\ \roleP \grmor \mpLab
\\[1mm]
\end{array}
\end{array}$}

\end{DEF}
{\textbf{\emph{Channel vectors}}} $\ocC$ are either {\textbf{\emph{base values}}} $\ocV$ or
{\textbf{\emph{runtime values}}} generated from global combinators
which include {\textbf{\emph{names}}} (simply-typed binary channels) $\ocS,\ocSi,\elip$,
{\textbf{\emph{tuples}}} $\ocTuple{\ocC[1],\elip,\ocC[n]}$,
{\textbf{\emph{variants}}} $\ocVariant{\ocLab}{\ocC}$,
{\textbf{\emph{records}}} $\ocRecord{\ocLab[i]}{\ocC[i]}{i \in I}$, and
{\textbf{\emph{recursive values}}} $\ocRec{x}{\ocC}$ where $x$ is a
bound variable.

We introduce an extra runtime value,
{\textbf{\emph{wrapped names}}} $\ocWrapper{\ocS[i]}{\ocH[i]}{i\in I}$,
inspired by Concurrent ML's {\tt wrap} and {\tt choose} functions \cite{concurrentml},
which are a sequence $[\elip]_{i \in I}$ of pairs of input name $\ocS[i]$ and a {\textbf{\em wrapper}} $\ocH[i]$.
A wrapper $\ocH$ contains a single hole $\HOLE$.
An input on wrapped names $\ocWrapper{\ocS[i]}{\ocH[i]}{i\in I}$
is {\em multiplexed} over the set of names $\{\ocS[i]\}_{i \in I}$.
When a sender outputs value $\ocCi$ on name $\ocS[j]$ ($j \in I$),
the corresponding input waiting on $\ocWrapper{\ocS[i]}{\ocH[i]}{i\in I}$ yields a value $\ocWrapperApp{\ocH[j]}{\ocCi}$
where
the construct $\ocWrapperApp{\ocH}{\ocC}$ denotes a value obtained by replacing the hole $\HOLE$
in $\ocH$ with $\ocC$ (i.e. applying function $\ocH$ to $\ocC$).
We write
$\ocRecvWrapSmall{\ocLab[i]}{\ocS[i]}{\ocC[i]}{i \in I}$ for
$\ocWrapper{\ocS[i]}{\ocVariant{\ocLab[i]}{\ocPair{\HOLE}{\ocC[i]}}}{i \in I}$.

\begin{DEF}[Typing rules for channel vectors]\label{def:typing}\rm
  \cfig{fig:typingforchvec}
  gives the typing rules for channel vectors and wrappers. 
The typing judgement for 
(1)  channel vectors
  has the form
  $\otEnvEntails{\otEnv}{\ocC}{\otT}$;
(2) wrappers 
  has the form
  $\otEnvEntails{\otEnv}{\ocH}{\otH}$ where 
the type for wrappers is defined as $\otH \grmeq
\otWrapper{\otT}{\otS}$; 
We assume that all types in $\otEnv$ are closed. 
\end{DEF}

\begin{figure}[t]
\scalebox{0.91}{%
\begin{subfigure}[t]{0.99\textwidth}
\(\begin{array}{c}
    \inference{%
      \inferrule{\iruleOTCName} & \qquad
    }{%
      \otEnvEntails{\otEnv \otEnvComp \otEnvMap{\ocS}{\otChan{\otT}}}{\ocS}{\otChan{\otT}}%
    }%
    \ %
    \inference{%
      \inferrule{\iruleOTCVar} & \qquad
    }{%
      \otEnvEntails{\otEnv \otEnvComp \otEnvMap{\ocX}{\otT}}{\ocX}{\otT}%
    }%
    \ %
    \inference{%
      \inferrule{\iruleOTCUnit} & \quad
    }{%
      \otEnvEntails{\otEnv}{\ocUnit}{\otUnit}%
    }%
    \ %
    \inference{%
      \inferrule{\iruleOTCSub}\ %
      \otEnvEntails{\otEnv\!}{\!\ocC\!}{\!\otS}%
      \ %
      \otS{\otSub}\otT%
    }{%
      \otEnvEntails{\otEnv}{\ocC}{\otT}%
    }%
    \ %
    \inference{%
      \inferrule{\iruleOTCTuple}
      \ %
      \otEnvEntails{\otEnv}{\ocC[i]}{\otT[i]}%
      \ %
      \forall i, 1{\le}i{\le}n
    }{%
      \otEnvEntails{\otEnv}{\ocTuple{\ocC[1],{\elip},\ocC[n]}}{\otT[1]{\otTimes}{\elipc}{\otTimes}\otT[n]}
    }\\[1mm]%
    \inference{%
      \inferrule{\iruleOTCVariant}
      \ %
      \otEnvEntails{\otEnv}{\ocC}{\otT}%
    }{%
      \otEnvEntails{\otEnv}{\ocVariant{\ocL}{\ocC}}{\otVariant{\ocL}{\otT}{}}
    }%
    \ %
    \inference{%
      \inferrule{\iruleOTCRecord}
      \ %
      \otEnvEntails{\otEnv}{\ocC[i]}{\otT[i]}\ %
      \forall i \in I
    }{%
      \otEnvEntails{\otEnv}{\ocRecord{\ocL[i]}{\ocC[i]}{i \in I}}{\otRecord{\ocL[i]}{\otT[i]}{i \in I}}
    }%
    \ %
    \inference{%
      \inferrule{\iruleOTCRec}
      \ %
      \otEnvEntails{\otEnv \otEnvComp \otEnvMap{\ocX}{\otRec{\otRecVar}{\otT}}}{\ocC}{\otT\subst{\otRecVar}{\otRec{\otRecVar}{\otT}}}
    }{%
      \otEnvEntails{\otEnv}{\ocRec{\ocX}{\ocC}}{\otRec{\otRecVar}{\otT}}
    }\\[1mm]%
    \inference{%
      \inferrule{\iruleOTCWrapInp}
      \ %
      \otEnvEntails{\otEnv}{\ocS[i]}{\otInp{\otS[i]}}%
      \ %
      \otEnvEntails{\otEnv}{\ocH[i]}{\otWrapper{\otT}{\otS[i]}}
      \ %
      \forall i{\in}I
    }{%
      \otEnvEntails{\otEnv}{\ocWrapper{\ocS[i]}{\ocH[i]}{i \in I}}{\otInp{\otT}}
    }%
    \ %
    \inference{%
      \inferrule{\iruleOTCWrapper}
      \ %
      \otEnvEntails{\otEnv \otEnvComp \otEnvMap{\ocX}{\otTi}}{\ocC}{\otT}%
      \ %
      \ocC{=}\ocWrapperApp{\ocH}{\ocX}
      \ %
      x{\notin}\fv{h}
    }{%
      \otEnvEntails{\otEnv}{\ocH}{\otWrapper{\otT}{\otTi}}
    }%
  \end{array}\vspace*{-2mm}
\)
\end{subfigure}
}
\caption{
  The typing rules for channel vectors and wrappers
  \framebox{$\otEnvEntails{\otEnv}{\ocC}{\otT}$}
  \framebox{$\otEnvEntails{\otEnv}{\ocH}{\otH}$}
  \label{fig:typingforchvec}}
\end{figure}

The rules for channel vectors are standard 
where the subtyping relation in rule 
\inferrule{\iruleOTCSub} is defined at \Cref{def:subtyping}. %
For wrappers, 
rule \inferrule{\iruleOTCWrapInp} types wrapped names
where the payload type $\otSi$ of input channel $\ocS$ is the same as the hole's type,
and all wrappers have the same result type $\otT$.
Rule \inferrule{\iruleOTCWrapper} 
checks type of a channel vector $\ocC=\ocWrapperApp{\ocH}{\ocX}$ and replaces $\ocX$ with the hole $\HOLE$.

{\em Evaluation} of global combinators is the key to implement
a multiparty protocol to a series of binary, simply-typed communications based on channel vectors.
We define $\GCCV{\ogtG}$ where
$\roleSet$ is a sequence of roles in $\ogtG$ and
$\ocS$ is a {\em base name} freshly assigned to an initiation expression
at runtime.
The generated  channels are interconnected to each other and the
created channel vectors are distributed and shared among expressions running in parallel,
enabling them to interact via binary names. %

The followings are basic operations on records, tuples and recursive values
which are used to define evaluations of global combinators.
\begin{DEF}[Operations]\label{def:operations}\thmstart
{\bf (1)}
  The {\em unfolding} $\ocUnfold{\ocC}$ of a recursive value is defined by
  the smallest $n$ such that
  $\ocUnfoldN{n}{\ocC}=\ocUnfoldN{n+1}{\ocC}$,
  and
  $\ocUnfoldOne{\cdot}$ is defined as:\smallskip\\
  \noindent\centerline{\(
    \begin{array}{rcllrcll}
      \ocUnfoldOne{\ocRec{\ocX}{\ocC}}& = &\ocC\subst{\ocX}{\ocRec{\ocX}{\ocC}}  & \qquad\qquad\qquad &
      \ocUnfoldOne{\ocC} & = & \ocC & \text{otherwise}
    \end{array}
    \)}\smallskip\\
where $f^{n+1}(x)=f(f^{n}(x))$ for $n \geq 2$ and $f^{1}(x)=f(x)$.
{\bf (2)}
$\ocC\#\ocLab$ denotes
the {\textbf{\emph{record projection}}}, which
projects on field $\ocLab$ of record value $\ocC$, defined as:
$\ocRecord{\ocLab[i]}{\ocC[i]}{i\in I}\# \ocLab[k] = \ocUnfold{\ocC[k]}$,
where
$\#$ is left-associative, i.e.
$\ocC\# \ocLab[1]\#... \#\ocLab[n]  = ((...(\ocC\# \ocLab[1])\#...)
\#\ocLab[n])$.
{\bf (3)}
The $i$-th projection on a tuple, $\NTH{\ocC}{i}$ is defined as
  $\NTH{\ocTuple{\ocC[1],\elip,\ocC[n]}}{i}{=}\ocC[i]$ for $1 \le i \le n$.
{\bf (4)}
$\ocFix{\ocX}{\ocXi}{=}\ocUnit$; otherwise
$\ocFix{\ocX}{\ocC}{=}\ocRec{\ocX}{\ocC}$.
\end{DEF}

\begin{DEF}[Evaluation of a global combinator]\label{def:global:evaluation} \thmstart
  Given $\roleSet$ and
fresh $s$, the {\em evaluation}
$\GCCVR{\gocaml}{\roleSet}^s$ of global combinator $\gocaml$
is defined in \cfig{fig:gceval}.
We write $\GCCVR{\gocaml}{}^s$ if $\roleSet=\otRolesSet{\gocaml}$.
\end{DEF}

\begin{figure}[t]
  \noindent\centerline{\scalebox{0.91}{\(
    \begin{array}{rcl}
    \GCCV{\ogtComm{\roleP[j]}{\roleP[k]}{}{\mpLab}{\otS}{\ogtG}}
      & =
      & \\
    \multicolumn{3}{r}{
      \makecell{
        \oeFmt{
          \Bigl\ocTupOpen
        \NTH{\GCCV{\gocaml}}{1},
        \elip,
        \NTH{\GCCV{\gocaml}}{j{-}1},
        \ \ocIntSum{\roleP[k]}{}{
          \ocIntChoiceSmall{\mpLab}{\ocS_{\{\roleP[j],\roleP[k],\mpLab,\blueI\}}}{\NTH{\GCCV{\ogtG}}{j}}},
        \NTH{\GCCV{\gocaml}}{j{+}1},}\\
        \oeFmt{
        \qquad\qquad\qquad\qquad \elip, \NTH{\GCCV{\gocaml}}{k{-}1},
          \ocExtSum{\roleP[j]}{}{
            \ocExtChoiceSmall{\mpLab}{\ocS_{\{\roleP[j],\roleP[k],\mpLab,\blueI\}}}{\NTH{\GCCV{\ogtG}}{k}}},
        \ \NTH{\GCCV{\gocaml}}{k{+}1},
        \elip, \NTH{\GCCV{\gocaml}}{n}
        \Bigr\ocTupClose
        }}}
    \\
    & \multicolumn{2}{l}{\text{where\ }\blueI\text{ is fresh.}}\\
    \GCCV{\ogtChoice{\roleP[a]}{\ogtG[i]}{i \in I}} & = &\\
    \multicolumn{3}{r}{
    \oeFmt{\left\ocTupOpen
      \ocMerge{i \in I}{\left(\NTH{\GCCV{\ogtG[i]}}{1}\right)},\elip,
      \ocMerge{i \in I}{\left(\NTH{\GCCV{\ogtG[i]}}{a{-}1}\right)},
      \ocIntSumSmall{\roleQ}{k \in K}{
        \mpLab[k]{=}\ocC[k]
      },
      \ocMerge{i \in I}{\left(\NTH{\GCCV{\ogtG[i]}}{a{+}1}\right)},
      \elip,
      \ocMerge{i \in I}{\left(\NTH{\GCCV{\ogtG[i]}}{n}\right)}
      \right\ocTupClose}} \\
    & \multicolumn{2}{l}{\text{where\ }
      \ocUnfold{\NTH{\GCCV{\ogtG[i]}}{a}} = \ocIntSumSmall{\roleQ}{k \in K_i}{{\mpLab[k]}{=}\ocC[k]}
      \ \text{and}\ %
      K = \bigcup_{i \in I}{K_i}}\\
    \GCCV{\ogtRec{x}{\gocaml}} & = &
\oeFmt{\left\ocTupOpen
        \ocFixSmall{\ocX[1]}{\NTH{\GCCV{\gocaml}}{1}}, \elip,
        \ocFixSmall{\ocX[n]}{\NTH{\GCCV{\gocaml}}{n}}\right\ocTupClose}
\\[1mm]
\GCCV{\ogtRecVar}  & = & \oeFmt{\bigl\ocTupOpen \ocX[1], \elip,
  \ocX[n]\bigr\ocTupClose}
\quad\quad\quad\quad\quad
\GCCV{\ogtEnd} = \oeFmt{\bigl\ocTupOpen \ocUnit, \elip,
  \ocUnit\bigr\ocTupClose}
    \end{array}
  \)}}\smallskip\\
\vspace*{-3mm}
\caption{Evaluation of global combinators \framebox{$\GCCV{\ogtG}$}\label{fig:gceval}}
\end{figure}

The evaluation for communication
$\ogtComm{\roleP[j]}{\roleP[k]}{}{\mpLab}{\otS}{\ogtG}$
connects between
$\roleP[j]$ and $\roleP[k]$
by the name $\ocS_{\{\roleP[j],\roleP[k],\mpLab,\blueI\}}$
by wrapping $j$-th and $k$-th channel vector with an output and an input structure, respectively.
The name $\ocS_{\{\roleP[j],\roleP[k],\mpLab,\blueI\}}$
is indexed by two role names $\roleP[j]$, $\roleP[k]$, label $\mpLab$ and an index $\blueI$
so that
(1) it is only shared between two roles $\roleP[j]$ and $\roleP[k]$,
(2) communication only occurs when it tries to communicate
a specific label $\mpLab$, and
(3) both the sender and the receiver agree on the payload type.
Here, the index $\blueI$ is used to distinguish between
names generated from the same label $\mpLabi$ but
different payload type $\mpLab{:}\otT$ and $\mpLab{:}\otTi$, ensuring
consistent typing of generated channel vectors.
The choice combinator
$\ogtChoice{\roleP[a]}{\ogtG[i]}{i \in I}$
extracts the output channel vector (i.e. the nested records of the form
$\ocIntSumSmall{\roleQ}{k \in K_i}{\mpLab[k]{=}\ocC[k]}$)
at $\roleP[a]$ from each branch $\ogtG[i]$,
and merges them into a single output.
Channel vectors for the other roles are merged by $c_1 \ocBinMerge
c_2$
where merging for the outputs is
an intersection of branchings from $c_1$ and $c_2$, while
merging of the inputs is their union. We explain merging by example
(\Cref{ex:auth:gen}) and leave the full definition in
\iftoggle{techreport}{%
  \cfig{fig:merge} in \S~\ref{subsec:merge}%
}{%
  \cite{IRYY2020TechReport}%
}.

For the recursion combinator,
function $\ocFixSmall{\ocX[i]}{\ocC[i]}$ forms
a recursive value for repetitive session, or
voids it as $\ocUnit$ if it does not contain any names.

\begin{EX}[Global combinator evaluation]\thmstart \label{ex:auth:gen}
Let
$\ocS[1]=\ocS_{\{\roleC,\roleS,\labOk,0\}}$,
$\ocS[2]=\ocS_{\{\roleC,\roleS,\labCancel,0\}}$ and
$\ocS[3]=\ocS_{\{\roleS,\roleC,\labAuth,0\}}$. Then:\smallskip\\
\centerline{\scalebox{1.00}{\(
\begin{array}{l}
\GCCVR{\gAuth}{}^{\mpS} \qquad\\
= \GCCVR{\ogtComm{\roleC}{\roleS}{}{\labAuth}{}{\left(\ogtCommBin{\roleS}{\roleC}{\labOk}{}{\ogtEnd}{\labCancel}{}{\ogtEnd}\right)}}{}^{\mpS}\\
\qquad \left(
\makecell[l]{
  \text{Here, we have}\ \left\{
\makecell[l]{
  \ogtG[L]=\ogtComm{\roleC}{\roleS}{}{\labOk}{}{\ogtEnd}, \quad
  \ogtG[R]=\ogtComm{\roleC}{\roleS}{}{\labCancel}{}{\ogtEnd},\\
  \GCCVR{\ogtG[L]}{}^{\mpS}=
    \ocRecordOpen
    \ocExtSumSmall{\roleS}{}{
      \ocExtChoiceSmall{\labOk}{\ocS[1]}{\ocUnit}},
    \ocIntSumSmall{\roleC}{}{
      \ocIntChoiceSmall{\labOk}{\ocS[1]}{\ocUnit}}
    \ocRecordClose,\\
  \GCCVR{\ogtG[R]}{}^{\mpS}=
    \ocRecordOpen
    \ocExtSumSmall{\roleS}{}{
      \ocExtChoiceSmall{\labCancel}{\ocS[2]}{\ocUnit}},
    \ocIntSumSmall{\roleC}{}{
      \ocIntChoiceSmall{\labCancel}{\ocS[2]}{\ocUnit}}
    \ocRecordClose,\\
}\right\},\\
  \text{concatenating}\ \left\{
\makecell[l]{
  \ocUnfold{\NTH{\GCCVR{\ogtG[L]}{}^{\mpS}}{2}} =
  \NTH{\GCCVR{\ogtG[L]}{}^{\mpS}}{2} =
    \ocRecordSmall{\roleS}{\ocRecordSmall{\labOk}{\ocC[{L2}]}{}}{},
  \ocC[{L2}]{=}\ocPairSmall{\ocS[1]}{\ocUnit},\\
  \ocUnfold{\NTH{\GCCVR{\ogtG[R]}{}^{\mpS}}{2}} =
  \NTH{\GCCVR{\ogtG[R]}{}^{\mpS}}{2} =
    \ocRecordSmall{\roleS}{\ocRecordSmall{\labCancel}{\ocC[{R2}]}{}}{},
  \ocC[{R2}]{=}\ocPairSmall{\ocS[2]}{\ocUnit}
}\right\}
}\right)
\\
=
\ocTuple{\makecell[l]{
\ocIntSumSmall{\roleS}{}{
  \ocIntChoiceSmall{\labAuth}{\ocS[3]}{
    \NTH{\GCCVR{\ogtG[L]}{}^{\mpS}}{1}
    \ocBinMerge
    \NTH{\GCCVR{\ogtG[R]}{}^{\mpS}}{1}
    }},
\,\,
\ocExtSumSmall{\roleC}{}{
  \ocExtChoiceSmall{\labAuth}{\ocS[3]}{
    \ocRecordSmall{\roleC}{
      \ocRecordOpen
        \ocRecordEntry{\labOk}{\ocC[{L2}]} \ocRecordSep
        \ocRecordEntry{\labCancel}{\ocC[{R2}]}
      \ocRecordClose
    }{}}}
}}
\\
=
\left(%
\makecell[l]{
\ocIntSumSmall{\roleS}{}{
  \ocIntChoiceSmall{\labAuth}{\ocS[3]}{
    \ocExtSumSmall{\roleS}{}{
    \ocExtChoiceSmall{\labOk}{\ocS[1]}{\ocUnit},
    \ocExtChoiceSmall{\labCancel}{\ocS[2]}{\ocUnit}}}},\\
\quad\ocExtSumSmall{\roleC}{}{
  \ocExtChoiceSmall{\labAuth}{\ocS[3]}{
    \ocIntSumSmall{\roleC}{}{
    \ocIntChoiceSmall{\labOk}{\ocS[1]}{\ocUnit},
    \ocIntChoiceSmall{\labCancel}{\ocS[2]}{\ocUnit}}}}
}\right)%
\end{array}
\)}}
\end{EX}

The following main theorem states that if a global combinator is 
typable, the generated channel vectors are well-typed
under the corresponding local types. 

\begin{restatable}[Realisability of global combinators]{THM}{colSubjectReductionForGC}\label{col:SubjectReductionForGC}\thmstart
  If $\ogtEnvEntails{}{\ogtG}{\otT}$,
  then ${\GCCV{\gocaml}} = \ocC$ is defined and
  $\otEnvEntails{\{\otEnvMap{\ocS[i]}{\otS[i]}\}_{\ocS[i]\in \fn{\ocC}}}{\ocC}{\otT}$
  for some $\{\widetilde{\otS[i]}\}$.
\end{restatable}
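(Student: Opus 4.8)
The plan is to prove a strengthened statement by induction on the typing derivation, threading the recursion-variable context $\otEnv$ through. Concretely I would show: whenever $\ogtEnvEntailsEx{\roleSet}{\otEnv}{\ogtG}{\otT[1] \otTimes \cdots \otTimes \otT[n]}$ with $\roleSet = \roleP[1],\ldots,\roleP[n]$, the evaluation $\GCCVR{\ogtG}{\roleSet}^{\mpS}$ is defined and, \emph{componentwise}, each projection $\NTH{\GCCVR{\ogtG}{\roleSet}^{\mpS}}{i}$ is typable at $\otT[i]$ under a context $\otEnvi$ that (i) assigns every free name the io-channel type $\otChan{\cdots}$ for a suitable payload, and (ii) realises $\otEnv$ by assigning, to the channel-vector variables $\ocX[1],\ldots,\ocX[n]$ generated for each binding $\otEnvMap{\ogtRecVar}{\otRecVar[{\ogtRecVar}{1}]\otTimes\cdots\otTimes\otRecVar[{\ogtRecVar}{n}]}$, the corresponding type variables. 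The theorem is then the case $\otEnv=\otEnvEmpty$, reassembling the components with $\inferrule{\iruleOTCTuple}$. Phrasing the invariant componentwise is convenient because subtyping between products is componentwise (only $\inferrule{\iruleOSubTuple}$ applies), so uses of $\inferrule{\iruleOTGSub}$ decompose cleanly; indeed that case is immediate, closing $\otEnvEntails{\otEnvi}{\ocC}{\otS}$ with $\otS \otSub \otT$ by $\inferrule{\iruleOTCSub}$. Inducting on the derivation (not on $\ogtG$) is necessary precisely because $\inferrule{\iruleOTGSub}$ leaves $\ogtG$ fixed.

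The base and communication cases are direct. For $\inferrule{\iruleOTGEnd}$ and $\inferrule{\iruleOTGRecVar}$ the evaluations are $\ocTuple{\ocUnit,\ldots,\ocUnit}$ and $\ocTuple{\ocX[1],\ldots,\ocX[n]}$, typed by $\inferrule{\iruleOTCUnit}$ respectively $\inferrule{\iruleOTCVar}$ through the context correspondence. For $\inferrule{\iruleOTGComm}$ the IH supplies $\NTH{\GCCV{\ogtG}}{i} : \otT[i]$; the freshly generated name $\ocS_{\{\roleP[j],\roleP[k],\mpLab,\blueI\}}$ is added to $\otEnvi$ with io-type $\otChan{\otS}$, after which the sender component is typed at $\otIntSumSmall{\roleP[k]}{}{\otIntChoice{\mpLab}{\otS}{\otT[j]}}$ via $\inferrule{\iruleOTCRecord}$ once $\otChan{\otS} \otSub \otOut{\otS}$ ($\inferrule{\iruleOSubOutCh}$), and the receiver component (a wrapped name) at $\otExtSumSmall{\roleP[j]}{}{\otExtChoice{\mpLab}{\otS}{\otT[k]}}$ via $\inferrule{\iruleOTCWrapInp}$/$\inferrule{\iruleOTCWrapper}$ once $\otChan{\otS} \otSub \otInp{\otS}$ ($\inferrule{\iruleOSubInpCh}$). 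The essential point, already flagged in the text, is that assigning the \emph{single} generated name an io-type lets it serve simultaneously as an output at one endpoint and an input at the other.

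The main obstacle is $\inferrule{\iruleOTGChoice}$, where the branch evaluations are combined and the merge must be shown defined and type-preserving. I would isolate this in a \emph{Merge Lemma}: if $\otEnvEntails{\otEnv}{\ocC[i]}{\otT}$ for all $i\in I$, then $\ocMerge{i\in I}{\ocC[i]}$ is defined and $\otEnvEntails{\otEnv}{\ocMerge{i\in I}{\ocC[i]}}{\otT}$. This is exactly where the paper's observation that merging coincides with the join for record/variant subtyping is used, proved by induction on $\otT$: at a variant (input, external-choice) type the merge is the union of wrapped name--wrapper pairs, and each summand stays typable at the full variant by $\inferrule{\iruleOSubVar}$; at a record (output, internal-choice) type the fields must coincide and the merge descends componentwise, matching restriction \Mtwo; recursive, tuple, and unit types descend structurally. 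Since $\inferrule{\iruleOTGChoice}$ forces the non-choosing positions to carry identical types $\otT[m]$ across branches (via $\inferrule{\iruleOTGSub}$), the IH makes each branch value have the same $\otT[m]$, so the same-type form of the lemma suffices. The choosing role $\roleP[a]$ is handled separately, not by merge: using $\ocUnfold{\NTH{\GCCV{\ogtG[i]}}{a}}$ and disjointness $K_j \cap K_{j'} = \emptyset$, the branch output records are assembled into one record over $\bigcup_i K_i$, directly yielding $\otIntSumSmall{\roleQ}{k\in\bigcup_i K_i}{\cdots}$. The delicate sub-point is the merge of \emph{output} components of non-choosing roles, which \Mtwo\ permits only for identical values; here I would rely on the stipulation that the index $\blueI$ is determined by the role pair, label, and payload type, so that syntactically equal communications in different branches generate the \emph{same} name and the record merge is well defined.

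Finally, the recursion case $\inferrule{\iruleOTGRec}$ combines the IH (under the context extended with fresh type variables $\otRecVar[{\ogtRecVar}{1}],\ldots,\otRecVar[{\ogtRecVar}{n}]$) with a type-variable substitution lemma: substituting $\otRecVar[{\ogtRecVar}{i}] \mapsto \otRec{\otRecVar[{\ogtRecVar}{i}]}{\otT[i]}$ in both context and type turns $\NTH{\GCCV{\ogtG}}{i} : \otT[i]$ into exactly the premise of $\inferrule{\iruleOTCRec}$ needed to type $\ocFix{\ocX[i]}{\NTH{\GCCV{\ogtG}}{i}} = \ocRec{\ocX[i]}{\NTH{\GCCV{\ogtG}}{i}}$ at $\otRec{\otRecVar[{\ogtRecVar}{i}]}{\otT[i]}$; here one uses that each $\otT[i]$ mentions only its own recursion variable $\otRecVar[{\ogtRecVar}{i}]$, so the substitution is precisely the unfolding. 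The degenerate branch where role $i$ does not contribute ($\otT[i]=\otRecVari$) is absorbed by the matching conventions $\ocFix{\ocX[i]}{\ocXi}=\ocUnit$ and $\otFix{\otRecVar[{\ogtRecVar}{i}]}{\otRecVari}=\otUnit$, giving $\otUnit$ on both sides. Guardedness of recursion variables guarantees that the unfoldings underlying $\ocUnfold{\cdot}$, record projection, and these substitutions are well founded.
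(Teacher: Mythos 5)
Your proposal follows essentially the same route as the paper: the paper likewise proves a strengthened, context-threaded statement (Lemma~\ref{lem:SubjectReductionForGC}), handles \inferrule{\iruleOTGComm} by adding the fresh name at an io-type so that \inferrule{\iruleOSubOutCh}/\inferrule{\iruleOSubInpCh} serve the two endpoints, isolates exactly your Merge Lemma for \inferrule{\iruleOTGChoice} (assembling the chooser's record over $\bigcup_{i}K_i$ separately from the merge), and treats recursion as a routine unfolding/substitution argument. Your two presentational deviations are harmless or mildly better: inducting on the typing derivation deals with \inferrule{\iruleOTGSub} explicitly, which the paper's structural induction glosses over; and your reading of the index $\blueI$ as determined by sender, receiver, label and payload type is the reading the paper itself needs for output merges to be defined (cf.\ its remark that the two names being merged ``are identical if the global combinator is well-typed'').

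The genuine gap is your proof of the Merge Lemma by induction on $\otT$. Types here are equi-recursive, so this induction is not well-founded: merging $\ocRec{\ocX}{\ocC[1]}$ with $\ocC[2]$ unfolds $\ocC[1]$ and recursively merges \emph{at the same type} $\otT$ (the recursive type is identified with its unfolding), so no structural measure on $\otT$ decreases, and without further machinery the merge recursion need not terminate at all; your claim that ``recursive types descend structurally'' is exactly what fails. This is what the bookkeeping parameter $\ocChi$ in \cfig{fig:merge} is for: a repeated pair $(\ocC[1],\ocC[2])$ is replaced by its associated fresh variable $\ocZ$, tying the knot. Correspondingly, the paper proves the lemma by induction on the number of calls to $\ocBinMerge[\ocChi]$ --- terminating because the set of pairs accumulated in $\ocChi$ is bounded --- and strengthens the hypotheses with an invariant your sketch omits: every entry $\mapsubst{\ocZ}{(\ocC[1],\ocC[2])}$ of $\ocChi$ records values and a variable all typed at the common type $\otT$ (together with the freshness condition $\BASIC{\otEnv}{\ocC[i]}$), so that when a repeated pair is encountered, $\otEnvEntails{\otEnv}{\ocZ}{\otT}$ is available and the surrounding binder $\ocRec{\ocZ}{\cdot}$ is closed by \inferrule{\iruleOTCRec}. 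Your variant, record, tuple and unit cases survive unchanged, but the recursion case of merging --- the only case the paper's own proof calls interesting --- cannot be carried out as you propose; you need the call-count induction with the $\ocChi$-invariant (or an explicitly coinductive reformulation) to repair it.
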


This property offers the type soundness and communication safety for
\must endpoint programs:  
a statically well-typed \must program will satisfy subject reduction
theorem and never
performs a non-compliant I/O action w.r.t. the underlying binary channels. %
We leave the formal definition of 
\must endpoint programs, operational semantics, typing system, 
and the subject reduction theorem in
\iftoggle{techreport}{%
  \S~\ref{app:mio}%
}{%
  \cite{IRYY2020TechReport}%
}.

\section{Implementing Global Combinators} \label{sec:implementation}
We give a brief overview on the type
manipulation techniques that enable type checking of global
combinators in native \OCaml. 
\S~\ref{sec:proofsystem} gives a high-level intuition of our approach, 
\S~\ref{subsec:gcevalimpl} illustrates evaluation of global combinators to channel vectors in pseudo \OCaml code,
and \S~\ref{subsec:chanvecimpl} presents the typing of global combinators in \OCaml.
Furthermore,
in
\iftoggle{techreport}{%
  Appendix \S~\ref{sec:appimpl}%
}{%
  \cite{IRYY2020TechReport}%
},
we develop {\em variable-length tuples}
using state-of-art functional programming techniques, e.g., GADT and polymorphic variants,
to improve usability of \must.

\label{sec:gcimpl}
\subsection{Typing Global Combinators in OCaml: A Summary} %
\label{sec:proofsystem}

\begin{figure}
  \begin{adjustbox}{width=\columnwidth,center}
    \begin{tabular}{l;{7pt/10pt}l}
      \toprule
      Global Combinator & Type\\
      \toprule
      \TableStrutDouble
      \lstinline!finish!
      &
      \begin{tabular}{l}
        \lstinline!(close._*._,,._*._close)!
      \end{tabular}\\
      \hline
      \lstinline!($r_i$ -->$\ r_j$)._m._._$g$!
      &
      \begin{tabular}{l}
        Given $g$ : \lstinline!($t_{r_1}$._*._,,._*._$t_{r_n}$)!,\\
        Return \lstinline!($t_{r_1}$._*._,,._*._<$r_j$:._<m:._('v._*._$t_{r_i}$)._out>>._*._,,._*._<$r_i$:._[>._`m._of._'v._*._$t_{r_j}$]._inp>._*._,,._*._$t_{r_n}$)!
      \end{tabular}
      \\
      \hline
      \TableStrutTriple
      \begin{lstlisting}
choice_at $r_a$ $\mrg$
($r_a$, $g_1$)
($r_a$, $g_2$)
      \end{lstlisting}
      & 
      \begin{tabular}{l}
        Given $1 \le a \le n$,\\
        \, $g_1$ : \lstinline!($t_{r_1}$._*._,,._*._$t_{r_{a-1}}$*._<$r_b$:._<m$_i$:._($v_i$,._$s_i$)._out>$_{i \in I}$>._*._$t_{r_{a+1}}$*._,,._*._$t_{r_n}$)!,\\
        \, $g_2$ : \lstinline!($t_{r_1}$._*._,,._*._$t_{r_{a-1}}$*._<$r_b$:._<m$_j$:._($v_j$,._$s_j$)._out>$_{j \in J}$>._*._$t_{r_{a+1}}$*._,,._*._$t_{r_n}$)!, and\\
        \,  $\mrg$ : a concatenator ensuring the two label sets are mutually disjoint ($I \cap J = \emptyset$),\\
        Return \lstinline!($t_{r_1}$._*._,,._*._$t_{r_{a-1}}$*._<$r_b$:._<m$_k$:._($v_k$,._$s_k$)._out>$_{k \in I \cup J}$>._*._$t_{r_{a+1}}$*._,,._*._$t_{r_n}$)!\\
      \end{tabular}
      \\
      \hline
      \TableStrut
      \lstinline!fix._(fun $x$ ->$\,g$)!
      &
      \begin{tabular}{l}
        Given $g$ : \lstinline!($t_{r_1}$._*._,,._*._$t_{r_n}$! under assumption that $x$ : \lstinline!($t_{r_1}$._*._,,._*._$t_{r_n}$)!,\\
        \, $x$ is guarded in $g$\\
        Return \lstinline!($t_{r_1}$._*._,,._*._$t_{r_n}$)!
      \end{tabular}
      \\
      \hline
      \TableStrut
      \lstinline!closed_at $r_a$ $g$!
      &
      \begin{tabular}{l}
        Given $g$ : \lstinline!($t_{r_1}$._*._,,._*._$t_{r_{a-1}}$._*._close._*._$t_{r_{a+1}}$._*._,,._*._$t_{r_n}$)! and $1 \le a \le n$,\\
        Return \lstinline!($t_{r_1}$._*._,,._*._$t_{r_{a-1}}$._*._close._*._$t_{r_{a+1}}$._*._,,._*._$t_{r_n}$)!
      \end{tabular}
      \\
      \bottomrule
    \end{tabular}%
  \end{adjustbox}
  \caption{Type of Global Combinators in OCaml}
  \label{fig:gctypeupdate}
\end{figure}

In \cfig{fig:gctypeupdate} we illustrate
the type signature of each global combinator,
which is a transliteration of the typing rules (\cfig{fig:typingforglobal}) into OCaml.
In the figure, OCaml type \lstinline!($t_{r_1}$._*._,,._*._$t_{r_n}$)! corresponds to
a $n$-tuple of channel vector types $t_{r_1} \times\cdots\times t_{r_n}$. 
\precameraready{
  The implementation makes use of \textit{variable-length tuples}
  to represent tuples of channel vectors,
  and therefore the developer does not have to explicitly specify
  the number of roles  $n$
  (see
  \iftoggle{techreport}{%
    Appendix \S~\ref{sec:appimpl}%
  }{%
    \cite{IRYY2020TechReport}%
  }).}
A few type-manipulation techniques are expanded later in \S~\ref{subsec:chanvecimpl}.
Henceforth, we only make a few remarks, regarding some discrepancies with the implementation. 

\begin{wrapfigure}{r}{0.53\textwidth}
  \begin{tabular}{l|l}
 \OCaml types & Types in \S~\ref{sec:formalism} \\ %
    \hline
       \lstinline!<r:[>`m$_i$._of._$v_i$*$t_i$]$_{i \in I}$._inp>!
       & $\otExtSumSmall{\roleR}{i \in I}{\otExtChoice{\mpLab[i]}{\otS[i]}{\otT[i]}}$
       \\
   \lstinline!<r:<m$_i$:($v_i$,$t_i$)._out>$_{i \in I}$>!
       & $\otIntSumSmall{\roleR}{i \in I}{\otIntChoice{\mpLab[i]}{\otS[i]}{\otT[i]}}$
       \\
   \lstinline!close! ($=$\lstinline!unit!) &
   $\otUnit$
   \\
   \lstinline!$t$ as 'x! &
   $\otRec{x}{\otT}$
  \end{tabular}
  \vspace*{-1em}
\end{wrapfigure}
\myparagraph{Channel vector types in OCaml.}
The \OCaml syntax of channel vector types is given on the right. 
The difference with its formal counterparts are minimal. 
In particular, records are implemented using \OCaml object types, and record fields 
correspond to object methods, i.e. \oCODE{role_$\roleQ$} is a method. 
In type \lstinline![>`m$_i$._of._$t_i$]$_{i \in I}$!,
the symbol \lstinline!>! marks an {\em open} polymorphic variant type which can have more tags.
The types \oCODE{inp} and \oCODE{out} stand for an input and output types with a payload type $v_i$ and a continuation $t_i$.
Recursive channel vector types are implemented using OCaml equi-recursive types. 

\myparagraph{On branching and compatibility checking.}
As we explained in \S~\ref{sec:typing:global}, branching %
is the key to ensure the protocol is realisable, and free of communication errors.
To ensure that the choice is deterministic, %
it must be verified that the set of labels in each branch are disjoint.
Since \OCaml objects do not support {\em concatenation} (combining of multiple methods e.g., \cite{DBLP:journals/iandc/Wand91,DBLP:conf/popl/HarperP91}),
and cannot automatically verify that the set of labels (encoded as object methods) are disjoint,
the user has to manually write a disjoint merge function $\mrg$ that concatenates two objects with different methods into one
(see \iftoggle{techreport}{\S~\ref{sec:disj_merge}}{\cite{IRYY2020TechReport}} for examples).
This part can be completely automated by PPX syntactic extension in OCaml.
On compatibility checking of non-choosing roles,
external choice \lstinline!<$r$:._[>`m1._of,,]._inp>! and   \lstinline!<$r$:._[>`m2._of,,]._inp>!,  the types can be recursively
merged by \OCaml type inference to  \lstinline!<$r$:._[>`m1._of,,|`m2._of,,]._inp>!
thanks to the row polymorphism on polymorphic variant types (\oCODE{>}),
while non-directed external choices
and other incompatible combination of types (e.g., input and output, input and closing, and output and closing) are statically excluded.
\myparagraph{On unguarded recursion.}
The encoding of recursion \lstinline!fix._(fun $x$ ->$\,g$)! has two caveats w.r.t the typing system:
(1) \OCaml does not check if a recursion is guarded, thus for example \lstinline!fix (fun x ->._._x)! is allowed.
{\color{modify} We cannot use OCaml value recursion, because global combinators generate channels at run-time.}
(2) Even if a loop is guarded, Hindley-Milner type inference may introduce arbitrary local type at some roles.
For example, consider the global protocol \lstinline!fix._(fun x ->._._($r_a$._--> $r_b$) msg x)!
which specifies an infinite loop for roles $\notin \{r_a, r_b\}$, and does not specify any behaviour for any other roles. 
To prevent undefined behaviour, the typing rule marks the types of the roles that are not used as closed $\otFix{\otRecVar}{\otT}$. 
Unfortunately, in type inference, we do not have such control, and the above protocol will introduce
a polymorphic type \lstinline!'t$_{r_i}$! for role $r_i \notin \{r_a, r_b\}$,
which can be instantiated by {\em any} local type.  

\myparagraph{Fail-fast policy.}  We regard the above intricacies on recursion as a {\em fact of life} %
in any programming language, %
and provide a few workarounds.
For (1), we adopt a ``fail-fast'' policy: Our library throws an exception if there is an unguarded occurrence of a recursion variable.
This check is performed when evaluating a global combinator before any communication is started.
As for (2), we require the programmer to adhere to a coding convention when specifying an infinite protocol. They have to insert
additional combinator \lstinline!closed_at $r_a$ $g$!, 
which consistently instantiates type variable \lstinline!'t$_{r_a}$! with \lstinline!close!,
leaving other roles intact.
If the programmer forgets this insertion, fail-fast approach applies, and our library throws 
a runtime exception before the protocol has started. %
In addition, self-sent messages \oCODE{(r --> r) msg} for any \oCODE{r} are reported as an error at runtime.}

\subsection{Implementing Global Combinator Evaluation}
\label{subsec:gcevalimpl}

Following \S~\ref{sec:evalglobal}, in \cfig{fig:impl_gc},
we illustrate the implementation of the global combinators,
by assuming that method names and variant tags are {\em first class} in this pseudo-OCaml.
Communication combinator \lstinline!(-->)! is presented in \cfig{fig:impl_gc} (a)
where the communication combinator (\lstinline!($r_i$._-->._$r_j$)._$m$._$g$!)
yields two reciprocal channel vectors of type
\lstinline!<$r_j$:<$m$:._($v$,$t_{r_i}$)._out>>! and \lstinline!<$r_i$:[>`$m$._of._$v$*$t_{r_j}$]._inp>!.

The implementation starts by extracting the continuations (the channel vectors) at each role (Line~\ref{line:cc1}).
Line~\ref{line:cc2} creates a fresh new channel \lstinline!&s&! of a polymorphic type \lstinline!'v._channel! shared among two roles,
which is a source of type safety regarding {\em payload} types.
Line~\ref{line:cc3} creates an output channel vector.
We use a shorthand \lstinline!<$m$ = $e$>! to represent an OCaml object \lstinline!object method $m$ = $e$ end!.
Thus, it is bound to \lstinline!c$_{r_i}$!, by nesting the pair \lstinline!(&s&,c$_{r_i}$)! inside two objects,
one with a method role, and another with a method label,
forming type \lstinline!<$r_j$:<$m$:._('v,$t_{r_i}$)._out>>!.
Similarly, Line~\ref{line:cc4} creates an input channel vector \lstinline!c$_{r_j}$!, 
by wrapping channel \lstinline!&s&! in a polymorphic variant using \lstinline!Event.wrap! from Concurrent ML and
nesting it in an object type,
forming type \lstinline!<$r_i$:[>`$m$._of._'v*$t_{r_j}$]._inp>!.
This wrapping relates tag $m$ and continuation $t_j$ to the input side, enabling external choice when merged. 
Finally, the newly updated tuple of channel vectors is returned (Line~\ref{line:cc6}). 

\begin{figure}[t]
\begin{minipage}{0.5\textwidth}
{\lstset{numbers=left}
\begin{OCAMLLISTING}
let (-->) $r_i$ $r_j$ $m$ $g$ = 
 (* extract the continuations *)
  let (c$_{r_1}$, c$_{r_2}$, $\elip$ , c$_{r_n}$) = $g$ in  ^\label{line:cc1}^
  let &s& = Event.new_channel () in  ^\label{line:cc2}^
   (* create an output channel vector  *) 
  let c$_{r_i}$ = (<$r_j$ = <$m$ = (&s&,c$_{r_i}$)> >)$\ $in  ^\label{line:cc3}^
   (* create an input channel vector  *)
  let c$_{r_j}$ = (<$r_i$ =  ^\label{line:cc4}^
      Event.wrap &s& (fun x -> `$m$(x,c$_{r_j}$)) >) in
  (c$_{r_1}$, c$_{r_2}$, $\elip$ , c$_{r_n}$)  ^\label{line:cc6}^
\end{OCAMLLISTING}}
\end{minipage}
\vspace{10pt}
\begin{minipage}{0.4\textwidth}
\begin{OCAMLLISTING}
 let choice_at $r_a$ $\mrg$ $g_1$ $g_2$ = 
  let (c1$_{r_1}$, c1$_{r_2}$, $\elip$ , c1$_{r_n}$) = $g_1$ in  ^\label{line:bc1}^
  let (c2$_{r_1}$, c2$_{r_2}$, $\elip$ , c2$_{r_n}$) = $g_2$ in  ^\label{line:bc2}^
  let c$_{r_a}$ = 
    $\mbox{\rm (concatenate}$ c1$_{r_a}$ $\mbox{\rm and}$ c2$_{r_a}$ $\mbox{\rm using}$ $\mrg\mbox{\rm)}$ in
  let c$_{r_1}$ = merge c1$_{r_1}$ c2$_{r_1}$ in ^\label{line:bc3}^
  let c$_{r_2}$ = merge c1$_{r_2}$ c2$_{r_2}$ in 
  (* .. repeatedly merge each $r_i \neq r_a$ .. *)
  let c$_{r_n}$ = merge c1$_{r_n}$ c2$_{r_n}$ in ^\label{line:bc4}^
  (c$_{r_1}$, c$_{r_2}$, $\elip$ , c$_{r_n}$)
\end{OCAMLLISTING}
\end{minipage}
\caption{Implementation of  communication combinator and (a) branching combinator (b)}
\label{fig:impl_gc}
\end{figure}

\begin{figure}[b]
{\lstset{numbers=left}
\begin{OCAMLLISTING}
(* the  definition of the type method_*)
type ('obj, 'mt) method_ = {make_obj: 'mt -> 'obj; call_obj: 'obj -> 'mt} ^\label{line:method}^
(* example usage of method_: *)^\label{line:methodusagestart}^
val login_method : (<login : 'mt>, 'mt) method_ (* the type of login_method *)
let login_method = 
  {make_obj=(fun v -> object method login = v end); call_obj=(fun obj -> obj#login)} ^\label{line:methodusage}^

(* the  definition of the type label*)
type ('obj, 'ot, 'var, 'vt) label = {obj: ('obj, 'ot) method_; var: 'vt -> 'var} ^\label{line:lable}^
(* example usage of label *)
val login : (<login : 'mt>, 'mt, [> `login of 'vt], 'vt) label
let login = {obj=login_method; var=(fun v -> `login(v))} ^\label{line:lableusage}^

 (* example usage of role: *)
 let s = {index=Zero; ^\label{line:roleusage}^ 
 label={make_obj=(fun v -> object method role_S=v end); call_obj=(fun o -> o#role_S)}} 
\end{OCAMLLISTING}}
\caption{Implementation of first-class methods and labels}
\label{fig:labels}
\end{figure}

\cfig{fig:impl_gc} (b) illustrates the choice combinator \lstinline!choice_at!.
Line~\ref{line:bc3}--\ref{line:bc4} specifies that the channel vectors at non-choosing roles are {\em merged}, using a \oCODE{merge} function. 
Intuitively, \oCODE{merge} does a type-case analysis on the type of channel vectors, as follows: 
(1) for an input channel vector, it makes an {\em external choice} among (wrapped) input channels, using the 
\lstinline!Event.choose! function from Concurrent ML;
(2) for an output channel vector, the bare channel is {\em unified} label-wise,
in the sense that an output on the unified channel can be observed on both input sides,
which is achieved by having channel type around a reference cell;
and (3) handling of channel vector of type \oCODE{close} is trivial.

\myparagraph{First-class methods.} Method names $r_i$, $r_j$ and $m$ and the variant tag $m$ occurring in (\lstinline!($r_i$._-->._$r_j$)._$m$._$g$!) are assumed in \S~\ref{sec:proofsystem} to be  first-class values. Since such behaviour is not readily available in vanilla OCaml,
we simulate it by introducing the type \oCODE{method_} (Line~\ref{line:method} in \cfig{fig:labels}), which creates values that behave like method objects. The type is a record with a  {\em constructor function} \oCODE{make_obj} and
a {\em destructor function} \oCODE{call_obj} (see example in Lines~\ref{line:methodusagestart}--\ref{line:methodusage}).
We use that idea to implement labels and roles as object methods. The encoding of local types stipulates that labels are object methods (in case of internal choice) and
as variant tags (in case of external choice). Hence, the \oCODE{label} type (Line~\ref{line:lable} in \cfig{fig:labels}), 
is defined as a pair of a first-class method, i.e using \oCODE{method_}, and
a {\em variant constructor function}. 
While object and variant constructor functions are needed to compose a channel vector in \lstinline!(-->)!,
 object destructor functions are used in \lstinline!merge! in \lstinline!choice_at!,
 to extract bare channels inside an object. Variant destructors are not needed, as they are destructed via pattern-matching and merging is
 done by \lstinline!Event.choose! of Concurrent ML.
 Roles are defined similarly to labels.
See example in Line~\ref{line:roleusage} (the full definition of \lstinline!role! type is available in
\iftoggle{techreport}{\S~\ref{sec:vartup}}{\cite{IRYY2020TechReport}}).
\subsection{Typing Global Combinators via Polymorphic Lenses}
\label{subsec:chanvecimpl}

This section shows one of our main implementation techniques -- the use of \textit{polymorphic lenses} \cite{foster07combinators,pickering17profunctor} for \textit{index-based updates} on tuple types. 
This is essential to the implementation of the typing of \cfig{fig:gctypeupdate} in \OCaml. 
To demonstrate our technique, we sketch the type of the branching combinator, in a simplified form.
The types of all combinators, incorporating first-class methods and variable-length tuples, can be found in
\iftoggle{techreport}{\S~\ref{sec:gc_details}}{\cite{IRYY2020TechReport}}.
The branching combinator demonstrates our key observation that merging of local types 
can be implemented using row polymorphism in \OCaml, which simulates the least upper bound on channel vector types.  

Intuitively, a lens is a functional pointer,
often utilised to access and modify elements of a nested data structure.
In our implementation, lenses provide a way to {\em update} a channel vector in a tuple \lstinline!($t_{r_1}$._*._,,._*._$t_{r_n}$)!.
The type of the lens \lstinline@(--!'g0!--,._--?'t0?--,._--!'g1!--,._--?'t1?--)._idx@
itself points to an element in a specific position in a tuple, by denoting that ``an element \lstinline@--?'t0?--@ is
in a tuple \lstinline@--!'g0!--@'' in a type-parametric way.
Furthermore, this polymorphic lens is capable to express updating the {\em type} of an element,
from \lstinline@--?'t0?--@ in tuple \lstinline@--!'g0!--@ to \lstinline@--?'t1?--@,
which will update \lstinline@--!'g0!--@ itself to \lstinline@--!'g1!--@.
More precisely, the \lstinline!idx! type has two operations:\\
\hspace*{1em}\lstinline@get:._(--!'g0!--,--?'t0?--,--!_!--,--?_?--)._idx._->._--!'g0!--._->._--?'t0?--@
\hspace*{0.5em}and\hspace*{0.5em}
\lstinline@put:._(--!'g0!--,--?_?--,--!'g1!--,--?'t1?--)._idx._->._--!'g0!--._->._--?'t1?--._->._--!'g1!--@.\\
For example, a lens pointing to the first element of a 3-tuple has the type \lstinline@(--!('x*'a*'b)!--,._--?'x?--,._--!('y*'a*'b)!--,._--?'y?--)._idx@.

The branching combinator
\lstinline!choice_at._$r_a$._$\mrg$._($r_a$,$g_1$)._($r_a$,$g_2$)!
is declared in following way: 
{\lstset{numbers=left}
\begin{OCAMLLISTING}
val choice_at : (--!'g0!--, close, --!'g!--, --?'tlr?--) idx -> (* the index of the selecting role *)
    (--?'tlr?--, --?'tl?--, --?'tr?--) disj -> (* the type of disjoint merge function *)
    (--!'gl!--, --?'tl?--, --!'g0!--, close) idx * --!'gl!-- -> (* the type of the first tuple *)
    (--!'gr!--, --?'tr?--, --!'g0!--, close) idx * --!'gr!-- ->  (* the type of the second tuple *)
    --!'g!-- (* the type of the result tuple *)
\end{OCAMLLISTING}}

\noindent
The type variables in the above is resolved {\em a la} logic programs in Prolog,
where several type variables are unified to compose a tuple type of channel vectors.
It requires that both continuation tuples
\oCODE{--!'gl!--} and \oCODE{--!'gr!--} should be of the same type,
{\em except for} the position of active role $r_a$.
The two \lstinline!idx! types paired with continuations
force this unification, by putting \lstinline!close! at $r_a$ in \oCODE{--!'gl!--} and \oCODE{--!'gr!--}.
Thus, the result type \lstinline@--!'g0!--@ is shared among both lenses, so that it contains only types of non-choosing roles and \lstinline!close!.
Each element in \lstinline@--!'g0!--@ is then pairwise merged\footnote{
  We have implemented the type-case analysis for \lstinline!merge! mentioned in \S~\ref{subsec:gcevalimpl}
  via a wrapper called {\em mergeable} around each channel vector,
  which bundles a channel vector and its {\em merging strategy}.
}. The result type of the combinator \oCODE{--!'g!--} 
is obtained by modifying the merged tuple of channel vectors \oCODE{--!'g0!--}
by updating the type of the active role $r_a$ from \lstinline!close!
to \oCODE{--?'tlr?--}, which is the result type of the object concatenation function $\mrg$. 
Function $\mrg$ takes the channel vector types for the role $r_a$ in \oCODE{--!g1!--} and \oCODE{--!g2!--}, 
namely \oCODE{--?'tl?--} and  \oCODE{--?'tr?--},  and 
returns the result type \oCODE{--?'tlr?--}. 
The signature of the combinator also explains the extra occurrence roles paired with each branch. 
Since we need lens $r_a$ within three {\em different instantiations} for different element types
\oCODE{--?'tl?--}, \oCODE{--?'tr?--} and \oCODE{--?'tlr?--} at the position $r_a$,
we need three occurrences of the same lens.

\section{Dynamic and Static Linearity Checks in the Communication API}
\label{sec:linear:channels}

\begin{figure}[t]
\begin{center}
\begin{tabular}{c}
 \noindent
{\small\begin{tabular}{l|l}
Dynamic & Static\\
\hline
 \noindent
\begin{BIGOCAMLLISTING}
<role_$\roleQ$:._<m:._('v,'t)._out>>
\end{BIGOCAMLLISTING}
&
 \noindent
\begin{BIGOCAMLLISTING}
<role_$\roleP$:._<m:._('v._data,'t)._out>>._lin $\mbox{\rm (base value)}$
<role_$\roleP$:._<m:._('s._lin,'t)._out>>._lin $\mbox{\rm (delegation)}$
\end{BIGOCAMLLISTING}
\\[1mm]
\hline
 \noindent
\begin{BIGOCAMLLISTING}
<role_$\roleP$:[`m._of._'v._*._'t]._inp>
\end{BIGOCAMLLISTING}
  & 
 \noindent
\begin{BIGOCAMLLISTING}
<role_$\roleP$:[`m._of._'v._data._*'t._lin]._inp._lin>._lin $\mbox{\rm (base value)}$
<role_$\roleP$:[`m._of._'s._lin._*'t._lin]._inp._lin>._lin $\mbox{\rm (delegation)}$
\end{BIGOCAMLLISTING}
\\
\begin{BIGOCAMLLISTING} 
close
\end{BIGOCAMLLISTING}
  & 
\begin{BIGOCAMLLISTING} 
close lin 
\end{BIGOCAMLLISTING}
\end{tabular}}
\end{tabular}
\end{center}
\vspace{-2em}
\caption{Channel Vector Types with (a)  Dynamic and (b) Static Linearity Checks}
\label{fig:ch_correspondence}
\end{figure}

To ensure that an implementation faithfully implements a well-formed, safe global protocol, MPST theory 
requires that all communication channels are used linearly. 
Similarly, the safety of our library depends on the linear usage of channels.
Our library offers two mechanisms for checking that a channel is used linearly: 
{\em static} and {\em dynamic}. 
Here, we %
briefly explain each of these mechanisms,
by comparing their API usages in \cfig{fig:api} and types in \cfig{fig:ch_correspondence},
where the dynamic version stays on the left while the static one is on the right.

\myparagraph{Dynamic Linearity Checking.}
Dynamic checking, where linearity violations are detected at runtime, is proposed by \cite{DBLP:conf/esop/TovP10} and \cite{HY2016},
and later adopted by \cite{padovani17context, scalas17linear}.
In \must, dynamic linearity checking is implemented by wrapping the input and output channels,
with a boolean flag
that is set to true once the channel has been used.
If linearity is violated, i.e a channel is accessed after the linearity flag has been set to true, then  an exception \oCODE{InvalidEndpoint} will be raised.
Note that our library correctly handles output channels between several alternatives
being used {\em only once}; for example, 
from a channel vector $c$ of type
\oCODE!<$\roleR$:._<$\labOk$:._(string,close)._out;._$\labCancel$:._(string,close)._out>>!,
the user can extract two channels \oCODE{$c$#$\roleR$#$\labOk$} and \oCODE{$c$#$\roleR$#$\labCancel$}
where an output must take place on either of the two bare channels, but not both.
In addition, our library wraps each bare channel
with a {\em fresh} linearity flag on each method invocation,
since in recursive protocols, a bare channel is often {\em reused},
as the formalism (\S~\ref{sec:formalism}) implies.
\myparagraph{Static Linearity Checking with Monads and Lenses.}
The static checking is built on top of {\tt linocaml} \cite{linocaml}:
a library implementation of linear types in \OCaml
which combines the usage of {\em parameterised monads} \cite{atkey09parameterized}
and polymorphic lenses (see \S~\ref{subsec:chanvecimpl}),
to enable static type-checking on the linear usage of channels.
In particular, we reuse several techniques from \cite{linocaml, imai18sessionscp}.
A parameterised monad, which we model by the type \oCODE{(($\mathit{pre}$,$\mathit{post}$,$v$)$\,$monad)},
denotes a computation of type $v$ with a {\em pre}- and a {\em post}-condition, and they are utilised to
track the creation and consumption of resources at the type level. A well-known restriction of parameterised monads in the context of session types, is that they support communication on a single channel only, and hence are incapable of expressing session delegation and/or interleaving of multiple session channels. 
To overcome this limitation, the {\em slot monad} proposed in \cite{linocaml,imai18sessionscp}
extends the parameterised monad to denote
{\em multiple} linear resources in the {\em pre}- and {\em post}-conditions. The 
resources are represented as a sequence, and each element is modified 
using polymorphic lenses \cite{pickering17profunctor}. %

We incorporate the above-mentioned techniques of {\tt linocaml} so that,
instead of having a single channel vector in the {\em pre} and {\em post} conditions,
we can have a sequence of channel vectors, and we use lenses to {\em focus} on a channel vector at a particular \textit{slot}.  
If we do not require delegation or interleaving, then the length of the sequence is one and 
the monadic operations always update the first element of the sequence. 
In particular, as in \cite{imai18sessionscp}, if a channel is delegated
i.e sent through another channel,
that slot (index) of the sequence is updated to \oCODE{unit}, marking it as consumed.

The \must API, for static linearity checking, is given in \cfig{fig:api}(b), where
\oCODE{s$_i$}, and \oCODE{s$_j$} in delegation, denote {\em lenses} pointing at $i$-th and $j$-th
slot in the monad.
The binary channels in the channel vector, used within the monadic primitives \oCODE{send} and \oCODE{receive},
are of the types given in  \cfig{fig:ch_correspondence}(b).
Functions \oCODE{send} and \oCODE{receive} both take (1) a lens $s_i$ pointing to a channel vector;
and (2) a selector function which extracts, from the channel vector at index $s_i$,
a channel (\oCODE{('v._data,._'t$_1$)._out} for output and
\oCODE{'a._inp} for input.
Type \oCODE{data} denotes unrestricted (non-linear) payload types, whose values are matched against ordinary variables. The result of the monadic primitives is returned as a value of either type \oCODE{'t._lin} for output
or \oCODE{'a._lin} for input, which is matched by \oCODEEsc{match\%lin} or \oCODEEsc{let\%lin},
ensuring the channels (and payloads, in case of delegation) are used linearly. A \oCODE{lin} type must be matched against
{\em lens-pattern} prefixed by \oCODE{&#&}. Note that, {\tt linocaml} overrides
the \oCODE{let} syntax and \lstinline!&#&! pattern, in the way that \oCODEEsc{let\%lin #s$_i$ = $\mathit{exp}$}
updates the index \oCODE{s$_i$}, in the sequence of channel 
vectors, with the value returned from $\mathit{exp}$.  

\begin{figure}[t]
\begin{tabular}{c}
{\small\begin{tabular}{l|l}
Dynamic & Static (monadic) \\
\hline
\noindent\hspace{-0.5em}
\begin{LISTING}
let s = send s#role_q#m v in $e$
let s = send s#role_q#m s' in $e$
\end{LISTING}
&
\noindent\hspace{-0.5em}
\begin{LISTING}
let???lin #s$_i$ = send s$_i$ (fun x -> x#role_q#m) v in $e$
let???lin #s$_i$ = deleg_send s$_i$ (fun x -> x#role_q#m) s$_j$ in $e$
\end{LISTING}
\\[1mm]
\noindent\hspace{-0.5em}
\begin{LISTING}
match receive s#role_p with
|`m$_1$(x,s) -> $e_1$
|`m$_2$(s',s) -> $e_2$
\end{LISTING}
  & \noindent\hspace{-0.5em}
 \begin{LISTING}
match???lin receive s$_i$ (fun x->#role_$\roleP$) with
|`m$_1$(x,#s$_i$) -> $e_1$
|`m$_2$(#s$_j$,#s$_i$) -> $e_2$ $\mbox{\rm (delegation)}$
\end{LISTING}
\\[1mm]
\noindent\hspace{-0.5em}
\begin{LISTING} 
close s
  \end{LISTING}
  &
 \noindent\hspace{-0.5em}
\begin{LISTING} 
close $s_i$
  \end{LISTING}
\end{tabular}}
\end{tabular}
\caption{\OCaml{} API for MPST with Dynamic (a) and Static (b) linearity checks}
\label{fig:api}
\end{figure}

To realise session delegation, we have implemented a separate monadic primitive, \oCODE{deleg_send._$s_i$._(fun._x->x#$\roleP$#$\ocLab$)._$s_j$}, presented in \cfig{fig:api}(b).  %
The primitive extracts the channel vector at position $s_i$
and then updates the channel vector at position $s_j$.
As a result, the slot for $s_j$ is
returned and used in further communication, the slot  $s_i$ is updated to \oCODE{unit}. 
An example program that uses \must static API is given in \cfig{fig:client:oAuth2}(b). 

\section{Evaluation}
\label{sec:evaluation}
We evaluate our framework in terms of run-time performance
(\S~\ref{sec:performance}) and applications
(\S~\ref{sec:usecases}, \S~\ref{sec:oauth}).
We compare the performance of \ourlibrary with programs written in a continuation-passing-style (following the encoding presented in \cite{Scala}) and untyped implementations (Bare-OCaml) that utilise popular communication libraries.
In summary,  \ourlibrary
has negligible overhead in comparison with \textit{unsafe} implementations (Bare-OCaml),  and CPS-style implementations.
We demonstrate the applicability of \ourlibrary by implementing %
a lot of use cases. In \S~\ref{sec:oauth}, 
we show the implementation of the OAuth protocol, 
which is the first application of session types over \CODE{http}.

\subsection{Performance}
\label{sec:performance}
The runtime overhead of \ourlibrary stems from the implementation of channel vectors, more specifically:
(1) extracting a channel from an \OCaml object when performing a
communication action, and (2) either (2.1) dynamic linearity checks or (2.2) more closures introduced by the usage of a  slot monad for static checking.
Our library is parameterised on the underlying communication transport. We evaluate its performance in case of synchronous, asynchronous and distributed transports. 
Specifically, we use the following communication libraries:
\begin{enumerate}
\item[(1)] {\bfseries \ttfamily ev}:
\OCaml's standard {\tt Event} channels which implements channels
shared among POSIX-threads; 
\item[(2)] 
{\bfseries \ttfamily lwt}:
Streams between {\em lightweight-threads} \cite{lwt}, which are more
efficient for I/O-intensive application in general, and
broadly-accepted by the OCaml communities, and
\item[(3)] 
{\bfseries \ttfamily ipc}: UNIX pipes distributed over UNIX processes.
\end{enumerate}
Note that {\bfseries \ttfamily ev} is synchronous, while the other two are asynchronous.
Also, due to current \OCaml limitation, POSIX-threads in a process cannot run simultaneously in parallel,
which  particularly affects the overall performance of (1).
As OCaml garbage collector is not a concurrent GC, only a single OCaml thread is allowed to manipulate the heap, which in general limits the overall performance of multi-threaded programs written in OCaml.
For (3), we generate a single pipe for each pair of processes, and maintain a mapping between a local channel
and its respective dedicated UNIX pipe. In addition, we also 
implement an optimised variant of \must in the case of { \ttfamily lwt},  denoted as {\bfseries \ttfamily lwt-single} in \cfig{fig:runtime-performance};
it reuses a single stream among different payload types,
instead of using different channels for types.
In particular, we cast a payload to its required payload type utilising \oCODE{Obj.magic},
as proposed and examined by \cite{padovani16simple, DBLP:conf/coordination/ImaiYY17}.
Our benchmarks are generalisable because each microbenchmark exhibits the worst-case scenario for its potential source of overhead.

We compare implementations, written using (1) \must static API, 
 (2) \must dynamic API, (3) a Bare-\OCaml implementation using untyped channels as provided by the corresponding transport library, and (4) a CPS implementation, following 
the encoding in \cite{scalas17linear}. We have implemented the encoding manually such that a channel is created at each communication step, and passed as a continuation. 
\cfig{fig:runtime-performance} reports the results on three microbenchmarks.

\vspace{2pt}
\noindent\emph{\textbf{Setup.}}
We use the native \textit{ocamlopt} compiler of \OCaml 4.08.0 with Flambda optimiser\footnote{
\url{https://caml.inria.fr/pub/docs/manual-ocaml/flambda.html}}. %
Our machine configurations are
Intel Core i7-7700K CPU (4.20GHz, 4 cores), Ubuntu 17.10, Linux
4.13.0-46-generic, 16GB. 
We use \CODE{Core\_bench}\footnote{\url{https://blog.janestreet.com/core_bench-micro-benchmarking-for-ocaml/}},
a popular benchmark framework in \OCaml, which uses its built-in linear regression for estimating the reported costs.
We repeat each microbenchmark for 10 seconds of quota where \CODE{Core\_bench} takes
hundreds of samples, each consists of
up to 246705 runs of the targeted OCaml function,
we obtain the average of execution time with fairly narrow 95\% confidence interval.

\begin{figure}
  \begin{center}
    \includegraphics[scale=0.35]{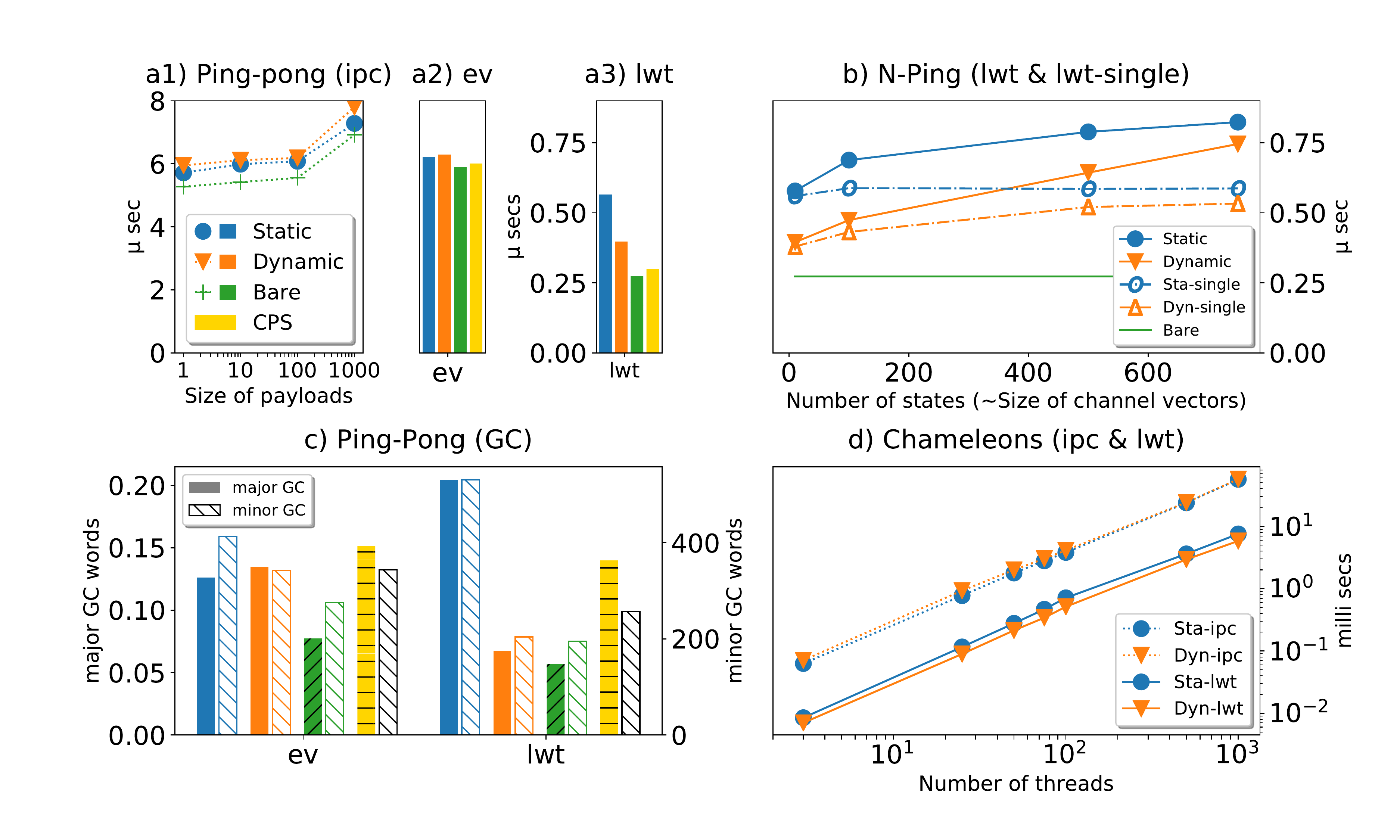}
  \vspace*{-1em}
\caption{Runtime performance vs GC time performance}
\label{fig:runtime-performance}
  \end{center}
\end{figure}

\myparagraph{Ping-pong} benchmark measures the execution time for completing a recursive protocol between two roles, which are repeatedly exchanging request-response messages
of increasing size (measured in 16 bit integers). %
The example is communication intensive and exhibits no other cost apart from the (de)serialisation of values that happens in the {\ttfamily ipc} case,
hence it demonstrates the pure overhead of channel extraction, dynamic checks and parameterised monads.
In the case of a shared memory transports ({\color{modify}ev and lwt}), we report the results of 
a payload of one integer since the size of the message does not affect the running time. 

The slowdown of \must is negligible (approx. 5\% for Dynamic vs Bare-\OCaml, and 13\% for Static vs Bare-\OCaml) when using either { \ttfamily ev}, \cfig{fig:runtime-performance} (a1), or { \ttfamily ipc}, \cfig{fig:runtime-performance}(a2), as a transport, since the overhead cost is overshadowed by latency. The shared memory case using  { \ttfamily lwt}, \cfig{fig:runtime-performance}(a3), represents the worse case scenario for \ourlibrary
since it measures the pure overhead of the implementation of many
interactions purely done on memory with minimal latency. The slowdown
in the static version is expected \cite{imai18sessionscp} and reflects
the cost of monadic closures, as the current implementation does not optimise them away.
The linearity monad is implemented via a state monad \cite{linocaml}, which incurs considerable overhead.
  The OCaml Flambda optimiser could remove more closures if we annotate the program with inlining specifications.
The slowdown (although negligible) in
comparison with CPS is surprising since we pre-generate all channels
up-front, while the CPS-style implementation creates a channel at each
interaction step. Our observation is that 
the compiler is optimised for handling large amounts of immutable values, while  
\OCaml objects %
(utilised by the channel vector abstraction) are less efficient than normal records and variants.

\cfig{fig:runtime-performance} (c) reports on the memory consumption 
(in terms of words in the major and minor heap) for executing the protocol. 
Channel vectors with dynamic checking have approximately the same memory footprint
as Bare-\OCaml, and significantly less footprint when compared with a CPS implementation.

\myparagraph{n-Ping} is a protocol of increasing size, 
\oCODE{nping} global combinator forming repeated composition
of the communication combinators defined by
\oCODE{g$_i$ = (a-->b) ping @@ (b-->a) pong @@ g$_{i-1}$}, \oCODE{g$_0$ = t}
and \oCODE{nping = fix (fun t -> g$_n$)}, where $n$
corresponds to the number of \oCODE{ping} and \oCODE{pong} states.
In contrast to Ping-Pong, this example generates a large number of channels and large channel vector objects, evaluating how  well \must scales w.r.t the size of the channel vector structure. We show the results for 
transports  { \ttfamily lwt} and { \ttfamily lwt-single} in  \cfig{fig:runtime-performance} (b). The static version  of {\ttfamily lwt-single} has a constant overhead from Bare-\OCaml. 
Although the static checking implementation is in general slower, the relative overhead, in comparison with dynamic checking, decreases as the protocol length increases.

\myparagraph{Chameleons} protocol specifies that \CODE{n} roles ("chameleons") connect to a central
broker, who picks pairs and sends them their respective reference, so they can interact peer-to-peer. The example tests delegation (central broker sends a reference) and creation of many concurrent sessions (peer-to-peer interaction of chameleons). The results reported in \cfig{fig:runtime-performance} (d) show that 
the implementation of delegation with static linearity checking scales as well as its dynamic counterpart. The cost of linearity (monadic closures) is less than the cost of dynamic checks for many concurrent sessions over {\ttfamily lwt} transport. 

%
%
%
%
%
%

\subsection{Use Cases}
\label{sec:usecases}

We demonstrate the expressiveness and applicability of \must by 
specifying and implementing protocols for a range of applications, listed in \cfig{fig:usecase}.
We draw the examples from three categories of benchmarks: (1)
\emph{session benchmarks} (examples 1-9), which are gathered from the session types literature; (2) \emph{concurrent algorithms} from the Savina benchmark suit \cite{Savina} (examples 10-13); and  (3) \emph{application protocols} (examples 14-16), which focus on well-established protocols that demonstrate interoperability between
\ourlibrary implemented programs and existing client/servers. For each use case we report on Lines of Code (LoC) of global combinators and the compilation time (\CODE{CT} reported in milliseconds).
We also report if the example requires 
full-merge \cite{DY12} (FM) -- a well-formedness condition on global
protocols that is not supported in \cite{scalas17linear}, but
supported in \must. 
\newcommand{\xmark}{\text{\sffamily x}}
\begin{figure}[t]
\begin{center}
\hspace{3mm}
\begin{minipage}{0.40\textwidth}
\footnotesize
\begin{tabular}{|l@{\hspace{0.5mm}}|l@{\hspace{0.5mm}}|l@{\hspace{0.5mm}}|l@{\hspace{0.5mm}}|l|l|}
 \hline
 Example (role)  & LoC &\CODE{CT}\tiny{(ms)} & FM \\ %
 \hline \hline
 1. 2-Buyer \cite{HY2016}   &  15 & 45 & $\checkmark$ \\
 2. 3-Buyer \cite{HY2016}    &  21 & 47  & $\checkmark$ \\
 3. Fibonacci 	\cite{HY2016}        &  8 & 38  & \xmark\\
 4. SAP-Negotiation  \cite{HY2016}    & 17  & 46  & \xmark \\
 5. Supplier Info  \cite{HY2016}  &  50 & 85  & $\checkmark$ \\
 6. SH \cite{pucella08session,HY2016}
                            &  27 & 58 & $\checkmark$ \\
 7. Distributed Calc    \cite{HY2016}             &  12 & 41 & \xmark  \\
 8. Travel Agency  \cite{HY2016}     &  16 & 66 & $\checkmark$ \\
 
  \hline
\end{tabular}
\end{minipage}
\hspace{6mm}
\begin{minipage}{0.5\textwidth}
\footnotesize
\begin{tabular}{|l@{\hspace{0.5mm}}|l@{\hspace{0.5mm}}|l@{\hspace{0.5mm}}|l@{\hspace{0.5mm}}|l| l|l|}
 \hline
  Example (role)  &  LoC & \CODE{CT}\tiny{(ms)} & FM \\%

 \hline \hline
9. Game  \cite{scalas17linear}    &  17 & 49 & \xmark \\
10. MapReduce \cite{Savina}                        & 5 & 33& \xmark \\
11. Nqueen \cite{Savina}                           &  12 & 55 & \xmark \\
12. Santa \cite{benton03santa,linocaml}                             &  14 & 42 & 
\xmark \\
13. Sleeping Barber     \cite{HY2016}                            &  15 & 43  & $\checkmark$ \\
 \hline
 14. SMTP \cite{HY2016}    &  54 & 124  & \xmark \\
 15. OAuth        &  26 & 60   & $\checkmark$\\
 16. DNS            &  11 & 57 &  \xmark  \\
\hline
\end{tabular}
\end{minipage}
\end{center}
\hspace*{-2em}
\caption{Implemented Use cases (LoC: Lines of code, CT: Compiling Time, FM: Full merge.)}
\label{fig:usecase}
\end{figure}

Examples 1-9 are gathered from the official Scribble test
suite\footnote{https://github.com/scribble/scribble-java}
\cite{BETTYTOOLBOOK}, and we have  
converted Scribble protocols to global protocol combinators. 
Examples 10-13 are concurrent algorithms and are parametric on the number of roles ($n$).
To realise the scatter-gather pattern required in the examples, we
have added
{\color{modify}
  two new constructs, \oCODE{scatter} and \oCODE{gather},
which correspond to a subset of the parameterised role extension
for MPST protocols \cite{CHJNY2019}.}

To test the applicability of \must to real-world protocols 
we have specified, using global combinators,  a core subset of three Internet protocols (examples 14-16), namely the Simple Mail Transfer Protocol (SMTP), 
the Domain Network System (DNS) protocol and the OAuth  protocol. Using the \must APIs, it was straightforward to implement 
compliant clients in \OCaml that interoperate with popular servers.  
In particular,  we have implemented an SMTP client that interoperates
with the Microsoft exchange server and sends an e-mail,
an OAuth authorisation service that connects to a Facebook server and authenticates
a client, and a DNS client and a server, which are implemented on top of a popular DNS library in \OCaml (\CODE{ocaml-dns}). 
Note that DNS has sessions, as the DNS protocol has an ID field to discriminate sessions; and a request forwarding in the DNS protocol 
involves more than two participants (i.e. servers).

%
%
%
%

%
%
%
%

%
%
%

%
%
%
%

%
%
%
%
%
%
%
%
%
%
%
%
%

\subsection{Session Types over HTTP: Implementing OAuth}\label{sec:oauth}
In this section, we discuss more details about \must implementation of
OAuth\footnote{https://oauth.net/2/}, which is an Internet standard for authentication. 
OAuth is commonly used as a way for Internet users to grant websites or applications access to their information on other websites but without giving them the passwords 
by providing a specific authorisation flow. 
\cfig{fig:oAuth} shows the
specification of the global combinator, along with an implementation for the 
authorisation server.  
We have specified a subset
of the protocol, which includes establishing a secure connection
and conducting the main authentication transaction. 
Using \CODE{OAuth} as an example, we also discuss practically motivated extensions, 
{\em explicit connection handling} akin to the one in \cite{HY2017},
to the core global combinators.
We present that a common pattern when HTTP is used as an underlying transport.

\begin{figure}[t!]
\begin{minipage}{0.45\textwidth}
{\lstset{numbers=left,firstnumber=1}
\begin{OCAMLLISTINGTINY}
let fb_oauth =^\label{line:fb:gstart}^
 (c -!-> s) (get "/start_oauth") @@^\label{line:fb:gentry}^
 (s -?-> c) _302  @@^\ ^(* 302: HTTP redirect *)^\label{line:fb:gfirstreply}^
 (c -!-> a) (get "/login_form") @@^\label{line:fb:gloginformreq}^
 (a -?-> c) _200 @@^\label{line:fb:gloginformresp}^
 (c -!-> a) (post "/auth") @@^\label{line:fb:glogin}^
 choice_at a (to_c success_or_fail)^\label{line:fb:gchoice}^
 (a,(a -?-> c) (_200_success ...) @@^\label{line:fb:g200success}^
    (c -!-> s) (success is_ok "/callback") @@^\label{line:fb:gcallbacksuccess}^
    (s -!-> a) (get "/access_token") @@^\label{line:fb:gaccesstoken}^
    (a -?-> s) _200 @@^\label{line:fb:gaccesstoken200}^
    (s -?-> c) _200 @@^\label{line:fb:gcallback200}^
    finish)^\label{line:fb:gsuccessend}^
 (a,(a -?-> c) (_200_fail ...) @@^\label{line:fb:g200fail}^
    (c -!-> s) (fail is_fail "/callback") @@^\label{line:fb:gfail}^
    (s -?-> c) _200 @@^\label{line:fb:gfail200}^
    finish)^\label{line:fb:gfailend}^
\end{OCAMLLISTINGTINY}
}
\end{minipage}
\begin{minipage}{0.45\textwidth}
{\lstset{numbers=left,firstnumber=18}
\begin{OCAMLLISTINGTINY}
let fb_acceptor = H.start_server 8080 "/mpst-oauth"^\label{line:fb:sstart}^
let rec facebook_oauth_consumer () =^\label{line:fb:sfunstart}^
  let ch = get_ch s &fb_oauth& in^\label{line:fb:sepp}^
  let sid = string_of_int (Random.int ()) in^\label{line:fb:sgenid}^
  let &conn& = fb_acceptor sid in^\label{line:fb:saccept1}^
  let `get(_, ch) = receive (ch &conn&)#role_C in^\label{line:fb:s}^
  let redir_url = fb_redirect_url sid "/callback" in^\label{line:fb:sredirecturl}^
  let ch = send ch#role_C#_302 redir_url in^\label{line:fb:sredirect}^
  let &conn& = fb_acceptor sid in^\label{line:fb:saccept2}^
  let ch = match receive (ch &conn&)#role_C with^\label{line:fb:schoice}^
    | `success(_,ch) ->^\label{line:fb:ssuccess}^
       let &conn_p& = H.http_connector^\label{line:fb:sconn1}^
            "https://graph.facebook.com/v2.11/oauth" in^\label{line:fb:sconn2}^
       let ch = send (ch &conn_p&)#role_A#get [] in^\label{line:fb:sgetprop}^
       let `_200(auinfo,ch) = receive ch#role_A in^\label{line:fb:saccesstoken}^
       send ch#role_C#_200 "auth succeeded"^\label{line:fb:successback}^
    | `fail(_,ch) -> send ch#role_C#_200 "auth failed"^\label{line:fb:sfail}^
  in close ch; facebook_oauth_consumer ()^\label{line:fb:send}^
\end{OCAMLLISTINGTINY}
}\end{minipage}
\hspace*{-2em}
\caption{Global Combinators and Local Implementations for OAuth (excerpt)}
\label{fig:oAuth}
\end{figure}

\myparagraph{Extension for handling stateless protocols.} The protocol has a very similar structure to the \oCODE{oAuth} protocol, presented in 
\S~\ref{sec:overview}. However, 
the original OAuth protocol is realised over a RESTful API, which means that 
every session interaction is either an HTTP request or an HTTP response. 
To handle HTTP connections, we have implemented a thin wrapper around an HTTP library, {\tt Cohttp}\footnote{\url{https://github.com/mirage/ocaml-cohttp}}, 
and we make HTTP actions explicit in the protocol by 
 proposing two new global combinators, 
{\em connection establishing} combinator \oCODE{(-!->)} and
{\em disconnection} combinator \oCODE{(-?->)}.
Session types represent the types of the communication channel after a session (a TCP connection in the general
case) has been established. Since RESTful protocols, realised over HTTP transport, are stateless, a connection is
``established'' at every HTTP Request. We explicitly encode this behaviour by replacing the --> combinator that 
denotes that one role is sending to another, with two new combinators. The combinator -!-> means establishing a connection and
piggybacking a message, while  -?-> denotes piggybacking a message and disconnect. This simple extension allows us to faithfully 
encode HTTP Request and HTTP Response. For example, \lstinline@a-!->b@ requires that role \lstinline!a! connects on an HTTP port to \lstinline!b! and then \lstinline!a! sends a
message to \lstinline!b!, hence implementing HTTP Response; on the other hand \lstinline@a-?->b@ specifies an HTTP Response.

\myparagraph{Implementation.} The global combinator \oCODE{fb_oauth} is given in \cfig{fig:oAuth} (a). 
As before, the protocol consists of three parties, a service
\Rptp{s}, a client \Rptp{c}, and an authorisation server \Rptp{a}.
First, \Rptp{c}  connects to \Rptp{s}
via a relative path \oCODE{"/start_oauth"} (Line~\ref{line:fb:gentry}).
Then \Rptp{s}  redirects \Rptp{c} to \Rptp{a}
using HTTP redirect code  \oCODE{_302} (Line~\ref{line:fb:gfirstreply}).
As a result the client sees a login form at \oCODE{"/login_form"}
(Lines~\ref{line:fb:gloginformreq}-\ref{line:fb:gloginformresp}),
where they enter their credentials (Line~\ref{line:fb:glogin}).
Based on the validity of the credentials received by \Rptp{c},
\Rptp{a} sends \oCODE{_200_success} (Line~\ref{line:fb:g200success}) or \oCODE{_200_fail}.
If the credentials are valid,  \Rptp{c} proceeds and connects to \Rptp{s} on path \oCODE{"/callback"} (Line~\ref{line:fb:gcallbacksuccess}),
requesting to get access to a secure page.
The service \Rptp{s} then retrieves an {\em access token} from  \Rptp{a} on 
URL \oCODE{"/access_token"} (Lines~\ref{line:fb:gaccesstoken}-\ref{line:fb:gaccesstoken200}),
and navigates the client to an authorised page,
finishing the session (Lines~\ref{line:fb:gcallback200}-\ref{line:fb:gsuccessend}). 
If the credentials are not valid, the client reports the failure to \Rptp{s} (Lines~\ref{line:fb:gfail}-\ref{line:fb:gfail200}),
and the session ends (Line~\ref{line:fb:gfailend}).

The \oCODE{s}erver role of \oCODE{fb_oauth} is faithfully implemented in
Lines~\ref{line:fb:sstart}-\ref{line:fb:send}
which provides an OAuth application utilising Facebook's authentication service.
Line~\ref{line:fb:sstart}
starts a thread which listens on a port \oCODE{8080} for connections.
Essentially it
starts a web service at an absolute URL \oCODE{"/mpst-oauth"}
(i.e. relative URLs like \oCODE{"/callback"} are mapped to \oCODE{"https://.../mpst-oauth/callback"}). %
The recursive function \oCODE{facebook_oauth_consumer} starting from Line~\ref{line:fb:sfunstart} is the main event loop for \Rptp{s}.
Line~\ref{line:fb:sepp} extracts a channel vector
from the global combinator \oCODE{fb_oauth}, of which type is propagated to the rest of the code.
Then it generates a session id via a random number generator \oCODE{(Random.int ())} (Line~\ref{line:fb:sgenid}),
and waits for an HTTP request from a client on  \oCODE{fb_acceptor} (Line~\ref{line:fb:saccept1}).
When a client connects, the connection is bound to the variable \oCODE{&conn&} associated with the pre-generated session id. 
Note that the channel vector expects a connection
since no connection has been set for the client yet.
Here, the connection is supplied to the channel vector via function application \oCODE{(ch &conn&)}.
On Line~\ref{line:fb:sredirecturl}, expression \oCODE{(fb_redirect_url sid "/callback")} prepares a
redirect URL to an authentication page of a Facebook Provider (\verb!https://www.facebook.com/dialog/oauth!)
After sending back (HTTP Response) the redirect url to the client with \oCODE{_302} label (Line~\ref{line:fb:sredirect}),
the connection is implicitly closed by the library. Note that we do not need to supply 
a connection to the channel vector on Line~\ref{line:fb:sredirect};
because 
a connection already exists, we have already received an HTTP request from the
user and Line~\ref{line:fb:sredirect} simply performs HTTP response. 
The next lines proceed as expected following the protocol, with 
the only subtlety  that we thread the connection object in subsequent send/receive calls.

The full source code of the benchmark
protocols and applications and the raw data are available from the project repository.
\section{Related Work}
\label{sec:related}

\newcommand{\fuse}{FuSe }

We summarise the most closely related works on session-based languages 
or multiparty protocol implementations. See \cite{BETTYTOOLBOOK} for recent
surveys on theory and implementations.

The work most closely related to ours is \cite{scalas17linear}, 
which implements multiparty session interactions 
over binary channels in Scala built on 
an encoding of a multiparty session calculus 
to the $\pi$-calculus. 
The encoding relies
on {\em linear decomposition} of channels, which is defined in terms
of {\em partial projection}. 
Partial projection is restrictive, and rules out many
protocols presented in this paper. For example, 
it gives an undefined behaviour for role \Rptp{c} and \Rptp{s} for protocols \oCODE{oAuth2} and \oCODE{oAuth3} in Fig.~\ref{fig:auth2}.
Programs in \cite{scalas17linear}  
have to be written in a continuation passing style 
where a fresh channel is created at each 
communication step. %
In addition, the ordering of communications across separate channels
is not preserved in the implementation, e.g. sending a \oCODE{login}
and receiving a \oCODE{password} in the protocol \oCODE{oAuth} is
decomposed to two separate elements which are not causally related.
This problem is mitigated by providing an external protocol 
description language, Scribble \cite{scribble}, and its API generation tool,
that links each protocol state using a call-chaining API
~\cite{HY2016}. 
The linear
usage of channels is checked at runtime.

An alternative way to realise multiparty session communications over
binary channels is using an orchestrator -- an intermediary process
that forwards the communication between interacting parties.  The work
\cite{DBLP:conf/forte/CairesP16} suggests addition of a medium process
to relay the communication and recover the ordering of communication
actions, while the work \cite{DBLP:conf/concur/CarboneLMSW16} adds
annotations that permit processes to communicate directly without
centralised control, resembling a proxy process on each
side.  Both of the above works are purely theoretical.

Among multiparty session types implementations, 
several works exploit the equivalence
between local session types and communicating automata to generate
session types APIs for mainstream programming languages (e.g., Java
\cite{HY2016, DBLP:conf/ppdp/KouzapasDPG16}, Go \cite{CHJNY2019}, F\#
\cite{scalas17linear}). Each state from  state automata is implemented as
a class, or in the case of \cite{DBLP:conf/ppdp/KouzapasDPG16}, as a
type state.  To ensure safety, state automata have to be derived from
the same global specification.
All of the works in this category 
 use the Scribble toolchain to generate the state classes  from a
 global specification. 
Unlike our framework, a local type is not 
inferred automatically and the subtyping relation is limited 
since typing is nominal and is 
constrained by the fixed subclassing relation 
between the classes that represent the states.
All of these implementations also 
detect linearity violations at runtime, and 
offer no static alternative. 

In the setting of binary session types, \cite{imai18sessionscp}
propose an \OCaml library, which uses a slot monad to manipulate
binary session channels. Our encoding of global combinators to
simply-typed binary channels enable the reuse of the techniques
presented in \cite{imai18sessionscp}, e.g.~for delegations and
enforcement of linearity of channels.  

\fuse  \cite{padovani17context} is another library for session
programming in \OCaml.  It supports a runtime mechanism for linearity
violations, as well as a monadic API for a single session without
delegation. The implementation of \fuse is based on the encoding of
binary session-typed process into the linear $\pi$-calculus, proposed
by \cite{dardha12session}. The work \cite{scalas16lightweight} also 
implements this encoding in Scala, 
and the work \cite{scalas17linear} extends the encoding and
implementations to the multiparty session types (as discussed 
in the first paragraph).  

Several Haskell-based works \cite{pucella08session,
  orchard16effects, lindley16embedding} exploit its richer typing
system to statically enforce linearity with various
expressiveness/usability trade-offs based on their session types
embedding strategy. These works depend on type-level features in
Haskell, and are not directly applicable to \OCaml. A detailed
overview of the different trade-off between these implementations in
functional languages is
given in Orchard and Yoshida's chapter in \cite{BETTYTOOLBOOK}.
Based on logically-inspired representation of session types, 
embedding higher-order binary session processes using contextual monads
is studied in 
\cite{DBLP:conf/esop/ToninhoCP13}.  
This work is purely
theoretical.

Outside the area of session-based programming languages, 
various works study protocol-aware
verification. Brady et al. \cite{IdrisProtocolDevelopment} describe a
discipline of protocol-aware programming in Idris, in which adherence
of an implementation to a protocol is ensured by the host language
dependent type system.  Similarly,
\cite{DBLP:journals/pacmpl/SergeyWT18} proposes a programming logic,
implemented in the theorem prover Coq, for reasoning on protocol
states.  A more  lightweight verification approach is developed in
\cite{DBLP:conf/padl/AndersenS19} for a set of
protocol combinators, capturing patterns for distributed
communication. However, the verification is done only at runtime.
The work \cite{DBLP:conf/popl/CarboneM13} presents a global language for
describing choreographies and a global execution model where the
program is written in a global language, and then automatically
projected using code generation to executable processes (in the style
of BPMN).  All of the above works either develop a new language or are
built upon powerful dependently-typed host languages (Coq, Idris).
Our aim is to utilise the MPST framework for specification and verification of
distributed protocols, proposing 
a type-level treatment of protocols 
which relies solely on existing
language features.

\section{Conclusion and Future Work}
\label{sec:conclusion}
In this work, we present a library for programming multiparty protocols
in \OCaml, which ensures \textit{safe} multiparty communication over
binary I/O channels.  The key ingredient of our work is the notion of
global combinators -- a term-level representation of global types, that
automatically derive channel vectors -- a data structure of nested
binary channels. We present two APIs for programming with channel
vectors, a monadic API that enables static verification of linearity
of channel usage,
and one that checks channel usage at runtime. 
OCaml is intensively used for system programming among several groups and companies in both
industry and academia ~\cite{minsky11ocaml,barham03xen,
madhavapeddy08xen,
madhavapeddy14unikernels,merkel14docker,mldonkey,
chanezon16docker,radanne16eliom}.
We plan to apply \ourlibrary to such real-world applications.

We formalise a type-checking algorithm for global protocols, 
and a sound derivation of channel vectors, which, we believe, are applicable beyond \OCaml. 
In particular, TypeScript is a
promising candidate as it  
is equipped with a structural type system akin to the one presented in our paper. 

To our best knowledge, this is the first work  to  
enable MPST protocols to be written, verified, and implemented in a 
single (general-purpose) programming language and the first
implementation framework of statically verified 
MPST programs. By combining protocol-based specifications, static linearity checks and structural typing, we allow one to implement communication programs that are extensible and type safe by design. 

%
%
%

\bibliography{session}

\begin{thebibliography}{10}

\bibitem{DBLP:conf/padl/AndersenS19}
Kristoffer Just~Arndal Andersen and Ilya Sergey.
\newblock Distributed protocol combinators.
\newblock In {\em Practical Aspects of Declarative Languages - 21th
  International Symposium, {PADL} 2019, Lisbon, Portugal, January 14-15, 2019,
  Proceedings}, volume 11372 of {\em Lecture Notes in Computer Science}, pages
  169--186. Springer, 2019.
\newblock URL: \url{https://doi.org/10.1007/978-3-030-05998-9\_11}, \href
  {http://dx.doi.org/10.1007/978-3-030-05998-9\_11}
  {\path{doi:10.1007/978-3-030-05998-9\_11}}.

\bibitem{atkey09parameterized}
Robert Atkey.
\newblock {Parameterized Notions of Computation}.
\newblock {\em Journal of Functional Programming}, 19(3-4):335--376, 2009.
\newblock \href {http://dx.doi.org/10.1017/S095679680900728X}
  {\path{doi:10.1017/S095679680900728X}}.

\bibitem{barham03xen}
Paul Barham, Boris Dragovic, Keir Fraser, Steven Hand, Timothy~L. Harris, Alex
  Ho, Rolf Neugebauer, Ian Pratt, and Andrew Warfield.
\newblock Xen and the art of virtualization.
\newblock In {\em Proceedings of the 19th {ACM} Symposium on Operating Systems
  Principles 2003, {SOSP} 2003, Bolton Landing, NY, USA, October 19-22, 2003},
  pages 164--177, 2003.
\newblock URL: \url{http://doi.acm.org/10.1145/945445.945462}, \href
  {http://dx.doi.org/10.1145/945445.945462} {\path{doi:10.1145/945445.945462}}.

\bibitem{BettiniCDLDY08}
Lorenzo Bettini, Mario Coppo, Loris D'Antoni, Marco~De Luca, Mariangiola
  Dezani{-}Ciancaglini, and Nobuko Yoshida.
\newblock Global progress in dynamically interleaved multiparty sessions.
\newblock In {\em CONCUR}, volume 5201 of {\em LNCS}, pages 418--433. Springer,
  2008.

\bibitem{DBLP:journals/pacmpl/BourRS18}
Fr{\'{e}}d{\'{e}}ric Bour, Thomas Refis, and Gabriel Scherer.
\newblock Merlin: a language server for ocaml (experience report).
\newblock {\em {PACMPL}}, 2({ICFP}):103:1--103:15, 2018.
\newblock URL: \url{https://doi.org/10.1145/3236798}, \href
  {http://dx.doi.org/10.1145/3236798} {\path{doi:10.1145/3236798}}.

\bibitem{IdrisProtocolDevelopment}
{Edwin Charles} Brady.
\newblock Type driven development of concurrent communicating systems.
\newblock {\em Computer Science}, 18(3), 7 2017.
\newblock \href {http://dx.doi.org/10.7494/csci.2017.18.3.1413}
  {\path{doi:10.7494/csci.2017.18.3.1413}}.

\bibitem{DBLP:conf/forte/CairesP16}
Lu{\'{\i}}s Caires and Jorge~A. P{\'{e}}rez.
\newblock Multiparty session types within a canonical binary theory, and
  beyond.
\newblock In {\em Formal Techniques for Distributed Objects, Components, and
  Systems - 36th {IFIP} {WG} 6.1 International Conference, {FORTE} 2016, Held
  as Part of the 11th International Federated Conference on Distributed
  Computing Techniques, DisCoTec 2016, Heraklion, Crete, Greece, June 6-9,
  2016, Proceedings}, volume 9688 of {\em Lecture Notes in Computer Science},
  pages 74--95. Springer, 2016.
\newblock URL: \url{https://doi.org/10.1007/978-3-319-39570-8\_6}, \href
  {http://dx.doi.org/10.1007/978-3-319-39570-8\_6}
  {\path{doi:10.1007/978-3-319-39570-8\_6}}.

\bibitem{DBLP:conf/concur/CarboneLMSW16}
Marco Carbone, Sam Lindley, Fabrizio Montesi, Carsten Sch{\"{u}}rmann, and
  Philip Wadler.
\newblock Coherence generalises duality: {A} logical explanation of multiparty
  session types.
\newblock In {\em 27th International Conference on Concurrency Theory, {CONCUR}
  2016, August 23-26, 2016, Qu{\'{e}}bec City, Canada}, volume~59 of {\em
  LIPIcs}, pages 33:1--33:15. Schloss Dagstuhl - Leibniz-Zentrum fuer
  Informatik, 2016.
\newblock URL: \url{https://doi.org/10.4230/LIPIcs.CONCUR.2016.33}, \href
  {http://dx.doi.org/10.4230/LIPIcs.CONCUR.2016.33}
  {\path{doi:10.4230/LIPIcs.CONCUR.2016.33}}.

\bibitem{DBLP:conf/popl/CarboneM13}
Marco Carbone and Fabrizio Montesi.
\newblock Deadlock-freedom-by-design: multiparty asynchronous global
  programming.
\newblock In {\em The 40th Annual {ACM} {SIGPLAN-SIGACT} Symposium on
  Principles of Programming Languages, {POPL} '13, Rome, Italy - January 23 -
  25, 2013}, pages 263--274. {ACM}, 2013.
\newblock URL: \url{https://doi.org/10.1145/2429069.2429101}, \href
  {http://dx.doi.org/10.1145/2429069.2429101}
  {\path{doi:10.1145/2429069.2429101}}.

\bibitem{CHJNY2019}
David Castro, Raymond Hu, Sung-Shik Jongmans, Nicholas Ng, and Nobuko Yoshida.
\newblock {Distributed Programming Using Role Parametric Session Types in Go}.
\newblock In {\em 46th ACM SIGPLAN Symposium on Principles of Programming
  Languages}, volume~3, pages 29:1--29:30. ACM, 2019.

\bibitem{chanezon16docker}
Patrick Chanezon.
\newblock Docker for mac and windows beta: the simplest way to use docker on
  your laptop, March 2016.
\newblock \url{https://blog.docker.com/2016/03/docker-for-mac-windows-beta/}.

\bibitem{MPST}
Mario Coppo, Mariangiola Dezani{-}Ciancaglini, Luca Padovani, and Nobuko
  Yoshida.
\newblock A gentle introduction to multiparty asynchronous session types.
\newblock In {\em Formal Methods for Multicore Programming}, volume 9104 of
  {\em LNCS}, pages 146--178. Springer, 2015.
\newblock URL: \url{http://dx.doi.org/10.1007/978-3-319-18941-3_4}, \href
  {http://dx.doi.org/10.1007/978-3-319-18941-3_4}
  {\path{doi:10.1007/978-3-319-18941-3_4}}.

\bibitem{dardha12session}
Ornela Dardha, Elena Giachino, and Davide Sangiorgi.
\newblock {Session Types Revisited}.
\newblock In {\em PPDP '12: Proceedings of the 14th Symposium on Principles and
  Practice of Declarative Programming}, pages 139--150, New York, NY, USA,
  2012. ACM.
\newblock \href {http://dx.doi.org/10.1145/2370776.2370794}
  {\path{doi:10.1145/2370776.2370794}}.

\bibitem{DY11}
Pierre-Malo Deni{\'e}lou and Nobuko Yoshida.
\newblock Dynamic multirole session types.
\newblock In {\em POPL}, pages 435--446, 2011.

\bibitem{DY12}
Pierre-Malo Deni{\'e}lou and Nobuko Yoshida.
\newblock Multiparty session types meet communicating automata.
\newblock In {\em ESOP}, volume 7211 of {\em LNCS}, pages 194--213. Springer,
  2012.

\bibitem{ICALP}
Pierre-Malo Deni{\'e}lou and Nobuko Yoshida.
\newblock Multiparty compatibility in communicating automata: Characterisation
  and synthesis of global session types.
\newblock In {\em ICALP}, volume 7966 of {\em LNCS}, pages 174--186. Springer,
  2013.

\bibitem{DGJPY15}
Mariangiola Dezani{-}Ciancaglini, Silvia Ghilezan, Svetlana Jaksic, Jovanka
  Pantovic, and Nobuko Yoshida.
\newblock Precise subtyping for synchronous multiparty sessions.
\newblock In {\em {PLACES}}, 2015.
\newblock \href {http://dx.doi.org/10.4204/EPTCS.203.3}
  {\path{doi:10.4204/EPTCS.203.3}}.

\bibitem{mldonkey}
Fabrice~Le Fessant.
\newblock {MLDonkey}, 2002.
\newblock http://mldonkey.sourceforge.net/.

\bibitem{foster07combinators}
J.~Nathan Foster, Michael~B. Greenwald, Jonathan~T. Moore, Benjamin~C. Pierce,
  and Alan Schmitt.
\newblock Combinators for bidirectional tree transformations: {A} linguistic
  approach to the view-update problem.
\newblock {\em {ACM} Trans. Program. Lang. Syst.}, 29(3):17, 2007.
\newblock \href {http://dx.doi.org/10.1145/1232420.1232424}
  {\path{doi:10.1145/1232420.1232424}}.

\bibitem{garrigue11gadt}
Jacques Garrigue and Jacques~Le Normand.
\newblock {Adding GADTs to OCaml: the direct approach}.
\newblock In {\em {ACM SIGPLAN Workshop on ML}}, 2011.
\newblock Available at
  \url{https://www.math.nagoya-u.ac.jp/~garrigue/papers/ml2011.pdf}.

\bibitem{GH05}
Simon Gay and Malcolm Hole.
\newblock {Subtyping for Session Types in the Pi-Calculus}.
\newblock {\em Acta Informatica}, 42(2/3):191--225, 2005.

\bibitem{Gay2016}
Simon~J. Gay.
\newblock Subtyping supports safe session substitution.
\newblock In {\em A List of Successes That Can Change the World: Essays
  Dedicated to Philip Wadler on the Occasion of His 60th Birthday}, volume 9600
  of {\em LNCS}, 2016.
\newblock \href {http://dx.doi.org/10.1007/978-3-319-30936-1_5}
  {\path{doi:10.1007/978-3-319-30936-1_5}}.

\bibitem{DBLP:journals/jlp/GhilezanJPSY19}
Silvia Ghilezan, Svetlana Jaksic, Jovanka Pantovic, Alceste Scalas, and Nobuko
  Yoshida.
\newblock Precise subtyping for synchronous multiparty sessions.
\newblock {\em J. Log. Algebr. Meth. Program.}, 104:127--173, 2019.
\newblock URL: \url{https://doi.org/10.1016/j.jlamp.2018.12.002}, \href
  {http://dx.doi.org/10.1016/j.jlamp.2018.12.002}
  {\path{doi:10.1016/j.jlamp.2018.12.002}}.

\bibitem{PrecisenessJPSY19}
Silvia Ghilezan, Svetlana Jaksic, Jovanka Pantovic, Alceste Scalas, and Nobuko
  Yoshida.
\newblock Precise subtyping for synchronous multiparty sessions.
\newblock {\em J. Log. Algebr. Meth. Program.}, 104:127--173, 2019.

\bibitem{rfc6749}
Dick Hardt.
\newblock {The OAuth 2.0 Authorization Framework}.
\newblock RFC 6749, October 2012.
\newblock URL: \url{https://rfc-editor.org/rfc/rfc6749.txt}, \href
  {http://dx.doi.org/10.17487/RFC6749} {\path{doi:10.17487/RFC6749}}.

\bibitem{DBLP:conf/popl/HarperP91}
Robert Harper and Benjamin~C. Pierce.
\newblock A record calculus based on symmetric concatenation.
\newblock In {\em Conference Record of the Eighteenth Annual {ACM} Symposium on
  Principles of Programming Languages, Orlando, Florida, USA, January 21-23,
  19x91}, pages 131--142, 1991.
\newblock URL: \url{https://doi.org/10.1145/99583.99603}, \href
  {http://dx.doi.org/10.1145/99583.99603} {\path{doi:10.1145/99583.99603}}.

\bibitem{HYC08}
Kohei Honda, Nobuko Yoshida, and Marco Carbone.
\newblock {Multiparty asynchronous session types}.
\newblock In {\em POPL'08}, pages 273--284. ACM, 2008.

\bibitem{DBLP:journals/jacm/HondaYC16}
Kohei Honda, Nobuko Yoshida, and Marco Carbone.
\newblock Multiparty asynchronous session types.
\newblock {\em J. {ACM}}, 63(1):9:1--9:67, 2016.
\newblock URL: \url{http://doi.acm.org/10.1145/2827695}, \href
  {http://dx.doi.org/10.1145/2827695} {\path{doi:10.1145/2827695}}.

\bibitem{HY2016}
Raymond Hu and Nobuko Yoshida.
\newblock Hybrid session verification through endpoint {API} generation.
\newblock In {\em {FASE}}, volume 9633 of {\em LNCS}, pages 401--418. Springer,
  2016.
\newblock URL: \url{http://dx.doi.org/10.1007/978-3-662-49665-7_24}, \href
  {http://dx.doi.org/10.1007/978-3-662-49665-7_24}
  {\path{doi:10.1007/978-3-662-49665-7_24}}.

\bibitem{HY2017}
Raymond Hu and Nobuko Yoshida.
\newblock Explicit connection actions in multiparty session types.
\newblock In {\em FASE}, volume 10202 of {\em LNCS}, pages 116--133, 2017.
\newblock \href {http://dx.doi.org/10.1007/978-3-662-54494-5_7}
  {\path{doi:10.1007/978-3-662-54494-5_7}}.

\bibitem{linocaml}
Keigo Imai and Jacques Garrigue.
\newblock Lightweight linearly-typed programming with lenses and monads.
\newblock {\em {Journal of Information Processing}}, 27:431--444, 2019.
\newblock URL: \url{https://doi.org/10.2197/ipsjjip.27.431}, \href
  {http://dx.doi.org/10.2197/ipsjjip.27.431}
  {\path{doi:10.2197/ipsjjip.27.431}}.

\bibitem{DBLP:conf/coordination/ImaiYY17}
Keigo Imai, Nobuko Yoshida, and Shoji Yuen.
\newblock Session-ocaml: {A} session-based library with polarities and lenses.
\newblock In {\em {COORDINATION}}, volume 10319 of {\em LNCS}, pages 99--118.
  Springer, 2017.
\newblock URL: \url{https://doi.org/10.1007/978-3-319-59746-1_6}, \href
  {http://dx.doi.org/10.1007/978-3-319-59746-1_6}
  {\path{doi:10.1007/978-3-319-59746-1_6}}.

\bibitem{imai18sessionscp}
Keigo Imai, Nobuko Yoshida, and Shoji Yuen.
\newblock {Session-ocaml: a Session-based Library with Polarities and Lenses}.
\newblock {\em Sci. Comput. Program.}, 172:135--159, 2018.
\newblock \href {http://dx.doi.org/10.1016/j.scico.2018.08.005}
  {\path{doi:10.1016/j.scico.2018.08.005}}.

\bibitem{Savina}
Shams Imam and Vivek Sarkar.
\newblock {Savina - An Actor Benchmark Suite: Enabling Empirical Evaluation of
  Actor Libraries}.
\newblock In {\em AGERE}, pages 67--80. {ACM}, 2014.

\bibitem{indexedmonad2}
Oleg Kiselyov.
\newblock Simple variable-state monad, December 2006.
\newblock Mailing list message.
  \url{http://www.haskell.org/pipermail/haskell/2006-December/018917.html}.

\bibitem{DBLP:conf/ppdp/KouzapasDPG16}
Dimitrios Kouzapas, Ornela Dardha, Roly Perera, and Simon~J. Gay.
\newblock Typechecking protocols with {Mungo} and {StMungo}.
\newblock In {\em PPDP}, pages 146--159, 2016.
\newblock URL: \url{http://doi.acm.org/10.1145/2967973.2968595}, \href
  {http://dx.doi.org/10.1145/2967973.2968595}
  {\path{doi:10.1145/2967973.2968595}}.

\bibitem{lindley16embedding}
Sam Lindley and J.~Garrett Morris.
\newblock {Embedding Session Types in Haskell}.
\newblock In {\em Haskell 2016: Proceedings of the 9th International Symposium
  on Haskell}, pages 133--145. ACM, 2016.
\newblock \href {http://dx.doi.org/10.1145/2976002.2976018}
  {\path{doi:10.1145/2976002.2976018}}.

\bibitem{madhavapeddy08xen}
Anil Madhavapeddy.
\newblock {Xen and the art of OCaml}.
\newblock In {\em {Commercial Uses of Functional Programming (CUFP)}},
  September 2008.

\bibitem{madhavapeddy14unikernels}
Anil Madhavapeddy and David~J. Scott.
\newblock Unikernels: the rise of the virtual library operating system.
\newblock {\em Commun. {ACM}}, 57(1):61--69, 2014.
\newblock URL: \url{http://doi.acm.org/10.1145/2541883.2541895}, \href
  {http://dx.doi.org/10.1145/2541883.2541895}
  {\path{doi:10.1145/2541883.2541895}}.

\bibitem{merkel14docker}
Dirk Merkel.
\newblock Docker: Lightweight linux containers for consistent development and
  deployment.
\newblock {\em Linux Journal}, 2014(239), March 2014.
\newblock URL: \url{http://dl.acm.org/citation.cfm?id=2600239.2600241}.

\bibitem{minsky11ocaml}
Yaron Minsky.
\newblock {OCaml for the Masses}.
\newblock {\em Commun. {ACM}}, 54(11):53--58, 2011.
\newblock URL: \url{http://doi.acm.org/10.1145/2018396.2018413}, \href
  {http://dx.doi.org/10.1145/2018396.2018413}
  {\path{doi:10.1145/2018396.2018413}}.

\bibitem{DBLP:conf/cc/NeykovaHYA18}
Rumyana Neykova, Raymond Hu, Nobuko Yoshida, and Fahd Abdeljallal.
\newblock A session type provider: compile-time {API} generation of distributed
  protocols with refinements in f{\#}.
\newblock In {\em Proceedings of the 27th International Conference on Compiler
  Construction, {CC} 2018, February 24-25, 2018, Vienna, Austria}, pages
  128--138. {ACM}, 2018.
\newblock URL: \url{https://doi.org/10.1145/3178372.3179495}, \href
  {http://dx.doi.org/10.1145/3178372.3179495}
  {\path{doi:10.1145/3178372.3179495}}.

\bibitem{FeatherweightScribble}
Rumyana Neykova and Nobuko Yoshida.
\newblock {Featherweight Scribble}.
\newblock In {\em Models, Languages, and Tools for Concurrent and Distributed
  Programming - Essays Dedicated to Rocco De Nicola on the Occasion of His 65th
  Birthday}, pages 236--259, 2019.
\newblock URL: \url{https://doi.org/10.1007/978-3-030-21485-2\_14}, \href
  {http://dx.doi.org/10.1007/978-3-030-21485-2\_14}
  {\path{doi:10.1007/978-3-030-21485-2\_14}}.

\bibitem{benton03santa}
{Nick Benton}.
\newblock {Jingle Bells: Solving the Santa Claus Problem in Polyphonic
  C$\sharp$}, 2003.
\newblock Available at
  \url{https://www.microsoft.com/en-us/research/wp-content/uploads/2016/02/santa.pdf}.

\bibitem{orchard16effects}
Dominic Orchard and Nobuko Yoshida.
\newblock {Effects as sessions, sessions as effects}.
\newblock In {\em POPL 2016: 43th Annual ACM SIGPLAN-SIGACT Symposium on
  Principles of Programming Languages}, pages 568--581. ACM, 2016.
\newblock \href {http://dx.doi.org/10.1145/2837614.2837634}
  {\path{doi:10.1145/2837614.2837634}}.

\bibitem{padovani16simple}
Luca Padovani.
\newblock {A Simple Library Implementation of Binary Sessions}.
\newblock {\em Journal of Functional Programming}, 27:e4, 2016.

\bibitem{padovani17context}
Luca Padovani.
\newblock Context-free session type inference.
\newblock {\em {ACM} Trans. Program. Lang. Syst.}, 41(2):9:1--9:37, 2019.
\newblock URL: \url{https://doi.org/10.1145/3229062}, \href
  {http://dx.doi.org/10.1145/3229062} {\path{doi:10.1145/3229062}}.

\bibitem{pickering17profunctor}
Matthew Pickering, Jeremy Gibbons, and Nicolas Wu.
\newblock {Profunctor Optics: Modular Data Accessors}.
\newblock {\em The Art, Science, and Engineering of Programming}, 1(2):Article
  7, 2017.
\newblock \href {http://dx.doi.org/10.22152/programming-journal.org/2017/1/7}
  {\path{doi:10.22152/programming-journal.org/2017/1/7}}.

\bibitem{PiSa96b}
B.~Pierce and D.~Sangiorgi.
\newblock Typing and subtyping for mobile processes.
\newblock {\em MSCS}, 6(5):409--454, 1996.

\bibitem{pucella08session}
Riccardo Pucella and Jesse~A. Tov.
\newblock Haskell session types with (almost) no class.
\newblock In {\em Haskell'08}, pages 25--36, New York, NY, USA, 2008. ACM.
\newblock \href {http://dx.doi.org/http://doi.acm.org/10.1145/1411286.1411290}
  {\path{doi:http://doi.acm.org/10.1145/1411286.1411290}}.

\bibitem{radanne16eliom}
Gabriel Radanne, J{\'{e}}r{\^{o}}me Vouillon, and Vincent Balat.
\newblock Eliom: {A} core {ML} language for tierless web programming.
\newblock In {\em Programming Languages and Systems - 14th Asian Symposium,
  {APLAS} 2016, Hanoi, Vietnam, November 21-23, 2016, Proceedings}, pages
  377--397, 2016.
\newblock URL: \url{http://dx.doi.org/10.1007/978-3-319-47958-3_20}, \href
  {http://dx.doi.org/10.1007/978-3-319-47958-3_20}
  {\path{doi:10.1007/978-3-319-47958-3_20}}.

\bibitem{concurrentml}
John~H. Reppy.
\newblock {Concurrent ML: Design, Application and Semantics}.
\newblock In {\em Functional Programming, Concurrency, Simulation and Automated
  Reasoning: International Lecture Series 1991-1992, McMaster University,
  Hamilton, Ontario, Canada}, pages 165--198, 1993.
\newblock URL: \url{https://doi.org/10.1007/3-540-56883-2\_10}, \href
  {http://dx.doi.org/10.1007/3-540-56883-2\_10}
  {\path{doi:10.1007/3-540-56883-2\_10}}.

\bibitem{SangiorgiD:picatomp}
Davide Sangiorgi and David Walker.
\newblock {\em The $\pi$-{C}alculus: a {T}heory of {M}obile {P}rocesses}.
\newblock Cambridge University Press, 2001.

\bibitem{scalas17linear}
Alceste Scalas, Ornela Dardha, Raymond Hu, and Nobuko Yoshida.
\newblock {A Linear Decomposition of Multiparty Sessions for Safe Distributed
  Programming}.
\newblock In {\em {ECOOP}}, 2017.
\newblock \href {http://dx.doi.org/10.4230/LIPIcs.ECOOP.2017.24}
  {\path{doi:10.4230/LIPIcs.ECOOP.2017.24}}.

\bibitem{scalas16lightweight}
Alceste Scalas and Nobuko Yoshida.
\newblock Lightweight session programming in scala.
\newblock In {\em ECOOP}, volume~56 of {\em LIPIcs}, pages 21:1--21:28, 2016.
\newblock URL: \url{http://dx.doi.org/10.4230/LIPIcs.ECOOP.2016.21}, \href
  {http://dx.doi.org/10.4230/LIPIcs.ECOOP.2016.21}
  {\path{doi:10.4230/LIPIcs.ECOOP.2016.21}}.

\bibitem{scalas19less}
Alceste Scalas and Nobuko Yoshida.
\newblock {Less Is More: Multiparty Session Types Revisited}.
\newblock In {\em 46th ACM SIGPLAN Symposium on Principles of Programming
  Languages}, pages 1--29. ACM, 2019.

\bibitem{scribble}
{{S}cribble home page}, 2019.
\newblock \url{http://www.scribble.org}.

\bibitem{DBLP:journals/pacmpl/SergeyWT18}
Ilya Sergey, James~R. Wilcox, and Zachary Tatlock.
\newblock Programming and proving with distributed protocols.
\newblock {\em {PACMPL}}, 2({POPL}):28:1--28:30, 2018.
\newblock URL: \url{https://doi.org/10.1145/3158116}, \href
  {http://dx.doi.org/10.1145/3158116} {\path{doi:10.1145/3158116}}.

\bibitem{BETTYTOOLBOOK}
Ant\'onio~Ravara Simon~Gay, editor.
\newblock {\em Behavioural Types: from Theory to Tools}.
\newblock River Publisher, 2017.
\newblock URL:
  \url{https://www.riverpublishers.com/research_details.php?book_id=439}.

\bibitem{Scala}
{The Scala Development Team}.
\newblock {T}he {S}cala {P}rogramming {L}anguage.
\newblock http://scala.epfl.ch/index.html, 2004.

\bibitem{DBLP:conf/esop/ToninhoCP13}
Bernardo Toninho, Lu{\'{\i}}s Caires, and Frank Pfenning.
\newblock Higher-order processes, functions, and sessions: {A} monadic
  integration.
\newblock In {\em Programming Languages and Systems - 22nd European Symposium
  on Programming, {ESOP} 2013, Held as Part of the European Joint Conferences
  on Theory and Practice of Software, {ETAPS} 2013, Rome, Italy, March 16-24,
  2013. Proceedings}, volume 7792 of {\em Lecture Notes in Computer Science},
  pages 350--369. Springer, 2013.
\newblock URL: \url{https://doi.org/10.1007/978-3-642-37036-6\_20}, \href
  {http://dx.doi.org/10.1007/978-3-642-37036-6\_20}
  {\path{doi:10.1007/978-3-642-37036-6\_20}}.

\bibitem{DBLP:conf/esop/TovP10}
Jesse~A. Tov and Riccardo Pucella.
\newblock Stateful contracts for affine types.
\newblock In Andrew~D. Gordon, editor, {\em Programming Languages and Systems,
  19th European Symposium on Programming, {ESOP} 2010, Held as Part of the
  Joint European Conferences on Theory and Practice of Software, {ETAPS} 2010,
  Paphos, Cyprus, March 20-28, 2010. Proceedings}, volume 6012 of {\em Lecture
  Notes in Computer Science}, pages 550--569. Springer, 2010.
\newblock URL: \url{https://doi.org/10.1007/978-3-642-11957-6\_29}, \href
  {http://dx.doi.org/10.1007/978-3-642-11957-6\_29}
  {\path{doi:10.1007/978-3-642-11957-6\_29}}.

\bibitem{lwt}
J{\'{e}}r{\^{o}}me Vouillon.
\newblock Lwt: a cooperative thread library.
\newblock In {\em Proceedings of the {ACM} Workshop on ML}, pages 3--12. ACM,
  2008.
\newblock Available at \url{https://github.com/ocsigen/lwt}.
\newblock URL: \url{http://doi.acm.org/10.1145/1411304.1411307}, \href
  {http://dx.doi.org/10.1145/1411304.1411307}
  {\path{doi:10.1145/1411304.1411307}}.

\bibitem{DBLP:journals/iandc/Wand91}
Mitchell Wand.
\newblock Type inference for record concatenation and multiple inheritance.
\newblock {\em Inf. Comput.}, 93(1):1--15, 1991.
\newblock URL: \url{https://doi.org/10.1016/0890-5401(91)90050-C}, \href
  {http://dx.doi.org/10.1016/0890-5401(91)90050-C}
  {\path{doi:10.1016/0890-5401(91)90050-C}}.

\end{thebibliography}

\iftoggle{techreport}{%
  \newpage
  \appendix
  \let\cleardoublepage\clearpage
  \appendixpage
\section{Auxiliary Definitions}\label{sec:omitted}

\subsection{Merging of Channel Vectors}\label{subsec:merge}
On merging $\ocBinMerge[\ocChi]$,
an extra bookkeeping $\ocChi$
is introduced to ensure termination.
Note that, according to our typing rule for $\ogtChoiceKwd$,
both hand sides must have the same type.
Merging for output
$\ocIntSumSmall{\roleP}{i \in I}{\ocIntChoiceSmall{\mpLab[i]}{\ocS[i]}{\ocC[i]}}     $
requires both branches to have an intersection.
It generates a record only with the overlapping fields
(which means we have the same set of output labels at any branches),
and for each field
it puts the name from left hand side
(left/right does not matter since both names are identical if the global combinator is well-typed)
and the continuation is obtained by merging the ones from both hand sides.
Merging for input
$\ocExtSumSmall{\roleP}{i \in I}{\ocExtChoiceSmall{\mpLab[i]}{\ocS[i]}{\ocC[i]}}$
is more permissive,
as it keeps variant tags which do not exist in the other hand side as-is.
For the overlapping tags,
names from the left hand side is taken as well and
the continuations are merged.
For recursions, it generates
a fresh recursion variable and bind it on top of the channel vector being generated,
then it continues merging by expanding the recursion binder on each side.
It adds a mapping between the recursion variable and the pair of given
channel vectors to $\ocChi$,
so that the corresponding recursive variable
is returned if the merging encounters the pair again,
forming an appropriate loop and ensuring termination.

}

\begin{figure}[t]
  \centerline{
    \scalebox{0.91}{\(
    \begin{array}{rclcl}
     \ocIntSumSmall{\roleP}{i \in I}{\ocIntChoiceSmall{\mpLab[i]}{\ocS[i]}{\ \ocC[1i]}}
     & \ocBinMerge[\ocChi] &
     \ocIntSumSmall{\roleP}{j \in J}{\ocIntChoiceSmall{\mpLab[j]}{\ocS[j]}{\ \ocC[2j]}}
     & = &
     \ocIntSumSmall{\roleP}{k \in I \cap J}{\ocIntChoiceSmall{\mpLab[k]}{\ocS[k]}{\ \ocC[1k]{\ocBinMerge[\ocChi]}\ocC[2k]}}\\[2mm]
    &  &  &
    \multicolumn{2}{l}{
      \text{where }\ocS[k] = \ocS[1k] = \ocS[2k]\text{ for all }k\in I \cap J
    }\smallskip\\
    \ocExtSumSmall{\roleP}{i \in I}{
      \ocExtChoiceSmall{\mpLab[i]}{\ocS[1i]}{\ocC[1i]}}
     & \ocBinMerge[\ocChi] &
    \ocExtSumSmall{\roleP}{j \in J}{
      \ocExtChoiceSmall{\mpLab[j]}{\ocS[2j]}{\ocC[2j]}}
    & = &
    \ocRecord{\roleP}{\left(
    \makecell[c]{
    \ocRecvWrapSmall{\mpLab[i]}{\ocS[1i]}{\ocC[1i]}{i\in I \setminus J}\ \cup\\
    \ocRecvWrapSmall{\mpLab[j]}{\ocS[2j]}{\ocC[2j]}{j\in J \setminus I}\ \cup\\
    \ocRecvWrapSmall{\mpLab[k]}{\ocS[k]}{\ocC[1k] \ocBinMerge[\ocChi] \ocC[2k]}{k\in I \cap J}
    }\right)}{}\\
    &  &  &
    \multicolumn{2}{l}{
      \text{where }\ocS[k] = \ocS[1k] = \ocS[2k]\text{ for all }k\in I \cap J
    }\smallskip\\
    \end{array}\)}}
  \centerline{\scalebox{0.91}{\(
    \begin{array}{r|l}
      \begin{array}{ccl}
    \ocRec{\ocX}{\ocC[1]} \ocBinMerge[\ocChi] \ocC[2]
    & = &
    \left\{
    \begin{array}{l}
      \ocZ
        \quad
        \text{if}\ \mapsubst{\ocZ}{(\ocRec{\ocX}{\ocC[1]},\ocC[2])} \in \ocChi\\
      \ocRec{\ocZ}{\left(\ocC[1]\subst{\ocX}{\ocRec{\ocX}{\ocC[1]}}
        \ \ocBinMerge[{\ocChi \cdot \mapsubst{\ocZ}{(\ocRec{\ocX}{\ocC[1]},\ocC[2])}}]\ %
        \ocC[2]\right)}
      \quad \ocZ \text{\ fresh}\\
      \qquad \text{otherwise}
    \end{array}\right.
      \\[4mm]
    \ocC[1] \ocBinMerge[\ocChi] \ocRec{\ocX}{\ocC[2]}
    & = &
    \left\{
    \begin{array}{l}
      \ocZ
        \quad
        \text{if}\ \mapsubst{\ocZ}{(\ocC[1],\ocRec{\ocX}{\ocC[2]})} \in \ocChi\\
      \ocRec{\ocZ}{\left(\ocC[1]
        \ \ocBinMerge[{\ocChi \cdot \mapsubst{\ocZ}{(\ocC[1],\ocRec{\ocX}{\ocC[2]})}}]\ %
        \ocC[2]\subst{\ocX}{\ocRec{\ocX}{\ocC[2]}}\right)}
        \quad \ocZ \text{\ fresh}\\
      \qquad \text{otherwise}
    \end{array}\right.
      \end{array}
    \quad & \quad%
      \begin{array}{rclcl}
        \ocX & \ocBinMerge[\ocChi] & \ocX & = & \ocX\\
        \ocUnit & \ocBinMerge[\ocChi] & \ocUnit & = & \ocUnit
      \end{array}
    \end{array}
    \)}}
  \caption{Merging of channel vectors \framebox{$\ocMerge{i \in 1 .. n}{\ocC[i]} = ((\ocC[1]\ocBinMerge[\ocEmpty]\ocC[2])\ocBinMerge[\ocEmpty]\cdots\ocBinMerge[\ocEmpty]\ocC[n])$}\label{fig:merge}}
\end{figure}

\section{More Examples on Global Combinators}
In this section, we review more examples of global combinators.

\begin{EX}[Global combinator evaluation]\thmstart
\label{ex:auth:gen:full}
Let
$\ocS[1]=\ocS_{\{\roleC,\roleS,\labOk,0\}}$,
$\ocS[1]=\ocS_{\{\roleC,\roleS,\labCancel,0\}}$,
$\ocS[3]=\ocS_{\{\roleS,\roleC,\labAuth,0\}}$,
$\ocS[4]= \ocS_{\{\roleS,\roleA,\labOk,1\}}$ and
$\ocS[5] = \ocS_{\{\roleS,\roleA,\labCancel,2\}}$. Then:\smallskip\\
\centerline{\scalebox{0.91}{\(
\begin{array}{l}
\GCCVR{\ogtCommBin{\roleS}{\roleC}{\labOk}{}{\ogtEnd}{\labCancel}{}{\ogtEnd}}{\roleC,\roleS}^{\mpS}\\
=\left\ocTupOpen
  \ocExtSumSmall{\roleS}{}{
    \ocExtChoiceSmall{\labOk}{\ocS[1]}{\ocUnit},
    \ocExtChoiceSmall{\labCancel}{\ocS[2]}{\ocUnit}},
  \,\,\ocIntSumSmall{\roleC}{}{
    \ocIntChoiceSmall{\labOk}{\ocS[1]}{\ocUnit},
    \ocIntChoiceSmall{\labCancel}{\ocS[2]}{\ocUnit}}
 \right\ocTupClose
\\
\GCCVR{\ogtG[Auth]}{\roleC,\roleS}^{\mpS} \qquad \text{(from \Cref{ex:globalcombinatorauth})}\\
= \GCCVR{\ogtComm{\roleC}{\roleS}{}{\labAuth}{}{\left(\ogtCommBin{\roleS}{\roleC}{\labOk}{}{\ogtEnd}{\labCancel}{}{\ogtEnd}\right)}}{\roleC,\roleS}^{\mpS}\\
=\ocTuple{\makecell[l]{
\ocIntSum{\roleS}{}{
  \ocIntChoice{\labAuth}{\ocS[3]}{
    \ocExtSumSmall{\roleS}{}{
    \ocExtChoiceSmall{\labOk}{\ocS[1]}{\ocUnit},
    \ocExtChoiceSmall{\labCancel}{\ocS[2]}{\ocUnit}}}},\\
\,\,
\ocExtSum{\roleC}{}{
  \ocExtChoice{\labAuth}{\ocS[3]}{
    \ocIntSumSmall{\roleC}{}{
    \ocIntChoiceSmall{\labOk}{\ocS[1]}{\ocUnit},
    \ocIntChoiceSmall{\labCancel}{\ocS[2]}{\ocUnit}}}}}}
\\
\GCCVR{\ogtRec{\ogtRecVar}{\ogtComm{\roleC}{\roleS}{}{\labOk}{}{\ogtRecVar}}}{\roleC,\roleS}^{\mpS}
=\ocPair
{\ocRec{\ocX[\roleC]}{\ocIntSumSmall{\roleS}{}{\ocIntChoiceSmall{\labOk}{\ocS[1]}{\ocX[\roleC]}}}}
{\,\,\ocRec{\ocX[\roleS]}{\ocExtSumSmall{\roleC}{}{\ocExtChoiceSmall{\labOk}{\ocS[2]}{\ocX[\roleS]}}}}
\\[2mm]
\text{This example shows how channel vectors for roles not participating in a choice (here $\roleA$) are merged:}\\[2mm]
\GCCVR{\ogtCommBin{\roleS}{\roleC}
  {\labOk}{}{\left(\ogtComm{\roleS}{\roleA}{}{\labOk}{}{\ogtEnd}\right)}
  {\labCancel}{}{\left(\ogtComm{\roleS}{\roleA}{}{\labCancel}{}{\ogtEnd}\right)}}{\roleS,\roleC,\roleA}^{\mpS}\\
=\ocTuple{\makecell[l]{
  \ocIntSum{\roleC}{}{
    \ocIntChoice{\labOk}{\ocS[1]}{
      \ocIntSum{\roleA}{}{\ocIntChoice{\labOk}{\ocS[4]}{\ocUnit}}},
    \ocIntChoice{\labCancel}{\ocS[2]}{
      \ocIntSum{\roleA}{}{\ocIntChoice{\labCancel}{\ocS[4]}{\ocUnit}}}},\\
\,\,\ocExtSum{\roleS}{}{
    \ocExtChoice{\labOk}{\ocS[1]}{\ocUnit},
    \ocExtChoice{\labCancel}{\ocS[2]}{\ocUnit}},\,\,
  \ocExtSum{\roleS}{}{
    \ocExtChoice{\labOk}{\ocS[4]}{\ocUnit},
    \ocExtChoice{\labCancel}{\ocS[5]}{\ocUnit}}
}}
\end{array}
\)}}
\end{EX}

The following example illustrates channel vectors and the usage of $\ocUnfold{}$. 
for the syntax of processes we use \mustL, defined in \S~\ref{app:mio}

\begin{EX}\thmstart 
  Let\smallskip\\
\centerline{\scalebox{0.81}{\(
\begin{array}{rcl}
\ogtG[Cal] &=&
\ogtRec{\ogtRecVar}{\ogtCommBin{\roleC}{\roleS}
  {\labLoop}{}{\ogtRecVar}
  {\labStop}{}{\left(\ogtComm{\roleS}{\roleC}{}{\labStop}{}{\ogtEnd}\right)}}
\\
\oeE[\roleC] &=&
  \oeSend{\ocX[1]}{\ocX[0]\#\roleS\#\labLoop}{\ocUnitPayload}{
  \oeSend{\ocX[2]}{\ocX[1]\#\roleS\#\labLoop}{\ocUnitPayload}{
  \oeSend{\ocX[3]}{\ocX[2]\#\roleS\#\labStop}{\ocUnitPayload}{}}}\\
& &\oeBranchSingle{\ocVariantPairPat{\labStop}{\UNL}{\ocX[4]}}{\ocX[3]}{\roleS}{\oeNil}\\
\oeE[\roleS] &=& \oeLetrecAbbrev{\oeCall{\oeRecVar}{\ocX}{=}\oeE[\roleS0]}{\oeCall{\oeRecVar}{\ocXi[0]}}\\
\oeE[\roleS0] &=&
  \oeBranchRaw{\ocX}{\roleC}{
    \oeChoice{\labLoop}{\UNL}{\ocX[1]}{\oeCall{\oeRecVar}{\ocX[1]}};\,
    \oeChoice{\labStop}{\UNL}{\ocX[2]}{
      \oeSend{\ocX[3]}{\ocX[2]\#\roleC\#\labStop}{\ocUnitPayload}{\oeNil}}}\\
\oeE[{\tt Cal}] &=& \oeInit{\ocX[0],\ocXi[0]}{\ogtG[Cal]}{\left(\oeE[\roleC] \oePar \oeE[\roleS]\right)}
\end{array}
\)}}
Then, $\oeE[{\tt Cal}] \rightarrow\rightarrow \oeEi[{\tt Cal}] = \oeRes{\ocS[1],\ocS[2],\ocSi}{\bigl(
  \oeE[\roleC]\subst{\ocX[0]}{\ocC[\roleC]}
  \ \big|\ %
  \oeLetrecAbbrev{\oeCall{\oeRecVar}{\ocX}{=}\oeE[\roleS0]}{(\oeE[\roleS0]\subst{\ocXi[0]}{\ocC[\roleS]})}
  \bigr)}$
where\smallskip\\
\centerline{\scalebox{0.91}{\(
\begin{array}{rcl}
  \ocC[\roleC] &=&
  \ocRec{\ocX[\roleC]}{\ocIntSum{\roleS}{}{
    \ocIntChoiceSmall{\labLoop}{\ocS[1]}{\ocX[\roleC]},
    \ocIntChoiceSmall{\labStop}{\ocS[2]}{
      \ocExtSum{\roleS}{}{\ocExtChoice{\labStop}{\ocSi}{\ocUnit}}}}}\\
\ocC[\roleS] &=&
  \ocRec{\ocX[\roleS]}{\ocExtSum{\roleS}{}{
    \ocExtChoiceSmall{\labLoop}{\ocS[1]}{\ocX[\roleS]},
    \ocExtChoiceSmall{\labStop}{\ocS[2]}{
      \ocIntSum{\roleS}{}{\ocIntChoice{\labStop}{\ocSi}{\ocUnit}}}}}\\
& & \left(= \ocRec{\ocX[\roleS]}{\ocRecord{\roleS}{
    \left[\ocWrapperElem{\ocS[1]}{\ocVariant{\labLoop}{\ocPair{\HOLE}{\ocX[\roleS]}}},
    \ocWrapperElem{\ocS[2]}{\ocVariant{\labStop}{\ocPair{\HOLE}{
      \ocIntSum{\roleS}{}{\ocIntChoice{\labStop}{\ocSi}{\ocUnit}}}}}\right]}{}}\right)\\
\end{array}\)}}\\
See that\smallskip\\
\centerline{\scalebox{0.91}{\(\begin{array}{rcl}
\ocUnfold{\ocC[\roleC]\#\roleS\#\labLoop} &=& \ocPair{\ocS[1]}{\ocC[\roleC]}\\
\ocUnfold{\ocC[\roleC]\#\roleS\#\labStop} &=& \ocPair{\ocS[2]}{\ocExtSum{\roleS}{}{\ocExtChoice{\labStop}{\ocSi}{\ocUnit}}}
\\
\ocUnfold{\ocC[\roleS]\#\roleC}&=&
  \left[\ocRecvWrapElem{\labLoop}{\ocS[1]}{\ocC[\roleC]}, \ocRecvWrapElem{\labStop}{\ocS[2]}{
    \ocIntSum{\roleC}{}{\ocIntChoice{\labStop}{\ocSi}{\ocUnit}} }\right]\\
  & & \left(= \left[\ocWrapperElem{\ocS[1]}{\ocVariant{\labLoop}{\ocPair{\HOLE}{\ocC[\roleC]}}}, \ocWrapperElem{\ocS[2]}{\ocVariant{\labStop}{\ocPair{\HOLE}{
        \ocIntSum{\roleC}{}{\ocIntChoice{\labStop}{\ocSi}{\ocUnit}}
    }}}\right]\right)
\end{array}\)}}\\
and each time $\roleC$lient sends a label, $\roleS$erver
makes an external choice between $\ocS[1]$ and $\ocS[2]$,
and they reduce as follows:
$\oeEi[{\tt Cal}] \rightarrow^{6}
  \oeRes{\ocS[1],\ocS[2],\ocSi}{\left(\oeNil\ \big|\ \oeLetrecAbbrev{\oeCall{\oeRecVar}{\ocX}{=}\oeE[\roleS0]}{\oeNil}\right)}
  \equiv \oeNil$.
\end{EX}

The types are further elaborated by 
{\em subtyping} with I/O types \cite{PiSa96b}
which is defined in \Cref{def:subtyping}. %

\begin{EX}[Merging via subtyping]\thmstart
  The following typing involves {\em merging}
  where the behaviour of two or more channel vector types in the branches
  are mixed into one, as\smallskip\\
  \scalebox{0.81}{\(
    \ogtEnvEntailsEx{\roleS,\roleC,\roleA}{}{
      \ogtCommBin{\roleS}{\roleC}
                 {\labOk}{}{\left(\ogtComm{\roleS}{\roleA}{}{\labOk}{}{\ogtEnd}\right)}
                 {\labCancel}{}{\left(\ogtComm{\roleS}{\roleA}{}{\labCancel}{}{\ogtEnd}\right)}
    }{}\)}\linebreak
  \scalebox{0.91}{\(\otT[\roleS]\otTimes\otT[\roleC]\otTimes\otT[\roleA]\)}
  where
    $\otT[\roleS] = \otIntSumBig{\roleC}{}{
      \otIntChoice{\labOk}{\otO}{
        \otIntSumSmall{\roleA}{}{\otIntChoice{\labOk}{\otO}{\otUnit}},
      \otIntChoice{\labCancel}{\otO}{
        \otIntSumSmall{\roleA}{}{\otIntChoice{\labCancel}{\otO}{\otUnit}}}}}$,
    and
      $\otT[\roleC] = \otT[\roleA] = \otExtSumSmall{\roleS}{}{
      \otExtChoice{\labOk}{\otO}{\otUnit},
      \otExtChoice{\labCancel}{\otO}{\otUnit}}
    $.
  See that the continuations\linebreak
  $(\ogtComm{\roleS}{\roleA}{}{\labOk}{}{\ogtEnd})$ and
  $(\ogtComm{\roleS}{\roleA}{}{\labCancel}{}{\ogtEnd})$
  have the channel vector types
  $\otExtSumSmall{\roleS}{}{\otExtChoice{\labOk}{\otO}{\otUnit}}$ and
  $\otExtSumSmall{\roleS}{}{\otExtChoice{\labCancel}{\otO}{\otUnit}}$
  at role $\roleA$ respectively,
  where each of them receives label $\labOk$ and $\labCancel$ from $\roleS$.
  By subtyping, they are amalgamated into a {\em common super type}
  $\otExtSumSmall{\roleS}{}{\otExtChoice{\labOk}{\otO}{\otUnit},\otExtChoice{\labCancel}{\otO}{\otUnit}}$
  which can now receive both labels.
  This is underpinned by the subtyping relation as
  $\otExtSumSmall{\roleS}{}{\otExtChoice{\labOk}{\otO}{\otUnit}}
    \otSub \otExtSumSmall{\roleS}{}{\otExtChoice{\labOk}{\otO}{\otUnit},\otExtChoice{\labCancel}{\otO}{\otUnit}}$
  (similar for $\labCancel$)
  which is justified by \inferrule{\iruleOTGSub}.
\end{EX}

\begin{EX}[Recursion]\thmstart
  The following typing derivation is valid under {$\roleSet=\roleS,\roleC,\roleA$}:
\vspace{0.5em}

\noindent\scalebox{0.75}{\(
  \inference{
    \inference{
      \inference{
        \ogtEnvEntails{
          \otEnvMap{\ogtRecVar}{\otRecVar[\roleS]\otTimes\otRecVar[\roleC]\otTimes\otRecVar[\roleA]}
        }{
          \ogtRecVar
        }{
          \otRecVar[\roleS]\otTimes\otRecVar[\roleC]\otTimes\otRecVar[\roleA]
        }
      }{
        \ogtEnvEntails{
          \otEnvMap{\ogtRecVar}{\otRecVar[\roleS]\otTimes\otRecVar[\roleC]\otTimes\otRecVar[\roleA]}
        }{
          \ogtComm{\roleS}{\roleA}{}{\labOk}{}{\ogtRecVar}
        }{\otIntSum{\roleA}{}{\otIntChoice{\labOk}{\otO}{\otRecVar[\roleS]}}
          \otTimes
          \otRecVar[\roleC]
          \otTimes
          \otExtSum{\roleS}{}{\otExtChoice{\labOk}{\otO}{\otRecVar[\roleA]}}
        }
      }
    }{
    \ogtEnvEntails{
        \otEnvMap{\ogtRecVar}{\otRecVar[\roleS]\otTimes\otRecVar[\roleC]\otTimes\otRecVar[\roleA]}}{
      \ogtComm{\roleS}{\roleC}{}{\labOk}{}{
        \left(\ogtComm{\roleS}{\roleA}{}{\labOk}{}{\ogtRecVar}\right)}
    }{\otIntSum{\roleC}{}{\otIntChoice{\labOk}{\otO}{
            \otIntSum{\roleA}{}{\otIntChoice{\labOk}{\otO}{
              \otRecVar[\roleS]}}}}
        \otTimes
        \otExtSum{\roleS}{}{\otExtChoice{\labOk}{\otO}{\otRecVar[\roleC]}}
        \otTimes
        \otExtSum{\roleS}{}{\otExtChoice{\labOk}{\otO}{\otRecVar[\roleA]}}}
    }
  }{
    \ogtEnvEntails{}{
      \ogtRec{\ogtRecVar}{\left(
        \ogtComm{\roleS}{\roleC}{}{\labOk}{}{
          \left(\ogtComm{\roleS}{\roleA}{}{\labOk}{}{\ogtRecVar}\right)}\right)}
    }{\left(\makecell[l]{
        \otRec{\otRecVar[\roleS]}{
          \otIntSum{\roleC}{}{\otIntChoice{\labOk}{\otO}{
            \otIntSum{\roleA}{}{\otIntChoice{\labOk}{\otO}{
              \otRecVar[\roleS]}}}}}
        \otTimes\\
        \ \ \otRec{\otRecVar[\roleC]}{\otExtSum{\roleS}{}{\otExtChoice{\labOk}{\otO}{\otRecVar[\roleC]}}}
        \otTimes
        \otRec{\otRecVar[\roleA]}{\otExtSum{\roleS}{}{\otExtChoice{\labOk}{\otO}{\otRecVar[\roleA]}}}
      }\right)}
  }
  \)}
\end{EX}

\begin{EX}[Loops and the finished session]\thmstart
The following example shows the usage of the
function $\otFix{\cdot}{\cdot}$ in the rule \inferrule{\iruleOTGRec}
comes from the corresponding case in the End Point Projection in MPST \cite{scalas19less}.%
It declares the termination of the session for a role
in a loop in which the role in question
never participate in.
\[
\ogtEnvEntailsEx{\roleP,\roleQ,\roleR}{}{\left(\ogtRec{\ogtRecVar}{\ogtComm{\roleP}{\roleR}{}{\labOk}{}{\ogtRecVar}}\right)}
{\otRec{\otRecVar[\roleP]}{\otIntSumSmall{\roleR}{}{\otIntChoice{\labOk}{\otO}{\otRecVar[\roleP]}}}
\otTimes\otUnit\otTimes
\otRec{\otRecVar[\roleR]}{\otExtSumSmall{\roleP}{}{\otExtChoice{\labOk}{\otO}{\otRecVar[\roleR]}}}
}
\]
where the channel vector type for $\roleQ$ is the finished session $\otUnit$
because $\otFix{\otRecVar[\roleQ]}{\otRecVar[\roleQ]}=\otUnit$.
\end{EX}

\section{\mustL{}: A minimal \must calculus}
\label{app:mio}
We introduces a minimal functional calculus, \mustL{} and its typing systems. 
The calculus distils the main features required for embedding session types in \OCaml, 
notably equi-recursive types, record and variant types, structural subtyping, and simply-typed I/O channels. 
We prove the type soundness for \mustL{} (Theorem~\ref{lem:SubjectReduction}).

\subsection{\mustL{}: Syntax and Dynamic Semantics}
\label{sec:language}

This section introduces the syntax and
operational semantics of \OCAMLMPST{}.
\subsubsection{\mustL{} Program}\label{subsec:mio_syntax}
We introduce the syntax of \mustL{} program, which is written 
by the programmer.   

\begin{DEF}[\mustL{} program]\thmstart \label{def:program}
The \emph{program} (or expression) of \mustL{} is defined as:\smallskip\\
\centerline{$\begin{array}{c|c}
 \begin{array}{rcll}
   & & \hspace{-3mm}\oeE  \quad \grmeq\quad\\
    & & \oeInit{\ocX[1],\elip,\ocX[n]}{\gocaml}{\oeE}
       & \text{\footnotesize(initiation)}\\
    & & \oeSend{\ocX}{\ocY\boldsymbol{\#}\roleQ\boldsymbol{\#}\mpLab}{\ocV}{\oeE}
       & \text{\footnotesize(send)}\\
    & & \oeRecv{\ocX}{\ocY\boldsymbol{\#}\roleQ}{\oeE}
       & \text{\footnotesize(receive)}\\
   & & \hspace{-3mm}\ocV  \quad \grmeq\quad x,y,z,...  \grmor \ocUnit
   & \text{\footnotesize(values)}\\
 \end{array}
&
 \begin{array}{rcll}
\\
    & & \oeMatch{\ocX}{\mpLab[i]}{\ocX[i]}{\ocY[i]}{\oeE[i]}{i \in I}
       & \text{\footnotesize(pattern match)}\\
    & & \oeNil \grmor \oeE \oePar \oeEi %
       & \text{\footnotesize(unit, par)}\\
    & & \oeLetrecAbbrev{\oeD}{\oeE}
       & \text{\footnotesize(recursion)}\\
& & \hspace{-1mm}\oeD  \ \ \grmeq  \ \ \oeCallDec{\oeRecVar}{\tilde{\ocX}}{=}\oeE
& \text{\footnotesize(declaration)}
 \end{array}
\end{array}$}
\\[1mm]
We assume mutually disjoint sets of
{\em variables} ($\ocX,\ocY,\ldots$),
and {\em function variables} ($\oeRecVar, \oeRecVari, \ldots$).
    In
    $\left(\oeFmt{\oeLetKwd\SP\ocX\SP=\elip\oeInKwd\SP\oeE}\right)$,
variable $\ocX$ in $\oeE$ is bound. Similarly,
    $\left(\oeMatch{\elip}{\mpLab[i]}{\ocX[i]}{\ocY[i]}{\oeE[i]}{i \in
        I}\right)$ and
  $\left(\oeFmt{\oeLetrecKwd\SP\oeRecVar(\ocX[1],\elip,\ocX[n])\SP=\SP\oeE\SP\oeInKwd\SP\elip}\right)$,
    variables $\ocX[i]$ and $\ocY[i]$ in $\oeE[i]$
    ($i \in I$) and
    $\ocX[1],\elip,\ocX[n]$ in $\oeE$
are bound, respectively. 
    $\oeLetrec{\oeRecVar}{\elip}{\oeE[1]}{\oeE[2]}$ binds $\oeRecVar$ in both
    $\oeE[1]$ and $\oeE[2]$.
    $\fv{\oeE}/\fn{\oeE}$ denote the set of free variables/names (introduced later) in
    $\oeE$.
$\ffv{\oeE}$ is the set of free function variables in $\oeE$,
    and $\dfv{\oeD}$ is the set of declared function variables in $\oeD$
    (i.e. $\dfv{\oeCallDec{\oeRecVar}{\tilde{\ocX}}{=}{\oeE}}=\{\oeRecVar\}$).
$\ocUnit$ denotes unit value and  
$\UNL$ stands for unused binding variables.
\end{DEF}

{\textbf{\emph{Program}}}
includes
{\textbf{\emph{initiation}}} which generates
a series of interconnected channels
from a global combinator $\ogtG$,
each of which corresponds to a role occurring in $\ogtG$.
This expression corresponds to
Line 1 and Line 8 in
\Cref{fig:full:impl}.
\textbf{\emph{Output}} expression
$\oeSend{\ocX}{\ocY\#\roleQ\#\mpLab}{\ocV}{\oeE}$
sends label $\mpLab$ with payload $\ocV$ via channel $\ocY$ to role $\roleQ$,
then binds the continuation to $\ocX$,
and proceeds to $\oeE$.
\textbf{\emph{Input}} expression
$\oeRecv{\ocX}{\ocY\#\roleQ}{\oeE}$
receives on $\ocY$ from $\roleQ$
then binds the received value
to $\ocX$,
and proceeds to $\oeE$.
The received value will have the form
$\ocVariant{\mpLab}{\ocPair{\ocV[1]}{\ocV[2]}}$
where $\mpLab$ and $\ocV[1]$ are the label and payload sent from $\roleQ$,
and $\ocV[2]$ is a continuation.
The received value is decomposed by
{\textbf{\emph{pattern matching}}} expression 
$\oeMatch{\ocX}{\mpLab[i]}{\ocX[i]}{\ocY[i]}{\oeE[i]}{i\in I}$ which
matches
against patterns $\ocVariant{\mpLab[i]}{\ocPair{\ocX[i]}{\ocY[i]}}$ ($i\in I$),
and if $\mpLab=\mpLab[k]$,
it continues to $\oeE[k]$ after simultaneously substituting $\ocX[k]$ and $\ocY[k]$ with $\ocV[1]$ and $\ocV[2]$, respectively.
{\textbf{\emph{Recursive function definition}}}
$\oeLetrec{\oeRecVar}{\ocX[1],\elip,\ocX[n]}{\oeE[1]}{\oeE[2]}$ defines a recursive function $\oeRecVar$ with
parameters $\ocX[1],\elip,\ocX[n]$ and body $\oeE[1]$ which is local to $\oeE[2]$.
A unit value $\oeNil$ represents an inactive thread.
{\textbf{\emph{Parallel}}} $\oeE[1] \oePar \oeE[2]$ represents two
  threads running concurrently.
$\oeCall{\oeRecVar}{\tilde{\ocV}}$ is the {\textbf{\emph{function application}}}.

We use the following shorthand for expressions with $\ocZ$ fresh:\\
\centerline{\scalebox{0.91}{\(
  \begin{array}{rcl}
  \oeBranchMult{\ocX_0}{\roleQ}{\mpLab[i]}{\ocX[i]}{\ocY[i]}{\oeE[i]}{i \in I}%
  & \eqdeff
  & \oeBranchSingle{\ocZ}{\ocX_0}{\roleQ}{
    \oeMatch{\ocZ}{\mpLab[i]}{\ocX[i]}{\ocY[i]}{\oeE[i]}{i \in I}}\\
  \oeBranchSingle{\ocVariantPairPat{\mpLab}{\ocX}{\ocY}}{\ocX[0]}{\roleQ}{\oeE}
  & \eqdeff
  & \oeBranchSingle{\ocZ}{\ocX[0]}{\roleQ}{
    \oeMatch{\ocZ}{\mpLab}{\ocX}{\ocY}{\oeE}{}}
\end{array}\)}}

\begin{EX}\thmstart\label{ex:auth:mio}
The following expressions implement the protocol
in Example \ref{ex:globalcombinatorauth}:
\[
\eAuth =
\oeInit{\ocX,\ocXi}{\gAuth}{\left(\oeE[\roleC] \oePar
    \oeE[\roleS]\right)}
\]
where, with $\fv{\oeE[\roleC]}=\{\ocX\}$, $\fv{\oeE[\roleS]}=\{\ocXi\}$,
and $\fv{\eAuth}=\{\}$,
and $\oeE[\roleC]$ and $\oeE[\roleS]$ are following:\smallskip\linebreak
{\scalebox{0.91}{\(
\begin{array}{l}
\oeE[\roleC] = \oeSend{\ocX[1]}{\ocX\#\roleS\#\labAuth}{\text{\oCODE{"passwd"}}}{
  \left(\oeBranchRaw{\ocX[1]}{\roleS}{
    \oeChoice{\labOk}{\UNL}{\ocX[2]}{\oeNil}; \oeChoice{\labCancel}{\UNL}{\ocX[3]}{\oeNil}}\right)}\\
\oeE[\roleS] =
\oeBranchSingle{\ocVariantPairPat{\labAuth}{\UNL}{\ocX[1]}}{\ocXi}{\roleC}{
  \oeSend{\ocX[2]}{\ocX[1]\#\roleC\#\labOk}{\text{\oCODE{"ok"}}}{\oeNil}}.
\end{array}
\)}}
\end{EX}

\begin{EX}\thmstart
  Let\smallskip\\
\centerline{\scalebox{0.81}{\(
\begin{array}{rcl}
\ogtG[Cal] &=&
\ogtRec{\ogtRecVar}{\ogtCommBin{\roleC}{\roleS}
  {\labLoop}{}{\ogtRecVar}
  {\labStop}{}{\left(\ogtComm{\roleS}{\roleC}{}{\labStop}{}{\ogtEnd}\right)}}
\\
\oeE[\roleC] &=&
  \oeSend{\ocX[1]}{\ocX[0]\#\roleS\#\labLoop}{\ocUnitPayload}{
  \oeSend{\ocX[2]}{\ocX[1]\#\roleS\#\labLoop}{\ocUnitPayload}{
  \oeSend{\ocX[3]}{\ocX[2]\#\roleS\#\labStop}{\ocUnitPayload}{}}}\\
& &\oeBranchSingle{\ocVariantPairPat{\labStop}{\UNL}{\ocX[4]}}{\ocX[3]}{\roleS}{\oeNil}\\
\oeE[\roleS] &=& \oeLetrecAbbrev{\oeCall{\oeRecVar}{\ocX}{=}\oeE[\roleS0]}{\oeCall{\oeRecVar}{\ocXi[0]}}\\
\oeE[\roleS0] &=&
  \oeBranchRaw{\ocX}{\roleC}{
    \oeChoice{\labLoop}{\UNL}{\ocX[1]}{\oeCall{\oeRecVar}{\ocX[1]}};\,
    \oeChoice{\labStop}{\UNL}{\ocX[2]}{
      \oeSend{\ocX[3]}{\ocX[2]\#\roleC\#\labStop}{\ocUnitPayload}{\oeNil}}}\\
\oeE[{\tt Cal}] &=& \oeInit{\ocX[0],\ocXi[0]}{\ogtG[Cal]}{\left(\oeE[\roleC] \oePar \oeE[\roleS]\right)}
\end{array}
\)}}
Then, $\oeE[{\tt Cal}] \rightarrow\rightarrow \oeEi[{\tt Cal}] = \oeRes{\ocS[1],\ocS[2],\ocSi}{\bigl(
  \oeE[\roleC]\subst{\ocX[0]}{\ocC[\roleC]}
  \ \big|\ %
  \oeLetrecAbbrev{\oeCall{\oeRecVar}{\ocX}{=}\oeE[\roleS0]}{(\oeE[\roleS0]\subst{\ocXi[0]}{\ocC[\roleS]})}
  \bigr)}$
where\smallskip\\
\centerline{\scalebox{0.91}{\(
\begin{array}{rcl}
  \ocC[\roleC] &=&
  \ocRec{\ocX[\roleC]}{\ocIntSum{\roleS}{}{
    \ocIntChoiceSmall{\labLoop}{\ocS[1]}{\ocX[\roleC]},
    \ocIntChoiceSmall{\labStop}{\ocS[2]}{
      \ocExtSum{\roleS}{}{\ocExtChoice{\labStop}{\ocSi}{\ocUnit}}}}}\\
\ocC[\roleS] &=&
  \ocRec{\ocX[\roleS]}{\ocExtSum{\roleS}{}{
    \ocExtChoiceSmall{\labLoop}{\ocS[1]}{\ocX[\roleS]},
    \ocExtChoiceSmall{\labStop}{\ocS[2]}{
      \ocIntSum{\roleS}{}{\ocIntChoice{\labStop}{\ocSi}{\ocUnit}}}}}\\
& & \left(= \ocRec{\ocX[\roleS]}{\ocRecord{\roleS}{
    \left[\ocWrapperElem{\ocS[1]}{\ocVariant{\labLoop}{\ocPair{\HOLE}{\ocX[\roleS]}}},
    \ocWrapperElem{\ocS[2]}{\ocVariant{\labStop}{\ocPair{\HOLE}{
      \ocIntSum{\roleS}{}{\ocIntChoice{\labStop}{\ocSi}{\ocUnit}}}}}\right]}{}}\right)\\
\end{array}\)}}\\
See that\smallskip\\
\centerline{\scalebox{0.91}{\(\begin{array}{rcl}
\ocUnfold{\ocC[\roleC]\#\roleS\#\labLoop} &=& \ocPair{\ocS[1]}{\ocC[\roleC]}\\
\ocUnfold{\ocC[\roleC]\#\roleS\#\labStop} &=& \ocPair{\ocS[2]}{\ocExtSum{\roleS}{}{\ocExtChoice{\labStop}{\ocSi}{\ocUnit}}}
\\
\ocUnfold{\ocC[\roleS]\#\roleC}&=&
  \left[\ocRecvWrapElem{\labLoop}{\ocS[1]}{\ocC[\roleC]}, \ocRecvWrapElem{\labStop}{\ocS[2]}{
    \ocIntSum{\roleC}{}{\ocIntChoice{\labStop}{\ocSi}{\ocUnit}} }\right]\\
  & & \left(= \left[\ocWrapperElem{\ocS[1]}{\ocVariant{\labLoop}{\ocPair{\HOLE}{\ocC[\roleC]}}}, \ocWrapperElem{\ocS[2]}{\ocVariant{\labStop}{\ocPair{\HOLE}{
        \ocIntSum{\roleC}{}{\ocIntChoice{\labStop}{\ocSi}{\ocUnit}}
    }}}\right]\right)
\end{array}\)}}\\
and each time $\roleC$lient sends a label, $\roleS$erver
makes an external choice between $\ocS[1]$ and $\ocS[2]$,
and they reduce as follows:
$\oeEi[{\tt Cal}] \rightarrow^{6}
  \oeRes{\ocS[1],\ocS[2],\ocSi}{\left(\oeNil\ \big|\ \oeLetrecAbbrev{\oeCall{\oeRecVar}{\ocX}{=}\oeE[\roleS0]}{\oeNil}\right)}
  \equiv \oeNil$.
\end{EX}

\subsubsection{Dynamic Semantics of \mustL{}}
\label{sec:reduction}
We introduce a reduction semantics of expressions,
which is a standard MPST $\pi$-calculus, with
extra handling on channel vectors.

\begin{DEF}\label{def:sem}\thmstart
The reduction relation $\rightarrow$ of the expressions
is defined by the rules
in Fig.~\ref{fig:reduction}.
The syntax of \mustL{}
in \Cref{def:program} is extended to the \emph{runtime syntax}
as follows:
\[
\begin{array}{rcl}
e  & ::= &
\oeSend{\ocX}{\ocC\boldsymbol{\#}\roleQ\boldsymbol{\#}\mpLab}{\ocCi}{\oeE}
\grmor
\oeRecv{\ocX}{\ocC\boldsymbol{\#}\roleQ}{\oeE}
\grmor
\oeMatch{\ocC}{\mpLab[i]}{\ocX[i]}{\ocY[i]}{\oeE[i]}{i \in I}\\
& \grmor  &
\oeCall{\oeRecVar}{\tilde{\ocC}} \grmor \oeRes{\ocS}{e}
\end{array}
\]
A {\em reduction context}\ $\OC$ is defined by the following grammar:\\[1mm]
\centerline{\(\OC ::= \OC \oePar \oeE \grmor \oeRes{\ocS}{\OC}
\grmor \oeLetrec{\oeRecVar}{\tilde{\ocX}}{\oeE}{\OC} \grmor \HOLE\)}
\end{DEF}
\textbf{\emph{Restriction}} $\oeRes{\ocS}{e}$
denotes session channel $s$ binding all free channels in the form of
$\ocS_{\{\roleP[j],\roleP[k],\mpLab,\blueI\}}$
which are generated by $\GCCVR{\gocaml}{}^s$.
 The structural congruence $\equiv$ (adapted from
 \cite{scalas19less}) is inductively defined by the rules in \Cref{fig:strcong}.%

\begin{figure}[t]
  \centerline{\scalebox{0.99}{\(
  \begin{array}{c}
    \oeE \oePar \oeEi \equiv \oeEi \oePar \oeE
    \qquad (\oeE \oePar \oeEi) \oePar \oeEii \equiv \oeE \oePar (\oeEi \oePar \oeEii)
    \qquad \oeE \oePar \oeNil \equiv \oeE
    \qquad \oeRes{\ocS}{\oeNil} \equiv \oeNil\\[1mm]
    \oeRes{\ocS}{\oeRes{\ocSi}{\oeE}} \equiv \oeRes{\ocSi}{\oeRes{\ocS}{\oeE}}
    \qquad \oeRes{\ocS}{\bigl(\oeE \oePar \oeEi\bigr)} \equiv \oeE \oePar \oeRes{\ocS}{\oeEi} \quad \text{if }\ocS\notin\fn{\oeE}\\[1mm]
    \oeLetrecAbbrev{\oeD}{\oeNil} \equiv \oeNil
    \qquad \oeLetrecAbbrev{\oeD}{\oeRes{\ocS}{\oeE}} \equiv \oeRes{\ocS}{\bigl(\oeLetrecAbbrev{\oeD}{\oeE}\bigr)} \quad\text{if }\ocS\notin\fn{\oeD}\\[1mm]
    \oeLetrecAbbrev{\oeD}{\bigl(\oeE \oePar \oeEi\bigr)} \equiv \oeFmt{\bigl(\oeLetrecAbbrev{\oeD}{\oeE}\bigr)} \oePar \oeEi \quad\text{if }\dfv{\oeD}\cap\ffv{\oeEi}=\emptyset\\[1mm]
      \oeLetrecAbbrev{\oeD}{\bigl(\oeLetrecAbbrev{\oeDi}{\oeE}\bigr)} \equiv \oeLetrecAbbrev{\oeDi}{\bigl(\oeLetrecAbbrev{\oeD}{\oeE}\bigr)}\\[1mm]
      \text{if }\bigl(\dfv{\oeD}\cup\ffv{\oeD}\bigr)\cap\dfv{\oeDi} = \bigl(\dfv{\oeDi}\cup\ffv{\oeDi}\bigr)\cap\dfv{\oeD} = \emptyset\\[1mm]
  \end{array}\)}}
  \caption{Structural Congruence rules \framebox{$\oeE\equiv\oeEi$}\label{fig:strcong}}
\end{figure}

\begin{figure}[t]
  \scalebox{0.90}{\(
  \begin{array}{c}
      \inference{%
        \inferrule{\iruleORedInit} \ &
\GCCVR{\gocaml}{}^{\mpS}
=\ocTuple{\ocC[1],\elip,\ocC[n]}
        \quad \mpS\ \text{fresh}
      }{
        \oeInit{\ocX[1],\ldots,\ocX[n]}{\gocaml}{(\oeE[1] \oePar \cdots \oePar \oeE[n])}
        \ \longrightarrow\ %
        \oeRes{\ocS}{(\oeE[1]\subst{\ocX[1]}{\ocC[1]} \oePar \cdots \oePar \oeE[n]\subst{\ocX[n]}{\ocC[n]})}
      }\\[2mm]
      \inference{%
          \inferrule{\iruleORedComm} \ &
          {\ocC[\roleP]\#\roleQ\#\mpLab}={\ocPair{\ocS_k}{\ocC[1]}} \quad
          {\ocC[\roleQ]\#\roleP}={\ocWrapper{\ocS[i]}{\ocH[i]}{i\in I}} \quad
                    \ocC[2]=\ocWrapperApp{\ocH[k]}{\ocCi}
          \ \  (\exists k\in I)
        }{
          \oeSend{\ocX}{\ocC[\roleP]\#\roleQ\#\mpLab}{\ocCi}{\oeE[1]}
            \mid \oeRecv{\ocY}{\ocC[\roleQ]\#\roleP}{\oeE[2]}
          \ \longrightarrow\ %
          \oeE[1]\subst{\ocX}{\ocC[1]} \oePar \oeE[2]\subst{\ocY}{\ocC[2]}
        }\\[2mm]
      \inference{%
        \inferrule{\iruleORedMatch} \ &
        \ocC = \ocVariant{\mpLab[k]}{\ocPair{\ocC[1]}{\ocC[2]}} \quad (\exists k \in I)
      }{
        \oeMatch{c}{\mpLab[i]}{\ocX[i]}{\ocY[i]}{\oeE[i]}{i\in I}
        \ \longrightarrow\ %
        \oeE[k]\subst{\ocX[k]}{\ocC[1]}\subst{\ocY[k]}{\ocC[2]}
      }\ %
      \inference{%
        \inferrule{\iruleORedCong} \ &
        \oeE\equiv \oeE[1] \quad \oeE[1] \longrightarrow \oeE[2] \quad \oeE[2] \equiv \oeEi
      }{
        \oeE \longrightarrow \oeEi
      }
      \\[2mm]
      \inference{%
        \inferrule{\iruleORedRec} \hspace{23em}
      }{
        \oeLetrec{\oeRecVar}{\tilde{\ocX}}{\oeE[1]}{\left(\oeCall{\oeRecVar}{\tilde{\ocC}} \oePar \oeE[2]\right)}%
        \longrightarrow
        \oeLetrec{\oeRecVar}{\tilde{\ocX}}{\oeE[1]}{\left(\oeE[1]\subst{\tilde{\ocX}}{\tilde{\ocC}} \oePar \oeE[2]\right)}%
      }
      \quad
      \inference{%
        \inferrule{\iruleORedCtx} \ &
        \oeE \longrightarrow \oeEi
      }{
        \OC[\oeE] \longrightarrow \OC[\oeEi]
      }
  \end{array}\)}
\caption{Reduction rules \framebox{$\oeE \longrightarrow \oeEi$}\label{fig:reduction}}
\end{figure}

The reduction rules of \mustL{}
are defined in \Cref{fig:reduction}.
Rule \inferrule{\iruleORedInit} generates
a tuple of channel vectors
$\ocTuple{\ocC[1],\elip,\ocC[n]}$
with fresh name $\ocS$
from a global combinator ($\GCCVR{\ogtG}{}$)
and then substitutes them to variables $\ocX[i]$ and continue to $\oeE$.
We assume that $\ocX[i]$ freely occurs in $\oeE[i]$ only, but not in $\oeE[j]$ where $i \neq j$.
The names
introduced by channel vectors are bound by restriction by $s$.
In rule \inferrule{\iruleORedComm},
the sender and receiver interact via two interconnected channel vectors
$\ocC[\roleP]$ and $\ocC[\roleQ]$ at role $\roleP$ and $\roleQ$,
respectively.
They have the form
$(\oeSendBare{\ocC[\roleP]\#\roleQ\#\mpLab[k]}{\ocCi})$
and
$(\oeRecvBare{\ocC[\roleQ]\#\roleP})$
which
communicates label $\mpLab[k]$ and payload $\ocCi$ from $\roleP$ to $\roleQ$.
On sender's side,
record projection $\ocC[\roleP]\#\roleQ\#\mpLab[k]$
yields $\ocPair{\ocS[k]}{\ocC[1]}$
where $\ocS[k]$ takes a form of
$\ocS_{\{\roleP,\roleQ,\mpLab[k],\blueIi\}}$.
On the receiver's side,
evaluation of $\ocC[\roleQ]\#\roleP$ yields
wrapped names $\ocRecvWrapSmall{\mpLab[i]}{\ocSi[i]}{\ocCi[i]}{i\in I}$
where each $\ocSi[i]$ takes a form of
$\ocSi_{\{\roleP,\roleQ,\mpLab[i],\blueJi\}}$.
The communication happens
if they both are generated
from the same global combinator and
interconnected via the same name $\ocS=\ocSi$ and the same index $\blueIi=\blueJi$.

After communication, the sender binds
$\ocC[1]$ to $\ocX$ and continues to $\oeE[1]$.
The receiver receives the variant value
\(\ocC[2]
= \ocWrapperApp{\ocH[k]}{\ocCi} = \ocWrapperApp{\ocVariantSmall{\mpLab[k]}{\ocPairSmall{\HOLE}{\ocCi[k]}}}{\ocCi}
= \ocVariantSmall{\mpLab[k]}{\ocPairSmall{\ocCi}{\ocCi[k]}}\)
which contains both received payload $\ocCi$ and continuation $\ocCi[k]$,
and binds it to $\ocY$ and continues to $\oeE[2]$,
and the variant value is matched in the subsequent reductions.

Rule \inferrule{\iruleORedMatch} matches
the variant values of the form
$\ocVariantSmall{\ocLab[k]}{\ocPairSmall{\ocC[1]}{\ocC[2]}}$
yielded by \textbf{recv}
against patterns $\ocVariant{\mpLab[i]}{\ocPair{\ocX[i]}{\ocY[i]}}_{i\in I}$,
and if $k \in I$, it
binds $\ocC[1]$ and $\ocC[2]$ to $\ocX[k]$ and $\ocY[k]$ respectively, and reduces to $\oeE[k]$.

The rest of the rules are standard from \cite{scalas19less}.
Rule \inferrule{\iruleORedRec} instantiates a recursive call
to its body $\oeE$;
Rule \inferrule{\iruleORedCong} defines a reduction up to
the structural congruence defined in
\Cref{fig:strcong}. %
Rule \inferrule{\iruleORedCtx} is a contextual rule.

\begin{EX}[Reduction]\label{ex:auth:reduction} \thmstart Recall Examples~\ref{ex:globalcombinatorauth},
  \ref{ex:auth:mio}
and \ref{ex:auth:gen}.
We have\smallskip:\\
\centerline{\scalebox{0.85}{\(
\begin{array}{l}
\oeInit{\ocX,\ocXi}{\gAuth}{\left(\oeE[\roleC] \oePar \oeE[\roleS]\right)}
\rightarrow
\oeRes{\ocS}{\bigl(
  \oeE[\roleC]\subst{\ocX[0]}{\ocC[\roleC]}
  \ \oePar\ %
  \oeE[\roleS]\subst{\ocXi[0]}{\ocC[\roleS]}
  \bigr)} \\[1mm]
=
\oeRes{\ocS}{\bigl(
  \oeSend{\ocC[\roleC]}{\ocX\#\roleS\#\labAuth}{}{\cdots}
  \ \oePar\ %
  \oeBranchSingle{\ocVariantPairPat{\labAuth}{\UNL}{\ocC[\roleS]}}{\ocXi}{\roleC}{\cdots}
  \bigr)}\\[1mm]
\left(
\makecell[l]{
  \text{They interact on $\ocS[3]$, since\ }
  \ocC[\roleC]=\ocIntSumSmall{\roleS}{}{\ocIntChoiceSmall{\labAuth}{\ocS[3]}{\ocCi[\roleC]}}
  \text{\ and\ }
  \ocC[\roleS]=\ocExtSumSmall{\roleC}{}{\ocExtChoiceSmall{\labAuth}{\ocS[3]}{\ocCi[\roleS]}}
  \\[1mm]
  \text{where\ }
  \ocCi[\roleC] =
    \ocExtSumSmall{\roleS}{}{
    \ocExtChoiceSmall{\labOk}{\ocS[1]}{\ocUnit},
    \ocExtChoiceSmall{\labCancel}{\ocS[2]}{\ocUnit}}
  \text{\ and\ }
  \ocCi[\roleS] =
    \ocIntSumSmall{\roleC}{}{
    \ocIntChoiceSmall{\labOk}{\ocS[1]}{\ocUnit},
    \ocIntChoiceSmall{\labCancel}{\ocS[2]}{\ocUnit}}
}\right)\\[3mm]
\rightarrow
\oeRes{\ocS}{\Bigl(
    \oeBranchRaw{\ocCi[\roleC]}{\roleS}{
      \oeChoice{\labOk}{\UNL}{\ocX[2]}{\oeNil}; \oeChoice{\labCancel}{\UNL}{\ocX[3]}{\oeNil}}
  \oePar
    \oeSend{\ocX[2]}{\ocCi[\roleS]\#\roleC\#\labOk}{\text{\oCODE{"ok"}}}{\oeNil}
    \Bigr)}\\[2mm]
\text{(Here, the sender selects $\labOk$, interacting on $\ocS[1]$ and evolving to:)}\\[1mm]
\rightarrow
\oeRes{\ocS}{\Bigl(
    \oeMatchRaw{\ocVariant{\labOk}{\ocPair{\text{\oCODE{"ok"}}}{\ocUnit}}}{
      \oeChoice{\labOk}{\UNL}{\ocX[2]}{\oeNil}; \oeChoice{\labCancel}{\UNL}{\ocX[3]}{\oeNil}}
  \oePar
    \oeNil
    \Bigr)}\rightarrow\oeRes{\ocS}{(\oeNil \oePar \oeNil)} \equiv \oeNil.
\end{array}
\)}}
\end{EX}

\subsection{Static Semantics and Properties of \mustL{}}
\label{sec:typing}
This section summarises the typing systems of 
\mustL{}; then proves type soundness of 
\mustL{}. Typing \mustL{} is divided into three judgements  
(channel vectors, wrappers and expressions) 

\begin{DEF}[Typing rules]\label{def:typingexpr}\rm
  \Cref{fig:typingforchvec} and \Cref{fig:typingforexpression}
  give the typing rules. 
We extend the syntax of typing contexts $\otEnv$ 
from \Cref{def:typingcontext}
as $\otEnv  \grmeq  \ldots \grmor \otEnv \otEnvComp
\otEnvMap{\ocS}{\otT}$  
and introduce
context for recursive functions $\oeEnv$ as:
$\oeEnv  \grmeq  \oeEnvEmpty \grmor \oeEnv \oeEnvComp
\oeEnvMap{\oeRecVar}{\otT[1],\ldots,\otT[n]}$.
Here, $\oeEnvMap{\oeRecVar}{\otT[1],\ldots,\otT[n]}$ states that 
the parameter type of an $n$-ary function $\oeRecVar$.
The typing judgement for 
(1)  channel vectors
  has the form
  $\otEnvEntails{\otEnv}{\ocC}{\otT}$;
(2) wrappers 
  has the form
  $\otEnvEntails{\otEnv}{\ocH}{\otH}$ where 
the type for wrappers is defined as $\otH \grmeq
\otWrapper{\otT}{\otS}$;  and (3) expressions
has a form $\otJudge{\oeEnv}{\otEnv}{\oeE}$. 
We assume that all types in $\otEnv$ and $\oeEnv$ are closed. 
\end{DEF}

\begin{figure}[t]
\scalebox{0.91}{%
\(\begin{array}{c}
   \inference{%
      \inferrule{\iruleOTInit}\ %
    \otRolesSet{\gocaml} = \{\roleP[1],\ldots,\roleP[n]\}
    \quad
    \ogtEnvEntailsEx{\roleP[1],\ldots,\roleP[n]}{}{\gocaml}{\otT[1]{\otTimes}{\cdots}{\otTimes}\otT[n]}
    \quad
    \otJudge{\oeEnv}{\otEnv \otEnvComp \otEnvMap{\ocX[i]}{\otT[i]}}{%
        \oeE[i]%
    }%
    \quad
    \forall i \in \{1{,\elip,}n\}
    }{%
      \otJudge{\oeEnv}{\otEnv}{%
       \oeInit{\ocX[1],\ldots,\ocX[n]}{\gocaml}{(\oeE[1] \oePar \cdots \oePar \oeE[n])}
      }%
  }%
\\[1mm]
    \inference{%
      \inferrule{\iruleOTSel}
      \ %
      \otEnvEntails{\otEnv}{\ocC}{%
        \otIntSumSmall{\roleQ}{}{\otIntChoice{\mpLab}{\otT}{\otTi}}%
      }%
      &%
      \otEnvEntails{\otEnv}{\ocCi}{\otT}%
      &%
      \otJudge{\oeEnv}{%
        \otEnv \otEnvComp \otEnvMap{\ocX}{\otTi}%
      }{%
        \oeE%
      }%
    }{%
      \otJudge{\oeEnv}{%
        \otEnv
      }{%
        \oeSel{\ocX}{\ocC}{\roleQ}{\mpLab}{\ocCi}{\oeE}%
      }%
    }%
    \ %
    \inference{%
      \inferrule{\iruleOTPar}
      \ %
      \otJudge{\oeEnv}{%
        \otEnv%
      }{%
        \oeE[1]%
      }%
      \qquad%
      \otJudge{\oeEnv}{%
        \otEnv%
      }{%
        \oeE[2]%
      }%
    }{%
      \otJudge{\oeEnv}{%
        \otEnv
      }{%
        \oeE[1] \oePar \oeE[2]%
      }%
    }\ %
    \\[1mm]%
    \inference{%
      \inferrule{\iruleOTRecv}\ %
        \otEnvEntails{\otEnv}{\ocC}{%
          \otExtSumSmall{\roleQ}{i \in I}{\otExtChoice{\mpLab[i]}{\otT[i]}{\otTi[i]}}%
        }%
        &%
        \otJudge{\oeEnv}{%
          \otEnv \otEnvComp%
          \otEnvMap{\ocX}{\otVar{\otExtChoice{\mpLab[i]}{\otT[i]}{\otTi[i]}}{i \in I}}%
        }{%
          \oeE%
        }%
    }{%
      \otJudge{\oeEnv}{%
        \otEnv
      }{%
        \oeBranchSingle{\ocX}{\ocC}{\roleQ}{\oeE}%
      }%
    }%
    \ %
    \inference{%
      \inferrule{\iruleOTNil}\qquad
    }{%
      \otJudge{\oeEnv}{\otEnv}{\oeNil}%
    }%
    \\[1mm]%
    \inference{%
      \inferrule{\iruleOTMatch}\ %
        \otEnvEntails{\otEnv}{\ocC}{%
          \otVar{\otExtChoice{\mpLab[i]}{\otT[i]}{\otTi[i]}}{i \in I}%
        }%
        &%
        \otJudge{\oeEnv}{%
          \otEnv \otEnvComp%
          \otEnvMap{\ocY[i]}{\otT[i]} \otEnvComp%
          \otEnvMap{\ocX[i]}{\otTi[i]}%
        }{%
          \oeE[i]%
        }%
        &
        \forall i \!\in\! I%
    }{%
      \otJudge{\oeEnv}{%
        \otEnv
      }{%
        \oeMatch{\ocC}{\mpLab[i]}{\ocY[i]}{\ocX[i]}{\oeE[i]}{i \in I}%
      }%
    }%
    \\[1mm]%
    \inference{%
      \inferrule{\iruleOTLetrec}\ %
        \otJudge{%
          \oeEnv \oeEnvComp%
          \oeEnvMap{\oeRecVar}{\otT[1],\elip,\otT[n]}%
        }{%
          \otEnv \otEnvComp
          \otEnvMap{x_1}{\otT[1]}%
          \otEnvComp\elip\otEnvComp%
          \otEnvMap{x_n}{\otT[n]}%
        }{%
          \oeE[1]%
        }%
        \quad%
        \otJudge{%
          \oeEnv \oeEnvComp%
          \oeEnvMap{\oeRecVar}{\otT[1],\elip,\otT[n]}%
        }{%
          \otEnv%
        }{%
          \oeE[2]%
        }%
    }{%
      \otJudge{\oeEnv}{%
        \otEnv%
      }{%
        \oeLetrec{\oeRecVar}{%
          \otEnvMap{x_1}{\otT[1]},%
          \ldots,%
          \otEnvMap{x_n}{\otT[n]}%
        }{\oeE[1]}{\oeE[2]}%
      }%
    }%
    \\[1mm]%
    \inference{%
      \inferrule{\iruleOTCall}\ %
        \otEnvMap{X}{\otT[1],\elip,\otT[n]} \in \oeEnv%
        &%
        \otEnvEntails{\otEnv}{\ocC[i]}{\otT[i]}%
        &%
        \forall i \in \{1..n\}%
    }{%
      \otJudge{\oeEnv}{%
        \otEnv
      }{%
        \oeCall{\oeRecVar}{\ocC[1],\elip,\ocC[n]}%
      }%
    }%
\quad 
    \ %
    \inference{%
    \inferrule{\iruleOTRes}\quad 
      \otJudge{\oeEnv}{%
        \otEnv \cdot 
          \otEnvMap{\ocS_1}{\otChan{\otT}_1},...,\otEnvMap{\ocS_n}{\otChan{\otT}_n}%
      }{%
        \oeE%
      }%
    }{%
      \otJudge{\oeEnv}{%
        \otEnv%
      }{%
       \oeRes{\ocS}\oeE%
      }%
    }%
  \end{array}\)\vspace*{-2mm}
}
\caption{The Typing Rules for Expressions \framebox{$\protect\otJudge{\oeEnv}{\otEnv}{\oeE}$}\label{fig:typingforexpression}}
\end{figure}

The rules for channel vectors are standard 
where the subtyping relation in rule 
\inferrule{\iruleOTCSub} is defined at \Cref{def:subtyping} in \Cref{sec:typing:global}.

For wrappers, 
rule \inferrule{\iruleOTCWrapInp} types wrapped names
where the payload type $\otSi$ of input channel $\ocS$ is the same as the hole's type,
and all wrappers have the same result type $\otT$.
Rule \inferrule{\iruleOTCWrapper} 
checks type of a channel vector $\ocC=\ocWrapperApp{\ocH}{\ocX}$ and replaces $\ocX$ with the hole $\HOLE$.

For expressions, 
rule \inferrule{\iruleOTInit} types 
the initialisation with a typed global combinator. 
Rule \inferrule{\iruleOTSel}  types the output expression 
which sends a label $\mpLab$ and a payload $\ocCi$ with 
as a nested record at $\ocC$.
Rule \inferrule{\iruleOTRecv} is the dual rule for the input
expression.  Rule \inferrule{\iruleOTRes} hides 
all indexed $s$ by $s$. 
Other rules are standard from \cite{scalas19less}. 

\begin{EX}[Typing expression]\thmstart
Recall that
  {\footnotesize $\eAuth =\oeInit{\ocX,\ocXi}{\gAuth}{\left(\oeE[\roleC] \oePar
      \oeE[\roleS]\right)}$} from \Cref{ex:auth:mio}.
Typing of $\oeE[\roleC]$ has the following derivation:
  \smallskip\\
\centerline{\scalebox{0.75}{\(\begin{array}{c}
    \inference{
      \inference{
        \inference{
          \otJudge{}{
            \otEnvi[\roleC] \otEnvComp
            \otEnvMap{\ocZ}{
              \otVariantMany{
                \otVariantElem{\labOk}{\otPair{\otT}{\otUnit}},
                \otVariantElem{\labCancel}{\otPair{\otT}{\otUnit}}}
            }
            \otEnvComp
            \otEnvMap{\ocX[2]}{\otUnit} \otEnvComp
            \otEnvMap{\UNL}{\otT} 
          }{\oeNil}
          \qquad
          \otJudge{}{
            \otEnvi[\roleC] \otEnvComp
            \otEnvMap{\ocZ}{
              \otVariantMany{
                \otVariantElem{\labOk}{\otPair{\otT}{\otUnit}},
                \otVariantElem{\labCancel}{\otPair{\otT}{\otUnit}}}
            }
            \otEnvComp
            \otEnvMap{\ocX[3]}{\otUnit}  \otEnvComp
            \otEnvMap{\UNL}{\otT}
          }{\oeNil}
        }{
          \otJudge{}{
            \otEnvi[\roleC] \otEnvComp
            \otEnvMap{\ocZ}{
              \otVariantMany{
                \otVariantElem{\labOk}{\otPair{\otT}{\otUnit}},
                \otVariantElem{\labCancel}{\otPair{\otT}{\otUnit}}}
            }
          }{
            \oeMatchRaw{\ocZ}{
              \oeChoice{\labOk}{\UNL}{\ocX[2]}{\oeNil}; \oeChoice{\labCancel}{\UNL}{\ocX[3]}{\oeNil}
            }
          }
        }
      }{
        \otJudge{}{
          \otEnv[\roleC] \otEnvComp
          \otEnvMap{\ocX[1]}{
            \otExtSumSmall{\roleS}{}{
              \otExtChoice{\labOk}{\otT}{\otUnit},
              \otExtChoice{\labCancel}{\otT}{\otUnit}}}
        }{
          \oeBranchSingle{\ocZ}{\ocX[1]}{\roleS}{
            \oeMatchRaw{\ocZ}{
              \oeChoice{\labOk}{\UNL}{\ocX[2]}{\oeNil}; \oeChoice{\labCancel}{\UNL}{\ocX[3]}{\oeNil}
            }
          }
        }
      }
    }{
      \otJudge{}{
        \otEnvMap{\ocX}{
          \otIntSumBig{\roleS}{}{
            \otIntChoice{\labAuth}{\otT}{
              \otExtSumSmall{\roleS}{}{
                \otExtChoice{\labOk}{\otT}{\otUnit},
                \otExtChoice{\labCancel}{\otT}{\otUnit}}}}}
      }{
        \oeSend{\ocX[1]}{\ocX\#\roleS\#\labAuth}{\text{\oCODE{"passwd"}}}{
          \oeEi[\roleC]
        }
      }
    }
    \end{array}\)}}
where
{\scalebox{0.76}{\(
\otEnv[\roleC] = \otEnvMap{\ocX}{
          \otIntSumBig{\roleS}{}{
            \otIntChoice{\labAuth}{\otT}{
              \otExtSumSmall{\roleS}{}{
                \otExtChoice{\labOk}{\otT}{\otUnit},
                \otExtChoice{\labCancel}{\otT}{\otUnit}}}}}
\)}},
{\scalebox{0.76}{\(
\otEnvi[\roleC] = \otEnv[\roleC]\otEnvComp\otEnvMap{\ocX[1]}{
              \otExtSumSmall{\roleS}{}{
                \otExtChoice{\labOk}{\otT}{\otUnit},
                \otExtChoice{\labCancel}{\otT}{\otUnit}}},
\)}}
and\linebreak
{\scalebox{0.76}{\(
\oeEi[\roleC] =
          \oeBranchRaw{\ocX[1]}{\roleS}{
            \oeChoice{\labOk}{\UNL}{\ocX[2]}{\oeNil}; \oeChoice{\labCancel}{\UNL}{\ocX[3]}{\oeNil}}
          \)}}
which is expanded to
\scalebox{0.75}{\(\mathbf{let}\ocZ=\mathbf{recv}\)}
construct.
Similarly, $\oeE[\roleS]$ can be typed as follows:\smallskip\\
\centerline{\scalebox{0.75}{\(
\begin{array}{c}
  \inference{
    \inference{
      \inference{
        \otIntSumSmall{\roleC}{}{
          \otIntChoice{\labOk}{\otT}{\otUnit},
          \otIntChoice{\labCancel}{\otT}{\otUnit}}
        \otSub
        \otIntSumSmall{\roleC}{}{\otIntChoice{\labOk}{\otT}{\otUnit}}
      }{
        \otEnvEntails{
         \otEnvii[\roleS]
        }{\ocXi}{\otIntSumSmall{\roleC}{}{\otIntChoice{\labOk}{\otT}{\otUnit}}}
      }
      \ %
      \inference{
        \otJudge{}{
          \otEnvi[\roleS] \otEnvComp
          \otEnvMap{\ocX[1]}{
            \otIntSumSmall{\roleC}{}{
              \otIntChoice{\labOk}{\otT}{\otUnit},
              \otIntChoice{\labCancel}{\otT}{\otUnit}}}
          \otEnvComp
          \otEnvMap{\ocX[2]}{\otUnit}}{\oeNil}
      }{
        \otJudge{}{
          \otEnvi[\roleS] \otEnvComp
          \otEnvMap{\ocX[1]}{
            \otIntSumSmall{\roleC}{}{
              \otIntChoice{\labOk}{\otT}{\otUnit},
              \otIntChoice{\labCancel}{\otT}{\otUnit}}
          }
        }{
          \oeSel{\ocX[2]}{\ocX[1]}{\roleC}{\labOk}{\text{\oCODE{"ok"}}}{\oeNil}
        }
      }
    }{
      \otJudge{}{\otEnv[\roleS]\otEnvComp
        \otEnvMap{\ocZ}{\otVariantMany{\otVariantElem{\labAuth}{\otPair{\otT}{
            \otIntSumSmall{\roleC}{}{
              \otIntChoice{\labOk}{\otUnit}{\otT},
              \otIntChoice{\labCancel}{\otUnit}{\otT}}
        }}}}
      }{
        \oeMatchRaw{\ocZ}{
          \oeChoice{\labAuth}{\UNL}{\ocX[1]}{\oeSel{\ocX[2]}{\ocX[1]}{\roleC}{\labOk}{\text{\oCODE{"ok"}}}{\oeNil}}
        }
      }
    }
   }{
    \otJudge{}{
      \otEnvMap{\ocXi}{
        \otExtSumBig{\roleC}{}{
          \otExtChoice{\labAuth}{\otT}{
            \otIntSumSmall{\roleC}{}{
              \otIntChoice{\labOk}{\otT}{\otUnit},
              \otIntChoice{\labCancel}{\otT}{\otUnit}}}}}
    }{
      \oeBranchSingle{\ocVariantPairPat{\labAuth}{\UNL}{\ocX[1]}}{\ocXi}{\roleC}{
        \oeSend{\ocX[2]}{\ocX[1]\#\roleC\#\labOk}{\text{\oCODE{"ok"}}}{\oeNil}
      }
    }
  }
\end{array}\)}}
  \smallskip\\
where
{\scalebox{0.76}{\(
    \otEnv[\roleS] =
    \otEnvMap{\ocXi}{
        \otExtSumBig{\roleC}{}{
          \otExtChoice{\labAuth}{\otT}{
            \otIntSumSmall{\roleC}{}{
              \otIntChoice{\labOk}{\otT}{\otUnit},
              \otIntChoice{\labCancel}{\otT}{\otUnit}}}}},
    \otEnvi[\roleS] = \otEnv[\roleS] \otEnvComp
        \otEnvMap{\ocZ}{\otVariantMany{\otVariantElem{\labAuth}{\otPair{\otT}{
            \otIntSumSmall{\roleC}{}{
              \otIntChoice{\labOk}{\otT}{\otUnit},
              \otIntChoice{\labCancel}{\otT}{\otUnit}}}}}}
\)}}

and

{\scalebox{0.76}{\(
    \otEnvii[\roleS] = \otEnvi[\roleS] \otEnvComp
        \otEnvMap{\ocX[1]}{
          \otIntSumSmall{\roleC}{}{
            \otIntChoice{\labOk}{\otT}{\otUnit},
            \otIntChoice{\labCancel}{\otT}{\otUnit}}}
\)}}.

See that the output is typed via subtyping.
Then, we have:\smallskip\\  
\centerline{\scalebox{0.76}{\(\begin{array}{c}
  \inference{
   \ogtEnvEntails{}{\gAuth}{\otT[\roleC]{\otTimes}\otT[\roleS]}
   \quad%
   \otJudge{}{\otEnvMap{\ocX}{\otT[\roleC]}}{
      \oeSel{\ocX}{\ocX}{\roleS}{\labAuth}{\text{\oCODE{"passwd"}}}{\oeEi[\roleC]}
   }
   \quad%
   \otJudge{}{\otEnvMap{\ocXi}{\otT[\roleS]}}{
     \oeBranchSingle{\ocVariantPairPat{\labAuth}{\UNL}{\ocX[1]}}{\ocXi}{\roleC}{\oeEi[\roleS]}
   }
  }{%
    \otJudge{}{}{
      \oeInit{\ocX,\ocXi}{\gAuth}{(\oeE[\roleC] \oePar \oeE[\roleS])}
    }
  }
\end{array}\)}}\smallskip\\
\end{EX}

\begin{restatable}[Subject reduction]{THM}{lemSubjectReduction}\label{lem:SubjectReduction}\thmstart
  If $\otJudge{\oeEnv}{\otEnv}{\oeE}$ and $\oeE \longrightarrow \oeEi$,
  then $\otJudge{\oeEnv}{\otEnv}{\oeEi}$.
\end{restatable}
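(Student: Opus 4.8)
The plan is to prove type preservation by induction on the derivation of $\oeE \longrightarrow \oeEi$, inverting the typing derivation of $\oeE$ in each reduction case. Before the main induction I would establish the usual auxiliary results. First, a \emph{substitution lemma}: if $\otJudge{\oeEnv}{\otEnv \otEnvComp \otEnvMap{\ocX}{\otS}}{\oeE}$ and $\otEnvEntails{\otEnv}{\ocC}{\otS}$, then $\otJudge{\oeEnv}{\otEnv}{\oeE\subst{\ocX}{\ocC}}$, together with its $n$-ary version needed for \inferrule{\iruleORedRec}. Second, weakening for both $\otEnv$ and $\oeEnv$. Third, a lemma that structural congruence preserves typing, i.e. $\oeE \equiv \oeEi$ implies $\otJudge{\oeEnv}{\otEnv}{\oeE}$ iff $\otJudge{\oeEnv}{\otEnv}{\oeEi}$, proved by induction on the rules of \Cref{fig:strcong}. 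Finally, I would prove inversion lemmas \emph{modulo subtyping} for the channel-vector judgement — that a value typed at an output/input choice, a record, a tuple, or a variant type decomposes into suitably-typed components — and show that $\ocUnfold{\cdot}$, record projection $\#$, and tuple projection $\NTH{\cdot}{i}$ respect typing under the equi-recursive subtyping of \Cref{def:subtyping}.

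With these in hand, most cases are routine. For \inferrule{\iruleORedInit} I would invert \inferrule{\iruleOTInit} to recover $\ogtEnvEntailsEx{\roleP[1],\ldots,\roleP[n]}{}{\gocaml}{\otT[1]{\otTimes}{\cdots}{\otTimes}\otT[n]}$ and $\otJudge{\oeEnv}{\otEnv \otEnvComp \otEnvMap{\ocX[i]}{\otT[i]}}{\oeE[i]}$, then invoke the realisability theorem (\Cref{col:SubjectReductionForGC}) to type the generated tuple $\ocTuple{\ocC[1],\elip,\ocC[n]}$ under a context $\{\otEnvMap{\ocS[i]}{\otChan{\otS[i]}}\}$ assigning io-types to the fresh names; substituting each $\ocC[i]$ for $\ocX[i]$ via the substitution lemma, recomposing with \inferrule{\iruleOTPar}, and closing the names with \inferrule{\iruleOTRes} yields the result. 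For \inferrule{\iruleORedMatch} I would invert \inferrule{\iruleOTMatch} and the typing of the variant value $\ocVariant{\mpLab[k]}{\ocPair{\ocC[1]}{\ocC[2]}}$ (through \inferrule{\iruleOTCVariant} and the width rule \inferrule{\iruleOSubVar}) to extract the payload and continuation types, then apply substitution twice. Case \inferrule{\iruleORedRec} uses the $n$-ary substitution lemma against the signature recorded in $\oeEnv$ by \inferrule{\iruleOTLetrec}/\inferrule{\iruleOTCall}, while \inferrule{\iruleORedCong} reduces to the congruence lemma and \inferrule{\iruleORedCtx} to a straightforward induction on the shape of the reduction context $\OC$.

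The hard part will be the communication rule \inferrule{\iruleORedComm}, where I must reconcile the output side's record projection $\ocC[\roleP]\#\roleQ\#\mpLab = \ocPair{\ocS_k}{\ocC[1]}$ with the input side's wrapped-name machinery $\ocC[\roleQ]\#\roleP = \ocWrapper{\ocS[i]}{\ocH[i]}{i\in I}$ and the produced value $\ocC[2]=\ocWrapperApp{\ocH[k]}{\ocCi}$. The subtlety is that the shared bare name $\ocS_k$ carries only the \emph{payload} type $\otS$, whereas the receiver's channel-vector type advertises an input channel $\otInp{\otVariant{\mpLab[i]}{\otPair{\otS[i]}{\otT[i]}}{i\in I}}$ of a full variant; bridging the two requires the wrapper rules \inferrule{\iruleOTCWrapInp} and \inferrule{\iruleOTCWrapper}, which pin the hole's type to $\otS$ and the wrapper's result type to the variant. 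I expect to need a dedicated lemma showing that whenever $\ocS_k$ is assigned an io-type $\otChan{\otS}$ in the context — so that \inferrule{\iruleOSubOutCh} types it as output on the sender and \inferrule{\iruleOSubInpCh} as input on the receiver — the application $\ocWrapperApp{\ocH[k]}{\ocCi}$ is typable at $\otVariant{\mpLab}{\otPair{\otS}{\otT[k]}}{}$ given $\otEnvEntails{\otEnv}{\ocCi}{\otS}$, and hence, by \inferrule{\iruleOSubVar} and \inferrule{\iruleOTCSub}, at the full external-choice variant demanded by \inferrule{\iruleOTRecv}. Discharging the index/payload agreement that makes the two occurrences of $\ocS_k$ share $\otChan{\otS}$ is precisely what the freshness-and-indexing discipline of the evaluation $\GCCV{\cdot}$ (\Cref{def:global:evaluation}) guarantees, so this is where the construction of \S\ref{sec:evalglobal} is genuinely used.
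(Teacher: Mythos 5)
Your proposal is correct and follows essentially the same route as the paper's proof: induction on the reduction derivation, supported by a substitution lemma, inversion lemmas stated modulo subtyping, a congruence-preservation lemma, and the realisability theorem (\Cref{col:SubjectReductionForGC}) for the \inferrule{\iruleORedInit} case. The ``dedicated lemma'' you anticipate for \inferrule{\iruleORedComm} is exactly the wrapper-application clause of the paper's substitution lemma (if $\otEnvEntails{\otEnv}{\ocH}{\otWrapper{\otT}{\otTi}}$ and $\otEnvEntails{\otEnv}{\ocC}{\otTi}$ then $\otEnvEntails{\otEnv}{\ocWrapperApp{\ocH}{\ocC}}{\otT}$), combined with the inversion lemma's guarantee that sender and receiver see the shared bare name at the same io-type, so no genuinely new machinery is needed beyond what you list.
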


\section{Proofs for Basic Properties of \OCAMLMPST}
\subsection{Substitution Lemma and other lemmas}

\begin{LEM}[Substitution lemma]\rm
  Followings hold:
  \begin{enumerate}
  \item
    (a) If $\otEnvEntails{\otEnv \otEnvComp \otEnvMap{\ocX}{\otTi}}{\ocC}{\otT}$
    and $\otEnvEntails{\otEnv}{\ocCi}{\otTi}$, then
    $\otEnvEntails{\otEnv}{\ocC \subst{\ocX}{\ocCi}}{\otT}$.
    (b) Moreover, if $\otEnvEntails{\otEnv \otEnvComp \otEnvMap{\ocX}{\otTi}}{\ocH}{\otWrapper{\otT}{\otT[0]}}$
    and $\otEnvEntails{\otEnv}{\ocCi}{\otTi}$, then
    $\otEnvEntails{\otEnv}{\ocH \subst{\ocX}{\ocCi}}{\otWrapper{\otT}{\otT[0]}}$.
  \item
    If $\otEnvEntails{\otEnv}{\ocH}{\otWrapper{\otT}{\otTi}}$
    and $\otEnvEntails{\otEnv}{\ocC}{\otTi}$, then
    $\otEnvEntails{\otEnv}{\ocWrapperApp{\ocH}{\ocC}}{\otT}$.
  \item If $\otJudge{\oeEnv}{\otEnv \otEnvComp \otEnvMap{\ocX}{\otT}}{\oeE}$
    and $\otEnvEntails{\otEnv}{\ocC}{\otT}$, then
    $\otJudge{\oeEnv}{\otEnv}{\oeE\subst{\ocX}{\ocC}}$.
  \end{enumerate}
\end{LEM}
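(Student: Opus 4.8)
The plan is to prove the three items in the order 1(a)/1(b), then 2, then 3, since the later items depend on the earlier ones. I would prove 1(a) and 1(b) by a \emph{simultaneous} induction on the height of the channel-vector and wrapper typing derivations of \Cref{fig:typingforchvec}. The two statements are genuinely interdependent: a channel vector embeds wrappers through \inferrule{\iruleOTCWrapInp}, while a wrapper is typed via an embedded channel vector through \inferrule{\iruleOTCWrapper}. In each such case the premise that I appeal to is a strict subderivation, so the combined induction on derivation height is well founded. I use induction on the derivation rather than on the term because \inferrule{\iruleOTCSub} leaves the subject unchanged and \inferrule{\iruleOTCWrapper} relates $\ocH$ to the larger-looking $\ocWrapperApp{\ocH}{\ocY}$.

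For 1(a) I would case on the last rule of $\otEnvEntails{\otEnv \otEnvComp \otEnvMap{\ocX}{\otTi}}{\ocC}{\otT}$. The base cases are immediate: in \inferrule{\iruleOTCName} the subject is a name, disjoint from the variable $\ocX$, so the substitution is the identity; in \inferrule{\iruleOTCVar} I split on whether the looked-up variable equals $\ocX$, in which case $\otT = \otTi$ and $\ocX\subst{\ocX}{\ocCi} = \ocCi$ is typed by the hypothesis $\otEnvEntails{\otEnv}{\ocCi}{\otTi}$, and otherwise the binding survives in $\otEnv$; \inferrule{\iruleOTCUnit} is trivial. The rules \inferrule{\iruleOTCTuple}, \inferrule{\iruleOTCVariant}, \inferrule{\iruleOTCRecord} are congruences dispatched by the induction hypothesis on subterms, and \inferrule{\iruleOTCSub} re-applies the same subtyping after the induction hypothesis. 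For \inferrule{\iruleOTCRec} I alpha-rename the bound variable away from $\ocX$ and from $\fv{\ocCi}\cup\fn{\ocCi}$ so that substitution commutes with the binder, then apply the induction hypothesis. The one case exercising the mutual structure is \inferrule{\iruleOTCWrapInp}, whose wrapper premises are discharged by 1(b). Item 1(b) has a single applicable rule, \inferrule{\iruleOTCWrapper}: its premise types $\ocC = \ocWrapperApp{\ocH}{\ocY}$ under a hole variable $\ocY$, which I take fresh (so $\ocY \neq \ocX$ and $\ocY \notin \fv{\ocCi}$); applying 1(a) to this strictly smaller derivation and using $\ocWrapperApp{\ocH}{\ocY}\subst{\ocX}{\ocCi} = \ocWrapperApp{\ocH\subst{\ocX}{\ocCi}}{\ocY}$ together with $\ocY \notin \fv{\ocH\subst{\ocX}{\ocCi}}$ lets me re-apply \inferrule{\iruleOTCWrapper}.

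Item 2 I would obtain by inversion plus 1(a). Since \inferrule{\iruleOTCWrapper} is the only rule concluding a wrapper type, the hypothesis $\otEnvEntails{\otEnv}{\ocH}{\otWrapper{\otT}{\otTi}}$ yields a fresh hole variable $\ocY$ with $\ocY \notin \fv{\ocH}$ and $\otEnvEntails{\otEnv \otEnvComp \otEnvMap{\ocY}{\otTi}}{\ocWrapperApp{\ocH}{\ocY}}{\otT}$. Substituting $\ocC$ for $\ocY$ via 1(a) gives $\otEnvEntails{\otEnv}{\ocWrapperApp{\ocH}{\ocY}\subst{\ocY}{\ocC}}{\otT}$, and since $\ocY$ occurs in $\ocWrapperApp{\ocH}{\ocY}$ only in the hole position we have $\ocWrapperApp{\ocH}{\ocY}\subst{\ocY}{\ocC} = \ocWrapperApp{\ocH}{\ocC}$, which is the claim; this item is exactly the reading of a wrapper type as a typed function from $\otTi$ to $\otT$.

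Item 3 is then a routine induction on the derivation of $\otJudge{\oeEnv}{\otEnv \otEnvComp \otEnvMap{\ocX}{\otT}}{\oeE}$ over the rules of \Cref{fig:typingforexpression}, appealing to 1(a) wherever a channel-vector or payload subterm is typed. In \inferrule{\iruleOTSel}, \inferrule{\iruleOTRecv}, \inferrule{\iruleOTMatch}, \inferrule{\iruleOTCall} the channel-vector positions are handled by 1(a) and the continuations by the induction hypothesis; \inferrule{\iruleOTPar} and \inferrule{\iruleOTRes} are congruences; \inferrule{\iruleOTNil} is immediate; and in the binding rules \inferrule{\iruleOTInit} and \inferrule{\iruleOTLetrec} I alpha-rename the bound value and function variables away from $\ocX$ and the free names and variables of $\ocC$, after which substitution commutes with the binders and the induction hypothesis applies to each subexpression. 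I expect the only genuine subtlety to be the wrapper bookkeeping in 1(b) and 2: keeping the hole variable $\ocY$ distinct from $\ocX$ and from $\fv{\ocC}\cup\fv{\ocCi}$, and checking the two commutation equalities $\ocWrapperApp{\ocH}{\ocY}\subst{\ocX}{\ocCi} = \ocWrapperApp{\ocH\subst{\ocX}{\ocCi}}{\ocY}$ and $\ocWrapperApp{\ocH}{\ocY}\subst{\ocY}{\ocC} = \ocWrapperApp{\ocH}{\ocC}$, on which the mutual induction and item 2 both hinge; everything else is the standard congruence-and-alpha-renaming bookkeeping.
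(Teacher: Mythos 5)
Your proposal is correct and follows essentially the same route as the paper's proof: a mutual induction on the typing derivations for 1(a)/1(b) (with \inferrule{\iruleOTCWrapInp} and \inferrule{\iruleOTCWrapper} as the interlocking cases, using the Barendregt convention to keep the hole variable apart from $\ocX$ and $\fv{\ocCi}$), item 2 by inversion of \inferrule{\iruleOTCWrapper} plus 1(a) and the commutation $\ocWrapperApp{\ocH}{\ocY}\subst{\ocY}{\ocC} = \ocWrapperApp{\ocH}{\ocC}$, and item 3 by induction on the expression typing derivation appealing to 1(a) at channel-vector positions. If anything, you are slightly more thorough than the paper, which omits an explicit \inferrule{\iruleOTCRec} case in 1(a) that you handle by alpha-renaming the recursion binder.
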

\begin{proof}
Follows.
\begin{enumerate}
\item We proceed by mutual induction on the derivation trees of
  $\otEnvEntails{\otEnv}{\ocC}{\otT}$ and $\otEnvEntails{\otEnv}{\ocH}{\otWrapper{\otT}{\otTi}}$.
  We start from (a).\\
\CASE \inferrule{\iruleOTCUnit}. $\ocC = \ocUnit$. Trivial.\\
\CASE \inferrule{\iruleOTCVar}. $\ocC = \ocY$.
  If $\ocX = \ocY$,
  we have $\ocY \subst{\ocX}{\ocCi} = \ocCi$,
  and by rule \inferrule{\iruleOTCVar},
  we have $\otT = \otTi$.
  By assumption, we get $\otEnvEntails{\otEnv}{\ocCi}{\otT}$.
  If $\ocX \neq \ocY$, since $\ocY \subst{\ocX}{\ocCi} = \ocY$, it trivially holds.\\
\CASE \inferrule{\iruleOTCName}. $\ocC = \ocS$.
  We have $\ocS \subst{\ocX}{\ocCi} = \ocS$ and it trivially holds.\\
\CASE \inferrule{\iruleOTCTuple}. $\ocC = (\ocC[1],\elip,\ocC[n])$.
  For all $i \in \{1,\elip,n\}$, exists $\otT[i]$ such that $T = \otT[i]{\otTimes}\elip{\otTimes}\otT[n]$ and
  $\otEnvEntails{\otEnv \otEnvComp \otEnvMap{\ocX}{\otTi}}{(\ocC[1],\elip,\ocC[n])}{\otT[1]{\otTimes}{\elip}{\otTimes}\otT[n]}$,
  and we have  $\otEnvEntails{\otEnv \otEnvComp \otEnvMap{\ocX}{\otTi}}{\ocC[i]}{\otT[i]}$ for $i \in \{1,\elip,n\}$.
  By induction hypothesis, we have
    $\otEnvEntails{\otEnv}{\ocC[i]\subst{\ocX}{\ocCi}}{\otT[i]} \quad (i \in \{1,\elip,n\})$.
  Then, by applying \inferrule{\iruleOTCTuple}, we get
    $\otEnvEntails{\otEnv}{(\ocC[1],\elip,\ocC[n])\subst{\ocX}{\ocCi}}{\otT[1]{\otTimes}{\elip}{\otTimes}\otT[2]}$.\\
\CASE \inferrule{\iruleOTCRecord} and \inferrule{\iruleOTCVariant}.
  $\ocC = \ocRecord{\ocL[i]}{\ocC[i]}{i \in I}$ and $\ocC = \ocVariant{\ocL}{\ocCi}{}$. Similar.\\
\CASE \inferrule{\iruleOTCWrapInp}. $\ocC = \ocWrapper{\ocS[i]}{\ocH[i]}{i \in I}$.
  From rule \inferrule{\iruleOTCWrapInp}, $T = \otInp{\otTii}$ for some $\otTii$,
    and for each $i \in I$, there exists $\otT[i]$ such that
    $\otEnvEntails{\otEnv \otEnvComp \otEnvMap{\ocX}{\otTi}}{\ocS[i]}{\otT[i]}$ and
    $\otEnvEntails{\otEnv \otEnvComp \otEnvMap{\ocX}{\otTi}}{\ocH[i]}{\otWrapper{\otTii}{\otT[i]}}$.
  By induction hypothesis, we have
    $\otEnvEntails{\otEnv}{\ocS[i]\subst{\ocX}{\ocCi}}{\otT[i]}$ and
    $\otEnvEntails{\otEnv}{\ocH[i]\subst{\ocX}{\ocCi}}{\otWrapper{\otTii}{\otT[i]}}$ for each $i \in I$, and
  by applying \inferrule{\iruleOTCWrapInp}, it follows $\otEnvEntails{\otEnv}{\ocWrapper{\ocS[i]}{\ocH[i]}{i \in I}\subst{\ocX}{\ocCi}}{\otInp{\otTii}}$.\\
\CASE \inferrule{\iruleOTCSub}.
  We have $\otS$ such that $\otS \otSub \otT$ and $\otEnvEntails{\otEnv \otEnvComp \otEnvMap{\ocX}{\otTi}}{\ocC}{\otS}$.
  By induction hypothesis, $\otEnvEntails{\otEnv}{\ocC\subst{\ocX}{\ocCi}}{\otS}$.
  Again, by applying \inferrule{\iruleOTCSub}, we get $\otEnvEntails{\otEnv}{\ocC\subst{\ocX}{\ocCi}}{\otT}$.\\
  {\bfseries For} (b), we have $\otEnvEntails{\otEnv}{\ocH}{\otWrapper{\otT}{\otT[0]}}$ and the only rule is \inferrule{\iruleOTCWrapper}.
  By the rule, we have $\ocC$, $\ocY$ such that
    $\ocY \notin \fn{\ocH}$,
    $\ocC=\ocWrapperApp{\ocH}{\ocY}$ and
    $\otEnvEntails{\otEnv \otEnvComp \otEnvMap{\ocY}{\otT[0]}}{\ocC}{\otT}$.
  Note that, by Barendregt convention, we can assume $\ocX \neq \ocY$ and $\ocY \notin \fn{\ocCi}$.
  By induction hypothesis,
    $\otEnvEntails{\otEnv \otEnvComp \otEnvMap{\ocY}{\otT[0]}}{\ocC\subst{\ocX}{\ocCi}}{\otT}$,
  Furthermore, we see $\ocC\subst{\ocX}{\ocCi}=\ocWrapperApp{\ocH\subst{\ocX}{\ocCi}}{\ocY}$
  and, $\ocY \notin \fn{\ocH\subst{\ocX}{\ocCi}}$ (since $\ocY \notin \fn{\ocH}$).
  By \inferrule{\iruleOTCWrapper},
  $\otEnvEntails{\otEnv}{\ocH\subst{\ocX}{\ocCi}}{\otWrapper{\otT}{\otT[0]}}$.

\item From the derivation of $\otEnvEntails{\otEnv}{\ocH}{\otWrapper{\otT}{\otTi}}$, for some $\ocX$ and $\ocCi$ we have
  $\ocCi=\ocWrapperApp{\ocH}{\ocX}$ such that $\otEnvEntails{\otEnv \otEnvComp \otEnvMap{\ocX}{\otTi}}{\ocCi}{\otT}$.
  By (1), we have $\otEnvEntails{\otEnv}{\ocCi\subst{\ocX}{\ocC}}{\otT}$
  and since $\ocWrapperApp{\ocH}{\ocC}=\ocCi\subst{\ocX}{\ocC}$,
  we get $\otEnvEntails{\otEnv}{\ocWrapperApp{\ocH}{\ocC}}{\otT}$.

\item By induction on the derivation of $\otJudge{\oeEnv}{\otEnv}{\oeE}$.\\
\CASE \inferrule{\iruleOTNil} Trivial.\\
\CASE \inferrule{\iruleOTLetrec}
  We have
  $\oeE = \oeLetrec{\oeRecVar}{\otEnvMap{\ocX[1]}{\otT[1]},\ldots,\otEnvMap{\ocX[n]}{\otT[n]}}{\oeE[1]}{\oeE[2]}$
  and we assume $\ocX \notin \setenum{\ocX[i]}_{i \in 1 \elip n}$.
  By induction hypothesis,
  $\otJudge
    {\oeEnv \oeEnvComp \oeEnvMap{\oeRecVar}{\otT[1],\ldots,\otT[n]}}
    {\otEnv \otEnvComp \otEnvMap{\ocX[1]}{\otT[1]} \otEnvComp \ldots \otEnvMap{\ocX[n]}{\otT[n]}}
    {\oeE[1]\subst{\ocX}{\ocC}}$ and
  $\otJudge
    {\oeEnv \oeEnvComp \oeEnvMap{\oeRecVar}{\otT[1],\ldots,\otT[n]}}
    {\otEnv}
    {\oeE[2]\subst{\ocX}{\ocC}}$,
  and by applying \inferrule{\iruleOTLetrec}, we get\linebreak
  $\otJudge{\oeEnv}{\otEnv}
  {\bigl(\oeLetrec
    {\oeRecVar}
    {\otEnvMap{\ocX[1]}{\otT[1]},\ldots,\otEnvMap{\ocX[n]}{\otT[n]}}
    {\oeE[1]}
    {\oeE[2]}
    \bigr)\subst{\ocX}{\ocC}}$.
  \\
\CASE \inferrule{\iruleOTX}.
  $\oeE = \oeCall{\oeRecVar}{\ocC[1],\ldots,\ocC[n]}$ and
    $\otEnvEntails{\otEnv}{\ocC[i]}{\otTi}$ for $i \in \{1, \ldots, n\}$.
  By (1), we have $\otEnvEntails{\otEnv}{\ocC[i]\subst{\ocX}{\ocC}}{\otTi}$ and
  By \inferrule{\iruleOTX}, it follows that
    $\otJudge{\oeEnv}{\otEnv}{\bigl(\oeCall{\oeRecVar}{\ocC[1],\ldots,\ocC[n]}\bigr)\subst{\ocX}{\ocC}}$.\\
\CASE \inferrule{\iruleOTRecv}. We have
    $\otJudge{\oeEnv}{\otEnv}{\oeBranchSingle{\ocY}{\ocCi}{\roleQ}{\oeE}}$.
  By assumption and (1), we have
    $\otEnvEntails{\otEnv}{\ocCi\subst{\ocX}{\ocC}}{%
          \otExtSum{\roleQ}{i \in I}{\otExtChoice{\mpLab[i]}{\otT[i]}{\otTi[i]}}%
    }$, and by assumption and induction hypothesis, we have
    $\otJudge{\oeEnv}{%
          \otEnv \otEnvComp%
          \otEnvMap{\ocY}{\otVar{\otExtChoice{\mpLab[i]}{\otT[i]}{\otTi[i]}}{i \in I}}%
        }{%
          \oeE\subst{\ocX}{\ocC}%
    }$.
  By applying \inferrule{\iruleOTRecv}, we get
    $\otJudge{\oeEnv}{\otEnv}{\bigl(\oeBranchSingle{\ocY}{\ocCi}{\roleQ}{\oeE}\bigr)\subst{\ocX}{\ocC}}$.\\
\CASE \inferrule{\iruleOTMatch}. We have
  $e = \oeMatch{\ocCi}{\mpLab[i]}{\ocX[i]}{\ocY[i]}{\oeE[i]}{i \in I}$.
  By assumption and (1), we have
  $\otEnvEntails{\otEnv}{\ocCi\subst{\ocX}{\ocC}}{\otVar{\otExtChoice{\mpLab[i]}{\otT[i]}{\otTi[i]}}{i \in I}}$.
  Furthermore, by assumption and induction hypothesis, for each $i \in I$, we have
        $\otJudge{\oeEnv}{%
          \otEnv \otEnvComp%
          \otEnvMap{\ocY[i]}{\otT[i]} \otEnvComp%
          \otEnvMap{\ocX[i]}{\otTi[i]}%
        }{\oeE[i]\subst{\ocX}{\ocC}}$.
  By applying \inferrule{\iruleOTMatch}, we get\\
    $\otJudge{\oeEnv}{\otEnv}
      {\bigl(\oeMatch{\ocCi}{\mpLab[i]}{\ocX[i]}{\ocY[i]}{\oeE[i]}{i \in I}\bigr)\subst{\ocX}{\ocC}}$.\\
\CASE \inferrule{\iruleOTSel}.
    $\oeE = \oeSel{\ocX}{\ocC[0]}{\roleQ}{\mpLab}{\ocC[1]}{\oeE}$.
  By assumption and (1), 
    $\otEnvEntails{\otEnv}{\ocC[0]\subst{\ocX}{\ocC}}{
        \otIntSum{\roleQ}{}{\otIntChoice{\mpLab}{\otT}{\otTi}}}$ and
    $\otEnvEntails{\otEnv}{\ocC[1]\subst{\ocX}{\ocC}}{\otT}$ hold.
  By assumption and induction hypothesis,
    $\otJudge{\oeEnv}{\otEnv \otEnvComp \otEnvMap{\ocY}{\otTi}}{\oeE\subst{\ocX}{\ocC}}$.
  By applying \inferrule{\iruleOTSel}, we get
    $\otJudge{\oeEnv}{\otEnv}{
        \bigl(\oeSel{\ocX}{\ocC[0]}{\roleQ}{\mpLab}{\ocC[1]}{\oeE}\bigr)\subst{\ocX}{\ocC}}$.
  \\
\CASE \inferrule{\iruleOTPar}. $\oeE = \oeE[1] \oePar \oeE[2]$.
  By induction hypothesis, we get $\otJudge{\oeEnv}{\otEnv}{\oeE[i]\subst{\ocX}{\ocC}}$ for $i \in \{1,2\}$.
  By applying \inferrule{\iruleOTPar}, we get
    $\otJudge{\oeEnv}{\otEnv}{\bigl(\oeE[1] \oePar \oeE[2]\bigr)\subst{\ocX}{\ocC}}$.\\
\CASE \inferrule{\iruleOTInit}. We have
  $\oeE=\oeInit{\ocX[1],\ldots,\ocX[n]}{\gocaml}{(\oeE[1] \oePar \cdots \oePar \oeE[n])}$.
  By induction hypothesis, we get\\
    $\otJudge{\oeEnv}{\otEnv \otEnvComp \otEnvMap{\ocX[i]}{\otT[i]}}{%
        \oeE[i]\subst{\ocX}{\ocC}}$
  for each
    $i \in \{1,\ldots,n\}$
  (note that $\ocX \notin \setenum{\ocX[i]}_{i \in \{1,\ldots,n\}}$).\\
  By applying \inferrule{\iruleOTInit}, we get
      $\otJudge{\oeEnv}{\otEnv}{
    \oeInit{\ocX[1],\ldots,\ocX[n]}{\gocaml}{(\oeE[1] \oePar \cdots \oePar \oeE[n])\subst{\ocX}{\ocC}}}
  $.
  \\
\CASE \inferrule{\iruleOTRes}.
  We assume $\ocS\notin\fn{c}$.
  By induction hypothesis, 
  $\otJudge{\oeEnv}{\otEnv \otEnvComp \otEnvMap{\ocS}{\otChan{\otT}}}{\oeE\subst{\ocX}{\ocC}}$.
  By applying \inferrule{\iruleOTRes}, we get
  $\otJudge{\oeEnv}{\oeEnv}{\bigl(\oeRes{\otEnvMap{\ocS}{\otChan{\otT}}}\oeE\bigr)\subst{\ocX}{\ocC}}$.
\end{enumerate}
\end{proof}

\begin{LEM}[Inversion]\rm
  Followings hold:
  \begin{enumerate}
  \item If $\otJudge{\oeEnv}{\otEnv}{\oeSend{\ocX}{\ocD}{\ocCi}{\oeE}}$
    and
    $\ocD = \ocPair{\ocS[j]}{\ocC[j]}$
    then,
    $\otEnvEntails{\otEnv}{\ocC[j]}{\otT}$ and
    $\otJudge{\oeEnv}{\otEnv \otEnvComp \otEnvMap{\ocX}{\otT}}{\oeE}$,
    $\otEnv = \otEnvi \otEnvComp \otEnvMap{\ocS[j]}{\otChan{\otTi}}$
    where $j \in I$, and
    $\otEnvEntails{\otEnv}{\ocCi}{\otTii}$ where $\otTii \otSub \otTi$.
  \item If $\otJudge{\oeEnv}{\otEnv}{\oeRecv{\ocX}{\ocD}{\oeE}}$
    and $\ocD = \ocWrapper{\ocS[i]}{\ocH[i]}{i \in I}$
    then,
    $\otEnv = \otEnvi \otEnvComp \{\otEnvMap{\ocS[i]}{\otChan{\otS[i]}}\}_{i \in I}$,
    $\otEnvEntails{\otEnv}{\ocH[i]}{\otWrapper{\otT}{\otT[i]}}$ and $\otS[i] \otSub \otT[i]$ for all $i \in I$, and
    $\otJudge{\oeEnv}{\otEnv \otEnvComp \otEnvMap{\ocX}{\otT}}{\oeE}$.
  \item If $\otJudge{\oeEnv}{\otEnv}{\oeMatch{\ocC}{\mpLab{i}}{\ocX[i]}{\ocY[i]}{\oeE[i]}{i \in I}}$,
    $\ocC=\ocVariant{\mpLab[j]}{\ocPair{\ocC[j]}{\ocCi[j]}}{}$ and $j \in I$,
    then
    for all $i \in I$, $\otJudge{\oeEnv}{\otEnv \otEnvComp \otEnvMap{\ocX[i]}{\otT[i]} \otEnvComp \otEnvMap{\ocY[i]}{\otTi[i]}}{\oeE[i]}$, 
    $\otEnvEntails{\otEnv}{\ocC[j]}{\otS[j]}$, 
    $\otEnvEntails{\otEnv}{\ocCi[j]}{\otSi[j]}$, 
    $\otS[j] \otSub \otT[j]$ and
    $\otSi[j] \otSub \otTi[j]$.
  \item If $\otJudge{\oeEnv}{\otEnv}{\oeLetrec{\oeRecVar}{\tilde{\ocX}}{\oeE[1]}{\oeE[2]}}$, then
    $\otJudge{\oeEnv \oeEnvComp \oeEnvMap{\oeRecVar}{\otT[1],\ldots,\otT[n]}} {\otEnv \otEnvComp \otEnvMap{x_1}{\otT[1]} \otEnvComp \ldots \otEnvComp\otEnvMap{x_n}{\otT[n]}} {\oeE[1]}$, and
    $\otJudge{\oeEnv \oeEnvComp \oeEnvMap{\oeRecVar}{\otT[1],\ldots,\otT[n]}}{\otEnv}{\oeE[2]}$.
  \item If $\otJudge{\oeEnv}{\otEnv}{\oeE[1] \oePar \oeE[2]}$, then
    $\otJudge{\oeEnv}{\otEnv}{\oeE[1]}$ and $\otJudge{\oeEnv}{\otEnv}{\oeE[2]}$.
  \item If $\otJudge{\oeEnv}{\otEnv}{\oeCall{\oeRecVar}{\ocC[1],\ldots,\ocC[n]}}$, then
    $\oeEnv = \oeEnvi \oeEnvComp \oeEnvMap{X}{\otT[1],\ldots,\otT[n]}$, 
    $\forall i \in 1..n$, $\otEnvEntails{\otEnv}{\ocC[i]}{\otS[i]}$ and
    $\otS[i] \otSub \otT[i]$.
  \item
    If
    $\otJudge{\oeEnv}{\otEnv}{\oeInit{\ocX[1],\ldots,\ocX[n]}{\gocaml}{(\oeE[1] \oePar \cdots \oePar \oeE[n])}}$,
    then
    $\otRolesSet{\gocaml} = \{\roleP[1],\ldots,\roleP[n]\}$,
    $\ogtEnvEntails{}{\gocaml}{\otT[1]\otTimes\cdots\otTimes\otT[n]}$ and
    $\otJudge{\oeEnv}{\otEnv \otEnvComp \otEnvMap{\ocX[i]}{\otT[i]}}{\oeE[i]}$.
  \item
    If $\otJudge{\oeEnv}{\otEnv}{\oeRes{\otEnvMap{\ocS}{\otChan{\otT}}}\oeE}$
    then
    $\otJudge{\oeEnv}{\otEnv \otEnvComp \otEnvMap{\ocS}{\otChan{\otT}}}{\oeE}$.
  \end{enumerate}
\end{LEM}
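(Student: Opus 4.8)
The plan is to prove the Inversion Lemma by exploiting the fact that the expression typing rules of \Cref{fig:typingforexpression} are \emph{syntax-directed}: each top-level term former is the conclusion of exactly one rule. Hence inverting $\otJudge{\oeEnv}{\otEnv}{\oeE}$ is, at the level of expressions, nothing more than reading off the premises of the unique applicable rule. This immediately settles the `structural' cases (4)--(8) — those governed by \inferrule{\iruleOTLetrec}, \inferrule{\iruleOTPar}, \inferrule{\iruleOTCall}, \inferrule{\iruleOTInit}, \inferrule{\iruleOTRes} — where the asserted conclusion is essentially the premise list of that rule (taking the witnessing subtypes $\otS[i]\otSub\otT[i]$ in case (6) to be reflexivities, since $\otSub$ is reflexive). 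The genuine work concentrates in cases (1)--(3), where the relevant premise is a channel-vector judgement $\otEnvEntails{\otEnv}{\ocC}{\otT}$ about a \emph{structured} value produced by record/tuple projection, and where the non-syntax-directed rules \inferrule{\iruleOTCSub} (subsumption) and \inferrule{\iruleOTCRec} (recursion) obstruct a direct read-off of the shape of $\otT$.

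The first step I would take is to isolate this obstruction into a battery of \emph{generation lemmas} for the channel-vector judgement, one per value former, each pushing \inferrule{\iruleOTCSub} and \inferrule{\iruleOTCRec} below the structural rules. Concretely: for names, $\otEnvEntails{\otEnv}{\ocS}{\otU}$ forces $\otEnvMap{\ocS}{\otChan{\otT}}\in\otEnv$ with $\otChan{\otT}\otSub\otU$ (base case \inferrule{\iruleOTCName}); for tuples, $\otEnvEntails{\otEnv}{\ocTuple{\ocC[1],\elip,\ocC[n]}}{\otU}$ yields component types with $\otEnvEntails{\otEnv}{\ocC[i]}{\otT[i]}$ and $\otT[1]{\otTimes}{\elipc}{\otTimes}\otT[n]\otSub\otU$; for records and variants, analogous statements relating the value's fields/tag to those of $\otU$ up to $\otSub$; and for wrapped names $\ocWrapper{\ocS[i]}{\ocH[i]}{i\in I}$ a statement reproducing the premises of \inferrule{\iruleOTCWrapInp}. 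Each is proved by induction on the typing derivation, the only nontrivial step being \inferrule{\iruleOTCSub}, discharged using admissibility of transitivity of $\otSub$ together with an analysis of which coinductive subtyping rules — \inferrule{\iruleOSubOutCh}, \inferrule{\iruleOSubOut}, \inferrule{\iruleOSubInpCh}, \inferrule{\iruleOSubInp}, \inferrule{\iruleOSubVar}, \inferrule{\iruleOSubRcdDepth}, \inferrule{\iruleOSubTuple}, modulo the equi-recursive \inferrule{\iruleOSubRecL}/\inferrule{\iruleOSubRecR} — can relate a given structured type to another.

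Second, since record and tuple projection ($\ocC\#\ocLab$ and $\NTH{\ocC}{i}$) bake in the unfolding operator $\ocUnfold{\cdot}$, I would prove a small \emph{unfolding lemma} stating $\otEnvEntails{\otEnv}{\ocC}{\otT}$ iff $\otEnvEntails{\otEnv}{\ocUnfoldOne{\ocC}}{\otT}$ (and hence for $\ocUnfold{\cdot}$), obtained directly from \inferrule{\iruleOTCRec} and the already established substitution lemma. Combining this with the record- and tuple-generation lemmas gives a \emph{projection-respects-typing} lemma: from $\otEnvEntails{\otEnv}{\ocC}{\otIntSumSmall{\roleQ}{}{\otIntChoice{\mpLab}{\otT}{\otTi}}}$ and $\ocC\#\roleQ\#\mpLab=\ocPair{\ocS[j]}{\ocC[j]}$ one deduces that the pair is typed by the inner $\otOut{\otT}{\otTimes}\otTi$; by the tuple- and name-generation lemmas $\ocS[j]$ is then declared in $\otEnv$ with an io-type $\otChan{\cdot}$ whose payload is related by \inferrule{\iruleOSubOutCh}/\inferrule{\iruleOSubOut} to $\otT$ and dominates the type of $\ocCi$, while $\ocC[j]$ carries the continuation type $\otTi$. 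This is exactly what case (1) asserts; cases (2)--(3) are the dual argument using $\otInp{}$, the variant-generation lemma, and \inferrule{\iruleOTCVariant}/\inferrule{\iruleOSubVar} to match the received tag against the branch patterns.

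The step I expect to be the main obstacle is the generation argument in the presence of the coinductive, equi-recursive subtyping relation: reading the head shape of $\otT$ off the shape of $\ocC$ requires knowing that subsumption can always be permuted below the structural rules and that recursion unfolds without changing the type. I would support this with a separate \emph{subtyping inversion} sub-lemma collecting facts such as ``$\otChan{\otTi}\otSub\otOut{\otS}$ entails $\otS\otSub\otTi$'' and ``$\otRecord{\ocLab[i]}{\otS[i]}{i\in I}\otSub\otU$ entails that $\ocUnfold{\cdot}$ brings $\otU$ to a record head containing at least the fields $\setenum{\ocLab[i]}$, pointwise above the $\otS[i]$''. Each is proved by analysing a finite unfolding of the coinductive derivation to a non-$\mu$ head form, which is well defined because recursive types are guarded. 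Once these are in hand, the remaining bookkeeping in cases (1)--(3) is routine, and cases (4)--(8) are immediate from syntax-directedness.
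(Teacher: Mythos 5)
The paper's own proof of this lemma is the single word ``Standard'': it is asserted to follow by inspecting the last rule of the syntax-directed expression typing derivation, and your proposal is a correct, more detailed rendering of exactly that argument rather than a different route. In particular, your isolation of the genuine content in cases (1)--(3) --- generation lemmas that permute \inferrule{\iruleOTCSub} and \inferrule{\iruleOTCRec} below the structural channel-vector rules, preservation of typing under $\ocUnfold{\cdot}$ (only the unfold direction is needed, so your stronger ``iff'' claim is harmless), and subtyping-inversion facts such as $\otChan{\otT} \otSub \otOut{\otS}$ implying $\otS \otSub \otT$ via admissible transitivity --- is precisely what the paper's ``Standard'' leaves implicit, while cases (4)--(8) are, as you say, immediate read-offs of the unique applicable rule, with reflexivity of $\otSub$ supplying the witnesses in case (6).
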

\begin{proof}
  Standard.
\end{proof}

\begin{LEM}[Type preservation for $\equiv$]\rm
  If $\otJudge{\oeEnv}{\otEnv}{\oeE}$ and $\oeE \equiv \oeEi$,
  then $\otJudge{\oeEnv}{\otEnv}{\oeEi}$.
\end{LEM}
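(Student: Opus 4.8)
The plan is to prove the lemma by induction on the derivation of $\oeE \equiv \oeEi$, treating $\equiv$ as the least congruence generated by the axioms of \Cref{fig:strcong}. Besides the axiom instances this means I must also cover reflexivity (immediate), transitivity (compose the two instances of the induction hypothesis), symmetry (which I obtain for free by establishing each axiom in both directions), and the contextual closure (by appealing to the induction hypothesis together with the typing rule governing the enclosing construct). The two workhorses for the individual axioms are the Inversion Lemma already stated above, which decomposes the typing of a structured term into the typings of its immediate constituents, and the typing rules of \Cref{fig:typingforexpression}, which reassemble a congruent term from the same constituents under the same $\oeEnv,\otEnv$.

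First I would dispatch the purely structural axioms for parallel composition. For commutativity and associativity of $\oePar$, inversion of \inferrule{\iruleOTPar} exposes $\otJudge{\oeEnv}{\otEnv}{\oeE[i]}$ for each component under the \emph{same} $\oeEnv,\otEnv$, and re-applying \inferrule{\iruleOTPar} in the reassociated or commuted order gives the result. For $\oeE \oePar \oeNil \equiv \oeE$, the forward direction is inversion of \inferrule{\iruleOTPar}, while the backward direction pairs the given typing with the always-derivable $\otJudge{\oeEnv}{\otEnv}{\oeNil}$ (\inferrule{\iruleOTNil}) via \inferrule{\iruleOTPar}. The remaining axioms all involve binders ($\oeRes{\ocS}{\cdot}$ or $\oeLetrecAbbrev{\oeD}{\cdot}$) together with their scope side conditions, so for these I would first state and prove by routine induction on typing derivations three standard auxiliary facts: \textbf{weakening} (a typing survives the addition of a fresh $\otEnvMap{\ocS}{\otChan{\otT}}$ to $\otEnv$, or a fresh $\oeEnvMap{\oeRecVar}{\otT[1],\ldots,\otT[n]}$ to $\oeEnv$), \textbf{strengthening} (such a binding may be dropped whenever the corresponding name, respectively function variable, does not occur free in the subject), and \textbf{exchange} (the order of entries in $\otEnv$ and in $\oeEnv$ is irrelevant).

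With these in hand, the binder axioms follow a uniform recipe. Take scope extrusion $\oeRes{\ocS}{(\oeE \oePar \oeEi)} \equiv \oeE \oePar \oeRes{\ocS}{\oeEi}$ with $\ocS \notin \fn{\oeE}$: inverting \inferrule{\iruleOTRes} then \inferrule{\iruleOTPar} yields typings of $\oeE$ and $\oeEi$ under $\otEnv$ extended by the family $\{\otEnvMap{\ocS[i]}{\otChan{\otT[i]}}\}_{i}$ of indexed names abstracted by $\ocS$; since $\ocS \notin \fn{\oeE}$ none of these names occur free in $\oeE$, so strengthening removes them to give $\otJudge{\oeEnv}{\otEnv}{\oeE}$, and re-applying \inferrule{\iruleOTRes} to $\oeEi$ followed by \inferrule{\iruleOTPar} closes the forward direction; the backward direction reinstalls the bindings on $\oeE$ by weakening. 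The commuting conversions for $\oeLetrecAbbrev{\oeD}{\cdot}$ are analogous: $\oeLetrecAbbrev{\oeD}{(\oeE \oePar \oeEi)} \equiv (\oeLetrecAbbrev{\oeD}{\oeE}) \oePar \oeEi$ uses strengthening of the $\oeEnv$-binding for $\dfv{\oeD}$ in $\oeEi$, justified exactly by $\dfv{\oeD}\cap\ffv{\oeEi}=\emptyset$; the $\oeRes{}$/$\oeLetrecAbbrev{}$ swap and the two-$\oeLetrecAbbrev{}$ swap use exchange plus strengthening under the stated disjointness conditions; and $\oeLetrecAbbrev{\oeD}{\oeNil}\equiv\oeNil$, $\oeRes{\ocS}{\oeNil}\equiv\oeNil$ are immediate from \inferrule{\iruleOTNil} with weakening.

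The main obstacle I anticipate is not the bookkeeping of reassembling derivations but making the strengthening step fully rigorous for the restriction binder. A single $\oeRes{\ocS}{\cdot}$ abstracts the \emph{entire family} of indexed channels $\ocS_{\{\roleP[j],\roleP[k],\mpLab,\blueI\}}$ generated by $\GCCVR{\gocaml}{}^s$, so I must argue precisely that $\ocS \notin \fn{\oeE}$ (and likewise $\ocS \notin \fn{\oeD}$) really entails that \emph{none} of these names appears free in the subject, and hence that strengthening may be applied to all of them simultaneously. This is the one point where the non-standard, family-indexed reading of names in \inferrule{\iruleOTRes} must be handled with care.
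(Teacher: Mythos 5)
Your proposal is correct and is exactly the argument the paper gestures at: the paper's own proof of this lemma is literally the single word ``Standard,'' and your induction on the derivation of $\equiv$ --- using inversion, \inferrule{\iruleOTPar}/\inferrule{\iruleOTNil}/\inferrule{\iruleOTRes}/\inferrule{\iruleOTLetrec}, and weakening/strengthening/exchange for the binder axioms --- is the standard proof being invoked. Your closing caveat about $\oeRes{\ocS}{\cdot}$ abstracting the whole family of indexed names $\ocS_{\{\roleP[j],\roleP[k],\mpLab,\blueI\}}$ is well placed and is handled correctly by reading $\ocS\notin\fn{\oeE}$ as excluding every member of that family, matching how \inferrule{\iruleOTRes} extends $\otEnv$ with all the indexed bindings at once.
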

\begin{proof}
  Standard.
\end{proof}

The following lemma relates term-level and type-level projection.

\subsection{Type safety for global combinators}

\begin{DEF}\rm
  $\otEnv$ is {\em basic on $\ocC$},
  written $\BASIC{\otEnv}{\ocC}$,
  if, for all $\ocX \in \fv{\ocC}$ there is some
  $\otRecVar[\ocX]$ such that
  $\otEnvEntails{\otEnv}{\ocX}{\otRecVar[\ocX]}$,
  and $\otRecVar[\ocX]\neq\otRecVar[\ocY]$ for any $\ocX, \ocY \in \fv{\ocC}$ s.t. $\ocX\neq\ocY$, .
\end{DEF}

\begin{LEM}\rm
  If $\otEnvEntails{\otEnv}{\ocC[i]}{\otT}$ $(i \in \{1,2\})$ and followings hold:
  \begin{enumerate}
    \item $\BASIC{\otEnv}{\ocC[i]}$ for  $i \in \{1,2\}$.
    \item If $\mapsubst{\ocZ}{(\ocC[1], \ocC[2])} \in \ocChi$, then
      $\otEnvEntails{\otEnv}{\ocC[i]}{\otT}$ $(i \in \{1,2\})$ and $\otEnvEntails{\otEnv}{\ocZ}{\otT}$.
  \end{enumerate}
  Then,
  $\ocC[1] \ocBinMerge[\ocChi] \ocC[2]$ is defined and
  $\otEnvEntails{\otEnv}{\ocC[1] \ocBinMerge[\ocChi] \ocC[2]}{\otT}$ holds.
\end{LEM}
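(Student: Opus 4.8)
The plan is to prove definedness and well-typedness \emph{simultaneously} by well-founded induction on a pair $(m,k)$ ordered lexicographically, where $m$ counts the pairs of recursion-guarded subterms of $(\ocC[1],\ocC[2])$ reachable by unfolding that are not yet recorded in $\ocChi$, and $k=\card{\ocC[1]}+\card{\ocC[2]}$ is the total syntactic size. Since $\ocC[1]$ and $\ocC[2]$ are finite equi-recursive terms, the set of subterms reachable up to unfolding is finite, so $m$ is well-defined and bounded. The case analysis follows the definition of $\ocBinMerge[\ocChi]$ in \Cref{fig:merge}, and definedness is obtained as a by-product of typing: because both operands carry the \emph{same} type $\otT$, their top shapes (after unfolding) are forced to agree, so no case gets stuck.

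For the non-recursive cases I would argue as follows. In the output case both operands are internal sums; since channel-vector subtyping on records is depth-only (\inferrule{\iruleOSubRcdDepth}), a common type $\otT$ forces the two label sets to coincide with that of $\otT$, so the intersection taken by the merge changes nothing, the shared names line up by the side condition $\ocS[k]=\ocS[1k]=\ocS[2k]$, and the induction hypothesis applied to each continuation (strictly smaller $k$, non-increasing $m$) gives the component types; reassembling by \inferrule{\iruleOTCRecord} and, if needed, \inferrule{\iruleOTCSub} yields $\otT$. The input case is dual but genuinely uses the union: variant subtyping (\inferrule{\iruleOSubVar}) lets each operand's natural tag set be a subset of $\otT$'s, so their union is again a subset; disjoint tags are copied verbatim and common tags are merged by the induction hypothesis, and the resulting wider input subtypes into $\otT$ via \inferrule{\iruleOSubInp}, \inferrule{\iruleOSubVar} and \inferrule{\iruleOTCSub}. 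The unit case is immediate, and the variable case $\ocX\ocBinMerge[\ocChi]\ocX=\ocX$ is typed directly from the hypothesis $\otEnvEntails{\otEnv}{\ocX}{\otT}$, with condition~1 ($\BASIC{\otEnv}{\ocC[i]}$) guaranteeing that free variables are atomic and matched consistently, so this is the only variable case that can arise.

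The recursion cases carry the real weight. For $\ocRec{\ocX}{\ocC[1]}\ocBinMerge[\ocChi]\ocC[2]$, if the pair is already recorded the merge returns the associated $\ocZ$, and condition~2 supplies $\otEnvEntails{\otEnv}{\ocZ}{\otT}$ directly. Otherwise a fresh $\ocZ$ is allocated and we recurse on $\ocC[1]\subst{\ocX}{\ocRec{\ocX}{\ocC[1]}}\ocBinMerge[\ocChi']\ocC[2]$ with $\ocChi'=\ocChi\cdot\mapsubst{\ocZ}{(\ocRec{\ocX}{\ocC[1]},\ocC[2])}$; here $m$ strictly decreases, so the induction hypothesis applies. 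I would run this recursive call under the extended context $\otEnv\cdot\otEnvMap{\ocZ}{\otT}$ and check its hypotheses: the unfolded left operand has type $\otT$ by \inferrule{\iruleOTCRec} and the equi-recursive reading of $\otT$, condition~1 is preserved since substituting a closed recursion for the bound variable introduces no new free variables, and condition~2 for $\ocChi'$ holds because the new entry types correctly (both components at $\otT$ by weakening, $\ocZ$ at $\otT$ by \inferrule{\iruleOTCVar}) while old entries survive weakening. Wrapping the result with $\ocRec{\ocZ}{\cdot}$ and applying \inferrule{\iruleOTCRec} discharges $\ocZ$ and yields $\otT$; the symmetric case for a recursion on the right is identical.

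The main obstacle is the bookkeeping invariant behind condition~2 across the whole run, not any single case. Two points need care: first, the \emph{finiteness} underlying $m$, i.e.\ that a regular channel vector has finitely many distinct unfolded subterms, which bounds the freshly recorded pairs and guarantees termination; second, and harder, the \emph{type-consistency of re-entries} — when a recorded recursion-pair is encountered again it must be consulted at the \emph{same} type at which it was recorded, so that returning $\ocZ$ is sound. I would establish this by observing that a recursion-guarded operand, together with its partner, pins down a unique position in the regular tree of $(\ocC[1],\ocC[2])$ to which $\otT$ assigns a fixed type, and that re-encounters arise precisely from cycling back to that position, hence at the same type. Making this precise requires strengthening condition~2 to range over \emph{all} entries of $\ocChi$, each with its own recorded type, together with the matching property just described; this strengthened statement is exactly the invariant the induction maintains, and it specialises to the lemma as stated at the initial call, where $\ocChi$ is empty and condition~2 holds vacuously.
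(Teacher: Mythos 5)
Your proposal is correct and follows essentially the same route as the paper's proof: induction tied to the growth of $\ocChi$ (the paper inducts on the number of merge calls, with termination from the boundedness of the set of recordable pairs; your lexicographic $(m,k)$ measure just makes that finiteness explicit), the recursion cases handled identically --- re-entries discharged by condition~2, fresh entries unfolded via inversion, the substitution lemma and subtyping, the recursive call run under $\otEnv \otEnvComp \otEnvMap{\ocZ}{\otT}$, and the result closed by $\inferrule{\iruleOTCRec}$ --- and the variable case resolved through $\BASIC{\otEnv}{\cdot}$ exactly as in the paper, with your explicit output/input cases filling in what the paper dismisses as routine. Your closing remark that condition~2 must really be maintained as an invariant over \emph{all} entries of $\ocChi$, each at its recorded type, is precisely what the paper's proof implicitly relies on when it extends the context with $\otEnvMap{\ocZ}{\otT}$ and later appeals to ``by assumption'' at re-entries, so here you have made explicit a point the paper leaves tacit rather than taken a different route.
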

\begin{proof}
  We proceed by the induction on the number of calls of $\ocC[1] \ocBinMerge[\ocChi] \ocC[2]$.
  This induction terminates since the size of the set of pairs $(\ocC[1], \ocC[2])$ accumulated in $\ocChi$ is bounded.
  The interesting cases are ones that involve recursion.
  
  \CASE{\ $\ocC[1] = \ocRec{\ocX}{\ocCi[1]}$}.
  (1) If $\mapsubst{\ocZ}{(\ocRec{\ocX}{\ocCi[1]}, \ocC[2])} \in \ocChi$,
  by the definition, we have $\ocC[1]\ocBinMerge[\ocChi]\ocC[2]=\ocZ$.
  Furthermore, by assumption, we have $\otEnvEntails{\otEnv}{\ocZ}{\otT}$.
  (2) If $\mapsubst{\ocZ}{(\ocRec{\ocX}{\ocCi[1]}, \ocC[2])} \notin \ocChi$,
  by inversion lemma, we have $\otEnvEntails{\otEnv}{\ocRec{\ocX}{\ocCi[1]}}{\otRec{\otRecVar}{\otTi}}$
  for some $\otRecVar, \otTi$ where $\otRec{\otRecVar}{\otTi} \otSub \otT$,
  and by substitution lemma,
  we have $\otEnvEntails{\otEnv}{\ocCi[1]\subst{\ocX}{\ocRec{\ocX}{\ocCi[1]}}}{\otTi\subst{\otRecVar}{\otRec{\otRecVar}{\otTi}}}$.
  Furthermore, since $\otTi\subst{\otRecVar}{\otRec{\otRecVar}{\otTi}} \otSub \otRec{\otRecVar}{\otTi} \otSub \otT$,
  we have $\otEnvEntails{\otEnv}{\ocCi[1]\subst{\ocX}{\ocRec{\ocX}{\ocCi[1]}}}{\otT}$.
  By induction hypothesis,
  we have some
  $\ocCi = \ocCi[1]\subst{\ocX}{\ocRec{\ocX}{\ocCi[1]}} \ocBinMerge[\ocChi \cdot \mapsubst{\ocZ}{(\ocRec{\ocX}{\ocCi[1]}, \ocC[2])}] \ocC[2]$
  defined, and $\otEnvEntails{\otEnv \otEnvComp \otEnvMap{\ocZ}{\otT}}{\ocCi}{\otT}$.
  (Here, beware that both
    $\BASIC{\otEnv \otEnvComp \otEnvMap{\ocZ}{\otT}}{\ocCi[1]\subst{\ocX}{\ocRec{\ocX}{\ocCi[1]}}}$
    and
    $\BASIC{\otEnv \otEnvComp \otEnvMap{\ocZ}{\otT}}{\ocC[2]}$
  hold since $\ocZ$ is fresh, i.e. $\ocZ \notin \left(\fv{\ocCi[1]}\cup\fv{\ocC[2]}\right)$).
  Then, by \inferrule{\iruleOTCRec}, we have
  $\otEnvEntails{\otEnv}{\ocRec{\ocZ}{\ocCi}}{\otT}$.

  \CASE{\ $\ocC[1] = \ocX$}.
  Since $\BASIC{\otEnv}{\ocX}$, we have $\otEnvEntails{\otEnv}{\ocX}{\otRecVar[\ocX]}$, and
  by inversion lemma, $\otT=\otRecVar[\ocX]$.
  Furthermore, since only possible rule to derive $\otEnvEntails{\otEnv}{\ocC[2]}{\otRecVar[\ocX]}$ is
  \inferrule{\iruleOTCVar}, and from $\BASIC{\otEnv}{\ocC[2]}$, we have $\ocC[2] = \ocX$.
  Hence, by the definition of $\ocBinMerge[\ocChi]$, we have $\ocC[1] \ocBinMerge[\ocChi] \ocC[2] = \ocX$.
\end{proof}

\begin{LEM}\rm
  If $\otEnvEntails{\otEnv}{\ocC[i]}{\otT}$ and $\BASIC{\otEnv}{\ocC[i]}$ for all $i \in I$,
  then $\ocMerge{i \in I}{\ocC[i]}$ is defined
  and $\otEnvEntails{\otEnv}{\ocMerge{i \in I}{\ocC[i]}}{\otT}$.
\end{LEM}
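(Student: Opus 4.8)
The plan is to induct on $n = \card{I}$, reducing the $n$-ary merge to repeated applications of the preceding binary-merging lemma along the left-associated fold $\ocMerge{i \in I}{\ocC[i]} = ((\ocC[1]\ocBinMerge[\ocEmpty]\ocC[2])\ocBinMerge[\ocEmpty]\cdots\ocBinMerge[\ocEmpty]\ocC[n])$. Since every binary merge in this fold carries the empty bookkeeping $\ocChi = \ocEmpty$, hypothesis (2) of that lemma — the condition triggered only when $\mapsubst{\ocZ}{(\ocC[1],\ocC[2])} \in \ocChi$ — is vacuously satisfied, so at each step I need only supply the $\mathrm{Basic}$ premises (1) and the two typing premises. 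The base case $n=1$ is immediate, since $\ocMerge{i \in \{1\}}{\ocC[i]} = \ocC[1]$.

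For the inductive step I would set $\ocC' = \ocMerge{i \in 1 .. n-1}{\ocC[i]}$, so that $\ocMerge{i \in 1 .. n}{\ocC[i]} = \ocC' \ocBinMerge[\ocEmpty] \ocC[n]$. The induction hypothesis applied to $\ocC[1],\dots,\ocC[n-1]$ gives that $\ocC'$ is defined with $\otEnvEntails{\otEnv}{\ocC'}{\otT}$; combined with the assumptions $\otEnvEntails{\otEnv}{\ocC[n]}{\otT}$ and $\BASIC{\otEnv}{\ocC[n]}$ this is almost enough to invoke the binary lemma — except that the binary lemma also requires $\BASIC{\otEnv}{\ocC'}$, i.e.\ that the partial merge is itself basic. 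This is the crux: the $n$-ary statement as worded only asserts that the result is typed, not that it is basic, so the induction hypothesis as stated is too weak to be fed back into the binary lemma.

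I would therefore strengthen the induction hypothesis to additionally carry $\BASIC{\otEnv}{\ocMerge{i \in I}{\ocC[i]}}$, supported by an auxiliary claim that the binary merge preserves basicness. The key observation is a free-variable containment invariant, proved by the same induction (on the number of recursive calls, which terminates because the pair-set accumulated in $\ocChi$ is bounded) used for the binary lemma itself: $\fv{\ocC[1]\ocBinMerge[\ocChi]\ocC[2]} \subseteq (\fv{\ocC[1]} \cap \fv{\ocC[2]}) \cup \dom{\ocChi}$. In the two recursion cases of the merge the fresh variable $\ocZ$ is recorded in $\ocChi$ but is immediately bound by the enclosing $\ocRec{\ocZ}{\cdot}$, so it never escapes and the bound is maintained (using that substituting $\ocRec{\ocX}{\ocC[1]}$ into its own body leaves free variables unchanged); the leaf cases $\ocX\ocBinMerge[\ocChi]\ocX = \ocX$ and $\ocUnit\ocBinMerge[\ocChi]\ocUnit = \ocUnit$ and the structural record/variant/tuple cases preserve it directly. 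Specialising to $\ocChi = \ocEmpty$ yields $\fv{\ocC' \ocBinMerge[\ocEmpty] \ocC[n]} \subseteq \fv{\ocC'} \cap \fv{\ocC[n]}$, so every free variable of the result is free in the basic term $\ocC'$; hence it is assigned a type variable $\otRecVar[\ocX]$ by $\otEnv$, and injectivity of the assignment is inherited, giving $\BASIC{\otEnv}{\ocC' \ocBinMerge[\ocEmpty] \ocC[n]}$. With basicness restored, the binary lemma delivers definedness and $\otEnvEntails{\otEnv}{\ocMerge{i \in 1 .. n}{\ocC[i]}}{\otT}$, closing the strengthened induction.

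The main obstacle is precisely this strengthening: the bare statement is not closed under its own induction because the binary lemma it depends on insists on basic operands, so the heart of the proof lies in showing that merging preserves the $\mathrm{Basic}$ predicate via the free-variable containment invariant, with careful bookkeeping of the fresh recursion variables introduced and bound in the two $\ocRec{\cdot}{\cdot}$ cases of the merge definition.
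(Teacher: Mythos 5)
Your overall route is the paper's own: the paper discharges this lemma with ``Straightforward by induction,'' i.e., folding the binary-merge lemma over the left-associated definition $\ocMerge{i \in 1..n}{\ocC[i]} = ((\ocC[1]\ocBinMerge[\ocEmpty]\ocC[2])\ocBinMerge[\ocEmpty]\cdots\ocBinMerge[\ocEmpty]\ocC[n])$, and you are right to flag that the fold only iterates if the partial merges remain basic --- a point the paper leaves entirely implicit, so your strengthened induction hypothesis is genuine added value. However, your key auxiliary invariant is stated in the wrong direction. The containment $\fv{\ocC[1]\ocBinMerge[\ocChi]\ocC[2]} \subseteq (\fv{\ocC[1]} \cap \fv{\ocC[2]}) \cup \dom{\ocChi}$ fails on the \emph{input} case of the merge (Fig.~\ref{fig:merge}): merging two external choices takes the \emph{union} of the variant branches, keeping the continuations $\ocC[1i]$ for $i \in I \setminus J$ (and symmetrically $\ocC[2j]$ for $j \in J \setminus I$) verbatim in the result. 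Thus a variable free only in the left operand survives into the merge: merging $\ocExtSumSmall{\roleP}{}{\ocExtChoiceSmall{\mpLab[1]}{\ocS[1]}{\ocX}}$ with $\ocExtSumSmall{\roleP}{}{\ocExtChoiceSmall{\mpLab[2]}{\ocS[2]}{\ocY}}$ (both typable at the common supertype $\otExtSumSmall{\roleP}{i\in\{1,2\}}{\otExtChoice{\mpLab[i]}{\otS[i]}{\otRecVar[i]}}$ via \inferrule{\iruleOTCSub} and \inferrule{\iruleOSubVar}) yields a result with both $\ocX$ and $\ocY$ free, while $\fv{\ocC[1]} \cap \fv{\ocC[2]} = \emptyset$. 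This union behaviour on inputs is not incidental --- it is exactly how non-choosing roles acquire merged external choices --- so an intersection-based invariant cannot stand.

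The repair is short and your argument otherwise survives: prove instead $\fv{\ocC[1]\ocBinMerge[\ocChi]\ocC[2]} \subseteq \fv{\ocC[1]} \cup \fv{\ocC[2]}$, by the same induction on the number of recursive calls (bounded by the pairs accumulated in $\ocChi$), with the fresh variables $\ocZ$ recorded as $\mapsubst{\ocZ}{(\cdot,\cdot)}$ in $\ocChi$ rebound by the enclosing $\ocRec{\ocZ}{\cdot}$ exactly as you argue, and the observation that $\fv{\ocC\subst{\ocX}{\ocRec{\ocX}{\ocC}}} = \fv{\ocRec{\ocX}{\ocC}}$ in the unfolding cases. The union bound still delivers the half of $\BASIC{\otEnv}{\cdot}$ you need most: every free variable of the partial merge is free in some $\ocC[i]$, hence assigned a type variable by $\otEnv$. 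One residual subtlety remains: $\BASIC{\otEnv}{\cdot}$ also demands pairwise-distinct type variables, and distinctness within $\fv{\ocC'}\subseteq\bigcup_i\fv{\ocC[i]}$ requires distinctness \emph{across} the operands' variable sets, which the per-operand hypotheses $\BASIC{\otEnv}{\ocC[i]}$ do not literally provide. Your strengthened induction hypothesis should therefore carry joint distinctness over $\bigcup_i\fv{\ocC[i]}$; this is harmless in the lemma's only use (the choice case of the realisability proof, where rule \inferrule{\iruleOTGRec} assigns pairwise-distinct fresh type variables under a single $\otEnv$), but it belongs in the statement of the strengthened invariant rather than being inherited for free.
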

\begin{proof}
Straightforward by induction. 
\end{proof}
  
\begin{PROP}\thmstart
If $\ogtEnvEntails{}{\ogtG}{\otT}$, then $\otT$ is closed.
\end{PROP}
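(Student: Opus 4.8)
The plan is to strengthen the statement so that it survives recursion, and then induct on the derivation of $\ogtEnvEntailsEx{\roleSet}{\otEnv}{\ogtG}{\otT}$. The empty-context judgement alone is too weak as an induction hypothesis, since \inferrule{\iruleOTGRec} temporarily installs \emph{open} placeholder variables in the context. The key structural fact is that every context $\otEnv$ arising in a derivation binds each $\ogtRecVar\in\dom{\otEnv}$ to a product $\otRecVar[{\ogtRecVar}{1}]\otTimes\cdots\otTimes\otRecVar[{\ogtRecVar}{n}]$ of \emph{distinct fresh} type variables (introduced by \inferrule{\iruleOTGRec} and only re-read by \inferrule{\iruleOTGRecVar}). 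I would therefore track these placeholders \emph{per component}: put $A_i(\otEnv)=\{\,\otRecVar[{\ogtRecVar}{i}]\mid\ogtRecVar\in\dom{\otEnv}\,\}$, and prove the generalised claim that $\ogtEnvEntailsEx{\roleSet}{\otEnv}{\ogtG}{\otT[1]\otTimes\cdots\otTimes\otT[n]}$ implies $\fv{\otT[i]}\subseteq A_i(\otEnv)$ for every $i$. Instantiating at $\otEnv=\otEnvEmpty$ gives $A_i(\otEnvEmpty)=\emptyset$, i.e. each $\otT[i]$, hence $\otT$, is closed.

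\textbf{The routine cases.} Base and structural cases go through directly. For \inferrule{\iruleOTGEnd} every component is $\otUnit$. For \inferrule{\iruleOTGRecVar} the conclusion is exactly the tuple stored in $\otEnv$, so $\fv{\otRecVar[{\ogtRecVar}{i}]}=\{\otRecVar[{\ogtRecVar}{i}]\}\subseteq A_i(\otEnv)$. In \inferrule{\iruleOTGComm} only the sender and receiver components change, each wrapping the inductive continuation together with the payload $\otS$; as payload types are assumed closed, no free variable is added and all other components are inherited verbatim. \inferrule{\iruleOTGChoice} is similar: the merged component at the active role collects exactly the branch continuations occurring at that position, each already within $A_a(\otEnv)$ by induction. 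The load-bearing case is \inferrule{\iruleOTGRec}: its premise runs under $\otEnv\otEnvComp\otEnvMap{\ogtRecVar}{\otRecVar[{\ogtRecVar}{1}]\otTimes\cdots\otTimes\otRecVar[{\ogtRecVar}{n}]}$, so by induction $\fv{\otT[i]}\subseteq A_i(\otEnv)\cup\{\otRecVar[{\ogtRecVar}{i}]\}$; the conclusion closes component $i$ with $\otFix{\otRecVar[{\ogtRecVar}{i}]}{\otT[i]}$, which — whether it collapses to $\otUnit$ or forms $\otRec{\otRecVar[{\ogtRecVar}{i}]}{\otT[i]}$ — binds $\otRecVar[{\ogtRecVar}{i}]$ and removes it, so the remaining free variables lie in $A_i(\otEnv)$. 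It is essential that the invariant is stated \emph{per component}: the fresh variable of position $j$ never occurs in component $i\neq j$, which is exactly why binding only $\otRecVar[{\ogtRecVar}{i}]$ at position $i$ suffices.

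\textbf{The main obstacle.} The delicate case is subsumption \inferrule{\iruleOTGSub}, where the derived type is a supertype of an inductively controlled one. Up to equi-recursive unfolding, product subtyping (\inferrule{\iruleOSubTuple}) is componentwise, so this reduces to a lemma on $\otSub$: \emph{if $\otU\otSub\otUi$ then $\fv{\otU}=\fv{\otUi}$}. I would prove it by induction on a finite witness of the coinductive relation. Every rule other than the variant rule either relates types of identical shape and merely descends into subterms (both inclusions then follow, the equality absorbing the contravariance of \inferrule{\iruleOSubOut}), or, for \inferrule{\iruleOSubRecL}/\inferrule{\iruleOSubRecR}, preserves $\fv{\otU}$ because unfolding $\otRec{\otRecVar}{\otU}$ into $\otU\subst{\otRecVar}{\otRec{\otRecVar}{\otU}}$ leaves the free-variable set unchanged. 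The problematic rule is \inferrule{\iruleOSubVar}, which widens an input variant with extra tags whose continuation types are, in isolation, unconstrained; there the equality can fail, and this is where I expect the real work to lie. The way to discharge it is to invoke the standing well-formedness convention on channel-vector types: the continuations carried by any widening tag are themselves closed channel-vector types (payload and continuation types introduced by the combinators are closed, and the additional branches supplied during merging come from sibling sub-derivations that already obey the per-component bound), so the extra tags contribute no new free recursion variable.

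\textbf{Conclusion.} With the subtyping lemma in hand, the subsumption case yields $\fv{\otT[i]}=\fv{\otS[i]}\subseteq A_i(\otEnv)$ componentwise, completing the induction. Reading off the result at the empty context gives that $\otT$ is closed, as required.
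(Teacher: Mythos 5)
Your proof follows the same skeleton as the paper's, whose entire argument for this proposition is the single sentence ``By induction on $\ogtG$'': your induction on the derivation with the per-component invariant $\fv{\otT[i]}\subseteq A_i(\otEnv)$ is exactly the strengthening such an induction needs once \inferrule{\iruleOTGRec} has installed open placeholder products in the context, and your cases for \inferrule{\iruleOTGEnd}, \inferrule{\iruleOTGRecVar}, \inferrule{\iruleOTGComm}, \inferrule{\iruleOTGChoice} and \inferrule{\iruleOTGRec} are all correct --- including the observation that $\otFix{\otRecVar[{\ogtRecVar}{i}]}{\otT[i]}$ binds or erases the position-$i$ placeholder, and that closedness of payload types is an explicit assumption of \Cref{def:global}.

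The gap is in the \inferrule{\iruleOTGSub} case, exactly where you located it, and it cannot be discharged the way you propose. Your repair appeals to a ``standing well-formedness convention'' that the types carried by the widening tags of \inferrule{\iruleOSubVar} are closed; no such convention is stated anywhere in the paper. Rule \inferrule{\iruleOSubVar} places no premise on the types indexed by $J$, and \inferrule{\iruleOTGSub} admits an arbitrary supertype, so under a literal reading of the definitions the proposition is in fact false. Concretely, take $\roleSet=\roleP[1],\roleP[2]$ and $\ogtG=\ogtComm{\roleP[1]}{\roleP[2]}{}{\mpLab}{\otUnit}{\ogtEnd}$; rules \inferrule{\iruleOTGComm} and \inferrule{\iruleOTGEnd} derive $\ogtEnvEntails{}{\ogtG}{\otT[1]\otTimes\otT[2]}$ with $\otT[2]=\otExtSumSmall{\roleP[1]}{}{\otExtChoice{\mpLab}{\otUnit}{\otUnit}}$; widening the variant inside $\otT[2]$ with a fresh tag $\otVariantElem{\mpLabi}{\otRecVar}$ carrying a bare free variable is a legitimate instance of \inferrule{\iruleOSubVar}, and pushing it through \inferrule{\iruleOSubInp}, \inferrule{\iruleOSubRcdDepth}, \inferrule{\iruleOSubTuple} and \inferrule{\iruleOTGSub} yields $\ogtEnvEntails{}{\ogtG}{\otTi}$ with $\otRecVar\in\fv{\otTi}$. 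So the step you could not close is not an oversight that more work on your side would fill: it marks a missing side condition in the paper (the supertype in \inferrule{\iruleOTGSub}, or at least the $J$-tags of \inferrule{\iruleOSubVar}, must be required closed), which the paper's one-line proof silently assumes. Once that condition is added, your lemma $\fv{\otU}=\fv{\otUi}$ --- and it must be stated as an equality, as you note, since the contravariance of \inferrule{\iruleOSubOut} flips the roles of the two sides --- does hold; but its proof needs one more repair: $\otSub$ is coinductive and pairs such as $\otRec{\otRecVar}{\otOut{\otRecVar}}\otSub\otRec{\otRecVar}{\otOut{\otRecVar}}$ have no finite derivation, so ``induction on a finite witness'' is unavailable. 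Argue instead by induction on the depth of a free-variable occurrence, inverting the (possibly infinite) derivation along that finite path, and use guardedness of recursion variables to bound the consecutive \inferrule{\iruleOSubRecL}/\inferrule{\iruleOSubRecR} unfolding steps between structural rules.
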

\begin{proof}
By induction on $\ogtG$.
\end{proof}

\begin{restatable}[]{LEM}{lemSubjectReductionForGC}\label{lem:SubjectReductionForGC}\thmstart
  If $\ogtEnvEntailsEx{\roleP[1],\ldots,\roleP[n]}{\otEnv}{\ogtG}{\otT[1]\otTimes\cdots\otTimes\otT[n]}$
  then ${\GCCV{\gocaml}} = \ocC$ is defined and
  $\otEnvEntails{\otEnvi}{\ocC}{\otT[1]\otTimes\cdots\otTimes\otT[n]}$
  where
  $\otEnvi = \otEnv \otEnvComp \{\otEnvMap{\ocS[i]}{\otS[i]}\}_{\ocS[i]\in \fn{\ocC}}$
  for some $\{\widetilde{\otS[i]}\}$.
\end{restatable}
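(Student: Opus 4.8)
The plan is to prove the stronger Lemma~\ref{lem:SubjectReductionForGC}, from which Theorem~\ref{col:SubjectReductionForGC} follows at once by instantiating the recursion-variable context $\otEnv$ with $\otEnvEmpty$. I would proceed by induction on the derivation of $\ogtEnvEntailsEx{\roleSet}{\otEnv}{\ogtG}{\otT}$, following the rules of \cfig{fig:typingforglobal}. Because the recursion and choice combinators introduce bound channel-vector variables and invoke merging, I would carry two invariants alongside the typing: \textbf{(i)} the channel-vector context $\otEnvi$ refines the global-combinator context $\otEnv$, in the sense that whenever $\otEnvMap{\ogtRecVar}{\otRecVar[{\ogtRecVar}{1}]\otTimes\cdots\otTimes\otRecVar[{\ogtRecVar}{n}]}\in\otEnv$ the variables $\ocX[1],\ldots,\ocX[n]$ produced by $\GCCV{\ogtRecVar}$ are bound in $\otEnvi$ to the corresponding $\otRecVar[{\ogtRecVar}{i}]$; and \textbf{(ii)} the generated vector $\ocC$ satisfies $\BASIC{\otEnvi}{\ocC}$, i.e.\ its free variables are typed by pairwise-distinct recursion variables. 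Invariant~(ii) is exactly the precondition demanded by the merging lemmas of \S~\ref{subsec:merge}.

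I would first dispatch the routine cases. For \inferrule{\iruleOTGEnd}, $\GCCV{\ogtEnd}=\oeFmt{(\ocUnit,\ldots,\ocUnit)}$ is typed by $\otUnit\otTimes\cdots\otTimes\otUnit$ using \inferrule{\iruleOTCUnit} and \inferrule{\iruleOTCTuple}. For \inferrule{\iruleOTGRecVar}, $\GCCV{\ogtRecVar}=\oeFmt{(\ocX[1],\ldots,\ocX[n])}$ is typed directly by invariant~(i) together with \inferrule{\iruleOTCVar} and \inferrule{\iruleOTCTuple}. For \inferrule{\iruleOTGSub}, the induction hypothesis gives $\otEnvEntails{\otEnvi}{\ocC}{\otS}$ with $\otS\otSub\otT$, and one application of \inferrule{\iruleOTCSub} concludes. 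The communication case \inferrule{\iruleOTGComm} is the first with real content but remains local: I would take the continuation vector $\ocCi=\GCCV{\ogtG}$ from the induction hypothesis, add the fresh name $\ocS_{\{\roleP[i],\roleP[j],\mpLab,\blueI\}}$ to the context at io-type $\otChan{\otS}$, and build the $i$-th (output) component as the nested record whose payload channel is typed $\otOut{\otS}$ via \inferrule{\iruleOSubOutCh} and whose continuation $\NTH{\ocCi}{i}$ is typed $\otT[i]$ by hypothesis; symmetrically, the $j$-th (input) component is a wrapped name typed by \inferrule{\iruleOTCWrapInp} and \inferrule{\iruleOTCWrapper} (the internal $\HOLE$-wrapper being checked against a variant type via \inferrule{\iruleOTCVariant}), where the same name appears at input type $\otInp{\otS}$ via \inferrule{\iruleOSubInpCh}. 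Reassembling with \inferrule{\iruleOTCTuple} and \inferrule{\iruleOTCRecord} yields the required output/input sum types.

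The recursion case \inferrule{\iruleOTGRec} is handled componentwise. By the induction hypothesis on the body under $\otEnv$ extended with $\otEnvMap{\ogtRecVar}{\otRecVar[{\ogtRecVar}{1}]\otTimes\cdots\otTimes\otRecVar[{\ogtRecVar}{n}]}$, each $\NTH{\GCCV{\ogtG}}{i}$ is typed by $\otT[i]$; then $\ocFixSmall{\ocX[i]}{\NTH{\GCCV{\ogtG}}{i}}$ is typed by $\otFix{\otRecVar[{\ogtRecVar}{i}]}{\otT[i]}$, matching the two cases of the $\otFix{\cdot}{\cdot}$ bookkeeping: \inferrule{\iruleOTCRec} produces $\otRec{\otRecVar[{\ogtRecVar}{i}]}{\otT[i]}$ when names occur, and $\ocUnit:\otUnit$ covers the degenerate case $\otT[i]=\otRecVar[{\ogtRecVar}{i}]$. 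The work here is purely bookkeeping: re-establishing invariant~(i) for the extended context and checking that the fresh binders $\ocX[i]$ preserve invariant~(ii).

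The main obstacle is the choice case \inferrule{\iruleOTGChoice}, where the merging operator must be shown type-preserving. Crucially, the typing rule is designed so that every branch $\ogtG[i]$ carries \emph{identical} channel-vector types $\otT[k]$ at each non-choosing role $\roleP[k]$ (the premises share $\otT[1],\ldots,\otT[n]$, discrepancies having been absorbed by \inferrule{\iruleOTGSub}); this is precisely the hypothesis needed to invoke the $n$-ary merging lemma of \S~\ref{subsec:merge}, once its side conditions---same type at each role, plus invariant~(ii)---are discharged from the strengthened induction hypothesis. For the choosing role $\roleP[a]$, the disjointness $K_{j}\cap K_{j'}=\emptyset$ for $j\neq j'$ guarantees that unfolding each branch to an output record $\ocIntSumSmall{\roleQ}{k\in K_i}{\mpLab[k]{=}\ocC[k]}$ and taking their union yields a well-formed record with field set $K=\bigcup_i K_i$, typed by $\otIntSumSmall{\roleQ}{k\in K}{\otIntChoice{\mpLab[k]}{\otS[k]}{\otTi[k]}}$ via \inferrule{\iruleOTCRecord}. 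The delicate points are that the merge is \emph{defined}---requiring termination of the $\ocChi$-bookkeeping on recursive values, and the unfolding of $\roleP[a]$'s vector to an output record, both guaranteed by typing---and that the fresh recursion variable introduced when merging recursive values can be consistently typed, which is exactly what invariant~(ii) secures.
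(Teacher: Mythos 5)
Your proposal is correct and follows essentially the same route as the paper's proof: induction over the typing of $\ogtG$ (the paper phrases it as structural induction with inversion, you as induction on the derivation, which merely makes the \inferrule{\iruleOTGSub} case explicit), with the communication case handled by extending the name context with the fresh indexed channel at io-type and applying \inferrule{\iruleOTCSub}, and the choice case discharged by the $n$-ary merging lemma of \S~\ref{subsec:merge} under the $\BASIC{\cdot}{\cdot}$ side condition. Your explicit bookkeeping of the two invariants (recursion variables matched between $\otEnv$ and $\otEnvi$, and pairwise-distinct type variables for free channel-vector variables) is exactly what the paper leaves implicit when it invokes those lemmas, so no gap remains.
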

\begin{proof}
  We proceed by induction on the structure of $\gocaml$.\\
\CASE{\ $\gocaml=\ogtComm{\roleP[j]}{\roleP[k]}{}{\mpLab}{\otT}{\gocaml}$}.
By inversion, $\ogtEnvEntails{\otEnv}{\gocaml}{\otT[1]\otTimes\cdots\otTimes\otT[n]}$ holds.
By induction hypothesis, we get
$\otEnvEntails{\otEnvi}{\GCCV{\gocaml}}{\otT[1]\otTimes\cdots\otTimes\otT[n]}$
where $\otEnvi = \otEnv\otEnvComp\{\widetilde{\ocS[i]{:}\otS[i]}\}$ for some
$\{\widetilde{\ocS[i]{:}\otS[i]}\}$.
Let
$\otEnvii=\otEnvi\otEnvComp\otEnvMap{\ocS}{\otT}$
where $\ocS=\ocS_{\{\roleP[j],\roleP[k],\mpLab,\blueI\}}$.
For each $\roleP[i] \in \{\roleP[1],\elip,\roleP[n]\}$,
we have $\otEnvEntails{\otEnv}{\NTH{\GCCV{\gocaml}}{i}}{\otT[i]}$.
and see that by \inferrule{\iruleOTCName},
$\otEnvEntails{\otEnvii}{\ocS}{\otT}$.
Then,
by applying typing rules repeatedly, we have:
$\otEnvEntails{\otEnvii}{
  \ocIntSum{\roleP[k]}{}{
    \ocIntChoiceSmall{\mpLab}{\ocS}{\NTH{\GCCV{\ogtG}}{j}}}
}{\otTi[j]
}$
and
$
\otEnvEntails{\otEnvii}
{
  \ocExtSum{\roleP[j]}{}{
    \ocExtChoiceSmall{\mpLab}{\ocS}{\NTH{\GCCV{\ogtG}}{k}}}
}
{\otTi[k]
}$
where
$\otTi[j]=
\otIntSum{\roleP[k]}{}{
    \otIntChoice{\mpLab}{\otT}{\otT[j]},
  }
$
and
$\otTi[k]=
  \otExtSum{\roleP[j]}{}{
    \otExtChoice{\mpLab}{\otT}{\otT[k]}
  }
$.
Then, by using \inferrule{\iruleOTCTuple}, we have\\
$\otEnvEntails{\otEnvii}{\gocaml}{
  \otT[1]{\otTimes}
  \elip{\otTimes}\otTi[j]{\otTimes}\elip
  \elip{\otTimes}\otTi[k]{\otTimes}\elip\otT[n]}$.

\CASE{\ $\gocaml=\ogtChoice{\roleP[a]}{\gocaml[i]}{i \in I}$}.
By inversion, for all $i \in I$ we have
$\otEnvEntails{\otEnv}{\gocaml[i]}{\otT[1]\otTimes\elip\otTimes\otT[n]}$.
Then, applying induction hypothesis, we have
$\otEnvEntails{\otEnvi[i]}{\GCCV{\gocaml[i]}}{\otT[1]\otTimes\elip
\otIntSum{\roleP[a]}{k\in K_{i}}{
    \otIntChoice{\mpLab[k]}{\otT[k]}{\otT[k]}
  }
  \elip\otTimes\otT[n]}$
where $\otEnvi[i] = \otEnv\otEnvComp\{\widetilde{\ocS[ij]{:}\otS[ij]}\}$ for some
$\{\widetilde{\ocS[ij]{:}\otS[ij]}\}$.
By taking $\otEnvi = \otEnv, \bigcup_{i\in I}\{\widetilde{\ocS[ij]{:}\otS[ij]}\}$
and $\ocC[ij] = \NTH{\GCCV{\gocaml[i]}}{j}$,
we have $\otEnvEntails{\otEnvi}{\ocC[ij]}{\otT[j]}$
and $\otEnvEntails{\otEnvi}{\ocMerge{i\in I}{\ocC[ij]}}{\otT[j]}$
for each $j \in \{1,\elip,n\}\setminus\{a\}$, and
$\ocC[ia] = \ocIntSum{\roleP[a]}{k\in K_{i}}{
    \ocIntChoice{\mpLab[k]}{\ocS[ik]}{\ocCi[ik]}
  }
$.
Then, by applying \inferrule{\iruleOTCRecord} for
$\ocIntSum{\roleP[a]}{k\in K}{
    \ocIntChoice{\mpLab[k]}{\ocS[ik]}{\ocCi[ik]}
  }\ (K=\bigcup_{i\in I}K_i)$ and by using
\inferrule{\iruleOTCTuple}, we have the desired typing.
Other cases are trivial or similar.
\end{proof}

\colSubjectReductionForGC*
\begin{proof}
  A special case of the above lemma. 
\end{proof}

\subsection{Proof of Subject Reduction}

\lemSubjectReduction*
\begin{proof}
Induction on derivation of $\oeE \longrightarrow \oeEi$.\\
\CASE \inferrule{\iruleORedComm}.
$\oeE = \oeSend{\ocX}{\ocC[\roleP]\#\roleQ\#\mpLab[k]}{\ocCi}{\oeE[1]} \oePar \oeRecv{\ocY}{\ocC[\roleQ]\#\roleP}{\oeE[2]}$,
$\oeEi = \oeE[1]\subst{\ocX}{\ocC} \oePar \oeE[2]\subst{\ocY}{\ocWrapperApp{\ocH[j]}{\ocCi}}$
where $j \in I$,
$\ocC[\roleP]\#\roleQ\#\mpLab[k] = \ocPair{\ocS[j]}{\ocC}$ and $\ocC[\roleQ]\#\roleP = \ocWrapper{\ocS[i]}{\ocH[i]}{i\in I}$.
By applying inversion lemma for $\oePar$, {\bf send} and {\bf recv}, we have
\begin{itemize}
\item $\otJudge{\oeEnv}{\otEnvi \otEnvComp \otEnvMap{\ocS[j]}{\otChan{\otTi}}}{\oeSend{\ocX}{\ocD[1]}{\ocCi}{\oeE[1]}}$,
\item $\otJudge{\oeEnv}{\otEnvi \otEnvComp \{\otEnvMap{\ocS[i]}{\otChan{\otTi[i]}}\}_{i \in I}}{\oeRecv{\ocY}{\ocD[2]}{\oeE[2]}}$,
  and $\otTi[j] = \otTi$
\item $\otEnvEntails{\otEnv}{\ocC}{\otT}$ and $\otJudge{\oeEnv}{\otEnv \otEnvComp \otEnvMap{\ocX}{\otT}}{\oeE[1]}$,
\item $\otEnvEntails{\otEnv}{\ocCi}{\otTii}$ and $\otTii \otSub \otTi$,
\item For all $i \in I$, $\otEnvEntails{\otEnv}{\ocH[i]}{\otWrapper{\otTiii}{\otT[i]}}$ and $\otJudge{\oeEnv}{\otEnv \otEnvComp \otEnvMap{\ocY}{\otTiii}}{\oeE[2]}$.
\end{itemize}
By applying substitution lemma on $\oeE[1]$, we get $\otJudge{\oeEnv}{\otEnv}{\oeE[1]\subst{\ocX}{\ocC}}$.
Next, by applying \inferrule{\iruleOTCSub} to $\otEnvEntails{\otEnv}{\ocCi}{\otTii}$, we have $\otEnvEntails{\otEnv}{\ocCi}{\otTi = \otTi[j]}$ and 
By applying substitution lemma on $\ocH[j]$, we have $\otEnvEntails{\otEnv}{\ocWrapperApp{\ocH[j]}{\ocCi}}{\otTiii}$
and by substitution lemma on $\oeE[2]$, we get  $\otJudge{\oeEnv}{\otEnv}{\oeE[2]\subst{\ocY}{\ocWrapperApp{\ocH[j]}{\ocCi}}}$.
Then, from \inferrule{\iruleOTPar} we get $\otJudge{\oeEnv}{\otEnv}{\oeE[1]\subst{\ocX}{\ocC} \oePar \oeE[2]\subst{\ocY}{\ocWrapperApp{\ocH[j]}{\ocCi}}}$.\\
\CASE \inferrule{\iruleORedMatch}.
$\oeE = \oeMatch{c}{\mpLab[i]}{\ocX[i]}{\ocY[i]}{\oeE[i]}{i\in I}$,
$\oeEi = \oeE[j]\subst{\ocX[j]}{\ocC[1]}\subst{\ocY[j]}{\ocC[2]}$, and
$\ocC = \ocVariant{\mpLab[j]}{\ocPair{\ocC[1]}{\ocC[2]}}{}$
where $j \in I$.
By inversion lemma for $\oeFmt{match}$, we have
\begin{itemize}
\item
    $\otEnvEntails{\otEnv}{\ocC[j]}{\otT[j]}$, 
    $\otEnvEntails{\otEnv}{\ocCi[j]}{\otTi[j]}$, and
\item for all $i \in I$, $\otJudge{\oeEnv}{\otEnv \otEnvComp \otEnvMap{\ocX[i]}{\otT[i]} \otEnvComp \otEnvMap{\ocY[i]}{\otTi[i]}}{\oeE[i]}$.
\end{itemize}
By applying substitution lemma on $\oeE[j]$ twice, we get $\otJudge{\oeEnv}{\otEnv}{\oeE[j]\subst{\ocX[j]}{\ocC[1]}\subst{\ocY[j]}{\ocC[2]}}$.\\
\CASE \inferrule{\iruleORedRec}.
$\oeE = \oeLetrec{\oeRecVar}{\tilde{\ocX}}{\oeE[1]}{\left(\oeCall{\oeRecVar}{\tilde{\ocC}} \oePar \oeE[2]\right)}$ and
$\oeEi = \oeLetrec{\oeRecVar}{\tilde{\ocX}}{\oeE[1]}{\left(\oeE[1]\subst{\tilde{\ocX}}{\tilde{\ocC}} \oePar \oeE[2]\right)}$.
By inversion of {\bf letrec} and $\oePar$, we have
\begin{itemize}
  \item 
    $\otJudge{\oeEnv \oeEnvComp \oeEnvMap{\oeRecVar}{\otT[1],\ldots,\otT[n]}} {\otEnv \otEnvComp \otEnvMap{x_1}{\otT[1]} \otEnvComp \ldots \otEnvComp\otEnvMap{x_n}{\otT[n]}} {\oeE[1]}$
  \item
    $\forall i \in 1..n$, $\otEnvEntails{\otEnv}{\ocC[i]}{\otS[i]}$ and
    $\otS[i] \otSub \otT[i]$, and
  \item 
    $\otJudge{\oeEnv \oeEnvComp \oeEnvMap{\oeRecVar}{\otT[1],\ldots,\otT[n]}}{\otEnv}{\oeE[2]}$.
\end{itemize}
By rule \inferrule{\iruleOTCSub}, we have
$\otEnvEntails{\otEnv}{\ocC[i]}{\otT[i]}$ for all $i \in \{1,\ldots,n\}$.
By applying substitution lemma on $\oeE[1]$ repeatedly,
we get $\otJudge{\oeEnv \oeEnvComp \oeEnvMap{\oeRecVar}{\otT[1],\ldots,\otT[n]}}{\otEnv}{\oeE[1]\subst{\tilde{\ocX}}{\tilde{\ocC}}}$.
  Finally, By rule \inferrule{\iruleOTPar} and \inferrule{\iruleOTLetrec}, we get
  $\otJudge{\oeEnv}{\otEnv}{\oeLetrec{\oeRecVar}{\tilde{\ocX}}{\oeE[1]}{\left(\oeE[1]\subst{\tilde{\ocX}}{\tilde{\ocC}} \oePar \oeE[2]\right)}}$.\\
\CASE \inferrule{\iruleORedInit}.
   $\oeE = \oeInit{\ocX[1],\ldots,\ocX[n]}{\gocaml}{(\oeE[1] \oePar \cdots \oePar \oeE[n])}$,
  $\oeEi = \oeRes{\tilde{\ocS}}{(\oeE[1]\subst{\ocX[1]}{\ocC[1]} \oePar \cdots \oePar \oeE[n]\subst{\ocX[n]}{\ocC[n]})}$.
  From the premise of the rule, we have:
\begin{itemize}  
\item $\GCCVR{\gocaml}{\{\roleP[1],\elip,\roleP[n]\}}=\ocTuple{\ocC[1],\elip,\ocC[n]}$,
\item $\bigcup{}_{i\in\{1\elip n\}} \fn{\ocC[i]} = \{\tilde{\ocS}\}$, which shares base name $\ocS$ and
\item $\{\tilde{\ocS}\} \cap \bigcup{}_{i\in\{1\elip n\}} \fn{\oeE[i]} = \emptyset$.
\end{itemize}
By inversion, we have
\begin{itemize}
\item for $i \in \{1,\ldots,n\}$, $\NTH{\GCCV{\gocaml}}{i} = \otT[i]$ and
\item $\otJudge{\oeEnv}{\otEnv \otEnvComp \otEnvMap{\ocX[i]}{\otT[i]}}{\oeE[i]}$.
\end{itemize}
From \Cref{col:SubjectReductionForGC}, for
$\{\ocS[j]\}_{j \in J} = \tilde{\ocS}$
we have $\otEnvEntails{\{\otEnvMap{\ocS[j]}{\otChan{\otTi[j]}}\}_{j \in J}}{\ocC[i]}{\otT[i]}$ for all $i \in \{1,\ldots,n\}$.
By weakening, for all $i \in \{1,\ldots,n\}$, we have 
\begin{itemize}
\item $\otEnvEntails{\otEnv \otEnvComp \{\otEnvMap{\ocS[j]}{\otChan{\otTi[j]}}\}_{j \in J}}{\ocC[i]}{\otT[i]}$ and
\item $\otJudge{\oeEnv}{\otEnv \otEnvComp \{\otEnvMap{\ocS[j]}{\otChan{\otTi[j]}}\}_{j \in J} \otEnvComp \otEnvMap{\ocX[i]}{\otT[i]}}{\oeE[i]}$.
\end{itemize}
By substitution lemma, we get
$\otJudge{\oeEnv}{\otEnv \otEnvComp \{\otEnvMap{\ocS[j]}{\otChan{\otTi[j]}}\}_{j \in J}}{\oeE[i]\subst{\ocX[i]}{\ocC[i]}}$.
By applying \inferrule{\iruleOTPar} and \inferrule{\iruleOTRes}, we finally get
$\otJudge{\oeEnv}{\otEnv}{\oeRes{\ocS}{
      \left(\oeE[1]\subst{\ocX[1]}{\ocC[1]} \oePar \ldots \oePar \oeE[n]\subst{\ocX[n]}{\ocC[n]}\right)}}$.
\end{proof}

\section{Implementation: Omitted Type Signatures and Explanations}
\label{sec:appimpl}
This section gives the \OCaml type signatures and implementations of 
the main implementation building blocks using
sophisticated functional programming techniques based on GADT and polymorphic variants.
Namely, first-class methods and labels are explained in \S~\ref{sec:methods_details},
roles and variable-length tuples in \S~{sec:vartup},
input and output channels in \S~\ref{sec:iochannel_details}, 
and global combinators in \S~\ref{sec:gc_details}.

\subsection{First-Class Methods and Labels}
\label{sec:methods_details}
\begin{figure}[t]
{\lstset{numbers=left}
\begin{OCAMLLISTING}
(* the  definition of the type method_*)
type ('obj, 'mt) method_ = {make_obj: 'mt -> 'obj; call_obj: 'obj -> 'mt} ^\label{line:method2}^
(* example usage of _method: *)
val login_method : (<login : 'mt>, 'mt) method_ (* the type of login_method *)
let login_method = 
  {make_obj=(fun v -> object method login = v end); call_obj=(fun obj -> obj#login)} ^\label{line:methodusage2}^

(* the  definition of the type label*)
type ('obj, 'ot, 'var, 'vt) label = {obj: ('obj, 'ot) method_; var: 'vt -> 'var} ^\label{line:lable2}^
(* example usage of label *)
val login : (<login : 'mt>, 'mt, [> `login of 'vt], 'vt) label
let login = {obj=login_method; var=(fun v -> `login(v))} ^\label{line:lableusage2}^
\end{OCAMLLISTING}}
\caption{Implementation of first-class methods and labels}
\label{fig:labels2}
\end{figure}

As we show in \S~\ref{sec:proofsystem}, the definition of roles and labels use {\em methods} of an object.
To enable this encoding, we introduce {\em first-class} methods --
the type \oCODE{method_} defined on Line~\ref{line:method2} in \cfig{fig:labels2}. 
The type is a record with a  {\em constructor function} \oCODE{make_obj} and
a {\em destructor function} \oCODE{call_obj}.
An example usage of the type \oCODE{method_} is given on Line~\ref{line:methodusage2} by defining the type \oCODE{login_method}. 
In \oCODE{make_obj}, the expression \oCODE{(object method login=v end)}
creates an object that  consists of a method \oCODE{login} with no parameter, returning \oCODE{(v: 'mt)}.
Field \oCODE{call_obj} simply implements a method invocation (\oCODE{obj#login}).

Our encoding of local types requires label names to be encoded as an object method (in case of internal choice) and
as a variant tag (in case of external choice).
Hence, the \oCODE{label} type, Line~\ref{line:lable2}, is defined as a pair of a first-class method and
a {\em variant constructor function}.
As in \S~\ref{subsec:gcevalimpl},
while object and variant constructor functions are needed to compose a channel vector in \lstinline!(-->)!,
object destructor functions are used in \lstinline!merge! in \lstinline!choice_at!,
to extract bare channels inside an object.
Variant destructors are not needed, as they are destructed via pattern-matching and merging is
done by \lstinline!Event.choose! of Concurrent ML.
Using the types \oCODE{method_, label} the user can define arbitrary labels.

\subsection{Variable-Length Tuples and Roles}
\label{sec:vartup}
We declare {\em variable-length} tuple type (\oCODE{$t$._tup})
as a Generalised Abstract Data Type \cite{garrigue11gadt}, as follows:
\begin{LISTINGBOX}
type _ tup = Nil : nil tup | Cons : 'hd * 'tl tup -> [`cons of 'hd * 'tl] tup
and nil = [`cons of unit * 'a] as 'a
\end{LISTINGBOX}
The type \oCODE{tup} consists of
two constructors \oCODE{Nil} and \oCODE{Cons} which construct tuples $(c_1,c_2,\elip,c_n)$ as a cons-list
\oCODE{(Cons($c_1$, Cons($c_2$, .., Cons($c_n$,Nil))))}.
The element types can be {\em heterogeneous};
in type \oCODE{($t$._tup)} the argument $t$ denotes tuple type $t_1{\otTimes}{\elipc}{\otTimes}t_n$
by the nested sequence of polymorphic variant types as \oCODE{([`cons of $t_1$ * ... [`cons of $t_n$ * nil]] tup)}.
Here, the auxiliary type \oCODE{nil} is defined by an infinite sequence of \oCODE{unit} types
defined in the second line, \oCODE{([`cons of unit * 'a] as 'a)} where outer \oCODE{'a}
binds the whole \oCODE{nil} type, forming an equi-recursive type which essentially
states that the the rest of roles have a closed session \oCODE{unit}. %
Thus, \oCODE{finish} combinator is defined as \oCODE{let finish : nil tup = Nil}
which has an infinite sequence of \oCODE{unit}s on types,
denoting a terminating protocol for any number of roles.

Then, taking inspiration from \cite[\S~3.2.4]{linocaml},
we define the type-level index type on this
variable-length tuple as polymorphic lenses (see \S~\ref{subsec:chanvecimpl}),
again using GADTs.
The index type \oCODE{idx} has two constructors \oCODE{Zero} and \oCODE{Succ},
making an index in a tuple via Peano numbers.
The constructor \oCODE{Zero} says that the lens refers to the 0-th element i.e. the head of a cons, while \oCODE{Succ} takes a lens and constructs a new lens which refers to a position deeper by one. By applying \oCODE{Succ} repeatedly, elements at arbitrary depths can be referred. We store the lens for each channel vector inside the role object. For example, the roles \oCODE{s} and  \oCODE{c} from the \oCODE{oAuth} protocol in \Sec~\ref{sec:overview} are implemented as the records \oCODE|let c = {index = Zero, ...}}| and \oCODE|let s = {index = Succ(Zero), ...}| respectively. 
\begin{LISTINGBOX}
type (_,_,_,_) idx =
    Zero : (--![`cons of 't * 'tl]!--, --?'t?--, --![`cons of 'u * 'tl]!--, --?'u?--) idx
  | Succ : (--!'tl1!--, --?'t?--, --!'tl2!--, --?'u?--) idx ->
     (--![`cons of 'hd * 'tl1]!--, --?'t?--, --![`cons of 'hd * 'tl2]!--, --?'u?--) idx
val tup_get : --!'ts!-- tup -> (--!'ts!--, --?'t?--, --!'us!--, --?'u?--) idx -> --?'t?--
val tup_put : --!'ts!-- tup -> (--!'ts!--, --?'t?--, --!'us!--, --?'u?--) idx -> --?'u?-- -> --!'us!-- tup 
\end{LISTINGBOX}

\myparagraph{Roles.}
By pairing first-class methods and indices, we develop
the {\em role type}, defined in \cfig{fig:roles}.
The role type is a record with two fields,
\oCODE{role_index} denotes the index of the role within the global combinator sequence,  
while \oCODE{role_label} is a first-class encoding of the role label as a method in an object.
The full declaration of the role \oCODE{s} is given on Line~\ref{line:roleusage2}.

\begin{figure}[t]
{\lstset{numbers=left}
\begin{OCAMLLISTING}
(* the definition of the type role*)
type (--!'ts!--, --?'t?--, --!'us!--, --?'u?--, 'robj, 'mt) role = ^\label{line:role2}^
   {role_index : (--!'ts!--,--?'t?--,--!'us!--,--?'u?--) idx; role_label : ('robj,'mt) method_}
(* example usage of role: *)
val s : (--![`cons of 't0 * 'ts]!--, --?'t0?--, --![`cons of 'u0 * 'ts]!--, --?'u0?--, <role_S:'mt>, 'mt) role
let s = {role_index=Zero; ^\label{line:roleusage2}^ 
         role_label={make_obj=(fun v -> object method role_S=v end); call_obj=(fun o -> o#role_S)}} 
\end{OCAMLLISTING}}
\caption{Implementation of Roles}
\label{fig:roles}
\end{figure}

\subsection{Input and output types} 
\label{sec:iochannel_details}
To represent communication channels, 
we use the OCaml module Event, which provides a synchronous inter-thread
communications over channels. For each communication action we generate a fresh channel
and wrap it in a channel vector structure. The output \lstinline!<$m$:._($v$*$t$)._out>! is an object with a method $m$ proactively called by the sender's side choosing label $m$,
of which return type
\lstinline!($v$*$t$)._out! is just a pair of channel and continuation. 
\lstinline!  type ('v, 't)._._out = 'v Event.channel * 't (* abstract *)!\\
where \oCODE{'v Event.channel} is a standard synchronous channel type of value \lstinline!'v! in OCaml.
Note that this pair structure is abstract i.e., hidden outside the module,
to prevent abusing of the continuation \lstinline!'t! before sending on \lstinline!'v channel!.
 The output on \lstinline!'v._channel! does not transmit any labels, but they are {\em implicitly} passed.
 The transmission of the label \lstinline!m! {\em implicitly} happens, 
 when output labels are proactively chosen by calling a method $m$.
The input \lstinline![>`$m$._of._$v$*$t$]._inp! makes an external choice as an idiomatic pattern-matching on variants,
enabling a case analysis on continuations based on labels.
This is done by \lstinline!Event.wrap! function, which originates from Concurrent ML \cite{concurrentml}.
The \lstinline!wrap! function works as a \lstinline!map! on received values;
thus, by wrapping \lstinline!$v$._channel! with a function \lstinline!$v$._->._[>`m._of._$v$*$t$]!,
we obtain an input of type \lstinline![>`$m$._of._$v$*$t$]._inp!.

\subsection{Global Combinators}
\label{sec:gc_details}
This section gives the types for all global combinators. 

\myparagraph{Communication combinator} is a 4-ary combinator. 
Its type signature has many type variables which are resolved by unification,
as we already observed in \S~\ref{subsec:chanvecimpl}.
The types signature is given below, which realises the typing rule \inferrule{\iruleOTGComm} in \cfig{fig:typingforglobal}
using lenses in \lstinline!role! type and first-class methods in \lstinline!label! type:
\begin{OCAMLLISTING}
val ( --> ) : (--!'g1!--, --?'ti?--, --!'g2!--, --?'ui?--, ('ri as --?'uj?--), 'var inp) role -> (* sending role type *)
     (--!'g0!--, --?'tj?--, --!'g1!--, --?'uj?--, ('rj as --?'ui?--), 'obj) role -> (* receiving role type *)
     ('obj, ('v, 'ti) out, 'var, 'v * 'tj) label  -> (* the type of the label *)
      --!'g0!-- tup ->  (* the type of the initial tuple of channel vectors *)
      --!'g2!-- tup (* the type of the resulting tuple *)
\end{OCAMLLISTING}
In the expression \oCODE{(($\roleR[i]$ --> $\roleR[j]$) $\mpLab$ --!g!--)},
the continuation \oCODE{--!g!--} holds the tuple type \oCODE{--!('g0 tup)!--}.
By index-based update via role types, the tuple type \oCODE{--!('g1 tup)!--} is updated to \oCODE{--!('g2 tup)!--} such that 
\oCODE{$\roleR[i]$}'s channel vector in \oCODE{--!('g1 tup)!--}  is updated to 
\oCODE{<role_$\roleR[j]$: <$\mpLab$: ('v, --?ti?--) out>>}, while 
that of \oCODE{$\roleR[j]$} becomes \oCODE{<role_$\roleR[i]$: [> `$\mpLab$ of 'v * 'tj] inp>}.
Assuming that the indices of \oCODE{$\roleR[i]$} and \oCODE{$\roleR[j]$} are $i$ and $j$ respectively,
\oCODE{--!'g0!--} is updated first to \oCODE{--!'g1!--}
by changing its $j$-th element \oCODE{--?'tj?--} to \oCODE{--?'uj?--}.
Then, it is further updated to \oCODE{--!'g2!--}
by changing $i$-th element \oCODE{--?'ti?--} to \oCODE{--?'ui?--}.
Furthermore, the part \oCODE{('ri as --?'uj?--)} which equates \oCODE{--?'uj?--} and \oCODE{'ri}
determines the form of \oCODE{--?'uj?--} (at role \oCODE{$\roleR[j]$}) being \oCODE{<role_$\roleR[i]$: 'var inp>}.
Type \oCODE{'var} has the form of a variant type \oCODE{[> `$\mpLab$ of 'v * 'tj]}
which results in a faithful encoding of a receiving type,
since it is a part of variant constructor function (specified by the parameters of type \oCODE{label}).
By a similar argument, type \oCODE{--?'ui?--} equated to \oCODE{'rj}
has the form \oCODE{<role_$\roleR[j]$: <$\mpLab$: ('v, --?ti?--) out>>}, which describes the session at \oCODE{$\roleR[i]$}.

\myparagraph{Loops via lazy evaluation}
The signature of the loop combinator \oCODE{fix} is given below.
\begin{LISTINGBOX}
val fix : ('t tup -> 't tup) -> 't tup
let fix f = let rec body = lazy (f (RecVar body)) in Lazy.force body
\end{LISTINGBOX}
Function \oCODE{fix} takes a function \oCODE{f} and returns a fixpoint of it \oCODE{(x = f x)}
by utilising lazy evaluation and a {\em value recursion} which makes a cyclic data structure.
We extend the tuple types for global combinators, i.e \oCODE{'t._tup} type, with a new constructor \oCODE{RecVar} which discriminates recursion variables  from other constructors.
\oCODE{Lazy.force} tries to expand unguarded recursion variables which occurs right under the fixpoint combinator.
This enables the ``fail-fast'' policy, explained in \S~\ref{sec:proofsystem}. %
For example, an unguarded loop like \oCODE{(fix (fun t -> t))} fails with \oCODE{UnguardedLoop} exception.

\myparagraph{Branching combinator: Merging and object concatenation}
In a similar way, from \inferrule{\iruleOTGChoice} the type of the binary branching combinator
\oCODE{(choice_at._$r_a$._$\mrg$._($r_a$,--!gl!--)._($r_a$,--!gr!--))}
is implemented as follows:
\begin{LISTINGBOX}
val choice_at : (--!'g0!--, unit, --!'g!--, --?'tlr?--, 'ra, _) role -> (--?'tlr?--, --?'tl?--, --?'tr?--) disj ->
                (--!'gl!--, --?'tl?--, --!'g0!--, unit, 'ra, _) role * 'gl tup ->
                (--!'gr!--, --?'tr?--, --!'g0!--, unit, 'ra, _) role * 'gr tup -> 'g tup
\end{LISTINGBOX}
The lens part is same as in \S~\ref{subsec:chanvecimpl}.
Additionally, the role-label part \lstinline!'ra! ensures that the three roles are same.
Types \oCODE{--?'tl?--} and \oCODE{--?'tr?--} are output type of form
\oCODE{<role_$\roleQ$: <$\mpLab[{i}]$: ($v_{i}$, $t_{i}$) out>$_{i \in I}$>}
and
\oCODE{<role_$\roleQ$: <$\mpLabi[{j}]$: ($v'_{j}$, $t'_{j}$) out>$_{j \in J}$>}
where \oCODE{$\roleQ$} is the destination role and \oCODE{$\{\mpLab[i]\}$}
and \oCODE{$\{\mpLabi[j]\}$} are the set of output labels which should be disjoint from each other.
The following type \oCODE{($\mathit{lr}$,$l$,$r$)._disj} denotes a constraint
that type $\mathit{lr}$ is the type concatenated from mutually-disjoint $l$ and $r$:
\begin{LISTINGBOX}
type ('lr, 'l, 'r)._disj =
  {disj_merge: 'l -> 'r -> 'lr; disj_split_L: 'lr -> 'l; disj_split_R: 'lr -> 'r}
\end{LISTINGBOX}

\subsection{Example of Concatenating Two Disjoint Objects}
\label{sec:disj_merge}

The functions \oCODE{disj_merge} concatenates two disjoint objects \oCODE{'l} and \oCODE{'r} into one, while
\oCODE{disj_split_$\{$L,R$\}$} splits an object \oCODE{'lr} to \oCODE{'l} and \oCODE{'r}, respectively.
Both are used in the definition of a branching operator.
This constraint must manually be supplied by programmers.
For example, the following \oCODE{left_or_right} states
a concatenation of type \oCODE{<left: 'tl>} and \oCODE{<right: 'tr>} into \oCODE{<left: 'tl; right: 'tr>}:
\begin{LISTING}
val left_or_right : (<left: 'l; right: 'r>, <left: 'l>, <right: 'r>) disj
let left_or_right =
  {disj_merge=(fun l r -> object method left=l#left method right=r#right end);
   disj_split_L=(fun obj -> obj#left); disj_split_R=(fun obj -> obj#right)}
\end{LISTING}

\section{Multiparty Session Types and Processes}\label{sec:mpst}
This section quickly outlines the multiparty session types 
\cite{MPST,scalas19less}.
For the syntax of types, we follow
\cite{BettiniCDLDY08} which is the most widely used syntax in the 
literature.  
A \emph{global type}, written $\gtG,\gtG',..$, 
describes the whole 
conversation scenario of a multiparty session as a type signature, and
a {\em local type}, written by  $\stS,\stS', \cdots$.
Let $\PSet$ be a set of 
participants fixed throughout the section: $\PSet =\{\roleP, \roleQ, \roleR, \cdots\}$, and $\ASigma$ is a set of alphabets.

\begin{DEF}[Global types]\label{def:globaltypes}\rm
  The syntax of a \textbf{global type $\gtG$} is:%
  
  \smallskip%
  \scalebox{0.95}{%
  \centerline{%
  \(%
  \gtG%
  \,\coloncolonequals\,%
  \gtComm{\roleP}{\roleQ}{i \in I}{\gtLab[i]}{\stS[i]}{\gtG[i]}%
  \bnfsep%
  \gtRec{\gtRecVar}{\gtG}%
  \bnfsep%
  \gtRecVar%
  \bnfsep%
  \gtEnd%
  \)%
  \quad%
  \text{%
    with %
    $\roleP \!\neq\! \roleQ$, %
    \;$I \!\neq\! \emptyset$, %
    \;and\; $\forall i \!\in\! I: \fv{\stS[i]} = \emptyset$%
  }%
  }}%
  \smallskip%
  
\noindent
We write\; $\roleP \in \gtRoles{\gtG}$ %
\;(or simply $\roleP \!\in\! \gtG$) \;iff, for some $\roleQ$, %
either $\gtFmt{\roleP {\to} \roleQ}$ %
or $\gtFmt{\roleQ {\to} \roleP}$ %
occurs in $\gtG$.
\end{DEF}

\begin{DEF}[Local types]\rm
  The syntax of \textbf{local types} %
  is:

  \smallskip%
  \scalebox{0.92}{
  \centerline{\(%
    \textstyle%
    \stS, \stT%
    \,\coloncolonequals\,%
    \stExtSum{\roleP}{i \in I}{\stChoice{\stLab[i]}{\stS[i]} \stSeq \stSi[i]}%
    \bnfsep%
    \stIntSum{\roleP}{i \in I}{\stChoice{\stLab[i]}{\stS[i]} \stSeq \stSi[i]}%
    \bnfsep%
    \stEnd%
    \bnfsep%
    \stRec{\stRecVar}{\stS}%
    \bnfsep%
    \stRecVar
    \ %
    \text{with $I \!\neq\! \emptyset$,  
      and $\stLab[i]$ pairwise distinct}%
    \)}}%
  \smallskip%

  \noindent%
  We require types to be closed, %
  and recursion variables to be guarded.
\end{DEF}
The relation between global and local types is formalised by
{\em projection}~\cite{BettiniCDLDY08,HYC08}. 

\begin{DEF}[projection]\rm 
\label{def:projection}
The 
{\em projection of $\gtG$ onto
  $\roleP$} (written $\gtProj{\gtG}{\roleP}$)
is defined as:%

  \smallskip%
  \centerline{\(%
  \begin{array}{c}
    \gtProj{\left(%
      \gtComm{\roleQ}{\roleR}{i \in I}{\gtLab[i]}{\stS[i]}{\gtG[i]}%
      \right)}{\roleP}%
    \;=\;%
    \left\{%
    \begin{array}{@{\hskip 0.5mm}l@{\hskip 5mm}l@{}}
      \stIntSum{\roleR}{i \in I}{%
        \stChoice{\stLab[i]}{\stS[i]} \stSeq (\gtProj{\gtG[i]}{\roleP})%
      }%
      &\text{\footnotesize%
        if\, $\roleP = \roleQ$}%
      \\%
      \stExtSum{\roleQ}{i \in I}{%
        \stChoice{\stLab[i]}{\stS[i]} \stSeq (\gtProj{\gtG[i]}{\roleP})%
      }%
      &\text{\footnotesize%
        if\, $\roleP = \roleR$}%
      \\%
      \stMerge{i \in I}{\gtProj{\gtG[i]}{\roleP}}%
      &\text{\footnotesize%
        if\, $\roleQ \neq \roleP \neq \roleR$}%
    \end{array}
    \right.
    \\[5mm]%
    \gtProj{(\gtRec{\gtRecVar}{\gtG})}{\roleP}%
    \;=\;%
    \left\{%
    \begin{array}{@{\hskip 0.5mm}l@{\hskip 5mm}l@{}}
      \stRec{\stRecVar}{(\gtProj{\gtG}{\roleP})}%
      &%
      \text{\footnotesize%
        if\, $\gtProj{\gtG}{\roleP} \neq \stRecVari$ ($\forall \stRecVari$)%
      }%
      \\%
      \stEnd%
      &%
      \text{\footnotesize%
        otherwise}
    \end{array}
    \right.%
    \qquad%
    \begin{array}{@{}r@{\hskip 1mm}c@{\hskip 1mm}l@{}}
      \gtProj{\gtRecVar}{\roleP}%
      &=&%
      \stRecVar%
      \\%
      \gtProj{\gtEnd}{\roleP}%
      &=&%
      \stEnd%
    \end{array}
  \end{array}
  \)}%
  \smallskip%
  
  \noindent%
For projection of branchings, we appeal to a merge
operator along the lines of \cite{DY11}, written $\stS \stBinMerge \stS'$, ensuring
that if the locally observable behaviour of the local type is 
dependent of the chosen branch then it is identifiable via a unique
choice/branching label. 
The \emph{merging operation} $\stBinMerge{}{}$ is defined as a partial commutative 
operator over two types such that:

  \smallskip%
  \noindent%
  \scalebox{0.80}{%
  {\(%
  \begin{array}{c}%
    \textstyle%
    \stExtSum{\roleP}{i \in I}{\stChoice{\stLab[i]}{\stS[i]} \stSeq \stSi[i]}%
    \,\stBinMerge\,%
    \stExtSum{\roleP}{\!j \in J}{\stChoice{\stLab[j]}{\stS[\!j]} \stSeq \stTi[j]}%
    \;=\;%
    \stExtSum{\roleP}{k \in I \cap J}{\stChoice{\stLab[k]}{\stS[k]} \stSeq%
      (\stSi[k] \!\stBinMerge\! \stTi[k])%
    }%
    \stExtC%
    \stExtSum{\roleP}{i \in I \setminus J}{\stChoice{\stLab[i]}{\stS[i]} \stSeq \stSi[i]}%
    \stExtC%
    \stExtSum{\roleP}{\!j \in J \setminus I}{\stChoice{\stLab[j]}{\stS[\!j]} \stSeq \stTi[j]}%
    \\[1mm]%
    \stIntSum{\roleP}{i \in I}{\stChoice{\stLab[i]}{\stS[i]} \stSeq \stSi[i]}%
    \,\stBinMerge\,%
    \stIntSum{\roleP}{i \in I}{\stChoice{\stLab[i]}{\stS[i]} \stSeq \stSi[i]}%
    \;=\;%
    \stIntSum{\roleP}{i \in I}{\stChoice{\stLab[i]}{\stS[i]} \stSeq \stSi[i]}%
    \\[1mm]%
    \stRec{\stRecVar}{\stS} \,\stBinMerge\, \stRec{\stRecVar}{\stT}%
    \;=\;%
    \stRec{\stRecVar}{(\stS \stBinMerge \stT)}%
    \qquad%
    \stRecVar \,\stBinMerge\, \stRecVar%
    \;=\;%
    \stRecVar%
    \qquad%
    \stEnd \,\stBinMerge\, \stEnd%
    \;=\;%
    \stEnd%
  \end{array}
  \)}}%
\end{DEF}

\newcommand{\roleAlice}{{\color{roleColor}\roleFmt{A}}}%
\newcommand{\roleBob}{{\color{roleColor}\roleFmt{B}}}%
\newcommand{\roleCarol}{{\color{roleColor}\roleFmt{C}}}%

\noindent
We say that $G$ is {\em well-formed} if for all $\roleP\in \PSet$, 
$\gtProj{\gtG}{\roleP}$ is defined. 

Below we define the \emph{multiparty session subtyping relation}, following
\cite{PrecisenessJPSY19}\cite{DGJPY15}.\footnote{%
  \label{footnote:inverted-subt}%
  For convenience, we use the ``channel-oriented'' order %
  of \cite{GH05,scalas17linear} for our subtyping relation. %
  For a comparison with 
  ``process-oriented'' subtyping of \cite{DGJPY15},  
see \cite{Gay2016}.} %
Intuitively, a type $\stS$ is smaller than $\stS'$ %
when $\stS$ is ``less demanding'' than $\stS'$, 
\ie, when $\stS$ imposes to support less external choices 
and allows to perform more internal choices. %
Session subtyping is used in the type system %
to augment its flexibility.

\begin{DEF}[Session subtyping]\rm 
\label{def:session:subtyping}
The subtyping relation $\stSub$ is \emph{co}inductively defined as: 
  \smallskip%

  \centerline{\scalebox{0.9}{\(%
  \begin{array}{c}%
    \cinference[\iruleSubBranch]{
      \forall i \in I%
      &%
      \stS[i] \stSub \stT[i]%
      &%
      \stSi[i] \stSub \stTi[i]%
    }{%
      \stExtSum{\roleP}{i \in I}{\stChoice{\stLab[i]}{\stS[i]} \stSeq \stSi[i]}%
      \stSub%
      \stExtSum{\roleP}{i \in I \cup J}{\stChoice{\stLab[i]}{\stT[i]} \stSeq \stTi[i]}%
    }%
    \qquad%
    \cinference[\iruleSubSel]{
      \forall i \in I%
      &%
      \stT[i] \stSub \stS[i]%
      &%
      \stSi[i] \stSub \stTi[i]%
    }{%
      \stIntSum{\roleP}{i \in I \cup J}{\stChoice{\stLab[i]}{\stS[i]} \stSeq \stSi[i]}%
      \stSub%
      \stIntSum{\roleP}{i \in I}{\stChoice{\stLab[i]}{\stT[i]} \stSeq \stTi[i]}%
    }%
    \\[2mm]%
    \cinference[\iruleSubEnd]{%
      \phantom{X}%
    }{%
      \stEnd \stSub \stEnd%
    }%
    \qquad%
    \cinference[\iruleSubRecL]{%
      \stS\subst{\stRecVar}{\stRec{\stRecVar}{\stS}} \stSub \stT%
    }{%
      \stRec{\stRecVar}{\stS} \stSub \stT%
    }%
    \qquad%
    \cinference[\iruleSubRecR]{%
      \stS \stSub \stT\subst{\stRecVar}{\stRec{\stRecVar}{\stT}}%
    }{%
      \stS \stSub \stRec{\stRecVar}{\stT}
    }%
  \end{array}
  \)}}%
\end{DEF}

 }{}

\end{document}